\newtheorem{theorem}{Theorem}
\newtheorem{lemma}{Lemma}
\newtheorem{corollary}{Corollary}
\newtheorem{definition}{Definition}
\newtheorem{assumption}{Assumption}
\newtheorem{example}{Example}
\newtheorem{remark}{Remark}
\def\nn{\nonumber}
\def\Expt{\mathbb{E}}
\def\E{\mathbb{E}}
\def\mb{\mathbb}
\def\defeq{\triangleq}
\def\mc{\mathcal}
\def\col{\mathrm{col}}
\def\diag{\mathrm{diag}}
\def\Tr{\mathrm{Tr}}
\def\bP{\bar{P}}
\def\one{\mathds{1}}
\def\bs{\bm{s}}
\def\bv{\bm{v}}
\def\bw{\bm{w}}
\def\bz{\bm{z}}
\def\mA{\mc{A}}
\def\mD{\mc{D}}
\def\mF{\mc{F}}
\def\mM{\mc{M}}
\def\mS{\mc{S}}
\def\mU{\mc{U}}
\def\mW{\mc{W}}
\def\mZ{\mc{Z}}
\def\s{\bm{s}}
\def\u{\bm{u}}
\def\w{\bm{w}}
\def\z{\bm{z}}
\title{On the Learning Behavior of Adaptive Networks --- Part I: Transient Analysis}
\begin{document}

\author{Jianshu~Chen,~\IEEEmembership{Member,~IEEE,}%
        ~and~Ali~H.~Sayed,~\IEEEmembership{Fellow,~IEEE}
\thanks{
Manuscript received December 28, 2013; revised November 21, 2014; accepted
April 08, 2015. This work was supported in part by NSF grants CCF-1011918 and ECCS-1407712.
A short and limited version of this work appears in the conference publication%
\cite{chen2012AllertonLimit} without proofs and under more restrictive conditions than considered in this broader and expanded work.}
\thanks{J. Chen was with Department of Electrical Engineering, University of California, Los Angeles, and is currently with Microsoft Research, Redmond, WA 98052. This work was performed while he was a PhD student at UCLA. Email: cjs09@ucla.edu. 
}
\thanks{A. H. Sayed is with Department of Electrical Engineering,
University of California, Los Angeles, CA 90095. Email: sayed@ee.ucla.edu.
}
\thanks{Communicated by Prof. Nicol\`{o} Cesa-Bianchi,  Associate Editor for Pattern Recognition, Statistical Learning, and Inference.}
\thanks{Copyright (c) 2014 IEEE.
Personal use of this material is permitted. However, permission to use this material for any other
purposes must be obtained from the IEEE by sending a request to pubs-permissions@ieee.org.
}
}

\markboth{IEEE Transactions on Information Theory, VOL.~XX, NO.~XX, MONTH~2015}{Chen and Sayed: 
On the Learning Behavior of Adaptive Networks --- Part I: Transient Analysis}

\maketitle

\begin{abstract}
This work carries out a detailed transient analysis of the learning behavior of multi-agent networks, and reveals interesting results about the learning abilities of distributed strategies. Among other results, the analysis reveals how combination policies influence the learning process of networked agents, and how these policies can steer the convergence point towards any of many possible Pareto optimal solutions. The results also establish that the learning process of an adaptive network undergoes three (rather than two) well-defined stages of evolution with distinctive convergence rates during the first two stages, while attaining a finite mean-square-error (MSE) level in the last stage. The analysis reveals what aspects of the network topology influence performance directly and suggests design procedures that can optimize performance by adjusting the relevant topology parameters. Interestingly, it is further shown that, in the adaptation regime, each agent in a sparsely connected network is able to achieve the same performance level as that of a centralized stochastic-gradient strategy even for left-stochastic combination strategies. These results lead to a deeper understanding and useful insights on the convergence behavior of coupled distributed learners. The results also lead to effective design mechanisms to help diffuse information more thoroughly over networks.
\end{abstract}
\begin{keywords}
Multi-agent learning, multi-agent adaptation, distributed strategies, diffusion of information, Pareto solutions.
\end{keywords}
%

\section{INTRODUCTION}
\label{Sec:Intro}

In multi-agent systems, agents interact with each other to solve a 
problem of common interest, such as an optimization problem in a distributed manner. 
Such networks of interacting agents are useful in solving
distributed estimation, learning and decision making problems
\cite{barbarossa2007bio,li2008new, nedic2001incremental,
tsitsiklis1986distributed,kar2011converegence,kar2008distributed,kar2013distributed,dimakis2010gossip,
nedic2009distributed,nedic2010cooperative,eksin2012tspdistributed,
eksin2013learning,
theodoridis2011adaptive,dini2012cooperative,
lopes2008diffusion,Cattivelli10,chen2011TSPdiffopt,zhao2012performance,chen2013JSTSPpareto,sayed2012diffbookchapter,sayed2014proc,sayed2014adaptation,chen2014tspDictLearn,macua2015distributed,chouvardas2011adaptive,gharenhshiran2013jstsp,
ram2010distributed,srivastava2011distributed,lee2013distributed,tsianos2012consensus,
palomar2006tutorial,saligrama2006distributed,predd2009collaborative,boyd2006randomized,ren2005consensus,sardellitti2010fast,jadbabaie2003coordination,olfati2007consensus}.
They are also useful in modeling biological
networks\cite{di2011biotsp,Cattivelli09bird,Tu10mobile},
collective rational behavior\cite{eksin2012tspdistributed,eksin2013learning},
and in developing biologically-inspired designs\cite{barbarossa2007bio,hong2005scalable}.
Two useful strategies that can be used to guide the interactions of
agents over a network are consensus strategies \cite{nedic2009distributed,nedic2010cooperative,tsitsiklis1986distributed,
kar2011converegence,
kar2008distributed,kar2013distributed,dimakis2010gossip}
 and diffusion strategies\cite{lopes2008diffusion,Cattivelli10,chen2011TSPdiffopt,zhao2012performance,chen2013JSTSPpareto,sayed2012diffbookchapter,sayed2014proc,sayed2014adaptation,chen2014tspDictLearn,macua2015distributed,chouvardas2011adaptive,gharenhshiran2013jstsp}.
Both classes of algorithms involve self-learning and social-learning steps.  
During self-learning, each agent updates its state using its local data. 
During social learning, each agent aggregates information from its neighbors. 
A useful feature that results from these localized interactions is that the 
network ends up exhibiting global patterns of behavior.
For example, in distributed estimation and learning, each agent
is able to attain the performance of centralized solutions by 
relying solely on local cooperation
\cite{kar2011converegence,zhao2012performance,sayed2014proc,sayed2014adaptation}. 

In this article, and the accompanying Part II \cite{chen2013learningPart2}, we consider a general class of distributed
strategies, which includes diffusion and consensus updates as special cases,  
and study the resulting global learning behavior by addressing four important questions: 
(i) where  does the
distributed algorithm converge to? (ii) when does it converge? (iii) how fast does it converge?
and (iv) how close does it converge to the intended point? 
We answer questions (i)--(iii) in Part I and question (iv) in Part II \cite{chen2013learningPart2}. We study these four questions by characterizing the learning dynamics of the
network in some great detail. An interesting conclusion that follows from our 
analysis is that the learning curve of a multi-agent system
will be shown to exhibit \emph{three} different phases.
In the first phase (Transient Phase I), the convergence rate of the network is determined by the second 
largest eigenvalue of the combination matrix in magnitude, which is related to 
the degree of network connectivity. In the second phase (Transient Phase II), the convergence 
rate is determined by the entries of the right-eigenvector of the combination matrix
corresponding to the eigenvalue at one. And, in the third phase (the steady-state phase)
the mean-square performance of the algorithm turns out to depend on this same
right-eigenvector in a revealing way. Even more surprisingly,  we shall discover that the agents have
the same learning behavior starting at Transient Phase II,
and are able to achieve a performance level that matches that of a fully connected network
or a centralized stochastic-gradient strategy. Actually, we shall show that the consensus and
diffusion strategies can be represented as perturbed versions
of a centralized {\em reference} recursion in a certain transform
domain. We quantify the effect of the perturbations and 
establish the aforementioned properties for the various phases
of the learning behavior of the networks.
The results will reveal the manner by which the network topology influences performance in some unexpected ways. 

There have been of course many insightful works in the literature on distributed strategies and their convergence behavior. In Sections \ref{Sec:ProblemFormulation:PriorWork} and \ref{Sec:LearningBehavior:Overview} further ahead, we explain in what ways the current manuscript extends these earlier investigations and what novel contributions this work leads to. In particular, it will be seen that several new insights are discovered that clarify how distributed networks learn. For the time being, in these introductory remarks, we limit ourselves to mentioning one important aspect of our development. Most prior studies on distributed optimization and estimation tend
to focus on the performance and convergence of the algorithms
under \emph{diminishing} step-size conditions\cite{tsitsiklis1986distributed,
kar2011converegence,kar2008distributed,kar2013distributed,dimakis2010gossip,ram2010distributed,
srivastava2011distributed,nedic2009distributed,lee2013distributed, bianchi2012performance}, or on convergence
under deterministic  conditions on the data\cite{nedic2009distributed}. This is perfectly 
fine for applications involving {\em static} optimization problems where the objective is to locate the fixed optimizer of some aggregate cost function of interest. In this paper, however,
we examine the learning behavior of distributed strategies
under \emph{constant} step-size conditions. 
This is because constant step-sizes are necessary to enable continuous
adaptation, learning, and tracking in the presence of streaming data and drifting conditions. These
features would enable
the algorithms to perform well even when the location of the optimizer drifts with time. Nevertheless, the use of constant step-sizes enriches the dynamics of (stochastic-gradient) distributed algorithms 
in that the gradient update term does not die out with time anymore, in clear contrast
to the diminishing step-size case where the influence of the gradient term is annihilated over time due to the decaying value of the step-size parameter. For this reason, 
more care is needed to examine the learning behavior of distributed strategies in the constant step-size regime 
since their updates remain continually active and the effect of gradient noise is always present. 
This work also generalizes and extends in non-trivial ways the studies in \cite{chen2011TSPdiffopt,chen2013JSTSPpareto}. For example, while reference \cite{chen2011TSPdiffopt} assumed that the individual costs of all agents have the {\em same} minimizer, and reference \cite{chen2013JSTSPpareto} assumed that each of these individual costs is strongly convex, these requirements are not needed in the current study: individual costs can have distinct minimizers and they do not even need to be convex (see the discussion after expression \eqref{Equ:ProblemFormulation:J_glob_weigthed}). This fact widens significantly the class of distributed learning problems that are covered by our framework. Moreover, the network behavior is studied under less restrictive assumptions and for broader scenarios, including a close study of the various phases of evolution during the transient phase of the learning process. We also study a larger class of distributed strategies that includes diffusion and consensus strategies as special cases. 

To examine the learning behavior of adaptive networks under broader and more relaxed conditions than usual, we pursue a new analysis route by introducing a {\em reference} centralized recursion and by studying the perturbation of the diffusion and consensus strategies relative to this centralized solution over time. Insightful new results are obtained through this perturbation analysis. For example, we are now able to examine closely {\em both} the transient phase behavior and the steady-state phase behavior of the learning process and to explain how behavior in these two stages relate to the behavior of the centralized solution (see Fig. \ref{Fig:TypicalLearningCurve} further ahead). Among several other results, the mean-square-error expression \eqref{Equ:LearnBehav:MSE} derived later in Part II \cite{chen2013learningPart2} following some careful analysis, which builds on the results of this Part I, is one of the new (compact and powerful) insights; it reveals how the performance of each agent is closely related to that of the centralized stochastic approximation strategy --- see the discussion right after \eqref{Equ:LearnBehav:MSE}. As the reader will ascertain from the derivations in the appendices, arriving at these conclusions for a broad class of distributed strategies and under weaker conditions than usual is demanding and necessitates a careful study of the evolution of the error dynamics over the network and its stability. When all is said and done, Parts I and II \cite{chen2013learningPart2} lead to several novel insights into the learning behavior of adaptive networks.
\\

\noindent
{\bf Notation}. 
All vectors are column vectors. We use boldface letters to denote random quantities (such as $\bm{u}_{k,i}$) and regular font to denote their realizations or deterministic variables (such as $u_{k,i}$). We use $\mathrm{diag}\{x_1,\ldots,x_N\}$ to denote a (block) diagonal matrix consisting of diagonal entries (blocks) $x_1,\ldots,x_N$, and use $\mathrm{col}\{x_1,\ldots,x_N\}$ to denote a column vector formed by stacking $x_1,\ldots,x_N$ on top of each other. The notation $x \preceq y$ means each entry of the vector $x$ is less than or equal to the corresponding entry of the vector $y$, and the notation $X \preceq Y$ means each entry of the matrix $X$ is less than or equal to the corresponding entry of the matrix $Y$. The notation $x=\mathrm{vec}(X)$ denotes the vectorization operation that stacks the columns of a matrix $X$ on top of each other to form a vector $x$, and $X=\mathrm{vec}^{-1}(x)$  is the inverse operation. The operators $\nabla_w$ and $\nabla_{w^T}$ denote the column and row gradient vectors with respect to $w$. When $\nabla_{w^T}$ is applied to a  column vector $s$, it generates a 
matrix. The notation $a(\mu) = O( b(\mu) )$ means that there exists a constant $C>0$ such
that { for all $\mu$,} $a(\mu) \le  C \cdot b(\mu)$.

\section{PROBLEM FORMULATION}
\label{Sec:ProblemFormulation}
\subsection{Distributed Strategies: Consensus and Diffusion}
\label{Sec:ProblemFormulation:Dist}
	
We consider a connected network of $N$ agents that are linked together through a topology --- see Fig. \ref{Fig:Fig_Network}. Each agent $k$ implements a distributed algorithm of the following form to update its state vector from $\bw_{k,i-1}$ to $\bw_{k,i}$:
	\begin{align}
		\label{Equ:ProblemFormulation:DisAlg_Comb1}
		\bm{\phi}_{k,i-1}	&=	\sum_{l=1}^N a_{1,lk} \bm{w}_{l,i-1}		\\
		\label{Equ:ProblemFormulation:DisAlg_Adapt}
		\bm{\psi}_{k,i}		&=	\sum_{l=1}^N a_{0,lk} \bm{\phi}_{l,i-1}
								-
								\mu_k \hat{\bm{s}}_{k,i}(\bm{\phi}_{k,i-1})	\\
		\label{Equ:ProblemFormulation:DisAlg_Comb2}
		\bm{w}_{k,i}		&=	\sum_{l=1}^N a_{2,lk} \bm{\psi}_{l,i}
	\end{align}
where $\bm{w}_{k,i} \in \mb{R}^M$ is the state of agent $k$ at time $i$, usually an estimate for the solution of some optimization problem, $\bm{\phi}_{k,i-1} \in \mb{R}^M$ and $\bm{\psi}_{k,i} \in \mb{R}^M$ are intermediate variables generated at node $k$ before updating to $\bm{w}_{k,i}$, $\mu_k$ is a non-negative constant step-size parameter used by node $k$, and $\hat{\bm{s}}_{k,i}(\cdot)$ is an $M \times 1$ update vector function at node $k$.  In deterministic optimization problems, the update vectors $\hat{\bm{s}}_{k,i}(\cdot)$  can be the gradient or Newton steps associated with the cost functions\cite{nedic2009distributed}. On the other hand, in stocastic approximation problems, such as adaptation, learning and estimation problems \cite{tsitsiklis1986distributed,kar2011converegence,kar2013distributed,kar2008distributed,dimakis2010gossip,
theodoridis2011adaptive,chouvardas2011adaptive,dini2012cooperative,
lopes2008diffusion,Cattivelli10,zhao2012performance,chen2011TSPdiffopt,chen2013JSTSPpareto,sayed2014adaptation,sayed2014proc,macua2015distributed,
ram2010distributed,srivastava2011distributed,sayed2012diffbookchapter,gharenhshiran2013jstsp},
the update vectors are usually computed from realizations of data samples that arrive sequentially at the nodes. In the stochastic setting, the quantities appearing in \eqref{Equ:ProblemFormulation:DisAlg_Comb1}--\eqref{Equ:ProblemFormulation:DisAlg_Comb2}  become random and we use boldface letters to highlight their stochastic nature. In Example \ref{Ex:UpdateVector} below, we illustrate choices for $\hat{\bm{s}}_{k,i}(w)$ in different contexts.

The combination coefficients $a_{1,lk}$, $a_{0,lk}$ and $a_{2,lk}$ in \eqref{Equ:ProblemFormulation:DisAlg_Comb1}--\eqref{Equ:ProblemFormulation:DisAlg_Comb2} are nonnegative weights that each node $k$ assigns to the information arriving from node $l$; these coefficients are required to satisfy:
	\begin{align}
		\label{Equ:ProblemFormulation:a_lk:convex}
		&\sum_{l=1}^N a_{1,lk}=1,	\quad
		\sum_{l=1}^N a_{0,lk}=1,	\quad
		\sum_{l=1}^N a_{2,lk}=1			\\
		\label{Equ:ProblemFormulation:a_lk:nonnegative}
		&a_{1,lk} \ge 0,	\qquad
		a_{0,lk} \ge 0,	\qquad
		a_{2,lk} \ge 0							\\
		\label{Equ:ProblemFormulation:a_lk:localized}
		&a_{1,lk} = a_{2,lk} = a_{0,lk} = 0,\qquad \mathrm{if~}l \notin \mc{N}_k
	\end{align}
Observe from \eqref{Equ:ProblemFormulation:a_lk:localized} that the combination 
coefficients are zero if $l \notin \mc{N}_k$, where $\mc{N}_k$ denotes 
the set of neighbors of node $k$. Therefore, each summation
in \eqref{Equ:ProblemFormulation:DisAlg_Comb1}--\eqref{Equ:ProblemFormulation:DisAlg_Comb2}
is actually confined to the neighborhood of node $k$.
In algorithm
\eqref{Equ:ProblemFormulation:DisAlg_Comb1}--\eqref{Equ:ProblemFormulation:DisAlg_Comb2},
each node $k$ first combines the states $\{\bm{w}_{l,i-1}\}$
from its neighbors and updates $\bm{w}_{k,i-1}$ to 
the intermediate variable $\bm{\phi}_{k,i-1}$. Then,
the $\{\bm{\phi}_{l,i-1}\}$ from the neighbors are aggregated and 
updated to $\bm{\psi}_{k,i}$ along the opposite direction of $\hat{\bm{s}}_{k,i}(\bm{\phi}_{k,i-1})$. 
Finally, the intermediate estimators $\{\bm{\psi}_{l,i}\}$ from the
neighbors are combined to generate the new state $\bm{w}_{k,i}$
at node $k$.

	\begin{figure}[t]
		\centering
		\includegraphics[width=0.28\textwidth]{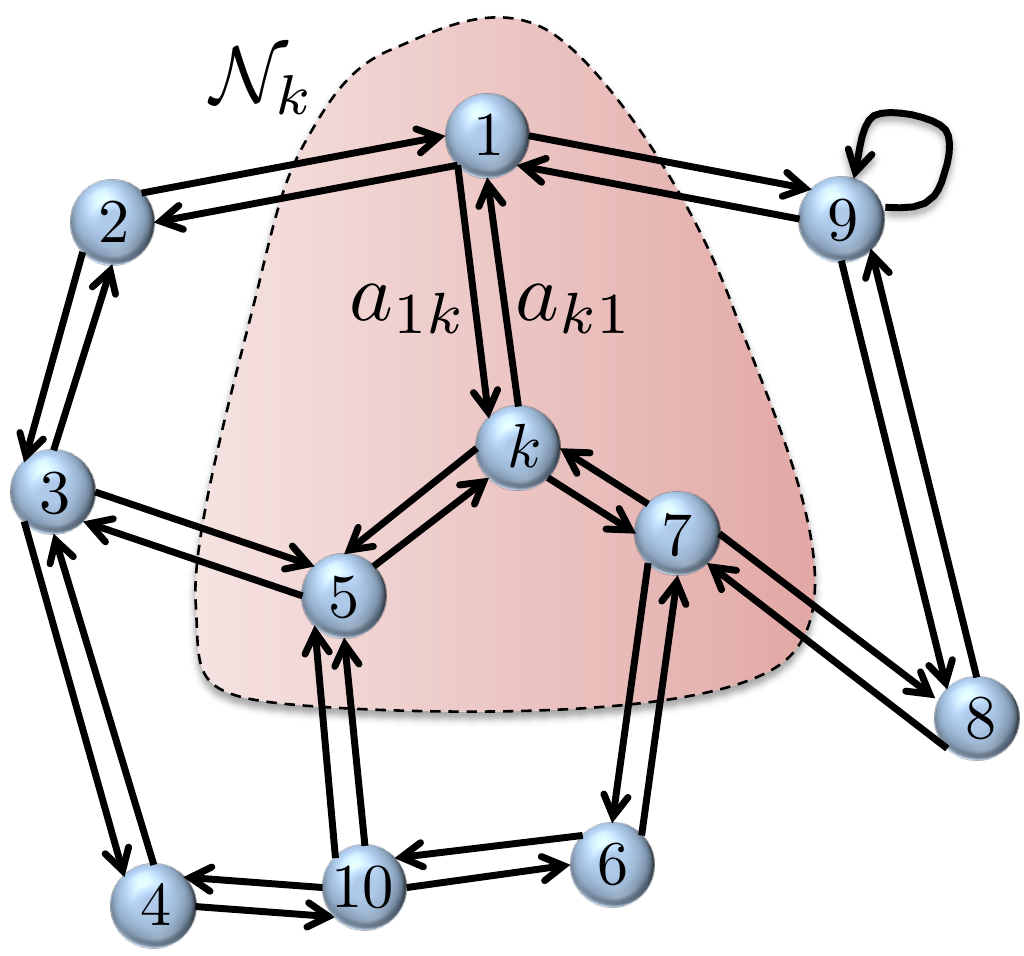}
		\caption{A network representing a multi-agent system. The set of all agents that
		can communicate with node $k$ is 
		denoted by $\mc{N}_k$.The edge linking any two agents is 
		represented by two directed arrows to emphasize that 
		information can flow in both directions.}
		\label{Fig:Fig_Network}
	\end{figure}

		\vspace{0.5em}
	\begin{example}
		\label{Ex:UpdateVector}
		The distributed algorithm
		\eqref{Equ:ProblemFormulation:DisAlg_Comb1}--%
		\eqref{Equ:ProblemFormulation:DisAlg_Comb2}
		can be applied to optimize aggregate costs of the following form:
			\begin{align}
				\label{Equ:ProblemFormulation:Ex:J_glob}
				J^{\mathrm{glob}}(w)	=	\sum_{k=1}^N J_k(w)
			\end{align}
		or to find Pareto-optimal solutions to multi-objective
		optimization problems, such as:
			\begin{align}
				\label{Equ:ProblemFormulation:Multiobjective}
				\min_w\left\{ J_1(w),\ldots,J_N(w) \right\}
			\end{align}
		where $J_k(w)$ is an individual convex cost associated
		with each agent $k$. Optimization problems of the form
		 \eqref{Equ:ProblemFormulation:Ex:J_glob}--\eqref{Equ:ProblemFormulation:Multiobjective}
		arise
		in various applications --- see \cite{li2008new, nedic2001incremental,
tsitsiklis1986distributed,kar2011converegence,kar2008distributed,kar2013distributed,dimakis2010gossip,
nedic2009distributed,nedic2010cooperative,eksin2012tspdistributed,
eksin2013learning,
theodoridis2011adaptive,dini2012cooperative,chouvardas2011adaptive,
sayed2012diffbookchapter,
lopes2008diffusion,Cattivelli10,zhao2012performance,chen2011TSPdiffopt,chen2013JSTSPpareto,
gharenhshiran2013jstsp,sayed2014adaptation,sayed2014proc,chen2014tspDictLearn,macua2015distributed,
ram2010distributed,srivastava2011distributed,lee2013distributed,tsianos2012consensus}.
		Depending on the context, the update vector 
		$\hat{\bm{s}}_{k,i}(\cdot)$ may be chosen in different ways:
			\begin{itemize}
				\item
					In deterministic optimization problems, the expressions
					for $\{J_k(w)\}$ are known and the update vector
					$\hat{\bm{s}}_{k,i}(\cdot)$ at node $k$ is chosen as 
					the deterministic gradient (column) vector $\nabla_w J_k(\cdot)$.
				\item
					In distributed estimation and learning, the individual
					cost function at each node $k$ is usually selected as
					the expected value of some loss function
					$Q_k(\cdot,\cdot)$, i.e., $J_k(w)=\E\{Q_k(w,\bm{x}_{k,i})\}$
					\cite{chen2011TSPdiffopt},
					where the expectation is with respect to the randomness
					in the data samples $\{\bm{x}_{k,i}\}$ collected at 
					node $k$ at time $i$. The exact expression for 
					$\nabla_w J_k(w)$ is usually unknown since 
					the probability distribution of the data is not known beforehand. 
					In these situations, the update vector
					$\hat{\bm{s}}_{k,i}(\cdot)$	is chosen as an instantaneous
					approximation for the true gradient vector, such as, 
					$\hat{\bs}_{k,i}(\cdot)=
					\widehat{\nabla_w J_{k}}(\cdot)=\nabla_w Q_k(\cdot,\bm{x}_{k,i})$, which
					is known as \emph{stochastic gradient}.
					Note that the update vector $\hat{\bm{s}}_{k,i}(w)$
					is now evaluated from the random data sample
					$\bm{x}_{k,i}$.
					Therefore, it is also random and time dependent.
					
			\end{itemize}
		The update vectors $\{\hat{\bs}_{k,i}(\cdot)\}$ may not necessarily 
		be the gradients of cost 
		functions or their stochastic approximations. They may take
		other forms for different reasons. For example,
		in \cite{kar2011converegence}, a certain gain matrix $K$
					is  multiplied to the left of the stochastic gradient
					vector $\widehat{\nabla_w J_{k}}(\cdot)$ 
					to make the estimator asymptotically efficient
					for a linear observation model.
		\hfill\QED
	\end{example}
	\vspace{0.5em}

Returning to the general distributed strategy
\eqref{Equ:ProblemFormulation:DisAlg_Comb1}--\eqref{Equ:ProblemFormulation:DisAlg_Comb2},
we note that it can be specialized into various useful algorithms.
We let $A_1$, $A_0$ and $A_2$ denote the $N \times N$ matrices that
collect the coefficients $\{a_{1,lk}\}$, $\{a_{0,lk}\}$ and $\{a_{2,lk}\}$.
Then, condition \eqref{Equ:ProblemFormulation:a_lk:convex} is equivalent
to
	\begin{align}
		\label{Equ:ProblemFormulation:a_lk:convex_matrixform}
		A_1^T \mathds{1} = \mathds{1},	\quad
		A_0^T \mathds{1} = \mathds{1},	\quad
		A_2^T \mathds{1} = \mathds{1}
	\end{align}
where $\one$ is the $N\times 1$ vector with all its entries equal to one.
Condition \eqref{Equ:ProblemFormulation:a_lk:convex_matrixform} 
means that the matrices $\{A_0,A_1,A_2\}$ are left-stochastic 
(i.e., the entries on each of their columns add up to one).  
Different choices for $A_1$, $A_0$ and $A_2$ correspond to different
distributed strategies, as summarized in Table \ref{Tab:ChoiceOfMatrixA}.
Specifically, the traditional 
consensus\cite{nedic2009distributed,nedic2010cooperative,tsitsiklis1986distributed,
kar2011converegence,kar2008distributed,kar2013distributed,dimakis2010gossip}
 and diffusion (ATC and CTA) \cite{lopes2008diffusion,sayed2012diffbookchapter,
Cattivelli10,zhao2012performance,chen2011TSPdiffopt,chen2013JSTSPpareto,sayed2014adaptation,sayed2014proc}
algorithms with \emph{constant} step-sizes are given by the following 
iterations:
	\begin{align}
		\label{Equ:ProblemFormulation:Consensus}
		\mathrm{Consensus:}\;	
		&\begin{cases}
			\displaystyle
			\bm{\phi}_{k,i-1} = \sum_{l \in \mc{N}_k}\!\! a_{0,lk} \bm{w}_{l,i-1}\\
			\displaystyle
			\bm{w}_{k,i}		=	\bm{\phi}_{k,i-1}
								-
								\mu_k \hat{\bm{s}}_{k,i}(\bm{w}_{k,i-\!1})
		\end{cases}
								\\\nonumber\\
		\label{Equ:ProblemFormulation:CTA}
		\mathrm{CTA~diffusion:}\;
		&\begin{cases}
			\displaystyle
			\bm{\phi}_{k,i-1}	=	\sum_{l \in \mc{N}_k} a_{1,lk} \bm{w}_{l,i-1}		\\
			\bm{w}_{k,i}			=	\bm{\phi}_{k,i-1}
									-
									\mu_k \hat{\bm{s}}_{k,i}(\bm{\phi}_{k,i-1})
		\end{cases}
		\\\nonumber\\		
		\label{Equ:ProblemFormulation:ATC}
		\mathrm{ATC~diffusion:}\;
		&\begin{cases}
			\displaystyle
			\bm{\psi}_{k,i}	=	\bm{w}_{k,i-1}
								\!-\!
								\mu_k \hat{\bm{s}}_{k,i}(\bm{w}_{k,i-1})	\\
			\displaystyle
			\bm{w}_{k,i}		=	\sum_{l \in \mc{N}_k} a_{2,lk} \bm{\psi}_{l,i}
		\end{cases}
	\end{align}	
Therefore, the convex combination steps appear in different locations in the
consensus and diffusion implementations. For instance, observe that the consensus strategy 
\eqref{Equ:ProblemFormulation:Consensus} evaluates the update direction
$\hat{\bs}_{k,i}(\cdot)$ at $\bw_{k,i-1}$, which is the estimator \emph{prior} to the aggregation, 
while the diffusion strategy \eqref{Equ:ProblemFormulation:CTA}
evaluates the update direction at $\bm{\phi}_{k,i-1}$, 
which is the estimator \emph{after} the aggregation. In our analysis, 
we will proceed with the general form
\eqref{Equ:ProblemFormulation:DisAlg_Comb1}--\eqref{Equ:ProblemFormulation:DisAlg_Comb2}
to study all three schemes, and other possibilities, within a unifying framework.

\begin{table}
	\centering
	\caption{Different choices for $A_1$, $A_0$ and $A_2$ correspond to different
	distributed strategies.}
	\label{Tab:ChoiceOfMatrixA}
	\begin{tabular}{c|ccc|c}
		\hline\hline
		Distributed Strategeis	&	$A_1$	&	$A_0$	&	$A_2$	&	$A_1A_0A_2$	\\
		\hline
		Consensus				&	$I$		&	$A$		&	$I$		&	$A$			\\
		ATC diffusion			&	$I$		&	$I$		&	$A$		&	$A$			\\
		CTA	diffusion			&	$A$		&	$I$		&	$I$		&	$A$			\\
		\hline
	\end{tabular}
	\end{table}

We observe that there are two types of learning processes involved
in the dynamics of each agent $k$: (i) self-learning in \eqref{Equ:ProblemFormulation:DisAlg_Adapt} 
from locally
sensed data and (ii) social learning in
\eqref{Equ:ProblemFormulation:DisAlg_Comb1} and \eqref{Equ:ProblemFormulation:DisAlg_Comb2} 
from neighbors.
All nodes implement the same
self- and social learning structure. As a result, the learning dynamics
of all nodes in the network are coupled; knowledge exploited 
from local data at node $k$ will be propagated to its neighbors 
and from there to their neighbors in a diffusive learning process. 
It is expected that some global performance pattern will emerge 
from these localized interactions in the multi-agent system. 
In this work and the accompanying Part II \cite{chen2013learningPart2}, we address the following questions: 
	\begin{itemize}
		\item
			\underline{Limit point}: where does each state $\bm{w}_{k,i}$ converge to?
		\item
			\underline{Stability}: under which condition does convergence occur?
		\item	
			\underline{Learning rate}: how fast does convergence occur?
		\item
			\underline{Performance}: how close is $\bm{w}_{k,i}$ to the limit point?
		\item
			\underline{Generalization}: can $\w_{k,i}$ match the performance of a centralized solution?
	\end{itemize}
We address the first three questions in this part, and examine the last two questions pertaining to performance in Part II \cite{chen2013learningPart2}.  We  address the five questions by characterizing analytically the learning dynamics of the
network to reveal the global behavior that emerges in the small step-size regime.
The answers to these questions will provide useful and novel insights about how 
to tune the algorithm parameters in order to reach desired performance levels
--- see Sec. \ref{P2-Sec:Benefits} in Part II \cite{chen2013learningPart2}.

\subsection{Relation to Prior Work}
\label{Sec:ProblemFormulation:PriorWork}
In comparison with the existing literature\cite{tsitsiklis1986distributed,
kar2011converegence,kar2008distributed,kar2013distributed,dimakis2010gossip,ram2010distributed,
srivastava2011distributed,nedic2009distributed,lee2013distributed,bianchi2012performance,johansson2008subgradient,braca2008running,stankovic2011decentralized}, it is worth noting that most prior studies on distributed optimization algorithms focus on studying their performance and convergence  under {\em diminishing} step-size conditions and for {\em doubly-stochastic} combination policies (i.e., matrices for which the entries on each of their columns {\em and} on each of their rows add up to one). These are of course useful conditions, especially when the emphasis is on solving {\em static} optimization problems. We focus instead on the case of \emph{constant} step-sizes because, as explained earlier, they enable continuous adaptation and lea{rning under drifting conditions; in contrast, diminishing step-sizes turn off learning once they approach zero. By using constant step-sizes, the resulting algorithms are able to track {\em dynamic} solutions that may slowly drift as the underlying problem conditions change. 
{

Moreover, constant step-size implementations have merits even for  stationary environments where the solutions remain {\em static}. This is because, as we are going to show later in this work and its accompanying Part II \cite{chen2013learningPart2},  constant step-size learning converges at a geometric rate, in the order of $O(\gamma^i)$ for some $0<\gamma<1$, towards a small mean-square error in the order of the step-size parameter. This means that these solutions can attain satisfactory performance even after short intervals of time. In comparison, implementations that rely on a diminishing step-size of the form $\mu(i)=\mu_o/i$, for some constant $\mu_o$, converge almost surely to the solution albeit at the slower rate of $O(1/i)$.
%
%
%
Furthermore, the choice of the parameter $\mu_o$ is critical to guarantee the $O(1/i)$ rate \cite[p.54]{poliak1987introduction}; if $\mu_o$ is not large enough, the resulting convergence rate can be considerably slower than $O(1/i)$. To avoid this slowdown in convergence, a large initial value $\mu_o$ is usually chosen in practice, which ends up leading to an overshoot in the learning curve; the curve grows up initially before starting its decay at the asymptotic rate, $O(1/i)$.
}

We remark that we also do not limit the choice of combination policies to being doubly-stochastic; we only require condition 
\eqref{Equ:ProblemFormulation:a_lk:convex_matrixform}. It turns out that left-stochastic matrices lead to superior mean-square error performance (see, e.g., expression \eqref{P2-Equ:Benefits:HastingRule} in Part II \cite{chen2013learningPart2} and also \cite{Cattivelli10}). The use of both constant step-sizes and left-stochastic combination policies enrich the learning dynamics of the network in interesting ways, as we are going to discover. In particular, under 
 these conditions, we will derive an interesting result that reveals how the topology of the network determines the limit point of the distributed strategies. We will show that the combination weights steer the convergence point away from the expected solution and towards any of many possible Pareto optimal solutions. This is in contrast to commonly-used doubly-stochastic combination policies where the limit point of the network is fixed and cannot be changed regardless of the topology. We will show that the limit point is determined by the right eigenvector that is associated with the eigenvalue at one for the matrix product $A_1A_0A_2$. We will also be able to characterize in Part II \cite{chen2013learningPart2} how close each agent in the network gets to this limit point and to explain how the limit point plays the role of a Pareto optimal solution for a suitably defined aggregate cost function. 

{
We note that the concept of a limit point in this work is different from earlier studies on the limit point of consensus implementations that deal exclusively with the problem of evaluating the weighted average of initial state values at the agents (e.g.,  \cite{tahbaz2008necessary}). In these implementations, there are no adaptation steps and no streaming data; the step-size parameters $\{\mu_k\}$ are set to zero in \eqref{Equ:ProblemFormulation:DisAlg_Adapt}, \eqref{Equ:ProblemFormulation:Consensus}, \eqref{Equ:ProblemFormulation:CTA} and \eqref{Equ:ProblemFormulation:ATC}. In contrast, the general distributed strategy \eqref{Equ:ProblemFormulation:DisAlg_Comb1}--\eqref{Equ:ProblemFormulation:DisAlg_Comb2} is meant to solve continuous adaptation and learning problems from streaming data arriving at the agents. In this case, the adaptation term $\hat{\bm{s}}_{k,i}(\cdot)$ (self-learning) is necessary, in addition to the combination step (social-learning). There is a non-trivial coupling between both steps and across the agents. For this reason, identifying the actual limit point of the distributed strategy is rather challenging and requires a close examination of the evolution of the network dynamics, as demonstrated by the technical tools used in this work. In comparison, while the evolution of traditional average-consensus implementations can be described by linear first-order recursions,  the same is not true for adaptive networks where the dynamics evolves according to nonlinear stochastic difference recursions. 
}

\section{MODELING ASSUMPTIONS}
\label{Sec:Assump}
In this section, we collect the assumptions and definitions
that are used in the analysis and explain why they are justified and how they relate to similar assumptions used in several prior studies in the literature. As the discussion will reveal, in most cases, the assumptions that we adopt here are relaxed (i.e., weaker) versions than conditions used before in the literature
such as in
\cite{tsitsiklis1986distributed,kar2011converegence,kar2008distributed,
ram2010distributed,srivastava2011distributed,nedic2009distributed,
chen2011TSPdiffopt,chen2013JSTSPpareto,zhao2012performance,
bertsekas2000gradient,poliak1987introduction,
eksin2012tspdistributed,
lee2013distributed}. We do so in order to analyze the learning behavior of networks under conditions that are similar to what is normally assumed in the prior art, albeit ones that are generally less restrictive. 
\vspace{0.5em}
	\begin{assumption}[Strongly-connected network]
		\label{Assumption:Network}
		The $N \times N$ matrix product $A \defeq A_1A_0A_2$ is assumed to be a primitive 
		left-stochastic matrix, i.e., $A^T\mathds{1}=\mathds{1}$ 
		and there exists a finite integer $j_o$ such that 
		all entries of $A^{j_o}$ are strictly positive. 
		\hfill\QED
	\end{assumption}
\vspace{0.5em}	

This condition is satisfied for most networks and is not restrictive. 
Let $A=[a_{lk}]$ denote the entries of $A$. Assumption \ref{Assumption:Network} 
is automatically satisfied if the product $A$ corresponds to a connected network and there exists at least one 
$a_{kk}>0$ for some node $k$ (i.e., at least one node with a nontrivial self-loop)
\cite{sayed2012diffbookchapter,sayed2014adaptation}. It then follows from the Perron-Frobenius Theorem
\cite{horn1990matrix} that  
the matrix $A_1A_0A_2$ has a single eigenvalue at one of 
multiplicity one and all other eigenvalues are strictly less than one in magnitude,
i.e.,
	\begin{align}
		1=\lambda_1(A) > |\lambda_2(A)| \ge \cdots \ge |\lambda_N(A)|
		\label{Equ:Modeling:lambda_A_order}
	\end{align}
Obviously, $\mathds{1}^T$ is a left eigenvector for $A_1A_0A_2$ corresponding to the eigenvalue at one. Let $\theta$ denote the 
right eigenvector corresponding to the eigenvalue at one (the Perron vector) and whose entries are normalized to add up to one, i.e., 
	\begin{align}
		A \theta = \theta, \qquad \one^T \theta = 1
	\end{align}
Then, the Perron-Frobenius Theorem further ensures that all entries of $\theta$ satisfy $0 < \theta_k <1$. Note that, unlike 
\cite{kar2013distributed,nedic2010cooperative,tsitsiklis1986distributed,kar2011converegence,kar2008distributed,dimakis2010gossip,ram2010distributed,srivastava2011distributed,nedic2009distributed}, we do not require the matrix  $A_1A_0A_2$ to be doubly-stochastic (in which case $\theta$ would be $\mathds{1}/N$ and, therefore, all its entries will be identical to each other). Instead, we will study the performance of the algorithms in the context of 
general left-stochastic matrices $\{A_1, A_0, A_2\}$ and we will examine the influence of (the generally non-equal entries of) $\theta$ 
on both the limit point and performance of the network.
 	\vspace{0.5em}
	\begin{definition}[Step-sizes]
		\label{Def:StepSize}
		Without loss of generality, we express the step-size at each node $k$ as
		$\mu_k=\mu_{\max}\beta_k$, where
		$\mu_{\max}\defeq \max \{\mu_k\}$ is the largest step-size, and $0 \le \beta_k \le 1$.
		We assume $\beta_k>0$ for at least one $k$. Thus, observe 
		that we are allowing the possibility of zero step-sizes by some of the agents. 
		
		\hfill\QED
	\end{definition}
	\vspace{0.5em}
	\begin{definition}[Useful vectors]
		\label{Def:UsefulVectors}
		Let $\pi$ and $p$ be the following $N \times 1$ vectors:
			\begin{align}
				\label{Def:UsefulVectors:pi}
				\pi		&\defeq		A_2 \theta				\\
				\label{Def:UsefulVectors:p}
				p		&\defeq		\col\{\pi_1\beta_1,\ldots,\pi_N\beta_N\}
			\end{align}
		where $\pi_k$ is the $k$th entry of the vector $\pi$.
		\hfill\QED
	\end{definition}
	\vspace{0.5em}

The vector $p$ will play a critical role in the 
performance of the distributed strategy
\eqref{Equ:ProblemFormulation:DisAlg_Comb1}--\eqref{Equ:ProblemFormulation:DisAlg_Comb2}.
Furthermore, we introduce the following assumptions on 
the update vectors $\hat{\bm{s}}_{k,i}(\cdot)$ in \eqref{Equ:ProblemFormulation:DisAlg_Comb1}--%
\eqref{Equ:ProblemFormulation:DisAlg_Comb2}.
	\vspace{0.5em}
	\begin{assumption}[Update vector: Randomness]
		\label{Assumption:UpdateVectorRandomness}
		There exists an $M \times 1$ deterministic vector function 
		$s_k(w)$ such that, for all $M \times 1$ vectors $\bm{w}$ in the filtration $\mc{F}_{i-1}$
		generated by  the past history of iterates $\{\bm{w}_{k,j}\}$ for $j \le i-1$
		and all $k$, it holds that
			\begin{align}
				\label{Equ:Assumption:Randomness:MDS}
				&\E\left\{
					\hat{\bm{s}}_{k,i}(\bm{w}) | \mc{F}_{i-1}
				\right\}
						=	s_k(\bm{w})						
			\end{align}
		for all $i,k$. Furthermore, there exist
		$\alpha \ge 0$ and $\sigma_{v}^2 \ge 0$ such that
		for all $i,k$ and $\bm{w} \in \mc{F}_{i-1}$:
			\begin{align}
				\label{Equ:Assumption:Randomness:RelAbsNoise}
				&\E\left\{\!
					\left\| 
						\hat{\bm{s}}_{k,i}(\bm{w})\!-\!s_{k}(\bm{w})
					\right\|^2
					\big|
					\mF_{i-1}
				\right\}
						\le \alpha
							\!\cdot\!
							\| \w \|^2 \!+\! \sigma_{v}^2							
			\end{align}
		\hfill\QED		
	\end{assumption}
	\vspace{0.5em}

Condition \eqref{Equ:Assumption:Randomness:RelAbsNoise} requires the conditional variance of the random update
direction $\hat{\bm{s}}_{k,i}(\bm{w})$ to be bounded by the square-norm
of $\bm{w}$. Condition \eqref{Equ:Assumption:Randomness:RelAbsNoise}
is a generalized version
of Assumption 2 from \cite{chen2011TSPdiffopt,chen2013JSTSPpareto}; it is also a generalization of the assumptions from \cite{ram2010distributed,bertsekas2000gradient,poliak1987introduction},
where $\hat{\bm{s}}_{k,i}(\bm{w})$ was instead modeled as the following 
perturbed version of the true gradient vector:
	\begin{align}
		\hat{\bs}_{k,i}(\bw)
		=\widehat{\nabla_w J_{k}}(\bm{w})	=	\nabla_w J_k (\bm{w}) + \bm{v}_{k,i}(\bm{w})
		\label{Equ:Modeling:StochasticGradient}
	\end{align}
with $s_k(\bm{w})=\nabla_w J_k(\bm{w})$, in which case conditions 
\eqref{Equ:Assumption:Randomness:MDS}--\eqref{Equ:Assumption:Randomness:RelAbsNoise}
translate into the following requirements on the 
gradient noise $\bm{v}_{k,i}(\bm{w})$:
	\begin{align}
				\label{Equ:EarlyAssumption:RandomnessGradient:MDS}
				&\E\left\{
					{\bm{v}}_{k,i}(\bm{w}) | \mc{F}_{i-1}
				\right\}
						=	0	
				\qquad \mathrm{(zero~mean)}
							\\
				\label{Equ:EarlyAssumption:RandomnessGradient:RelAbsNoise}
				&\E\left\{\!
					\left\| 
						{\bm{v}}_{k,i}(\bm{w})
					\right\|^2
					\big|
					\mF_{i-1}
				\right\}
						\le \alpha 
							\!\cdot\!
							\| \w \|^2 \!+\! \sigma_{v}^2					
			\end{align}
In Example 2 of \cite{chen2011TSPdiffopt}, we explained how these conditions are satisfied automatically in the context of mean-square-error adaptation over networks. Assumption \ref{Assumption:UpdateVectorRandomness} given by \eqref{Equ:Assumption:Randomness:MDS}--\eqref{Equ:Assumption:Randomness:RelAbsNoise}  is more general than \eqref{Equ:EarlyAssumption:RandomnessGradient:MDS}--\eqref{Equ:EarlyAssumption:RandomnessGradient:RelAbsNoise} because we are allowing the update vector $\hat{\bm{s}}_{k,i}(\cdot)$ to be constructed in forms other than \eqref{Equ:Modeling:StochasticGradient}. Furthermore, Assumption \eqref{Equ:EarlyAssumption:RandomnessGradient:RelAbsNoise} is also more relaxed than the following variant used in \cite{bertsekas2000gradient,poliak1987introduction}:
	\begin{align}
		\E\left\{\!
			\left\| 
				{\bm{v}}_{k,i}(\bm{w})
			\right\|^2
			\big|
			\mF_{i-1}
		\right\}
				\le 
					\alpha 
					\!\cdot\!
					\| \nabla_w J_k(\w) \|^2 \!+\! \sigma_{v}^2	
		\label{Equ:EarlyAssumption:RandomnessGradient:RelAbsNoise_variant}
	\end{align}
This is because \eqref{Equ:EarlyAssumption:RandomnessGradient:RelAbsNoise_variant} implies a condition of the form \eqref{Equ:EarlyAssumption:RandomnessGradient:RelAbsNoise}. Indeed, note that 
	\begin{align}
		\E&\left\{\!
			\left\| 
				{\bm{v}}_{k,i}(\bm{w})
			\right\|^2
			\big|
			\mF_{i-1}
		\right\}
						\nn\\
				&= 
						\alpha 
						\!\cdot\!
						\| \nabla_w J_k(\w) - \nabla_w J_k(0) + \nabla_w J_k(0) \|^2 \!+\! \sigma_{v}^2	
						\nn\\
				&\overset{(a)}{\le }
						2\alpha 
						\!\cdot\!
						\| \nabla_w J_k(\w) - \nabla_w J_k(0) \|^2
						+
						2\alpha
						\| \nabla_w J_k(0) \|^2 
						\!+\! \sigma_{v}^2
						\nn\\
				&\overset{(b)}{\le}
						2\alpha \lambda_U^2 
						\cdot
						\| \w \|^2
						+
						2\alpha
						\| \nabla_w J_k(0) \|^2 
						\!+\! \sigma_{v}^2
						\nn\\
				&\defeq
						\alpha' \cdot \| \w \|^2 + \sigma_{v'}^2
	\end{align}
where step (a) uses the relation $\| x+y\|^2 \le 2 \| x\|^2 + 2\| y\|^2$, and step (b) used \eqref{Equ:Assumption:Lipschitz} to be assumed next.
	
	\vspace{0.5em}
	\begin{assumption}[Update vector: Lipschitz]
		\label{Assumption:UpdateVectorLipschitz}
		There exists a nonnegative $\lambda_{U}$ such that for 
		all $x,y \in \mb{R}^M$ and all $k$:
			\begin{align}
				\label{Equ:Assumption:Lipschitz}
				\|s_k(x)-s_k(y)\|	\le 	\lambda_{U} \cdot \|x-y\|
			\end{align}
		where the subscript ``$U$'' in $\lambda_{U}$ means
		``upper bound''.
		\hfill\QED
	\end{assumption}	
	\vspace{0.5em}
A similar assumption to \eqref{Equ:Assumption:Lipschitz}  
was used before in the literature for the model 
\eqref{Equ:Modeling:StochasticGradient} by requiring the gradient vector 
of the individual cost functions $J_k(w)$ to be Lipschitz
\cite{tsitsiklis1986distributed,eksin2012tspdistributed,
bertsekas2000gradient,poliak1987introduction,lee2013distributed}.
Again, condition \eqref{Equ:Assumption:Lipschitz} is more general 
because we are not limiting the construction of the update direction to 
\eqref{Equ:Modeling:StochasticGradient}. 		
	\vspace{0.5em}
	\begin{assumption}[Update vector: Strong monotonicity]
		\label{Assumption:UpdateVectorMonot}
		Let $p_k$ denote the $k$th entry of the vector $p$ 
		defined in \eqref{Def:UsefulVectors:p}.
		There exists $\lambda_L > 0$ such that for
		all $x,y \in \mb{R}^M$:
			\begin{align}
				\label{Equ:Assumption:StrongMonotone}
				(x-y)^T \cdot
						\sum_{k=1}^N
						p_k 
						\Big[
								s_k(x)-s_k(y)
						\Big]
						\ge 	\lambda_L \cdot \|x-y\|^2
			\end{align}
		where the subscript ``$L$'' in $\lambda_L$ means ``lower bound''.
		\hfill\QED
	\end{assumption}
	\vspace{0.5em}
	\begin{remark}
		Applying the Cauchy-Schwartz inequality \cite[p.15]{horn1990matrix} 
		to the left-hand side of
		\eqref{Equ:Assumption:StrongMonotone} and using 
		\eqref{Equ:Assumption:Lipschitz}, we deduce the following
		relation between $\lambda_L$ and $\lambda_U$:
			\begin{align}
				\label{Equ:Remark:lambdaU_lambda_L_relation}
				\lambda_U \cdot \|p\|_1  \ge \lambda_L
			\end{align}
		where $\|\cdot\|_1$ denotes the $1-$norm of the vector argument.
		\hfill\QED
	\end{remark}

The following lemma gives the equivalent forms of Assumptions \ref{Assumption:UpdateVectorLipschitz}--\ref{Assumption:UpdateVectorMonot} when the $\{s_k(w)\}$ happen to be differentiable.
	\begin{lemma}[Equivalent conditions on update vectors]
		\label{Lemma:EquivCond_updateVec}
		Suppose $\{s_k(w)\}$ are differentiable in an open set $\mS \subseteq \mb{R}^M$. Then,
		having conditions \eqref{Equ:Assumption:Lipschitz} and \eqref{Equ:Assumption:StrongMonotone} hold on $\mS$ is 
		equivalent to the following conditions, respectively,
			\begin{align}
				\| \nabla_{w^T} s_k(w) \|		&\le 		\lambda_U				
				\label{Equ:Lemma:EquivCondUpdateVec:LipschitzUpdate_HessianUB}
															\\
				\frac{1}{2} [ H_c(w) + H_c^T(w) ] 
												&\ge 		\lambda_L \cdot I_M
				\label{Equ:Lemma:EquivCondUpdateVec:StrongMono_HessianLB}
			\end{align}
		for any $w \in \mS$, where $\|\cdot\|$ denotes the $2$-induced norm (largest singular value) of 
		its matrix argument and 
			\begin{align}
				H_c(w)		\defeq		\sum_{k=1}^n p_k \nabla_{w^T} s_k(w)
				\label{Equ:Lemma:EquivCondUpdateVec:Hc_def}
			\end{align}
	\end{lemma}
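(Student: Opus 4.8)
The plan is to prove the two claimed equivalences separately, in each case relating the finite-increment inequality to a pointwise bound on a Jacobian: the integral (fundamental theorem of calculus) argument gives one direction, and a directional difference-quotient argument gives the converse. Throughout I would use that, since each $s_k$ is differentiable, $\nabla_{w^T} s_k(w)$ is its $M\times M$ Jacobian, and I would assume that the segment joining the two points at which a finite-increment inequality is evaluated lies in $\mS$ (e.g.\ $\mS$ convex), so that integration along that segment is legitimate.

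First, for the Lipschitz equivalence. To show that \eqref{Equ:Lemma:EquivCondUpdateVec:LipschitzUpdate_HessianUB} implies \eqref{Equ:Assumption:Lipschitz}, I would write the increment as a line integral, $s_k(x)-s_k(y)=\int_0^1 \nabla_{w^T}s_k(y+t(x-y))\,(x-y)\,dt$, take norms, and bound the integrand by $\|\nabla_{w^T}s_k\|\cdot\|x-y\|\le \lambda_U\|x-y\|$ using the consistency of the induced norm; integrating over $t\in[0,1]$ yields \eqref{Equ:Assumption:Lipschitz}. For the converse, I would fix $w\in\mS$ and a unit vector $d$ and form the difference quotient $[s_k(w+td)-s_k(w)]/t$, whose norm is at most $\lambda_U$ by \eqref{Equ:Assumption:Lipschitz}; letting $t\to 0$ identifies the limit as $\nabla_{w^T}s_k(w)\,d$, so $\|\nabla_{w^T}s_k(w)\,d\|\le \lambda_U$ for every unit $d$, and maximizing over $d$ gives \eqref{Equ:Lemma:EquivCondUpdateVec:LipschitzUpdate_HessianUB}.

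Second, for the strong-monotonicity equivalence, I would introduce the aggregate map $g(w)\defeq\sum_{k=1}^N p_k s_k(w)$, whose Jacobian is exactly $H_c(w)$ in \eqref{Equ:Lemma:EquivCondUpdateVec:Hc_def}, and use the elementary identity that a quadratic form sees only the symmetric part, $z^T H_c z=z^T\frac{1}{2}[H_c+H_c^T]z$. To show \eqref{Equ:Lemma:EquivCondUpdateVec:StrongMono_HessianLB} implies \eqref{Equ:Assumption:StrongMonotone}, I would again write $g(x)-g(y)=\int_0^1 H_c(y+t(x-y))(x-y)\,dt$, contract with $(x-y)^T$, and combine the symmetric-part identity with \eqref{Equ:Lemma:EquivCondUpdateVec:StrongMono_HessianLB} to bound the integrand below by $\lambda_L\|x-y\|^2$. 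Conversely, setting $x=w+td$, $y=w$ in \eqref{Equ:Assumption:StrongMonotone}, dividing by $t^2>0$, and letting $t\to 0^+$ gives $d^T H_c(w)\,d\ge \lambda_L\|d\|^2$ for all $d$, which by the symmetric-part identity is precisely \eqref{Equ:Lemma:EquivCondUpdateVec:StrongMono_HessianLB}.

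The two directions are essentially dual, and there is no deep obstacle here; the main thing to be careful about is the domain, which I would state explicitly. The forward (integral) directions require the whole segment $[x,y]$ to lie in $\mS$, so the global inequalities \eqref{Equ:Assumption:Lipschitz} and \eqref{Equ:Assumption:StrongMonotone} are recovered on convex subsets of $\mS$, whereas the converse (limiting) directions need only openness of $\mS$. Beyond flagging this caveat, the remaining work is the routine verification that the induced-norm bound passes through the integral and that the difference quotient converges to the Jacobian, both immediate from differentiability.
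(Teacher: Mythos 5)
Your proof is correct and follows essentially the same route as the paper's: the line-integral (mean-value theorem) argument to pass from the Jacobian bounds \eqref{Equ:Lemma:EquivCondUpdateVec:LipschitzUpdate_HessianUB}--\eqref{Equ:Lemma:EquivCondUpdateVec:StrongMono_HessianLB} to the finite-increment conditions, and difference quotients with $x = w + t\,\delta w$, $y = w$, $t \to 0^{+}$ (plus the symmetric-part identity for the quadratic form) for the converse. Your explicit caveat that the integral direction needs the segment $[x,y]$ to lie in $\mS$ is a point the paper glosses over --- it simply asserts $y + t(x-y) \in \mS$, which implicitly assumes convexity of $\mS$ --- so your version is, if anything, slightly more careful on the domain issue (and harmless in the paper's application, where $\mS = \mb{R}^M$).
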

	\begin{proof}
		See Appendix \ref{Appendix:Proof_Lemma_EquivCondUpdateVec}.
	\end{proof}

Since in Assumptions \ref{Assumption:UpdateVectorLipschitz}--\ref{Assumption:UpdateVectorMonot} we require conditions \eqref{Equ:Assumption:Lipschitz} and \eqref{Equ:Assumption:StrongMonotone} to hold over the entire $\mb{R}^M$, 
then the equivalent conditions \eqref{Equ:Lemma:EquivCondUpdateVec:LipschitzUpdate_HessianUB}--\eqref{Equ:Lemma:EquivCondUpdateVec:StrongMono_HessianLB} will need to hold over the entire $\mb{R}^M$ when the $\{s_{k}(w)\}$ are differentiable. In the context of distributed optimization problems of the form \eqref{Equ:ProblemFormulation:Ex:J_glob}--%
\eqref{Equ:ProblemFormulation:Multiobjective} with twice-differentiable
$J_k(w)$, where the stochastic gradient vectors are constructed as in
\eqref{Equ:Modeling:StochasticGradient}, Lemma \ref{Lemma:EquivCond_updateVec} implies that the above Assumptions \ref{Assumption:UpdateVectorLipschitz}--\ref{Assumption:UpdateVectorMonot} are equivalent to the following conditions on the Hessian matrix of each $J_k(w)$\cite[p.10]{poliak1987introduction}:
	\begin{align}
		\label{Equ:ProblemFormulation:Hessian_Upperbound}
		&\left\|\nabla_w^2 J_k(w)\right\|	\le 	\lambda_U		\\
		\label{Equ:ProblemFormulation:Hessian_Lowerbound}
		&\sum_{k=1}^N p_k \nabla_w^2 J_k(w)
							\ge 	\lambda_L I_M	>	0
	\end{align}
Condition \eqref{Equ:ProblemFormulation:Hessian_Lowerbound} is in turn equivalent to requiring the following weighted sum of the individual cost functions $\{J_k(w)\}$ to be strongly convex:
	\begin{align}
		\label{Equ:ProblemFormulation:J_glob_weigthed}
		J^{\mathrm{glob}, \star}(w)	\defeq	\sum_{k=1}^N p_k J_k(w)
	\end{align}
We note that strong convexity conditions are prevalent in many studies on optimization techniques in the literature. For example, each of the individual costs $J_k(w)$ is assumed to be stronlgy convex in \cite{srivastava2011distributed} in order
to derive upper bounds on the limit superior (``$\limsup$'') of the mean-square-error of the estimates $\bm{w}_{k,i}$ or the expected value of the cost function at $\bm{w}_{k,i}$. In comparison, the framework in this work does not require the individual costs to be strongly convex or even convex. Actually, some of the costs $\{J_k(w)\}$  can be non-convex as long as the aggregate cost \eqref{Equ:ProblemFormulation:J_glob_weigthed} remains strongly convex. 
Such relaxed assumptions on the individual costs introduce challenges into the analysis, and we need to develop a systematic approach to characterize the limiting behavior of adaptive networks under such less restrictive conditions. 
{
	\begin{example}
		\label{Ex:LMS_GlobalObservability}
		
		The strong-convexity condition \eqref{Equ:ProblemFormulation:Hessian_Lowerbound} on the
		aggregate cost \eqref{Equ:ProblemFormulation:J_glob_weigthed} can be related to a global
		observability condition similar to \cite{kar2011converegence,kar2008distributed,kar2013distributed}. 
		To illustrate this point, we consider an example dealing with quadratic costs. Thus, consider 
		a network of $N$ agents that are connected according to a certain topology. 
		The data samples received at each agent $k$ at 
		time $i$ consist of the observation signal $\bm{d}_{k}(i) \in \mathbb{R}$ and the regressor vector 
		$\bm{u}_{k,i} \in \mathbb{R}^{1 \times M}$, which are assumed to be related 
		according to the following linear model:
			\begin{align}
				\bm{d}_{k}(i)	=	\bm{u}_{k,i} w^o + \bm{v}_{k}(i)
				\label{Equ:Ex:LinearModel_LMS}
			\end{align}
		where $\bm{v}_k(i) \in \mathbb{R}$ is a zero-mean additive white noise that is uncorrelated with the
		regressor 
		vector $\bm{u}_{\ell,j}$ for all $k, \ell,i,j$. Each agent in the network would like to estimate 
		$w^o \in \mathbb{R}^M$ by 
		learning from the local data stream $\{ \bm{d}_{k}(i), \bm{u}_{k,i}\}$ and by collaborating with its 
		intermediate neighbors. The problem can be formulated as minimizing the aggregate cost 
		\eqref{Equ:ProblemFormulation:Ex:J_glob} with $J_k(w)$ chosen to be
			\begin{align}
				J_k(w)	=	\frac{1}{2}\Expt | \bm{d}_k(i) - \bm{u}_{k,i}w |^2
			\end{align}
		i.e.,
			\begin{align}
				J^{\mathrm{glob}}(w)		=	\sum_{k=1}^N \frac{1}{2} 
										\Expt | \bm{d}_k(i) - \bm{u}_{k,i}w |^2
				\label{Equ:Ex:J_glob_LMS_def1}
			\end{align}
		This is a distributed least-mean-squares (LMS) estimation problem studied in
		\cite{Cattivelli10,lopes2008diffusion,zhao2012performance}. 
		We would like to explain that 
		condition \eqref{Equ:ProblemFormulation:Hessian_Lowerbound}
		amounts to a global observability condition. First, note that the Hessian matrix of $J_k(w)$ 
		in this case is the covariance matrix of the regressor $\u_{k,i}$:
			\begin{align}
				R_{u,k} \defeq \Expt\{ \u_{k,i}^T \u_{k,i} \}
			\end{align}
		Therefore, condition \eqref{Equ:ProblemFormulation:Hessian_Lowerbound} becomes that there
		exists a $\lambda_L>0$ such that
			\begin{align}
				\sum_{k=1}^N p_k R_{u,k}	\ge 	\lambda_L I_M 	>	0
				\label{Equ:Example:GlobalObservabilityLMSweighted}
			\end{align}
		Furthermore, it can be verified that the above inequality holds
		for any positive $\{p_k\}$ as long as the following \emph{global observability} condition
		holds:
			\begin{align}
				\sum_{k=1}^N R_{u,k}	>	0
				\label{Equ:Example:GlobalObservabilityLMS}
			\end{align}
		To see this, let $p_{\min} = \min_k p_k$, and write the left-hand side of 
		\eqref{Equ:Example:GlobalObservabilityLMSweighted} as
			\begin{align}
				\sum_{k=1}^N p_k R_{u,k}
					&=		p_{\min} \sum_{k=1}^N R_{u,k}
							+
							\sum_{k=1}^N (p_k - p_{\min}) R_{u,k}
							\nn\\
					&\ge		p_{\min} \sum_{k=1}^N R_{u,k}
					>		\underbrace{
								p_{\min} 
								\lambda_{u,\min}
							}_{ \defeq \lambda_L}
							\cdot
							I_M
			\end{align}
		where $\lambda_{u,\min}$ denotes the minimum eigenvalue of $\sum_{k=1}^N R_{u,k}$.
		Note that the left-hand side of \eqref{Equ:Example:GlobalObservabilityLMS} is the Hessian
		of $J^{\mathrm{glob}}(w)$ in \eqref{Equ:Ex:J_glob_LMS_def1}. Therefore, condition 
		\eqref{Equ:Example:GlobalObservabilityLMS} means that the aggregate
		cost function \eqref{Equ:Ex:J_glob_LMS_def1} is strongly convex so that the information
		provided by the linear observation model \eqref{Equ:Ex:LinearModel_LMS} over the 
		entire network is sufficient to uniquely identify the minimizer of 
		\eqref{Equ:Ex:J_glob_LMS_def1}. Similar global observability conditions were
		used in \cite{kar2011converegence,kar2008distributed,kar2013distributed} to study the performance
		of distributed parameter estimation problems. Such conditions are useful because it implies
		that even if $w^o$ is not locally observable to any agent in the network  but is globally observable,
		i.e., $R_{u,k}>0$ does not hold for any $k=1,\ldots,N$ but
		\eqref{Equ:Example:GlobalObservabilityLMS} holds, the distributed strategies 
		\eqref{Equ:ProblemFormulation:Consensus}--\eqref{Equ:ProblemFormulation:ATC} will still
		enable each agent to estimate the correct $w^o$ through local cooperation. In Part II
		\cite{chen2013learningPart2}, we provide more insights into how cooperation
		benefits the learning at each agent.
		\hfill\QED
	\end{example}
}

	\begin{assumption}[Jacobian matrix: Lipschitz]
		\label{Assumption:JacobianUpdatVectorLipschitz}
		Let $w^o$ denote the limit point of the distributed strategy
		\eqref{Equ:ProblemFormulation:DisAlg_Comb1}--\eqref{Equ:ProblemFormulation:DisAlg_Comb2},
		which is defined further ahead as 
		the unique solution to \eqref{Equ:LearnBehav:FixedPointEqu}. Then, in a small 
		neighborhood around $w^o$, we assume that $s_k(w)$ is differentiable with 
		respect to $w$ and satisfies
			\begin{align}
				\label{Equ:Assumption:LipschitzJacobian}
				\|\nabla_{w^T} s_k(w^o+\delta w)-\nabla_{w^T} s_k(w^o)\|\leq\;\lambda_H\cdot \|\delta w\|	
			\end{align}
		for all $\|\delta w\|\leq r_H$ for some small $r_H$, and where $\lambda_H$ is a 
		nonnegative number independent of $\delta w$. 
		
		\hfill\QED
	\end{assumption}
	\vspace{0.5em}
In the context of distributed optimization problems of the form \eqref{Equ:ProblemFormulation:Ex:J_glob}--%
\eqref{Equ:ProblemFormulation:Multiobjective} with twice-differentiable
$J_k(w)$, where the stochastic gradient vectors are constructed as in
\eqref{Equ:Modeling:StochasticGradient},
the above Assumption translates into the following Lipschitz Hessian condition:
	\begin{align}
		\label{Equ:Assumption:LipschitzHessian}
		\|\nabla_{w}^2 J_k(w^o + \delta w) - \nabla_{w}^2 J_k(w^o)\| 
				\le \lambda_{H} \cdot \| \delta w \|
	\end{align}
Condition \eqref{Equ:Assumption:LipschitzJacobian} is useful when we examine the convergence rate of the algorithm later in this article. It is also useful in deriving the steady-state mean-square-error expression \eqref{Equ:LearnBehav:MSE} in Part II \cite{chen2013learningPart2}.

\section{Learning Behavior}
\label{Sec:LearningBehavior}

\subsection{Overview of Main Results}
\label{Sec:LearningBehavior:Overview}

Before we proceed to the formal analysis, we first give
a brief overview of the main results that we are going
to establish in this part on the learning behavior of 
the distributed strategies
\eqref{Equ:ProblemFormulation:DisAlg_Comb1}--%
\eqref{Equ:ProblemFormulation:DisAlg_Comb2} for sufficiently
small step-sizes.
The first major conclusion is that for general
\emph{left-stochastic} matrices $\{A_1,A_0,A_2\}$, the agents
in the network will have their estimators $\bw_{k,i}$ converge,
in the mean-square-error sense, to the \emph{same} vector $w^o$
that corresponds to the unique solution of the following
algebraic equation:
	\begin{align}
		\sum_{k=1}^N p_k s_k(w)	=	0
		\label{Equ:LearnBehav:FixedPointEqu}
	\end{align}
For example, in the context of distributed optimization problems of the form
\eqref{Equ:ProblemFormulation:Ex:J_glob}, this result implies that
for left-stochastic matrices $\{A_1,A_0,A_2\}$,
the distributed strategies represented by \eqref{Equ:ProblemFormulation:DisAlg_Comb1}--%
\eqref{Equ:ProblemFormulation:DisAlg_Comb2} will
\emph{not} converge to the global minimizer of the original aggregate cost
\eqref{Equ:ProblemFormulation:Ex:J_glob}, which is the unique solution to the
alternative algebraic equation
	\begin{align}
		\sum_{k=1}^N \nabla_w J_k(w) = 0
		\label{Equ:LearnBehav:FixedPointEquGrad_unifWeight}
	\end{align}
Instead, these distributed
solutions will converge to the global minimizer of the \emph{weighted} aggregate
cost $J^{\mathrm{glob},\star}(w)$ defined by \eqref{Equ:ProblemFormulation:J_glob_weigthed}
in terms of the entries $p_k$, i.e., to the unique solution of 
	\begin{align}
		\sum_{k=1}^N p_k \nabla_w J_k(w) = 0		
		\label{Equ:LearnBehav:FixedPointEquGrad_pWeight}
	\end{align}
Result \eqref{Equ:LearnBehav:FixedPointEqu} also means that the distributed strategies \eqref{Equ:ProblemFormulation:DisAlg_Comb1}--\eqref{Equ:ProblemFormulation:DisAlg_Comb2} converge to a 
Pareto optimal solution of the multi-objective problem \eqref{Equ:ProblemFormulation:Multiobjective};  one Pareto solution for each selection of the topology parameters $\{p_k\}$. The distinction between the aggregate costs $J^{\mathrm{glob}}(w)$ and 
$J^{\mathrm{glob},\star}(w)$ does not appear in earlier studies on distributed optimization\cite{lee2013distributed,nedic2010cooperative,dimakis2010gossip,kar2013distributed,tsitsiklis1986distributed,
kar2011converegence,kar2008distributed,ram2010distributed, srivastava2011distributed,nedic2009distributed} mainly because these studies focus on \emph{doubly-stochastic} combination matrices, for which the entries $\{p_k\}$ will all become equal to each other  for uniform step-sizes $\mu_k\equiv \mu$ or $\mu_k(i) \equiv \mu(i)$.  In that case, the minimizations of \eqref{Equ:ProblemFormulation:Ex:J_glob} and  \eqref{Equ:ProblemFormulation:J_glob_weigthed} become equivalent and the solution of \eqref{Equ:LearnBehav:FixedPointEquGrad_unifWeight} and \eqref{Equ:LearnBehav:FixedPointEquGrad_pWeight} would then coincide. In other words,  regardless of the choice of the doubly stochastic combination weights, when the $\{p_k\}$ are identical, the limit point will be unique and correspond to the solution of 
	\begin{align}
		\sum_{k=1}^N s_k(w)=0
		\label{Equ:LearnBehav:FixedPoint_UnifWeight}
	\end{align}
In contrast, result \eqref{Equ:LearnBehav:FixedPointEqu} shows that 
left-stochastic combination policies add more flexibility into the 
behavior of the network. By selecting different combination weights, 
or even different topologies, the entries $\{p_k\}$ can be made to 
change and the limit point can be steered towards other desired Pareto optimal solutions. 
{
Even in the traditional case of consensus-type implementations for computing averages, as opposed to learning from streaming data, it also holds that it is beneficial to relax the requirement of a doubly-stochastic combination policy in order to enable broadcast algorithms without feedback \cite{Iutzeler2013analysis}.
}

The second major conclusion of the paper is that we will show in \eqref{Equ:Thm_NonAsympBound:StepSize} further ahead that there always exist sufficiently small step-sizes such that the learning process over the network is mean-square stable. This means that the weight error vectors relative to $w^o$ will satisfy
	\begin{align}
		\limsup_{ i \rightarrow \infty} \Expt \| \tilde{\w}_{k,i} \|^2 
				\le 		O(\mu_{\max})
	\end{align}
so that the steady-state mean-square-error at each agent will be of the order of $O(\mu_{\max})$.

The third major conclusion of our analysis is that we will show that,
during the convergence process towards the limit point $w^o$, the learning
curve at each agent exhibits \emph{three} distinct
phases: Transient Phase I, Transient Phase II, and Steady-State
Phase. These phases are illustrated in Fig. \ref{Fig:TypicalLearningCurve}
and they are interpreted as follows. Let us first introduce a
\emph{reference} (centralized) procedure that is described by the following centralized-type
recursion:
	\begin{align}
		\label{Equ:LearnBehav:RefRec}
		\bar{w}_{c,i}	=	\bar{w}_{c,i-1} - \mu_{\max}\sum_{k=1}^N p_k s_k(\bar{w}_{c,i-1})
	\end{align}
which is initialized at 
	\begin{align}
		\bar{w}_{c,0}	=	\sum_{k=1}^N \theta_k w_{k,0}
		\label{Equ:LearnBehav:wc0bar_initRefRecur}
	\end{align}
where $\theta_k$ is the $k$th entry of the eigenvector $\theta$, 
$\mu_{\max}$, and $\{p_k\}$ are defined in Definitions \ref{Def:StepSize}--%
\ref{Def:UsefulVectors}, $w_{k,0}$ is the initial value of the distributed
strategy at agent $k$, and $\bar{w}_{c,i}$ is an $M\times 1$
vector generated by the reference recursion \eqref{Equ:LearnBehav:RefRec}. 
The three phases
of the learning curve will be shown to have the following
features:
	\begin{figure}[t]
		\centering
		\includegraphics[width=0.45\textwidth]{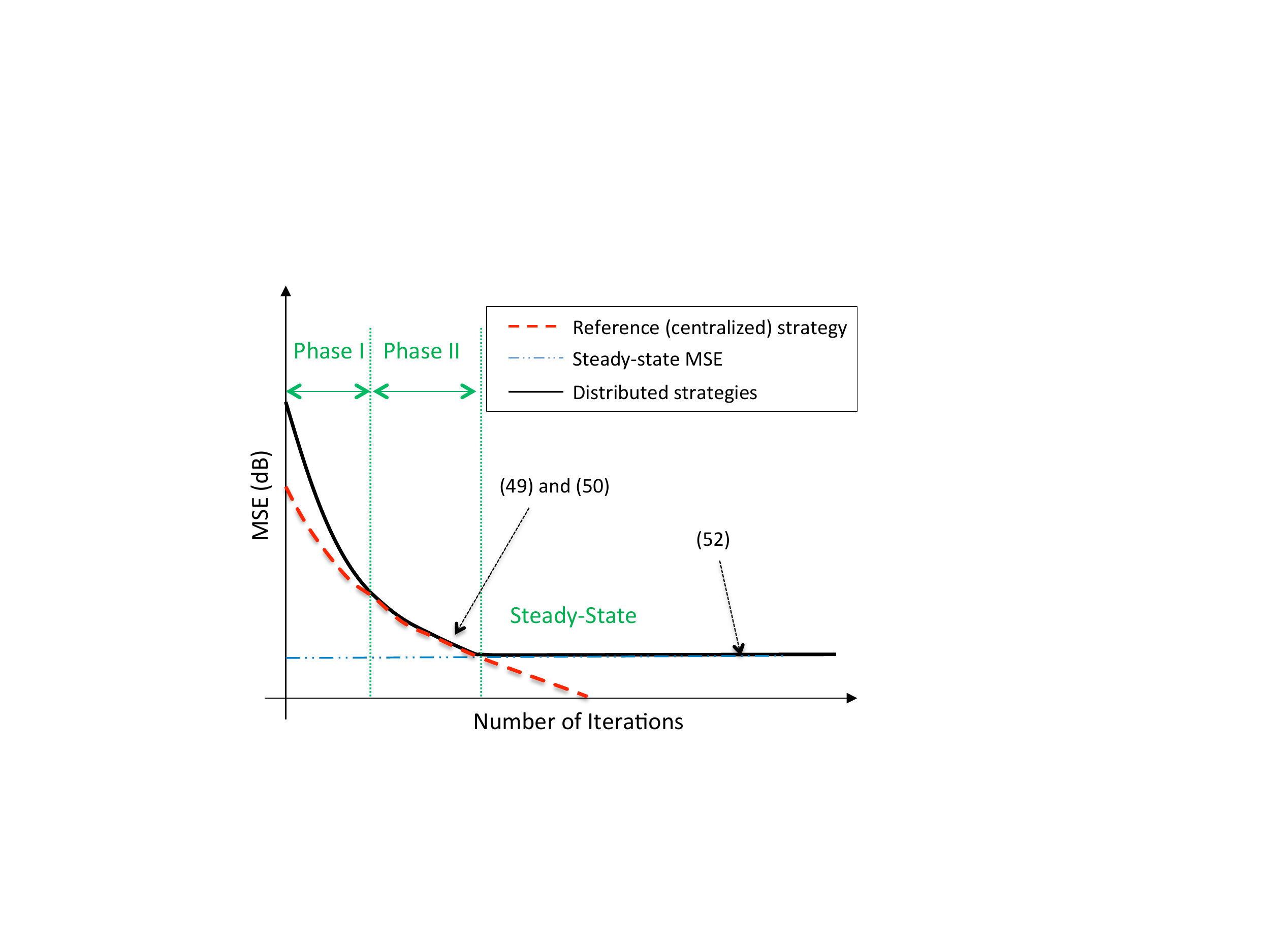}
		\caption{A typical mean-square-error (MSE) 
		learning curve includes a transient stage that consists of
		two phases and a steady-state phase. The plot shows 
		how the learning curve of a network of agents compares 
		to the learning curve of a centralized reference solution. 
		The analysis in this work, and in the accompanying Part II \cite{chen2013learningPart2} characterizes in detail the 
		parameters that determine the behavior of the network 
		(rate, stability, and performance) during each phase of the 
		learning process.}
		\label{Fig:TypicalLearningCurve}
		\vspace{-1\baselineskip}
	\end{figure}
	\begin{itemize}
		\item
		{\bf Transient Phase I:}\\
		If agents are initialized at different values, then the estimates
		of the various agents will initially evolve in such a way to make
		each $\bw_{k,i}$ get closer to the
		reference recursion $\bar{w}_{c,i}$. The rate at which
		the agents approach
		$\bar{w}_{c,i}$ will be determined by $|\lambda_2(A)|$, 
		the second largest eigenvalue of $A$ in magnitude. 
		If the agents are initialized at the
		same value, say, e.g., $\bw_{k,0}=0$, then the learning curves
		start at Transient Phase II directly.

		\item
		{\bf Transient Phase II:}\\
		In this phase, the trajectories of all agents are uniformly 
		close to the trajectory of the reference recursion; they
		converge in a coordinated manner to steady-state. The
		learning curves at this phase are well modeled by the 
		same reference recursion \eqref{Equ:LearnBehav:RefRec} 
		since we will show in \eqref{Equ:MSStability:MSE_wki_PhaseII} that:
			\begin{align}
				\E\|\tilde{\w}_{k,i}\|^2	=	\|\tilde{w}_{c,i}\|^2 
									+ 
									O(\mu_{\max}^{1/2}) \cdot \gamma_c^i 
									+ 
									O(\mu_{\max})
			\end{align}
		Furthermore, for small step-sizes and during the later stages of this phase, $\bar{w}_{c,i}$ will
		be close enough to $w^o$ and the convergence rate $r$ will be shown to satisfy:
			\begin{align}
				r		&=		\big[
							\rho(I_M - \mu_{\max} H_c)
						\big]^2
						+ 
						O\big( (\mu_{\max} \epsilon )^{\frac{1}{2(M-1)}} \big)
				\label{Equ:LearnBehav:r}
			\end{align}
		where $\rho(\cdot)$ denotes the spectral radius of its matrix argument, $\epsilon$ is
		an arbitrarily small positive number, and $H_c$ is the same matrix that results from evaluating \eqref{Equ:Lemma:EquivCondUpdateVec:Hc_def} 
		at $w=w^o$, i.e., 
			\begin{align}
				H_c	\defeq	\sum_{k=1}^N p_k H_k
					=		H_c(w^o)
				\label{Equ:LearnBehav:Hc_def}
			\end{align}
		where $H_k \defeq \nabla_{w^T} s_k(w^o)$.
		
		\item
		{\bf Steady-State Phase:}\\
		The reference recursion \eqref{Equ:LearnBehav:RefRec}
		continues converging towards $w^o$ so that $\|\tilde{w}_{c,i}\|^2
		=\|w^o-\bar{w}_{c,i}\|^2$ will converge to zero ($-\infty$ dB 
		in Fig. \ref{Fig:TypicalLearningCurve}).	
		However, for the distributed strategy \eqref{Equ:ProblemFormulation:DisAlg_Comb1}--%
		\eqref{Equ:ProblemFormulation:DisAlg_Comb2}, the mean-square-error
		$\E\|\tilde{\bw}_{k,i}\|^2=\E\|w^o-\bw_{k,i}\|^2$ at each
		agent $k$ will converge to a \emph{finite} steady-state value. 
		We will be able to characterize this value in terms of the vector $p$ in Part II \cite{chen2013learningPart2}
		as follows:\footnote{The interpretation of the limit in \eqref{Equ:LearnBehav:MSE}
		is explained in more detail in Sec. \ref{P2-Sec:LearnBehav:SteadyStateAnal} of Part II\cite{chen2013learningPart2}.}
			\begin{align}
				\lim_{i\rightarrow\infty}\!
					\E\|\tilde{\bw}_{k,i}\|^2	
						&=	
							\mu_{\max}\!\cdot\!
							\Tr\left\{
								X
								(p^T \!\! \otimes\! I_M) \mc{R}_v (p\!\otimes\! I_M)
							\right\} 
							\nn\\
							&\quad
							+ o(\mu_{\max})
				\label{Equ:LearnBehav:MSE}
			\end{align}
		where $X$ is the solution to the Lyapunov equation described
		later in
		\eqref{P2-Equ:SteadyState:ContinuousLyapunovEqu_final} of Part II \cite{chen2013learningPart2}
		(when $\Sigma=I$), and $o(\mu_{\max})$ denotes a strictly higher order term of $\mu_{\max}$.
		Expression \eqref{Equ:LearnBehav:MSE} is a 
		revealing result. It is a non-trivial extension of a classical 
		result pertaining to the mean-square-error performance 
		of stand-alone adaptive filters \cite{widrow1976stationary,jones1982analysis,
		gardner1984learning,feuer1985convergence} to the more demanding 
		context when a multitude of adaptive agents are coupled together in a 
		cooperative manner through a network topology. This result has an important ramification, 
		which we pursue in Part II \cite{chen2013learningPart2}. We will show there that no matter 
		how the agents are connected to each other, there is always a way to select the combination 
		weights such that the performance of the network is invariant to the topology. 
		This will also imply that, for any connected topology, there is always a way to select 
		the combination weights such that the performance of the network matches that of the centralized solution.
				
	\end{itemize}

{

Note that the above results are obtained for the general distributed strategy \eqref{Equ:ProblemFormulation:DisAlg_Comb1}--\eqref{Equ:ProblemFormulation:DisAlg_Comb2}. Therefore, the results can be specialized to the consensus, CTA diffusion, and ATC diffusion strategies in \eqref{Equ:ProblemFormulation:Consensus}--\eqref{Equ:ProblemFormulation:ATC} by choosing the matrices $A_1$, $A_2$, and $A_0$ according to Tab. \ref{Tab:ChoiceOfMatrixA}. The results in this paper and its accompanying Part II\cite{chen2013learningPart2} not only generalize the analysis from earlier works \cite{lopes2008diffusion,Cattivelli10,zhao2012performance,chen2011TSPdiffopt,chen2013JSTSPpareto} but, more importantly, they also provide deeper insights into the learning behavior of these adaptation and learning strategies.

}

\section{Study of Error Dynamics}

\subsection{Error Quantities}
\label{Sec:Performance:ErrorRecursion}
We shall examine the learning behavior of the distributed strategy 
\eqref{Equ:ProblemFormulation:DisAlg_Comb1}--\eqref{Equ:ProblemFormulation:DisAlg_Comb2} by examining
how the perturbation between the distributed solution 
\eqref{Equ:ProblemFormulation:DisAlg_Comb1}--\eqref{Equ:ProblemFormulation:DisAlg_Comb2} 
and the reference solution \eqref{Equ:LearnBehav:RefRec} evolves over time --- see Fig. \ref{Fig:Fig_NetworkTransformation}. Specifically, let $\check{\bw}_{k,i}$ denote the discrepancy between $\bw_{k,i}$ and $\bar{w}_{c,i}$, i.e.,
	\begin{align}
		\check{\bw}_{k,i}	\defeq	\bw_{k,i} - \bar{w}_{c,i}
		\label{Equ:DistProc:wki_check_def}
	\end{align}
and let $\bw_{i}$ and $\check{\bw}_{i}$ denote the global vectors
that collect the $\bw_{k,i}$ and $\check{\bw}_{k,i}$ from across the network, respectively:
	\begin{align}
		\bw_{i}			&\defeq		\col\{\bw_{1,i},\ldots,\bw_{N,i}\}	
		\label{Equ:DistProc:w_i_def}
									\\
		\check{\bw}_{i}	&\defeq		\col\{\check{\bw}_{1,i},\ldots,\check{\bw}_{N,i}\}
						=			\bw_{i} - \one \otimes \bar{w}_{c,i}
		\label{Equ:DistProc:w_check_i_def}
	\end{align}
It turns out that it is insightful to study the evolution of $\check{\bw}_{i}$ 
in a \emph{transformed} domain where it is possible to express the distributed
recursion \eqref{Equ:ProblemFormulation:DisAlg_Comb1}--\eqref{Equ:ProblemFormulation:DisAlg_Comb2} 
as a perturbed version of the reference recursion
\eqref{Equ:LearnBehav:RefRec}. 

\begin{definition}[Network basis transformation]
	We define the transformation by introducing the 
	Jordan canonical decomposition of the matrix $A=A_1A_0A_2$. Let 
		\begin{align}
			A^T = U D U^{-1}
			\label{Equ:distProc:A_JCF}
		\end{align}
	where $U$ is an invertible matrix whose columns correspond to the
	right-eigenvectors of $A^T$, and $D$ is a block Jordan matrix 
	with a single eigenvalue at one with multiplicity one while all other eigenvalues 
	are strictly less than one.  The Kronecker form of $A$ 
	then admits the decomposition: 
		\begin{align}
			\mc{A}^T		\defeq 	A^T \otimes I_M
						=		\mc{U} \mc{D} \mc{U}^{-1}
		\end{align}
	where 
		\begin{align}
			\mc{U}		&\defeq		U \otimes I_M,
			\qquad
			\mc{D}		\defeq		D \otimes I_M
		\end{align}
	We use $\mU$ to define the following basis transformation: 
		\begin{align}
			\label{Equ:DistProc:wi_prime_def}
			\bw_i'				&\defeq		\mU^{-1} \bw_i	
								=			(U^{-1}\otimes I_M) \bw_i
											\\
			\label{Equ:DistProc:wicheck_prime_def}
			\check{\bm{w}}_i'	&\defeq		\mU^{-1} \check{\bw}_{i}
								=			(U^{-1}\otimes I_M) \check{\bw}_i
		\end{align}
	The relations between the quantities in transformations \eqref{Equ:DistProc:wi_prime_def}--\eqref{Equ:DistProc:wicheck_prime_def} are illustrated in Fig. \ref{Fig:Fig_NetworkTransformation:Transformation}.
	\hfill \QED
\end{definition}

\begin{figure*}
	\centerline{
		\subfigure[]{
					\includegraphics[width=0.5\textwidth]{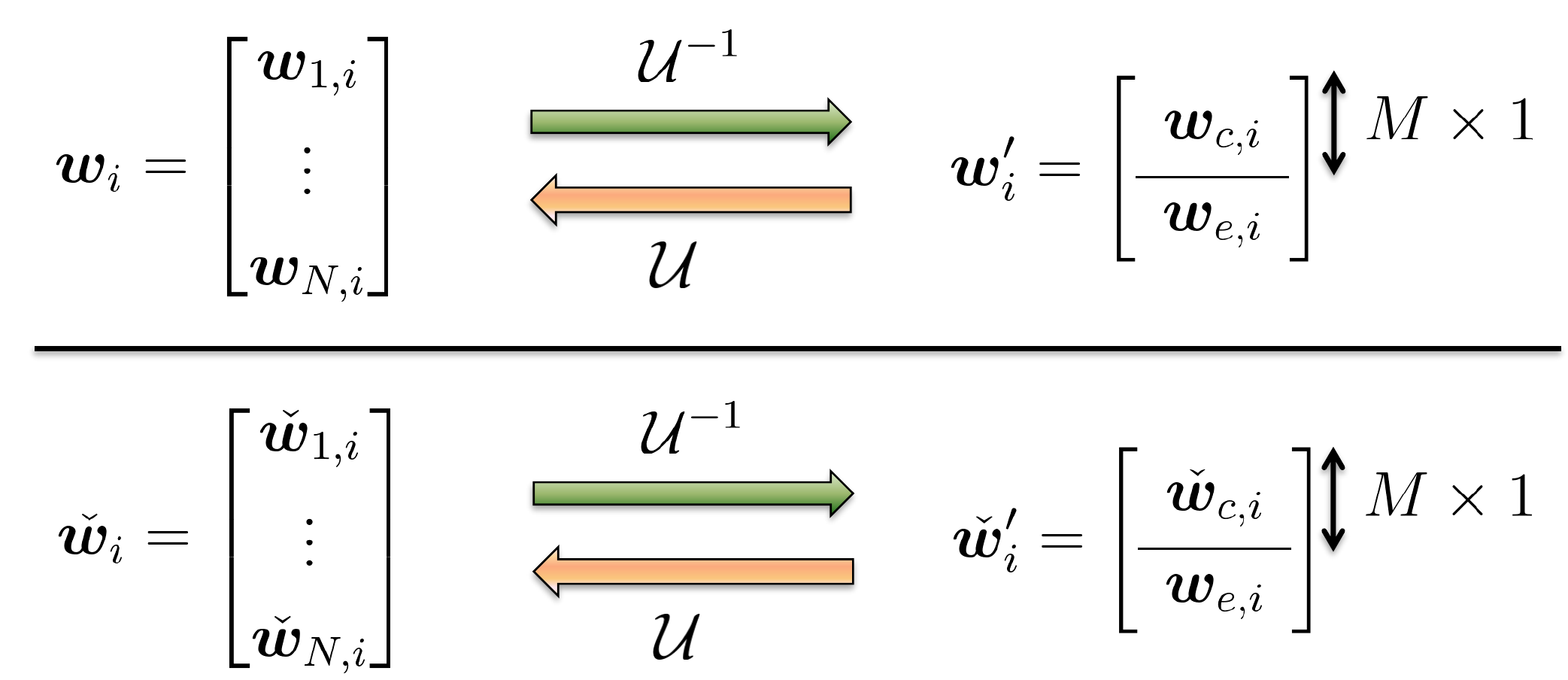}
					\label{Fig:Fig_NetworkTransformation:Transformation}
					}
	}
	\centerline{
		\subfigure[]{
					\includegraphics[width=0.6\textwidth]{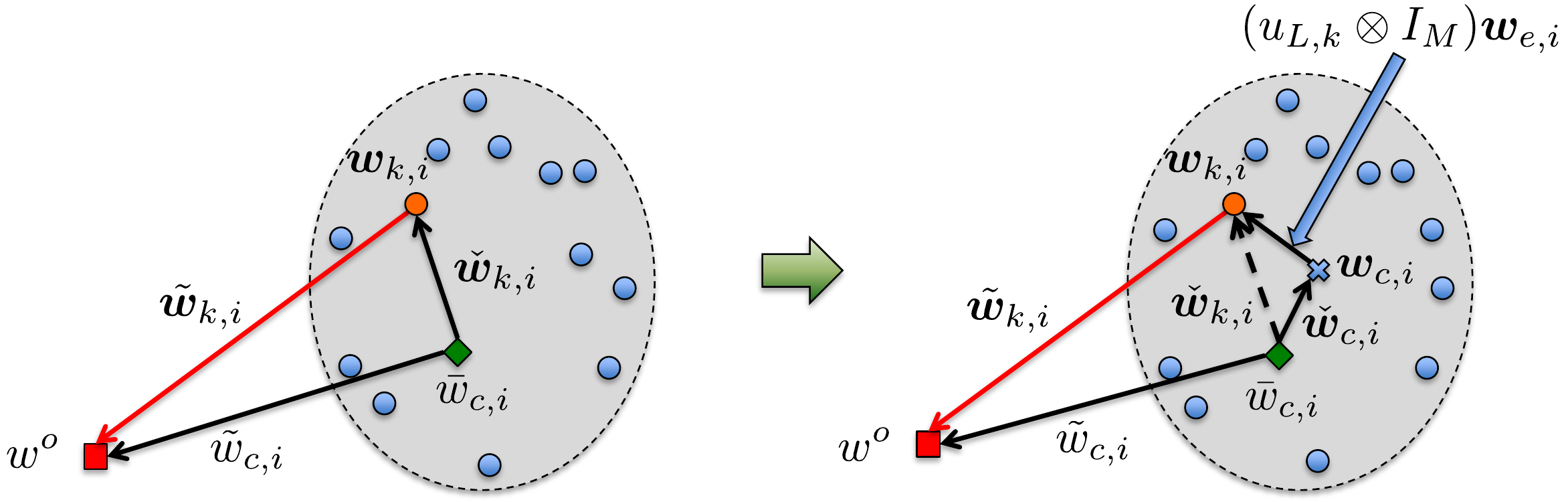}
					}
	}
	\caption{(a) Network basis transformation. (b) The diagrams show how the iterate $\w_{k,i}$ is decomposed relative to the reference $\bar{w}_{c,i}$ and relative to the centroid, $\w_{c,i}$, of the $N$ iterates across the network.
	}
	\label{Fig:Fig_NetworkTransformation}
\end{figure*}

We can gain useful insight into the nature of this transformation by exploiting more directly the structure of the matrices $\mc{U}$, $\mc{D}$, and $\mc{U}^{-1}$.	By Assumption \ref{Assumption:Network}, the matrix $A^T$ has an eigenvalue one of multiplicity one, with the corresponding left- and right-eigenvectors being $\theta^T$ and $\one$, respectively. All other eigenvalues of $D$ are strictly less than one in magnitude. Therefore, the matrices $D$, $U$, and $U^{-1}$ can be partitioned as
	\begin{align}
		\label{Equ:DistProc:D_U_Uinv}
		D	&=	
				\left[
					\begin{array}{c|c}
						1	&	\\
						\hline
							&	D_{N-1}
					\end{array}
				\right]
				\quad
		U	=	
				\left[
					\begin{array}{c|c}
						\mathds{1} 	&	U_L	
					\end{array}
				\right]
				\quad
		U^{-1}	
			=	
				\left[
					\begin{array}{c}
						\theta^T	\\
						\hline
						U_R		
					\end{array}
				\right]
	\end{align}
where $D_{N-1}$ is an $(N-1) \times (N-1)$ Jordan matrix with all
diagonal entries strictly less than one in magnitude, 
$U_L$ is an $N \times (N-1)$ matrix, and $U_R$ is an $(N-1) \times N$ matrix.
Then, the Kronecker forms $\mD$, $\mU$, and $\mU^{-1}$ can be expressed as
	\begin{align}
		\label{Equ:DistProc:mD_block}
		\mD		&=		
						\left[
							\begin{array}{c|c}
								\!\! I_M \!\!	&		\\
								\hline
									&	\!\! \mD_{N\!-\!1} \!\!
							\end{array}
						\right],
						\;
		\mU		=		
						\left[
							\begin{array}{c|c}
								\!\! \one \!\otimes\! I_M \!\!	&	\!\! \mU_L \!\!
							\end{array}
						\right],
						\;
		\mU^{-1}=		
						\left[
							\begin{array}{c}
								\!\! \theta^T	 \!\otimes\! I_M \!\! \\
								\hline
								\!\! \mU_R	\!\!	
							\end{array}
						\right]
	\end{align}
where
	\begin{align}
		\mU_L 			&\defeq 		U_L \otimes I_M
										\\
		\mU_R 			&\defeq 		U_R \otimes I_M
										\\
		\mc{D}_{N-1} 	&\defeq		D_{N-1} \otimes I_M
	\end{align}
It is important to note that $U^{-1} U=I_N$ and that
	\begin{align}
		\label{Equ:DistProc:UUinv_blockOrthonormality}
		\theta^T \one = 1, 
		\quad 
		\theta^T U_L = 0,
		\quad
		U_R \one = 0,
		\quad
		U_R U_L = I_{N-1}
	\end{align}
We first study the structure of $\bw_i'$ defined in
\eqref{Equ:DistProc:wi_prime_def} using  \eqref{Equ:DistProc:D_U_Uinv}:
	\begin{align}
		\label{Equ:DistProc:w_i_prime_w_ci_w_ei}
		\bw_{i}'
					&=	\col\{	
								\underbrace{(\theta^T \otimes I_M) \bm{w}_i}
								_{\defeq \bm{w}_{c,i}}, 
								\; 
								\underbrace{
								(U_R \otimes I_M) \bm{w}_i}_{\defeq \bm{w}_{e,i}}
						\}
	\end{align}
The two components $\bw_{c,i}$ and $\bw_{e,i}$ have useful
interpretations. Recalling that $\theta_k$ denotes the $k$th entry of the vector
$\theta$, then $\bw_{c,i}$ can be expressed as
	\begin{align}
		\bw_{c,i}	=	\sum_{k=1}^N \theta_k \bw_{k,i}
		\label{Equ:DistProc:wci_centroid}
	\end{align}
As we indicated after Assumption \ref{Assumption:Network}, 
the entries $\{\theta_k\}$ are
positive and add up to one. Therefore, $\bw_{c,i}$
is a weighted average (i.e., the centroid) of the estimates $\{\bw_{k,i}\}$ across all
agents. To interpret $\bw_{e,i}$, we examine the 
inverse mapping of \eqref{Equ:DistProc:wi_prime_def} 
from $\bm{w}_i'$ to $\bm{w}_i$ 
using the block structure of $\mU$ in \eqref{Equ:DistProc:D_U_Uinv}:
		\begin{align}
			\bm{w}_i		&=	(U\otimes I_M) \bm{w}_i'
							\nonumber\\
						&=	(\mathds{1} \otimes I_M)\bm{w}_{c,i} 
							+ (U_L \otimes I_M) \bm{w}_{e,i}
							\nonumber\\
						&=	\mathds{1} \otimes \bm{w}_{c,i} 
							+ (U_L \otimes I_M) \bm{w}_{e,i}					
			\label{Equ:DistProc:invTF_wi_wi_prime}
		\end{align}
which implies that the individual estimates at the various agents satisfy:
	\begin{align}
		\label{Equ:DistProc:w_ki_decomp}
		\bw_{k,i}	=	\bw_{c,i} + (u_{L,k}\otimes I_M) \bw_{e,i}
	\end{align}
where $u_{L,k}$ denotes the $k$th row of the matrix $U_L$. The network basis transformation defined by \eqref{Equ:DistProc:wi_prime_def}  represents the cluster of iterates $\{\w_{k,i} \}$ by its centroid $\w_{c,i}$ and their positions $\{ u_{L,k} \otimes I_M ) \w_{e,i}\}$ relative to the centroid as shown in Fig. \ref{Fig:Fig_NetworkTransformation}. The two parts, $\w_{c,i}$ and $\w_{e,i}$, of $\w_i'$ in \eqref{Equ:DistProc:w_i_prime_w_ci_w_ei} are the coordinates in this new transformed representation. Then, the actual error quantity $\tilde{\w}_{k,i}$ relative to $w^o$ can be represented as
	\begin{align}
		\tilde{\w}_{k,i}		&=	w^o -\bar{w}_{c,i} 
							-
							( \w_{k,i} - \bar{w}_{c,i} )
							\nn\\
						&=	w^o -\bar{w}_{c,i}  - ( \bw_{c,i} + (u_{L,k}\otimes I_M) \bw_{e,i} - \bar{w}_{c,i})
		\label{Equ:DistProc:wki_tilde_decomposition_interm1}
	\end{align}
Introduce
	\begin{align}
		\tilde{w}_{c,i}		&\defeq	w^o - \bar{w}_{c,i}
								\\
		\check{\w}_{c,i}		&\defeq	\w_{c,i} - \bar{w}_{c,i}
		\label{Equ:DistProc:wcicheck_def}
	\end{align}
Then, from \eqref{Equ:DistProc:wki_tilde_decomposition_interm1} we arrive at the following critical relation for our analysis in the sequel:
	\begin{align}
		\tilde{\w}_{k,i}		&=
							\tilde{w}_{c,i} - \check{\w}_{c,i} - (u_{L,k} \otimes I_M) \w_{e,i}
		\label{Equ:DistProc:wki_tilde_decomposition_final}
	\end{align}	
This relation is also illustrated in Fig. \ref{Fig:Fig_NetworkTransformation}. Then, the behavior of the error quantities $\{\tilde{\w}_{k,i}\}$ can be studied by examining $\tilde{w}_{c,i}$, $\check{\w}_{c,i}$ and $\w_{e,i}$, respectively, which is pursued in Sec. \ref{Sec:LearnBehav:Transient} further ahead. The first term is the error between the reference recursion and $w^o$, which is studied in Theorems \ref{Thm:LimitPoint}--\ref{Thm:ConvergenceRateRefRec}.  The second quantity is the difference between the weighted centroid $\bw_{c,i}$ of the cluster and the reference vector $\bar{w}_{c,i}$, and the third quantity characterizes the positions of the individual iterates $\{\bw_{k,i}\}$ relative to the centroid $\bw_{c,i}$. As long as the second and the third terms in \eqref{Equ:DistProc:wki_tilde_decomposition_final}, or equivalently, $\check{\w}_{c,i}$ and $\w_{e,i}$, are small (which will be shown in Theorem \ref{Thm:NonAsymptoticBound}), the behavior of each $\bw_{k,i}$ can be well approximated by the behavior of the reference vector $\bar{w}_{c,i}$. Indeed, $\check{\w}_{c,i}$ and $\w_{e,i}$ are the coordinates of the transformed vector $\check{\w}_i'$ defined by \eqref{Equ:DistProc:wicheck_prime_def}. To see this, we substitute  \eqref{Equ:DistProc:D_U_Uinv} and \eqref{Equ:DistProc:w_check_i_def} into
\eqref{Equ:DistProc:wicheck_prime_def} to get
	\begin{align}
		\check{\bw}_i'	&=	(U^{-1}\otimes I_M) (\bw_i - \one \otimes \bar{w}_{c,i})
							\nonumber\\
						&=	\bw_i' - (U^{-1}\one)\otimes \bar{w}_{c,i}
		\label{Equ:DistProc:wiprimecheck_block_temp1}
	\end{align}
Recalling \eqref{Equ:DistProc:UUinv_blockOrthonormality} and 
the expression for $U^{-1}$ in \eqref{Equ:DistProc:D_U_Uinv}, we obtain
	\begin{align}
		U^{-1}\one		&=	\col\{\theta^T \one, \; U_R \one\}	
							\nonumber\\
						&=	\col\{1, \; 0_{N-1}\}
		\label{Equ:DistProc:Uinv_times_one}
	\end{align}
where $0_{N-1}$ denotes an $(N-1)\times 1$ vector with all zero entries.
Substituting \eqref{Equ:DistProc:Uinv_times_one} and 
\eqref{Equ:DistProc:w_i_prime_w_ci_w_ei} 
into \eqref{Equ:DistProc:wiprimecheck_block_temp1}, we get
	\begin{align}
		\check{\bw}_i'	=	\col\{\bw_{c,i}-\bar{w}_{c,i}, 	\; \bw_{e,i}\}
					=	\col\{\check{\w}_{c,i}, 	\; \bw_{e,i}\}
		\label{Equ:DistProc:wicheck_blockstructure}
	\end{align}
Therefore, it suffices to study the dynamics of $\check{\w}_i'$ and its mean-square performance. We will establish joint recursions for $\w_{c,i}$ and $\w_{e,i}$ in Sec. \ref{Sec:StudyOfErrorDynamics:SignalRec}, and joint recursions for $\check{\w}_{c,i}$ and $\w_{e,i}$ in Sec. \ref{Sec:StudyOfErrorDynamics:ErrorRec}.  Table \ref{Tab:Quantities} summarizes the definitions of the various quantities, the recursions that they follow, and their relations.

\begin{table*}

\centering
\scriptsize

\renewcommand{\arraystretch}{1.3}

\caption{Summary of various iterates, error quantities, and their relations.}

\label{Tab:Quantities}

\begin{threeparttable}

\begin{tabular}{ |c||c|c|c||c|c|c||c|c|}

\hline

 & \multicolumn{3}{c||}{ \bfseries{Original system} } & \multicolumn{3}{c||}{ \bfseries{Transformed system}\tnote{a} } & \multicolumn{2}{c|}{ \bfseries{Reference system} } \\

\hline

\bfseries {Quantity} & $\w_{k,i}$ & $\tilde{\w}_{k,i}$  & $\check{\w}_{k,i}$ & $\w_{c,i}$ & $\check{\w}_{c,i}$ & $\w_{e,i}$ & $\bar{w}_{c,i}$ & $\tilde{w}_{c,i}$ \\

\hline

\bfseries Definition & Iterate at agent $k$ & $w^o \!-\! \w_{k,i}$ & $\w_{k,i} \!-\! \bar{w}_{c,i}$ & $\displaystyle \sum_{k=1}^N \theta_k \w_{k,i}$ & $\w_{c,i} \!-\! \bar{w}_{c,i}$ & $\mU_R\w_i $ & Ref. Iterate & $w^o \!-\! \bar{w}_{c,i}$ \\

\hline

\bfseries Recursion  & Eqs. \eqref{Equ:ProblemFormulation:DisAlg_Comb1}--\eqref{Equ:ProblemFormulation:DisAlg_Comb2} & \cancel{\phantom{Nothing}} & \cancel{\phantom{Nothing}} & Eq. \eqref{Equ:Lemma:SignalDynamics:w_ci_recursion_tmp1} & Eq. \eqref{Equ:Lemma:ErrorDynamics:JointRec_wc_check} & Eq. \eqref{Equ:Lemma:ErrorDynamics:JointRec_TF2_we} & Eq. \eqref{Equ:LearnBehav:RefRec} & \cancel{\phantom{Nothing}} \\

\hline

%

%
%
%
%
%
%

\end{tabular}

\smallskip

\begin{tablenotes}

\item[a] The transformation is defined by \eqref{Equ:DistProc:wi_prime_def}--\eqref{Equ:DistProc:wicheck_prime_def}.



\end{tablenotes}

\end{threeparttable}

\vspace{-1\baselineskip}

\end{table*}

\subsection{Signal Recursions}
\label{Sec:StudyOfErrorDynamics:SignalRec}

We now derive the joint recursion that describes the evolution of the quantities 
$\check{\w}_{c,i} = \bw_{c,i}-\bar{w}_{c,i}$ and $\bw_{e,i}$.
Since $\bar{w}_{c,i}$ follows the reference recursion
\eqref{Equ:LearnBehav:RefRec}, it suffices to derive the joint recursion
for $\bw_{c,i}$ and $\bw_{e,i}$. To begin with,  we introduce the following global quantities:
	\begin{align}
		\mc{A}		&=	A \otimes I_M		\\
		\mc{A}_0		&=	A_0 \otimes I_M		\\
		\mc{A}_1		&=	A_1 \otimes I_M		\\
		\mc{A}_2		&=	A_2 \otimes I_M		\\
		\mc{M}		&=	\Omega \otimes I_M	\\
		\Omega		&=	\diag\{\mu_1,\ldots,\mu_N\}
	\end{align}
We also let the notation $x=\col\{x_1,\ldots,x_N\}$ denote an arbitrary 
$N\times 1$ block column vector that is formed by stacking 
$M\times 1$ sub-vectors $x_1,\ldots,x_N$ on top of each other.
We further define the following global update vectors:
			\begin{align}				
				\hat{\bm{s}}_i(x)
						&\defeq	\col\{
									\hat{\bm{s}}_{1,i}(x_1),
									\ldots,
									\hat{\bm{s}}_{N,i}(x_N)
								\}\\
				\label{Equ:Def:s}
				s(x)		&\defeq	\col\{s_1(x_1),\ldots,s_N(x_N)\}	
			\end{align}
Then, the general recursion  for the distributed strategy
\eqref{Equ:ProblemFormulation:DisAlg_Comb1}--\eqref{Equ:ProblemFormulation:DisAlg_Comb2}
can be rewritten in terms of these extended quantities as follows:
	\begin{align}
		\label{Equ:LearnBehav:GlobRec_orig}
		\bm{w}_i		=	\mc{A}^T \bm{w}_{i-1}
						-
						\mc{A}_2^T \mc{M} \hat{\bm{s}}_i(\bm{\phi}_{i-1})
	\end{align}
where
	\begin{align}
		\bm{\phi}_i	\defeq	\col\{\bm{\phi}_{1,i},\ldots,\bm{\phi}_{N,i}\}
		\label{Equ:DistProc:phi_i_def}
	\end{align}
and is related to $\bw_i$ and $\bw_i'$ via the following relation	
	\begin{align}
		\bm{\phi}_{i}	=	\mc{A}_1^T \bm{w}_{i}
						=	\mc{A}_1^T \mc{U} \bm{w}_{i}'
		\label{Equ:DistProc:relation_phi_w_wprime}
	\end{align}
Applying the transformation \eqref{Equ:DistProc:wi_prime_def} 
to both sides of \eqref{Equ:LearnBehav:GlobRec_orig},
we obtain the transformed global recursion:	
	\begin{align}
		\label{Equ:DistProc:GlobRec_TF}
		\bm{w}_i'	=	\mc{D} \bm{w}_{i-1}'
						-
						\mc{U}^{-1}\mc{A}_2^T \mc{M} 
						\hat{\bm{s}}_i\left(\bm{\phi}_{i-1}\right)
	\end{align}
We can now use
the block structures in \eqref{Equ:DistProc:mD_block} and
\eqref{Equ:DistProc:w_i_prime_w_ci_w_ei} to derive 
recursions for $\bw_{c,i}$ and $\bw_{e,i}$ from 
\eqref{Equ:DistProc:GlobRec_TF}. 
Substituting  \eqref{Equ:DistProc:mD_block} and
\eqref{Equ:DistProc:w_i_prime_w_ci_w_ei} into
\eqref{Equ:DistProc:GlobRec_TF}, and using properties of Kronecker 
products\cite[p.147]{laub2005matrix}, 
we obtain
	\begin{align}
		\bw_{c,i}	&=	\bw_{c,i-1} 
						- 
						(\theta^T \otimes I_M) \mA_2^T \mM 
						\hat{\bs}_{i}
						\left(
							\bm{\phi}_{i-1}
						\right)
						\nonumber\\
					&=	\bw_{c,i-1} 
						- 
						(\theta^T A_2^T \Omega \otimes I_M)
						\hat{\bs}_{i}
						\left(
							\bm{\phi}_{i-1}
						\right)
						\nonumber\\
					&=	\bw_{c,i-1} 
						- 
						\mu_{\max} \cdot (p^T \otimes I_M)
						\hat{\bs}_{i}
						\left(
							\bm{\phi}_{i-1}
						\right)
		\label{Equ:DistProc:w_ci_recursion_tmp1}
	\end{align}
and
	\begin{align}
		\bw_{e,i}	&=	\mD_{N-1} \bw_{e,i-1} 
						-
						\mU_{R} \mA_2^T \mM
						\hat{\bs}_i
						\left(
							\bm{\phi}_{i-1}
						\right)
		\label{Equ:DistProc:w_ei_recursion_tmp2}
	\end{align}
where in the last step of \eqref{Equ:DistProc:w_ci_recursion_tmp1}
we used the relation 
	\begin{align}
		\mu_{\max} \cdot p 	=	\Omega A_2 \theta 
		\label{Equ:DistProc:p_Omega_theta}
	\end{align}
which follows from Definitions \ref{Def:StepSize} and \ref{Def:UsefulVectors}.
Furthermore, by adding and subtracting identical factors, the term $\hat{\bs}_{i}
						\left(
							\bm{\phi}_{i-1}
						\right)$
that appears in \eqref{Equ:DistProc:w_ci_recursion_tmp1} and
\eqref{Equ:DistProc:w_ei_recursion_tmp2}
can be expressed as
	\begin{align}
		\hat{\bs}_{i}
		\left(
			\bm{\phi}_{i-1}
		\right)			&=			s(\one \otimes \bw_{c,i-1})
									+
									\underbrace{
									\hat{\bs}_{i}
									\left(
										 \bm{\phi}_{i-1}
									\right)
									\!-\!
									s\left(
										\bm{\phi}_{i-1}
									\right)
									}_{\defeq\bv_i(\bm{\phi}_{i-1})}
									\nn\\
									&\quad
									+
									\underbrace{
									s\left(
										\bm{\phi}_{i-1}
									\right)
									\!-\!
									s(\one \otimes \bw_{c,i-1})
									}_{\defeq \bz_{i-1}}
		\label{Equ:DistProc:s_hat_phi_decomp_zdef_vdef}
	\end{align}
where the first perturbation term $\bv_i(\bm{\phi}_{i-1})$ consists of the difference between
the true update vectors $\{s_k(\bm{\phi}_{k,i-1})\}$ and their stochastic approximations
$\{\hat{\bs}_{k,i}(\bm{\phi}_{k,i-1})\}$,
while the second perturbation term $\bz_{i-1}$ represents the difference between
the same $\{s_k(\bm{\phi}_{k,i-1})\}$ and $\{s_k(\bw_{c,i-1})\}$.
The subscript 
$i-1$ in $\bm{z}_{i-1}$ implies that this variable depends on data up to time
$i-1$ and the subscript $i$ in $\bm{v}_i(\bm{\phi}_{i-1})$ implies that its value depends on
data up to time $i$ (since, in general, $\hat{\s}_i(\cdot)$ can depend on data from time $i$ --- see Eq. \eqref{P2-Equ:Example:ski_LMS} in Part II for an example).
Then, $\hat{\bs}_{i}
		\left(
			\bm{\phi}_{i-1}
		\right)$
can be expressed as
	\begin{align}
		\hat{\bs}_{i}
		\left(
			\bm{\phi}_{i-1}
		\right)
						&=			s(\one \otimes \bw_{c,i-1})
									+
									\bv_{i}
									+
									\bz_{i-1}
		\label{Equ:DistProc:si_hat_expr}
	\end{align}

\begin{lemma}[Signal dynamics]
	\label{Lemma:SignalDynamics}
	In summary, the previous derivation shows that the weight iterates at each agent evolve according to the following
	dynamics: 
		\begin{align}
			\label{Equ:Lemma:SignalDynamics:w_ki_decomp}
			\bw_{k,i}	&=	\bw_{c,i} + (u_{L,k}\otimes I_M) \bw_{e,i}		
						\\			
			\bw_{c,i}	&=	\bw_{c,i-1} 
						- 
						\mu_{\max} \cdot (p^T \otimes I_M)
						\hat{\bs}_{i}
						\left(
							\bm{\phi}_{i-1}
						\right)
			\label{Equ:Lemma:SignalDynamics:w_ci_recursion_tmp1}
						\\
			\bw_{e,i}	&=	\mD_{N-1} \bw_{e,i-1} 
						-
						\mU_{R} \mA_2^T \mM
						\hat{\bs}_i
						\left(
							\bm{\phi}_{i-1}
						\right)
			\label{Equ:Lemma:SignalDynamics:w_ei_recursion_tmp2}
						\\
			\hat{\bs}_{i}
			\left(
				\bm{\phi}_{i-1}
			\right)
							&=			s(\one \otimes \bw_{c,i-1})
										+
										\bv_{i}
										+
										\bz_{i-1}
			\label{Equ:Lemma:SignalDynamics:si_hat_expr}
		\end{align}
	\hfill\QED
\end{lemma}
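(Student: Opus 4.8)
The statement is a consolidation of the block decomposition carried out just above it, so the plan is to assemble the four equations from the transformed global recursion rather than to prove a new estimate. First I would start from the compact form \eqref{Equ:LearnBehav:GlobRec_orig} of the distributed strategy and apply the change of basis \eqref{Equ:DistProc:wi_prime_def} to both sides. Substituting the Jordan factorization $\mA^T=\mU\mD\mU^{-1}$ and cancelling the factor $\mU^{-1}\mU$ produces the transformed recursion \eqref{Equ:DistProc:GlobRec_TF}, namely $\bw_i'=\mD\bw_{i-1}'-\mU^{-1}\mA_2^T\mM\hat{\bs}_i(\bm{\phi}_{i-1})$.

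Next I would exploit the block structure of $\mD$ in \eqref{Equ:DistProc:mD_block} together with the partition $\bw_i'=\col\{\bw_{c,i},\bw_{e,i}\}$ from \eqref{Equ:DistProc:w_i_prime_w_ci_w_ei}. The crucial point is that, in the Kronecker form, the eigenvalue-one block of $\mD$ equals exactly $I_M$, which decouples the centroid coordinate $\bw_{c,i}$ from the remaining coordinates $\bw_{e,i}$. Reading off the top block (which contracts $\hat{\bs}_i(\bm{\phi}_{i-1})$ against the row $\theta^T\otimes I_M$ of $\mU^{-1}$) gives $\bw_{c,i}=\bw_{c,i-1}-(\theta^T\otimes I_M)\mA_2^T\mM\hat{\bs}_i(\bm{\phi}_{i-1})$, and reading off the bottom block (using $\mU_R$) gives directly \eqref{Equ:Lemma:SignalDynamics:w_ei_recursion_tmp2}. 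To put the centroid recursion into the stated form I would then simplify the coefficient by Kronecker-product algebra, writing $(\theta^T\otimes I_M)(A_2^T\otimes I_M)(\Omega\otimes I_M)=(\theta^T A_2^T\Omega)\otimes I_M$, and invoke the identity $\mu_{\max}\,p=\Omega A_2\theta$ from \eqref{Equ:DistProc:p_Omega_theta} (itself a consequence of Definitions \ref{Def:StepSize}--\ref{Def:UsefulVectors}) to rewrite $\theta^T A_2^T\Omega=\mu_{\max}\,p^T$. This yields \eqref{Equ:Lemma:SignalDynamics:w_ci_recursion_tmp1}. The first relation \eqref{Equ:Lemma:SignalDynamics:w_ki_decomp} is not new: it is the componentwise form of the inverse map $\bw_i=\mU\bw_i'$ already recorded in \eqref{Equ:DistProc:invTF_wi_wi_prime}--\eqref{Equ:DistProc:w_ki_decomp}.

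Finally, for the update-vector decomposition \eqref{Equ:Lemma:SignalDynamics:si_hat_expr} I would simply add and subtract the two reference terms $s(\bm{\phi}_{i-1})$ and $s(\one\otimes\bw_{c,i-1})$, exactly as in \eqref{Equ:DistProc:s_hat_phi_decomp_zdef_vdef}, and define $\bv_i\defeq\hat{\bs}_i(\bm{\phi}_{i-1})-s(\bm{\phi}_{i-1})$ (the gradient-noise perturbation) and $\bz_{i-1}\defeq s(\bm{\phi}_{i-1})-s(\one\otimes\bw_{c,i-1})$ (the spatial-perturbation term). Since each displayed equation already appears in the preceding derivation, there is no genuine analytical obstacle here; the lemma is essentially a bookkeeping summary. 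The only places that demand care are the Kronecker manipulations in the centroid block and, above all, the verification that the eigenvalue-one block of $\mD$ is precisely $I_M$ rather than a larger Jordan block. This is exactly what Assumption \ref{Assumption:Network} guarantees through the simplicity of the Perron eigenvalue, and it is what makes the clean $p$-weighted scalar recursion \eqref{Equ:Lemma:SignalDynamics:w_ci_recursion_tmp1} for the centroid possible.
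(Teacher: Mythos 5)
Your proposal is correct and follows essentially the same route as the paper: transforming \eqref{Equ:LearnBehav:GlobRec_orig} via $\mU^{-1}$ into \eqref{Equ:DistProc:GlobRec_TF}, reading off the top and bottom blocks using \eqref{Equ:DistProc:mD_block}, simplifying the centroid coefficient through the Kronecker identity and $\mu_{\max}\,p=\Omega A_2\theta$, and obtaining \eqref{Equ:Lemma:SignalDynamics:si_hat_expr} by the same add-and-subtract decomposition defining $\bv_i$ and $\bz_{i-1}$. Your emphasis on the simplicity of the Perron eigenvalue (so that the leading block of $\mD$ is exactly $I_M$) is precisely the structural fact from Assumption \ref{Assumption:Network} that the paper's derivation relies on.
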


\subsection{Error Dynamics}
\label{Sec:StudyOfErrorDynamics:ErrorRec}
To simplify the notation, we introduce the centralized operator $T_c: \mathbb{R}^{M} \rightarrow \mathbb{R}^M$ as the following mapping for any $x\in \mathbb{R}^M$:
	\begin{align}
		T_c(x)	&\defeq	x 
						- 
						\mu_{\max}\cdot 
						(p^T \otimes I_M) \; s(\mathds{1}\otimes x)
						\nn\\
				&=		x - \mu_{\max}\sum_{k=1}^N p_k s_k(x)
		\label{Equ:Def:Tc2}
	\end{align}
Substituting \eqref{Equ:DistProc:si_hat_expr} into
\eqref{Equ:Lemma:SignalDynamics:w_ci_recursion_tmp1}--\eqref{Equ:Lemma:SignalDynamics:w_ei_recursion_tmp2}
and using \eqref{Equ:Def:Tc2}, we find that we can rewrite
\eqref{Equ:Lemma:SignalDynamics:w_ci_recursion_tmp1} and
\eqref{Equ:Lemma:SignalDynamics:w_ei_recursion_tmp2} in the alternative form:
	\begin{align}
		\label{Equ:DistProc:JointRec_TF2_wc}
			\bm{w}_{c,i}	&=	T_c(\bm{w}_{c,i-1}) - 
							\mu_{\max}\cdot (p^T \otimes I_M) 
							\left[
									\bm{z}_{i-1}
									+
									\bm{v}_i
							\right]
							\\
			\bm{w}_{e,i}	&=	\mD_{N-1} \bm{w}_{e,i-1}
							-
							\mU_R\mA_2^T \mM
							\left[
									s(\mathds{1} \otimes \bm{w}_{c,i-1})
									+
									\bm{z}_{i-1}
									\!+\!
									\bm{v}_i
							\right]
		\label{Equ:DistProc:JointRec_TF2_we}
	\end{align}
Likewise, we can write the reference recursion
\eqref{Equ:LearnBehav:RefRec} in the following compact form:
	\begin{align}
		\label{Equ:DistProc:RefRec_Tc_form}
		\bar{w}_{c,i}	=	T_c(\bar{w}_{c,i-1})
	\end{align}
Comparing \eqref{Equ:DistProc:JointRec_TF2_wc} with
\eqref{Equ:DistProc:RefRec_Tc_form}, we notice that
the recursion for the centroid vector, $\bm{w}_{c,i}$, follows
the same update rule as the reference recursion
except for the two driving perturbation terms
$\bm{z}_{i-1}$ and $\bm{v}_i$. Therefore, we would expect
the trajectory of $\bw_{c,i}$ to be a perturbed version
of that of $\bar{w}_{c,i}$.
Recall from \eqref{Equ:DistProc:wcicheck_def} that
	\begin{align}
		\check{\bw}_{c,i}	\defeq	\bw_{c,i}-\bar{w}_{c,i}
								\nn
	\end{align}
To obtain the dynamics of $\check{\bw}_{c,i}$, we subtract \eqref{Equ:DistProc:RefRec_Tc_form} 
from \eqref{Equ:DistProc:JointRec_TF2_wc}.
	
\begin{lemma}[Error dynamics]
	\label{Lemma:ErrorDynamics}
	The error quantities that appear on the right-hand side of \eqref{Equ:DistProc:wicheck_blockstructure} 
	evolve according to the following dynamics: 
		\begin{align}
			\check{\bm{w}}_{c,i}	
							&=	T_c(\bm{w}_{c,i-1}) 
								-
								T_c(\bar{w}_{c,i-1})
								\nn\\
								&\quad
								-
								\mu_{\max} \!\cdot\! (p^T \!\otimes\! I_M) 
								\left[
										\bm{z}_{i-1}
										+
										\bm{v}_i
								\right]
			\label{Equ:Lemma:ErrorDynamics:JointRec_wc_check}
							\\
				\bm{w}_{e,i}	&=	\mD_{N-1} \bm{w}_{e,i-1}
								\nn\\
								&\quad
								-
								\mU_R\mA_2^T \mM
								\left[
										s(\mathds{1} \otimes \bm{w}_{c,i-1})
										\!+\!
										\bm{z}_{i-1}
										\!+\!
										\bm{v}_i
								\right]
			\label{Equ:Lemma:ErrorDynamics:JointRec_TF2_we}
		\end{align}
	\hfill\QED
\end{lemma}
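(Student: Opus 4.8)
The plan is to obtain both recursions by direct algebraic manipulation of relations already established immediately before the statement, so that no new estimates are required; the content of the lemma is essentially a reorganization of the signal dynamics of Lemma \ref{Lemma:SignalDynamics} into error coordinates centered at the reference trajectory $\bar{w}_{c,i}$.

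First I would record the recursion for $\bm{w}_{e,i}$. Observe that this quantity is not an error coordinate relative to $\bar{w}_{c,i}$ --- indeed the transformed error vector $\check{\bm{w}}_i'$ in \eqref{Equ:DistProc:wicheck_blockstructure} carries $\bm{w}_{e,i}$ unchanged in its lower block --- so I would obtain \eqref{Equ:Lemma:ErrorDynamics:JointRec_TF2_we} simply by substituting the decomposition \eqref{Equ:DistProc:si_hat_expr} of $\hat{\bs}_i(\bm{\phi}_{i-1})$ into the signal-dynamics recursion \eqref{Equ:Lemma:SignalDynamics:w_ei_recursion_tmp2}. This reproduces exactly \eqref{Equ:DistProc:JointRec_TF2_we}, and nothing further is needed for this line.

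The only genuine step is the recursion for $\check{\bm{w}}_{c,i}$. I would start from the centroid recursion in its $T_c$ form, \eqref{Equ:DistProc:JointRec_TF2_wc}, together with the compact reference recursion \eqref{Equ:DistProc:RefRec_Tc_form}, namely $\bar{w}_{c,i}=T_c(\bar{w}_{c,i-1})$. Using the definition $\check{\bm{w}}_{c,i}\defeq\bm{w}_{c,i}-\bar{w}_{c,i}$ from \eqref{Equ:DistProc:wcicheck_def}, I would subtract the reference recursion from the centroid recursion. Since $T_c$ is applied to $\bm{w}_{c,i-1}$ in one recursion and to $\bar{w}_{c,i-1}$ in the other, the subtraction leaves the difference $T_c(\bm{w}_{c,i-1})-T_c(\bar{w}_{c,i-1})$, while the perturbation term $-\mu_{\max}(p^T\otimes I_M)[\bm{z}_{i-1}+\bm{v}_i]$ appears only in \eqref{Equ:DistProc:JointRec_TF2_wc} and therefore carries through unchanged. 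This yields precisely \eqref{Equ:Lemma:ErrorDynamics:JointRec_wc_check}.

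I expect no substantive obstacle at this stage, because the nontrivial work has already been carried out above the statement: the block-Kronecker computation producing \eqref{Equ:DistProc:w_ci_recursion_tmp1}--\eqref{Equ:DistProc:w_ei_recursion_tmp2}, and the splitting of $\hat{\bs}_i(\bm{\phi}_{i-1})$ in \eqref{Equ:DistProc:s_hat_phi_decomp_zdef_vdef} into the gradient-noise term $\bm{v}_i$ and the incremental term $\bm{z}_{i-1}$. The value of the lemma is organizational: it isolates $T_c(\bm{w}_{c,i-1})-T_c(\bar{w}_{c,i-1})$ as the contractive part, to be controlled later through the Lipschitz and strong-monotonicity Assumptions \ref{Assumption:UpdateVectorLipschitz}--\ref{Assumption:UpdateVectorMonot}, and the two additive terms as the driving perturbations, thereby setting up the stability analysis pursued in Sec. \ref{Sec:LearnBehav:Transient}.
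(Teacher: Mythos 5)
Your proposal is correct and follows essentially the same route as the paper: the paper's own derivation (given inline just before the lemma) also substitutes the decomposition \eqref{Equ:DistProc:si_hat_expr} into the signal dynamics of Lemma \ref{Lemma:SignalDynamics} to obtain \eqref{Equ:DistProc:JointRec_TF2_wc}--\eqref{Equ:DistProc:JointRec_TF2_we}, and then subtracts the reference recursion $\bar{w}_{c,i}=T_c(\bar{w}_{c,i-1})$ from the centroid recursion to get \eqref{Equ:Lemma:ErrorDynamics:JointRec_wc_check}, carrying the $\bm{w}_{e,i}$ recursion over unchanged. Your observation that the lemma is organizational rather than estimative matches the paper's treatment exactly.
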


\noindent
The analysis in sequel will study the dynamics of the variances of the error quantities $\check{\bw}_{c,i}$ and ${\bw}_{e,i}$ based on \eqref{Equ:Lemma:ErrorDynamics:JointRec_wc_check}--\eqref{Equ:Lemma:ErrorDynamics:JointRec_TF2_we}. The main challenge is that these two recursions are coupled with each other through $\bz_{i-1}$ and $\bv_{i}$. To address the difficulty, we will extend the energy operator approach developed in \cite{chen2013JSTSPpareto} to the general scenario under consideration.

\subsection{Energy Operators}
\label{Sec:LearnBehav:Operator}

To carry out the analysis, 
we need to introduce the following
operators.

\begin{definition}[Energy vector operator]
    \label{Def:PowerOperator}
    Suppose $x=\col\{x_1,\ldots,x_N\}$ is an arbitrary 
    $N \times 1$ block column vector that is formed
    by stacking $M_0 \times 1$ vectors $x_1,\ldots, x_N$ 
    on top of each other. The energy vector operator 
    $P_{M_0}: \mb{C}^{M_0 N} \rightarrow \mb{R}^{N}$ is defined as the mapping:
        \begin{align}
            \label{Equ:Def:PowerVectorOperator}
            P_{M_0}[x]    \defeq  \col\{ \|x_1\|^2, \ldots, \|x_N\|^2\}
        \end{align}
    where $\|\cdot\|$ denotes the Euclidean norm of a vector.
    \hfill\QED
\end{definition}

\begin{definition}[Norm matrix operator]
    \label{Def:MatrixPowerOperator}
    Suppose $X$ is an arbitrary $K \times N$ block matrix consisting of blocks
    $\{X_{kn}\}$ of size $M_0 \times M_0$:
    		\begin{align}
    			\label{Equ:Def:X}
    			X	&=	\begin{bmatrix}
    						X_{11}	&	\cdots	&	X_{1N}	\\
    						\vdots	&			&	\vdots	\\
    						X_{K1}	&	\cdots	&	X_{KN}	
		    			\end{bmatrix}
    		\end{align}
    The norm matrix operator 
    $\bP_{M_0}: \mb{C}^{M_0K \times M_0N} 
    \rightarrow \mb{R}^{K \times N}$ is defined as the mapping:
        \begin{align}
            \label{Equ:Def:NormMatrixOperator}
            \bP_{M_0}[x]    \defeq  
            			\begin{bmatrix}
    						\|X_{11}\|	&	\cdots	&	\|X_{1N}\|	\\
    						\vdots	&			&	\vdots	\\
    						\|X_{K1}\|	&	\cdots	&	\|X_{KN}	\|
		    			\end{bmatrix}
        \end{align}
    where $\|\cdot\|$ denotes the $2-$induced norm of a matrix.
    \hfill\QED
\end{definition}

By default, we choose $M_0$ to be $M$, the size of the vector $\bw_{k,i}$. In this case, we will drop the subscript in$P_{M_0}[\cdot]$ and use $P[\cdot]$ for convenience. However, in other cases, we will keep the subscript to avoid confusion. Likewise, $\bP_{M_0}[\cdot]$ characterizes the norms of different parts of a matrix it operates on. We will also drop the subscript if $M_0=M$.
{
In Appendix \ref{Appendix:PropertyEnergyOperatorStatement}, we collect several properties of the above energy operators, and which will be used in the sequel to characterize how the energy of the error quantities propagates through the dynamics \eqref{Equ:Lemma:ErrorDynamics:JointRec_wc_check}--\eqref{Equ:Lemma:ErrorDynamics:JointRec_TF2_we}.
}


\section{Transient Analysis}
\label{Sec:LearnBehav:Transient}

Using the energy operators and the various properties, we can now examine the transient behavior of the learning curve more closely. Recall from \eqref{Equ:DistProc:wki_tilde_decomposition_final} that $\tilde{\w}_{k,i}$ consists of three parts: the error of the reference recursion, $\tilde{w}_{c,i}$, the difference between the centroid and the reference, $\check{\w}_{c,i}$, and the position of individual iterates relative to the centroid, $(u_{L,k} \otimes I_M) \w_{e,i}$. The main objective in the sequel is to study the convergence of the reference error, $\tilde{w}_{c,i}$, and establish non-asymptotic bounds for the mean-square values of $\check{\w}_{c,i}$ and $\w_{e,i}$, which will allow us to understand how fast and how close the iterates at the individual agents, $\{\w_{k,i}\}$, get to  the reference recursion. Recalling from \eqref{Equ:DistProc:wicheck_blockstructure} that $\check{\w}_{c,i}$ and $\w_{e,i}$ are the two blocks of the transformed vector $\check{\w}_{i}'$ defined by \eqref{Equ:DistProc:wicheck_prime_def}, we can examine instead the evolution of
	\begin{align}
		\check{\mW}_{i}'	&\defeq	\E P[\check{\bw}_{i}']
						=		\col\left\{
										\E P[\check{\bw}_{c,i}], \E P[\bw_{e,i}]
									\right\}
								\nn\\
						&=		
								\col\left\{
										\E \|\check{\bw}_{c,i}\|^2, \E P[\bw_{e,i}]
								\right\}	
		\label{Equ:DistProc:mW_check_prime_def}
	\end{align}
Specifically, we will study the convergence of $\tilde{w}_{c,i}$ in Sec. \ref{Sec:Transient:LimitPoint}, the stability of $\check{\mW}_{i}'$ in Sec. \ref{Sec:Transient:MSStability}, and the two transient phases of $\tilde{\w}_{k,i}$ in Sec. \ref{Sec:Transient:Interpret}.

\subsection{Limit Point}
\label{Sec:Transient:LimitPoint}

Before we proceed to study $\check{\mW}_{i}'$, we state
the following theorems on the existence of a limit point and on the convergence of the reference
recursion \eqref{Equ:DistProc:RefRec_Tc_form}.
	\begin{theorem}[Limit point]
		\label{Thm:LimitPoint}
		Given Assumptions \ref{Assumption:UpdateVectorLipschitz}--\ref{Assumption:UpdateVectorMonot}, 
		there exists a unique $M \times 1$ vector $w^o$ that solves
			\begin{align}
				\label{Equ:Lemma:LimitPoint_def}
				\sum_{k=1}^N p_k s_k(w^o)	=	0
			\end{align}
		where $p_k$ is the $k$th entry of the vector $p$ defined in 
		\eqref{Def:UsefulVectors:p}.
	\end{theorem}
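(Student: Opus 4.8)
The plan is to treat the left-hand side of \eqref{Equ:Lemma:LimitPoint_def} as a single vector field on $\mb{R}^M$ and to show it has exactly one zero by turning the root-finding problem into a fixed-point problem. I would define
\begin{align}
	g(w) \defeq \sum_{k=1}^N p_k s_k(w)
\end{align}
and first record its two structural properties. Since each $p_k \ge 0$ (because $\pi = A_2\theta$ has nonnegative entries and $\beta_k \ge 0$), Assumption \ref{Assumption:UpdateVectorLipschitz} combined with the triangle inequality gives
\begin{align}
	\| g(x) - g(y) \| \le \sum_{k=1}^N p_k \, \| s_k(x)-s_k(y)\| \le \lambda_U \|p\|_1 \cdot \|x-y\|
\end{align}
so $g$ is Lipschitz, and hence continuous, while Assumption \ref{Assumption:UpdateVectorMonot} states precisely that $g$ is strongly monotone, namely $(x-y)^T[g(x)-g(y)] \ge \lambda_L \|x-y\|^2$ with $\lambda_L > 0$.

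Uniqueness is then immediate: if $g(x)=g(y)=0$, strong monotonicity yields $0 \ge \lambda_L \|x-y\|^2$, which forces $x=y$ since $\lambda_L > 0$. The substantive part is existence, and the main idea is to introduce the auxiliary map
\begin{align}
	T(w) \defeq w - \eta\, g(w)
\end{align}
for a parameter $\eta > 0$ to be chosen, noting that a zero of $g$ is exactly a fixed point of $T$. Expanding the squared displacement and inserting the two bounds above gives
\begin{align}
	\|T(x)-T(y)\|^2 &= \|x-y\|^2 - 2\eta (x-y)^T[g(x)-g(y)] + \eta^2 \|g(x)-g(y)\|^2 \nn\\
	&\le \big(1 - 2\eta\lambda_L + \eta^2 \lambda_U^2 \|p\|_1^2\big)\,\|x-y\|^2
\end{align}
The quadratic in $\eta$ multiplying $\|x-y\|^2$ is strictly below one for every $\eta$ in the interval $(0,\, 2\lambda_L/(\lambda_U^2\|p\|_1^2))$, so for any such $\eta$ the map $T$ is a contraction on $\mb{R}^M$. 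The Banach fixed-point theorem then produces a unique fixed point $w^o$, which satisfies $g(w^o)=0$ and is the desired solution.

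The hard part is really only the existence half; the contraction argument dispatches it cleanly, but it hinges on using the Lipschitz and strong-monotonicity constants \emph{simultaneously} so that the negative cross term $-2\eta\lambda_L$ dominates the positive quadratic term $\eta^2\lambda_U^2\|p\|_1^2$ for small $\eta$. The Remark's inequality $\lambda_U\|p\|_1 \ge \lambda_L$ guarantees that this admissible window of $\eta$ is nonempty and confirms the constants are mutually consistent, so no degenerate case arises. An alternative would be to invoke the Minty--Browder surjectivity theorem for continuous strongly monotone operators directly, but the self-contained contraction route is preferable here because it is constructive and requires no machinery beyond what the assumptions already supply.
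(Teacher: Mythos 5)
Your proposal is correct and follows essentially the same route as the paper: the paper also converts \eqref{Equ:Lemma:LimitPoint_def} into a fixed-point problem for the map $w \mapsto w - \eta \sum_{k=1}^N p_k s_k(w)$, establishes the contraction bound $\|T(x)-T(y)\|^2 \le (1 - 2\eta\lambda_L + \eta^2 \lambda_U^2 \|p\|_1^2)\|x-y\|^2$ from Assumptions \ref{Assumption:UpdateVectorLipschitz}--\ref{Assumption:UpdateVectorMonot}, and invokes the Banach fixed-point theorem. The only cosmetic difference is that the paper fixes the particular step $\eta = \lambda_L/(\|p\|_1^2\lambda_U^2)$ (the minimizer of your quadratic, lying inside your admissible interval and yielding modulus $1 - \tfrac{1}{2}\lambda_L^2/(\|p\|_1^2\lambda_U^2)$), whereas you keep $\eta$ as a free parameter.
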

	\begin{proof}
		See Appendix \ref{Appendix:Proof_Thm_LimitPoint}.
	\end{proof}
	\begin{theorem}[Convergence of the reference recursion]
		\label{Thm:ConvergenceRefRec:DeterministcCent}
		Let $\tilde{w}_{c,i}\defeq w^o-\bar{w}_{c,i}$ denote the 
		error vector of the reference recursion
		\eqref{Equ:DistProc:RefRec_Tc_form}.
		Then, the following non-asymptotic bound on the squared
		error holds for all $i\ge 0$:
			\begin{align}
				\label{Equ:Thm:ConvergenceRefRec:NonAsympBound}
						(1\!-\!2\mu_{\max} \|p\|_1 \lambda_U)^i 
						\!\cdot\! \|\tilde{w}_{c,0}\|^2	
						\le 
						\|\tilde{w}_{c,i}\|^2	
						\le	\gamma_c^{2i} 
							\!\cdot\!
							\|\tilde{w}_{c,0}\|^2
			\end{align}
		where
			\begin{align}
				\gamma_c		
						\defeq
								1 - \mu_{\max} \lambda_L 
								+ \frac{1}{2} \mu_{\max}^2 \|p\|_1^2 \lambda_U^2
				\label{Equ:thm:ConvergenceRefRec:gamma_c_def}
			\end{align}
		Furthermore, if the following condition on the step-size holds
			\begin{align}
				\label{Equ:Thm:ConvergenceRefRec:StepSize}
				0	<	\mu_{\max}	<	\frac{2\lambda_L}{\|p\|_1^2 \lambda_U^2}
			\end{align}
		then, the iterate $\tilde{w}_{c,i}$ converges
		to zero.
	\end{theorem}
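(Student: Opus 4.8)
The plan is to analyze the reference recursion $\bar{w}_{c,i} = T_c(\bar{w}_{c,i-1})$ directly in terms of its error vector $\tilde{w}_{c,i} = w^o - \bar{w}_{c,i}$. Since $w^o$ is the fixed point (satisfying $\sum_k p_k s_k(w^o)=0$, i.e. $T_c(w^o)=w^o$), I would subtract the fixed-point relation from the recursion to obtain an error recursion. Writing $g(w) \defeq \sum_{k=1}^N p_k s_k(w)$, we have $\tilde{w}_{c,i} = \tilde{w}_{c,i-1} + \mu_{\max}\bigl[g(\bar{w}_{c,i-1}) - g(w^o)\bigr]$. The heart of the argument is then to bound $\|\tilde{w}_{c,i}\|^2$ by expanding the squared norm and controlling the cross term and the quadratic term using the two standing assumptions on $g$.

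\noindent
\textbf{Upper bound.} First I would expand
\begin{align}
	\|\tilde{w}_{c,i}\|^2
		&= \|\tilde{w}_{c,i-1}\|^2
		+ 2\mu_{\max}\, \tilde{w}_{c,i-1}^T \bigl[g(\bar{w}_{c,i-1}) - g(w^o)\bigr]
		\nn\\
		&\quad + \mu_{\max}^2 \bigl\| g(\bar{w}_{c,i-1}) - g(w^o) \bigr\|^2. \nn
\end{align}
For the linear (cross) term, I would apply the strong-monotonicity condition \eqref{Equ:Assumption:StrongMonotone}: with $x = \bar{w}_{c,i-1}$, $y = w^o$, the inner product $\tilde{w}_{c,i-1}^T[g(\bar{w}_{c,i-1})-g(w^o)] = -(\bar{w}_{c,i-1}-w^o)^T[g(\bar{w}_{c,i-1})-g(w^o)] \le -\lambda_L\|\tilde{w}_{c,i-1}\|^2$. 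For the quadratic term, I would bound $\|g(\bar{w}_{c,i-1})-g(w^o)\| = \|\sum_k p_k[s_k(\bar{w}_{c,i-1})-s_k(w^o)]\| \le \|p\|_1\lambda_U\|\tilde{w}_{c,i-1}\|$ via the triangle inequality together with the Lipschitz condition \eqref{Equ:Assumption:Lipschitz}. Combining these gives $\|\tilde{w}_{c,i}\|^2 \le \bigl(1 - 2\mu_{\max}\lambda_L + \mu_{\max}^2\|p\|_1^2\lambda_U^2\bigr)\|\tilde{w}_{c,i-1}\|^2$. To match the stated $\gamma_c^2$, I would verify $1 - 2\mu_{\max}\lambda_L + \mu_{\max}^2\|p\|_1^2\lambda_U^2 \le \gamma_c^2$ with $\gamma_c = 1 - \mu_{\max}\lambda_L + \tfrac12\mu_{\max}^2\|p\|_1^2\lambda_U^2$ by squaring $\gamma_c$ and discarding nonnegative higher-order terms, then iterate to get $\|\tilde{w}_{c,i}\|^2 \le \gamma_c^{2i}\|\tilde{w}_{c,0}\|^2$.

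\noindent
\textbf{Lower bound.} The lower bound is obtained similarly but discarding the (nonnegative) quadratic term: from the expansion, $\|\tilde{w}_{c,i}\|^2 \ge \|\tilde{w}_{c,i-1}\|^2 + 2\mu_{\max}\tilde{w}_{c,i-1}^T[g(\bar{w}_{c,i-1})-g(w^o)]$. Here I cannot use strong monotonicity (which gives the wrong-sign bound for a lower estimate); instead I would apply Cauchy--Schwarz followed by the Lipschitz bound to get $\tilde{w}_{c,i-1}^T[g(\bar{w}_{c,i-1})-g(w^o)] \ge -\|p\|_1\lambda_U\|\tilde{w}_{c,i-1}\|^2$, yielding $\|\tilde{w}_{c,i}\|^2 \ge (1 - 2\mu_{\max}\|p\|_1\lambda_U)\|\tilde{w}_{c,i-1}\|^2$ and hence $(1-2\mu_{\max}\|p\|_1\lambda_U)^i\|\tilde{w}_{c,0}\|^2 \le \|\tilde{w}_{c,i}\|^2$ upon iteration.

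\noindent
\textbf{Convergence.} Finally, the stated step-size condition $0 < \mu_{\max} < 2\lambda_L/(\|p\|_1^2\lambda_U^2)$ guarantees $\gamma_c < 1$: indeed $\gamma_c < 1 \iff \tfrac12\mu_{\max}^2\|p\|_1^2\lambda_U^2 < \mu_{\max}\lambda_L$, which is exactly the given bound; one should also note $\gamma_c > 0$ automatically. Then $\gamma_c^{2i}\to 0$ forces $\|\tilde{w}_{c,i}\|^2 \to 0$, establishing convergence. \textbf{The main obstacle} I anticipate is purely bookkeeping rather than conceptual: namely showing the algebraic inequality $1 - 2\mu_{\max}\lambda_L + \mu_{\max}^2\|p\|_1^2\lambda_U^2 \le \gamma_c^2$ cleanly and confirming that the step-size range keeps $\gamma_c$ strictly inside $(0,1)$, so that the geometric iteration is legitimate. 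A secondary subtlety is ensuring the lower-bound factor $1 - 2\mu_{\max}\|p\|_1\lambda_U$ is treated correctly (it may be negative for larger $\mu_{\max}$, in which case the lower bound is trivially true since $\|\tilde{w}_{c,i}\|^2 \ge 0$), but this does not break the stated inequality.
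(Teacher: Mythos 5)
Your proposal is correct and follows essentially the same route as the paper: the paper proves this theorem by applying the variance relations \eqref{Equ:VarPropt:Tc} and \eqref{Equ:VarPropt:Tc_LB} of Lemma \ref{Lemma:VarianceRelations} at the points $x=w^o=T_c(w^o)$ and $y=\bar{w}_{c,i-1}$, and the proofs of those relations are exactly your expansion --- strong monotonicity for the cross term, triangle inequality plus Lipschitz for the quadratic term, Cauchy--Schwarz plus Lipschitz for the lower bound, and the algebraic step $1-x\le\bigl(1-\tfrac{1}{2}x\bigr)^2$ to pass from $1-2\mu_{\max}\lambda_L+\mu_{\max}^2\|p\|_1^2\lambda_U^2$ to $\gamma_c^2$. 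The only difference is presentational: you inline the computation on the error recursion directly, whereas the paper factors it through the operator $T_c(\cdot)$ and its variance bounds.
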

	\begin{IEEEproof}
		See Appendix \ref{Appendix:Proof_Thm_ConvergenceRefRec}.
	\end{IEEEproof}
	
Note from \eqref{Equ:Thm:ConvergenceRefRec:NonAsympBound} that, when the
step-size is sufficiently small, the reference recursion \eqref{Equ:LearnBehav:RefRec} 
converges at a geometric
rate between $1-2\mu_{\max} \|p\|_1 \lambda_U$ and 
$\gamma_c^2 = 1-2\mu_{\max} \lambda_L + o(\mu_{\max})$. 
{
Note that this is a \emph{non-asymptotic} result. That is, the convergence rate $r_{\mathrm{Ref}}$ of the reference recursion \eqref{Equ:DistProc:RefRec_Tc_form} is always lower and upper bounded by these two rates:
	\begin{align}
		1-2\mu_{\max} \|p\|_1 \lambda_U	
		\le
		r_{\mathrm{Ref}}
		\le
		\gamma_c^2,
		\quad
		\forall i \ge 0
		\label{Equ:DistProc:r_RefRec_NonAsympUBLB}
	\end{align}
We can obtain a more precise characterization
of the convergence rate of the reference recursion in the asymptotic regime (for large enough $i$), as follows.
}%

\begin{theorem}[Convergence rate of the reference recursion]
\label{Thm:ConvergenceRateRefRec}
Specifically, for any small $\epsilon>0$, there exists a time instant $i_0$ such that, for $i \ge i_0$, the error vector $\tilde{w}_{c,i}$ converges to zero at the following rate:
	\begin{align}
		r_{\mathrm{Ref}}
				&=		\big[
							\rho(I_M - \mu_{\max} H_c)
						\big]^2
						+ 
						O\big( (\mu_{\max} \epsilon )^{\frac{1}{2(M-1)}} \big)
		\label{Equ:DistProc:r_RefRec}
	\end{align}
\end{theorem}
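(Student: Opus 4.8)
The plan is to linearize the reference recursion \eqref{Equ:DistProc:RefRec_Tc_form} about the limit point $w^o$ and to show that, once $\bar{w}_{c,i}$ has entered a small neighborhood of $w^o$, the decay of $\tilde{w}_{c,i}$ is governed by the powers of the driving matrix $B \defeq I_M - \mu_{\max} H_c$. First I would subtract $w^o$ from both sides of $\bar{w}_{c,i} = T_c(\bar{w}_{c,i-1})$ and use the fixed-point identity $\sum_{k=1}^N p_k s_k(w^o) = 0$ from \eqref{Equ:Lemma:LimitPoint_def} to rewrite the recursion purely in terms of the increments $s_k(\bar{w}_{c,i-1}) - s_k(w^o)$. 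A first-order Taylor expansion of each $s_k$ about $w^o$, together with the Lipschitz-Jacobian bound \eqref{Equ:Assumption:LipschitzJacobian} of Assumption \ref{Assumption:JacobianUpdatVectorLipschitz}, then gives the perturbed linear recursion
\begin{align}
\tilde{w}_{c,i} = (I_M - \mu_{\max} H_c)\,\tilde{w}_{c,i-1} + r_{i-1},
\quad
\|r_{i-1}\| \le C\,\mu_{\max}\,\lambda_H\,\|\tilde{w}_{c,i-1}\|^2,
\nn
\end{align}
where $H_c = \sum_{k=1}^N p_k H_k$ is the matrix in \eqref{Equ:LearnBehav:Hc_def} and $C>0$ is an absolute constant; the remainder $r_{i-1}$ is therefore quadratically small in the error.

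Next I would invoke Theorem \ref{Thm:ConvergenceRefRec:DeterministcCent} to pin down the time index. Since the upper bound in \eqref{Equ:Thm:ConvergenceRefRec:NonAsympBound} already guarantees geometric convergence of $\tilde{w}_{c,i}$ to zero, for any prescribed $\epsilon > 0$ there is a finite $i_0$ such that $\|\tilde{w}_{c,i}\| \le \epsilon$ for all $i \ge i_0$, which in particular keeps the iterate inside the radius $r_H$ on which Assumption \ref{Assumption:JacobianUpdatVectorLipschitz} is valid. On this tail the quadratic remainder contributes only an $O(\mu_{\max}\epsilon)$ correction to the per-step gain, so the rate of $\|\tilde{w}_{c,i}\|$ is set by the powers of $B$. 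Its leading behavior is the spectral radius $\rho(B) = \rho(I_M - \mu_{\max} H_c)$, and squaring (to pass from $\|\tilde{w}_{c,i}\|$ to $\|\tilde{w}_{c,i}\|^2$) produces the dominant term $[\rho(I_M - \mu_{\max} H_c)]^2$ in \eqref{Equ:DistProc:r_RefRec}.

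The main obstacle is that $H_c$, and hence $B$, need not be diagonalizable, so $\|B^i\|^{1/i}$ approaches $\rho(B)$ only up to polynomial-in-$i$ factors arising from Jordan blocks whose sizes can be as large as $M$. To turn these factors into a clean per-step rate I would use the classical fact that, for any $\delta > 0$, there is an invertible similarity $V_\delta$ that renders $V_\delta^{-1} B V_\delta$ nearly diagonal, with off-diagonal entries controllable by $\delta$ but at the expense of a condition number that degrades like $\delta^{-(M-1)}$ in the extreme case of a single $M \times M$ block. Propagating the recursion in the weighted norm induced by $V_\delta$ and transferring the quadratic remainder back to the Euclidean norm produces a per-step gain that overshoots $\rho(B)$ by two competing corrections, one growing and one decaying in $\delta$; optimizing the free parameter $\delta$ balances them and yields an overshoot of order $(\mu_{\max}\epsilon)^{1/(2(M-1))}$, which after squaring gives the stated correction $O\big((\mu_{\max}\epsilon)^{\frac{1}{2(M-1)}}\big)$ in \eqref{Equ:DistProc:r_RefRec}. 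The delicate points are the exact bookkeeping of the condition-number factor against $\epsilon$ and $\mu_{\max}$, and verifying that the quadratic remainder never overwhelms the contraction $\rho(B) < 1$ over the tail $i \ge i_0$.
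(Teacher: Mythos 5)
Your proposal follows the same skeleton as the paper's proof in its first half: the mean-value-theorem linearization about $w^o$ using Assumption \ref{Assumption:JacobianUpdatVectorLipschitz}, giving $\tilde{w}_{c,i}=(I_M-\mu_{\max}H_c)\tilde{w}_{c,i-1}+r_{i-1}$ with a quadratic remainder (the paper obtains exactly this, with $\|e_{i-1}\|\le\tfrac{1}{2}\|p\|_1\lambda_H\|\tilde{w}_{c,i-1}\|^2$), and the use of Theorem \ref{Thm:ConvergenceRefRec:DeterministcCent} to enter and remain in an $\epsilon$-ball after some $i_0$. Where you diverge is in extracting the rate: the paper passes to squared \emph{weighted} norms $\|\cdot\|^2_\Sigma$ with arbitrary $\Sigma\ge 0$, vectorizes to obtain the perturbed Kronecker matrices $\mc{F}_{\pm}=B_c^T\otimes B_c^T\pm O(\mu_{\max}\epsilon)$, and then invokes a Ger\v{s}gorin-based spectral-radius perturbation lemma (Lemma \ref{Lemma:PerturbationSpectralRadius}, itself proved with the scaled-Jordan trick), whereas you run the scaled-Jordan similarity directly on the vector recursion as a norm argument. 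Your route is viable and arguably more elementary. One bookkeeping remark: with the single-condition-number transfer ($\|r_{i-1}\|_{\delta}\le\kappa(T)\delta^{-(M-1)}C\mu_{\max}\lambda_H\epsilon\,\|\tilde{w}_{c,i-1}\|_{\delta}$), optimizing $\delta$ actually gives an overshoot $O((\mu_{\max}\epsilon)^{1/M})$, which is \emph{tighter} than the stated $O((\mu_{\max}\epsilon)^{\frac{1}{2(M-1)}})$ and hence fine; the stated exponent is not the result of an optimal balance but of the paper's particular convenient choice $\epsilon_0=\epsilon^{\frac{1}{2(M-1)}}$ (which makes the amplified perturbation $O(\epsilon^{1/2})$, dominated by $\epsilon_0$). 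So your claim that the balancing "yields" that exponent is off, but harmlessly so --- provided you do not lose a second factor of the condition number in transferring the quadratic remainder, which would degrade the exponent to $1/(2M-1)$ and fall short of the statement.

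The genuine gap is one-sidedness. Your argument produces only an upper bound on the per-step gain, hence only $r_{\mathrm{Ref}}\le[\rho(I_M-\mu_{\max}H_c)]^2+O\big((\mu_{\max}\epsilon)^{\frac{1}{2(M-1)}}\big)$. The theorem asserts an equality (the correction term bounds $|r_{\mathrm{Ref}}-\rho^2|$), and the paper's proof is explicitly two-sided: it establishes the sandwich
\begin{align}
\| \tilde{w}_{c,i-1} \|_{B_c^T \Sigma B_c - \Delta}^2
\le \| \tilde{w}_{c,i} \|_{\Sigma}^2
\le \| \tilde{w}_{c,i-1} \|_{B_c^T \Sigma B_c + \Delta}^2
\nn
\end{align}
with $\Delta=O(\mu_{\max}\epsilon)\Tr(\Sigma)I_M$, and the state-space argument then places the rate \emph{between} $\rho(\mc{F}_{-})$ and $\rho(\mc{F}_{+})$, both equal to $\rho^2+O(\cdot)$. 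A matching lower bound does not come for free in your single-norm formulation: per-step lower bounds in a fixed norm are governed by the smallest singular value of $B_c$, not its spectral radius, so you cannot simply mirror your contraction estimate. You would need either to replicate the paper's arbitrary-weighting device (which is precisely what lets the lower and upper rates both be pinned to the spectrum) or to restrict the claim to an upper bound on the rate.
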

\begin{proof}
	See Appendix \ref{Appendix:Proof_Thm_ConvergenceRefRec_approx}.
\end{proof}
Note that since \eqref{Equ:DistProc:r_RefRec} holds for arbitrary $\epsilon>0$, we can choose $\epsilon$ to be an arbitrarily small positive number. Therefore, the convergence rate of the reference recursion is arbitrarily close to $[\rho(I_M - \mu_{\max} H_c)]^2$.

\subsection{Mean-Square Stability}
\label{Sec:Transient:MSStability}

Now we apply the properties from Lemmas
\ref{Lemma:BasicPropertiesOperator}--\ref{Lemma:VarianceRelations}
to derive an inequality recursion for the transformed energy
vector $\check{\mW}_{i}' = \E P[\check{\bw}_{i}']$.
The results are summarized in the following lemma.
	\begin{lemma}[Inequality recursion for $\check{\mW}_{i}'$]
		\label{Lemma:IneqRecur_W_check_prime}
		The $N\times 1$ vector $\check{\mW}_{i}'$ defined by
		\eqref{Equ:DistProc:mW_check_prime_def} satisfies the 
		following relation for all time instants:
			\begin{align}
				\label{Equ:FirstOrderAnal:W_i_prime_ineq_Rec1}
				\check{\mc{W}}_i'	\preceq		\Gamma \check{\mc{W}}_{i-1}' + \mu_{\max}^2 b_v
			\end{align}
		where
			\begin{align}
				\label{Equ:FirstOrderAnal:Gamma_def}
				\Gamma		&\defeq		\Gamma_0
										+
										\mu_{\max}^2 \psi_0 \cdot \mathds{1}\mathds{1}^T
										\in \mathbb{R}^{N \times N}
										\\
				\label{Equ:FirstOrderAnal:Gamma0_def}
				\Gamma_0		&\defeq		\begin{bmatrix}
											\gamma_c		&	\mu_{\max} h_c(\mu_{\max}) \cdot \mathds{1}^T
															\\
											0			&	\Gamma_e
										\end{bmatrix}
										\in
										\mathbb{R}^{N \times N}
										\\
				b_v			&\defeq		\col\{b_{v,c},\; b_{v,e} \cdot \one\} \in \mathbb{R}^{N}
				\label{Equ:FirstOrderAnal:bv_def}
										\\
           			\label{Equ:FirstOrderAnal:Gamma_e_def}
           			\Gamma_e		&\defeq	\begin{bmatrix}
	           									|\lambda_2(A)| & \frac{2}{1-|\lambda_2(A)|} & &
	           														  	\\
	           									    & \ddots & \ddots &	\\
	           									    &\!\! & \ddots & 
	           									    	\frac{2}{1-|\lambda_2(A)|}	\\
	           									    & & & |\lambda_2(A)|
	           								\end{bmatrix}
										\!
										\in
										\mathbb{R}^{(N\!-\!1)\times (N\!-\!1)}
			\end{align}
		The scalars $\psi_0$, $h_c(\mu)$,
		$b_{v,c}$ and $b_{v,e}$ are defined as
			\begin{align}
				\psi_0		&\defeq		\max\bigg\{
											4 \alpha \|p\|_1^2,
											\;
											\;
											4 \alpha\|p\|_1^2 \cdot
											\left\|
													\bP[\mc{A}_1^T \mc{U}_L]
											\right\|_{\infty}^2, 
											\;\;
											\nn\\
											&\qquad
											4N
											\cdot
											\left\|
												\bP[
													\mc{U}_R \mc{A}_2^T
												]
											\right\|_{\infty}^2
											\lambda_U^2
											\left(
												\frac{3}{1-|\lambda_2(A)|}
												+
												\frac{\alpha}{\lambda_U^2}
											\right),
											\nonumber\\
											&\qquad
											4N
											\cdot
											\left\|
												\bP[
													\mc{U}_R \mc{A}_2^T
												]
											\right\|_{\infty}^2
											\cdot
											\left\|
												\bP[
													\mc{A}_1^T \mc{U}_L
												]
											\right\|_{\infty}^2
											\lambda_U^2
											\nn\\
											&\qquad	
											\cdot				
											\left(
												\frac{1}{1-|\lambda_2(A)|} + \frac{\alpha}{\lambda_U^2}
											\right)
										\bigg\}
				\label{Equ:Lemma:IneqRecur:psi0_def}
										\\
				h_c(\mu_{\max})		&\defeq		\|p\|_1^2 \!\cdot\!
										\left\|
												\bP[\mc{A}_1^T \mc{U}_L]
										\right\|_{\infty}^2
										\lambda_U^2
										\!\cdot\!
										\Big[
											\frac{1}
											{
												\lambda_L 
												\!-\! 
												\frac{1}{2}\mu_{\max}\|p\|_1^2 \lambda_U^2
											}												
										\Big]
				\label{Equ:Lemma:IneqRecur:hc_def}
										\\
				b_{v,c}		&\defeq		\|p\|_1^2 \cdot
											\left[
												4\alpha
												( \|\tilde{w}_{c,0}\|^2 + \|w^o\|^2 )
												+
												\sigma_v^2
											\right]
				\label{Equ:Lemma:IneqRecur:bvc_def}
										\\
				b_{v,e}		&\defeq		N
										\left\|
											\bP[
												\mc{U}_R \mc{A}_2^T
											]
										\right\|_{\infty}^2
										\bigg(
												12
												\frac{\lambda_U^2 \|\tilde{w}_{c,0}\|^2 
												+ 
												\|g^o\|_{\infty}}
												{1-|\lambda_2(A)|}
												\nn\\
												&\qquad
												+
												4\alpha
												( \|\tilde{w}_{c,0}\|^2 + \|w^o\|^2 )
												+
												\sigma_v^2
										\bigg)
				\label{Equ:Lemma:IneqRecur:bve_def}
			\end{align}
		where $g^o		\defeq	P[s(\mathds{1} \otimes w^o)]$.
	\end{lemma}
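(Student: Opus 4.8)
The plan is to push the energy (squared-norm) operators $P[\cdot]$ and $\bP[\cdot]$ through the two coupled error recursions \eqref{Equ:Lemma:ErrorDynamics:JointRec_wc_check}--\eqref{Equ:Lemma:ErrorDynamics:JointRec_TF2_we}, take expectations, and bound each resulting term with the operator inequalities collected in Lemmas \ref{Lemma:BasicPropertiesOperator}--\ref{Lemma:VarianceRelations}. The crucial preliminary observation is that the coupling term $\bz_{i-1}$ can be expressed purely through $\bw_{e,i-1}$: since $A_1^T\one=\one$, the aggregated iterate decomposes as $\bm{\phi}_{i-1}=\one\otimes\bw_{c,i-1}+\mc{A}_1^T\mc{U}_L\bw_{e,i-1}$, so by the Lipschitz condition \eqref{Equ:Assumption:Lipschitz} the $k$th block of $\bz_{i-1}=s(\bm{\phi}_{i-1})-s(\one\otimes\bw_{c,i-1})$ obeys $\|\bz_{k,i-1}\|\le\lambda_U\|(\mc{A}_1^T\mc{U}_L\bw_{e,i-1})_k\|$. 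This is what ultimately turns $\bz_{i-1}$ into the strictly off-diagonal cross-coupling carried by $\bP[\mc{A}_1^T\mc{U}_L]$.

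For the centroid block I would apply $\E\|\cdot\|^2$ to \eqref{Equ:Lemma:ErrorDynamics:JointRec_wc_check}. Conditioning on $\mc{F}_{i-1}$ and using the martingale-difference property \eqref{Equ:Assumption:Randomness:MDS}, the $\bv_i$ term decouples and contributes only its conditional variance, bounded via \eqref{Equ:Assumption:Randomness:RelAbsNoise}, while the remaining piece $T_c(\bw_{c,i-1})-T_c(\bar{w}_{c,i-1})-\mu_{\max}(p^T\otimes I_M)\bz_{i-1}$ is $\mc{F}_{i-1}$-measurable. The key device is the weighted split $\|a-b\|^2\le\tfrac{1}{t}\|a\|^2+\tfrac{1}{1-t}\|b\|^2$ with the specific choice $t=\gamma_c$, combined with the contraction $\|T_c(x)-T_c(y)\|\le\gamma_c\|x-y\|$ (derived in the proof of Theorem \ref{Thm:ConvergenceRefRec:DeterministcCent} from Assumptions \ref{Assumption:UpdateVectorLipschitz}--\ref{Assumption:UpdateVectorMonot}). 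This yields the first-power rate $\gamma_c\cdot\E\|\check{\bw}_{c,i-1}\|^2$ rather than $\gamma_c^2$, and simultaneously attaches the factor $\tfrac{1}{1-\gamma_c}=O(1/\mu_{\max})$ to the $\bz_{i-1}$-term, whose norm-square is itself $O(\mu_{\max}^2)$; the product collapses precisely to the cross-coupling coefficient $\mu_{\max}h_c(\mu_{\max})$ in \eqref{Equ:Lemma:IneqRecur:hc_def}, where $1-\gamma_c=\mu_{\max}(\lambda_L-\tfrac12\mu_{\max}\|p\|_1^2\lambda_U^2)$ supplies the denominator. The constant pieces of the noise variance, obtained after bounding $\|\bw_{c,i-1}\|^2$ by $\|\check{\bw}_{c,i-1}\|^2$, $\|w^o\|^2$ and the deterministic $\|\tilde{w}_{c,0}\|^2$, give $b_{v,c}$ in \eqref{Equ:Lemma:IneqRecur:bvc_def}, whereas the state-dependent part proportional to $\E\|\check{\bw}_{c,i-1}\|^2$ carries an extra $\mu_{\max}^2$ and is deferred to the rank-one term.

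For the off-centroid block I would apply $\E P[\cdot]$ blockwise to \eqref{Equ:Lemma:ErrorDynamics:JointRec_TF2_we}. The free term $\mc{D}_{N-1}\bw_{e,i-1}$ is handled by another weighted split exploiting the Jordan structure: each diagonal entry has magnitude at most $|\lambda_2(A)|$ and each super-diagonal coupling is absorbed with the weight $\tfrac{2}{1-|\lambda_2(A)|}$, producing the bidiagonal matrix $\Gamma_e$ of \eqref{Equ:FirstOrderAnal:Gamma_e_def}. The driving term $\mc{U}_R\mc{A}_2^T\mc{M}[\,s(\one\otimes\bw_{c,i-1})+\bz_{i-1}+\bv_i\,]$ carries the factor $\mc{M}=O(\mu_{\max})$, so its contribution is $O(\mu_{\max}^2)$; here I would split $s(\one\otimes\bw_{c,i-1})=s(\one\otimes w^o)+[\,s(\one\otimes\bw_{c,i-1})-s(\one\otimes w^o)\,]$, sending the non-vanishing first part to the $\|g^o\|_\infty$ contribution in $b_{v,e}$ and bounding the Lipschitz remainder by $\|\check{\bw}_{c,i-1}\|$ together with the deterministic $\|\tilde{w}_{c,0}\|$. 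Bounding the matrix action through $\bP[\mc{U}_R\mc{A}_2^T]$ then produces the constant part $b_{v,e}$ of \eqref{Equ:Lemma:IneqRecur:bve_def}, while all state-dependent residuals again acquire a factor $\mu_{\max}^2$.

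Finally, I would assemble the two inequalities into the matrix form \eqref{Equ:FirstOrderAnal:W_i_prime_ineq_Rec1}. The whole point of the bookkeeping is that every coupling depending on the unknown state (the $\alpha\|\cdot\|^2$ noise contributions, the Lipschitz remainders, and the quadratic cross terms) is proportional to $\mu_{\max}^2$; taking $\psi_0$ as the maximum of all their coefficients, as in \eqref{Equ:Lemma:IneqRecur:psi0_def}, lets them be uniformly dominated by the single rank-one correction $\mu_{\max}^2\psi_0\one\one^T$, leaving $\Gamma_0$ carrying only the $O(1)$ rates $\gamma_c$ and $\Gamma_e$ plus the genuine $O(\mu_{\max})$ cross term $\mu_{\max}h_c\one^T$. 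The main obstacle is exactly this separation: arranging the weighted splits so the rates appear to first power (needed for the subsequent spectral-radius argument on $\Gamma$) while ensuring that the $1/\mu_{\max}$ blow-up of the split weights is everywhere compensated by the $\mu_{\max}^2$ smallness of the perturbation and noise terms, so that the off-diagonal and rank-one corrections remain genuinely $O(\mu_{\max})$ and $O(\mu_{\max}^2)$ rather than $O(1)$.
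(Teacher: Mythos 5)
Your proposal is correct and follows essentially the same route as the paper's own proof: the same three perturbation bounds (on $\bz_{i-1}$ via the Lipschitz property and $\mc{A}_1^T\mc{U}_L\bw_{e,i-1}$, on $\bv_i$ via conditioning and the noise-variance assumption, and on $s(\one\otimes\bw_{c,i-1})$ via the split through $w^o$), the same $\gamma_c$-weighted split (which the paper phrases as convexity of $P[\cdot]$ with weights $\gamma_c$ and $1-\gamma_c$) producing the first-power rate $\gamma_c$ and the $1/(1-\gamma_c)$ factor that collapses the $\bz$-term into $\mu_{\max}h_c(\mu_{\max})$, the same Jordan-structure treatment yielding $\Gamma_e$, and the same absorption of all state-dependent $O(\mu_{\max}^2)$ couplings into the rank-one term $\mu_{\max}^2\psi_0\one\one^T$. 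No gaps.
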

	\begin{IEEEproof}
		See Appendix \ref{Appendix:Proof_Lemma_W_check_prime_recursion}.
	\end{IEEEproof}

From \eqref{Equ:FirstOrderAnal:Gamma_def}--\eqref{Equ:FirstOrderAnal:Gamma0_def}, we see that as the step-size $\mu_{\max}$ becomes small, we have $\Gamma \approx \Gamma_0$, since the second term in the expression for $\Gamma$ depends on the square of the step-size. Moreover, note that $\Gamma_0$ is an upper triangular matrix. Therefore, $\check{\bw}_{c,i}$ and $\bw_{e,i}$ are weakly coupled for small step-sizes; $\E P[\bw_{e,i}]$ evolves on its own, but it will seep into the evolution of $\E P[\check{\bw}_{c,i}]$ via the off-diagonal term in $\Gamma_0$, which is $O(\mu_{\max})$. This insight is exploited to establish a non-asymptotic bound  on $\check{\mW}_{i}' = \col\{ \Expt\| \check{\w}_{c,i}\|^2, \Expt P[\w_{e,i}]\}$ in the following theorem.
	\begin{theorem}[Non-asymptotic bound for $\check{\mW}_{i}'$]
		\label{Thm:NonAsymptoticBound}
		Suppose the matrix $\Gamma$ defined in \eqref{Equ:FirstOrderAnal:Gamma_def}
		is stable, i.e., $\rho(\Gamma)<1$. Then,
		the following non-asymptotic bound holds for all $i \ge 0$:
			\begin{align}
				\E P[\check{\bw}_{c,i}]	&\preceq	\mu_{\max} h_c(\mu_{\max})
												\!\cdot\!
												\one^T 
												(\gamma_c I \!-\! \Gamma_e)^{-1}
												\left(
													\gamma_c^i I
													\!-\!
													\Gamma_e^i
												\right)
												\mW_{e,0}
												\nn\\
												&\quad
												+
												\check{\mW}_{c,\infty}^{\mathrm{ub}'}
				\label{Equ:DistProc:EPwcicheck_bound}
												\\
				\E P[\bw_{e,i}]			&\preceq 
												\Gamma_e^i \mW_{e,0}
												+
												\check{\mW}_{e,\infty}^{\mathrm{ub}'}
				\label{Equ:DistProc:EPwei_bound}
			\end{align}
		where $\mW_{e,0} \defeq \E P[\bw_{e,0}]$, $\check{\mW}_{c,\infty}^{\mathrm{ub}'}$ and 
		$\check{\mW}_{e,\infty}^{\mathrm{ub}'}$ are the $\limsup$ bounds of $\E P[\check{\bw}_{c,i}]$
		and $\E P[\bw_{e,i}]$, respectively:
			\begin{align}
				\check{\mW}_{c,\infty}^{\mathrm{ub}'}	
						&=				\mu_{\max}\cdot
										\frac{
												\psi_0 \big(\lambda_L+h_c(0)\big) 
												\one^T (I-\Gamma_e)^{-1} \mW_{e,0} 
												+ b_{v,c} \lambda_L
											}
										    {\lambda_L^2}
										    \nn\\
										    &\quad
										 + o(\mu_{\max})
				\label{Equ:DistProc:EPwci_Omu}
										\\
				\check{\mW}_{e,\infty}^{\mathrm{ub}'}
						&=				\mu_{\max}^2 
										\!\cdot\! 
										\frac{	
												\psi_0 \big( \lambda_L + h_c(0) \big) 
												\one^T (I-\Gamma_e)^{-1} \mW_{e,0} + b_{v,e}\lambda_L
											}
											{\lambda_L}
											\nn\\
											&\quad
											\times
										(I\!-\!\Gamma_e)^{-\!1} \one
										+
										o(\mu_{\max}^2)
				\label{Equ:DistProc:EPwei_Omu}
			\end{align}	
		where $o(\cdot)$ denotes strictly higher order terms, and $h_c(0)$ is the value of $h_c(\mu_{\max})$ (see
		\eqref{Equ:Lemma:IneqRecur:hc_def}) evaluated at 
		$\mu_{\max}=0$. 
		An important implication of \eqref{Equ:DistProc:EPwcicheck_bound} and
			\eqref{Equ:DistProc:EPwci_Omu} is that 
				\begin{align}
					\E P[\check{\w}_{c,i} ] \le O(\mu_{\max}),
					\qquad\quad
					\forall i \ge 0
					\label{Equ:Thm:NonasymptoticBound:EPwci_check_Omu_bound}
				\end{align}
		Furthermore, a sufficient condition that guarantees the stability of the matrix $\Gamma$ is that
			\begin{align}
				0	<	\mu_{\max}		
					<	\min\bigg\{\!\!&
							\frac{\lambda_L}
							{\frac{1}{2}\|p\|_1^2 \lambda_U^2 \!+\! \frac{1}{3}\psi_0
							\left( \! \frac{1 - |\lambda_2(A)|}{2} \! \right)^{ \! -2N}}, 
							\;
							\nn\\
							&
							\sqrt{\frac{3(1 \!-\! |\lambda_2(A)|)^{2N \!+\! 1}}{2^{2N + 2}\psi_0}},
							\;
							\nn\\
							&
							\frac{\lambda_L}
								{\|p\|_1^2\lambda_U^2 \!
								\left(\|\bP_1[A^T U_L]\|_{\infty}^2 \!+\! \frac{1}{2}\right)}
							\!\!
						\bigg\}
				\label{Equ:Thm_NonAsympBound:StepSize}
			\end{align}
	\end{theorem}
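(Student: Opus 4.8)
The plan is to treat \eqref{Equ:FirstOrderAnal:W_i_prime_ineq_Rec1} as a linear inequality recursion driven by a \emph{nonnegative} iteration matrix $\Gamma$ and a nonnegative forcing term $\mu_{\max}^2 b_v$. Since $\Gamma \succeq 0$ entrywise, the partial order $\preceq$ is preserved under left-multiplication by $\Gamma$, so unrolling gives $\check{\mc{W}}_i' \preceq \Gamma^i \check{\mc{W}}_0' + \mu_{\max}^2 \big(\sum_{j=0}^{i-1}\Gamma^j\big) b_v$, and under $\rho(\Gamma)<1$ the Neumann series $\sum_{j\ge0}\Gamma^j = (I-\Gamma)^{-1}$ converges with nonnegative entries. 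The structural observation I would exploit is that $\Gamma = \Gamma_0 + \mu_{\max}^2\psi_0\,\one\one^T$ with $\Gamma_0$ \emph{block upper-triangular} (see \eqref{Equ:FirstOrderAnal:Gamma0_def}): the $\bw_e$-block is governed by $\Gamma_e$ alone, the $\check{\bw}_c$-block couples to it only through the $O(\mu_{\max})$ off-diagonal entry $\mu_{\max} h_c(\mu_{\max})\one^T$, and the rank-one correction $\mu_{\max}^2\psi_0\one\one^T$ is a genuine $O(\mu_{\max}^2)$ perturbation. This near-triangular structure is what lets the two transient bounds be expressed in terms of $\Gamma_e$ and $\gamma_c$ rather than the full $\Gamma$.

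Concretely, I would first solve the bottom block. Discarding the $O(\mu_{\max}^2)$ feedback from $\check{\bw}_c$ (which, once the final $O(\mu_{\max})$ bound on $\E P[\check{\bw}_{c,i}]$ is in hand, contributes only at order $O(\mu_{\max}^3)$), the recursion for $\E P[\bw_{e,i}]$ reduces to $\E P[\bw_{e,i}] \preceq \Gamma_e\,\E P[\bw_{e,i-1}] + O(\mu_{\max}^2)$, whose unrolled form yields the transient $\Gamma_e^i\mW_{e,0}$ plus a steady-state piece of order $O(\mu_{\max}^2)$; this is \eqref{Equ:DistProc:EPwei_bound}. Substituting into the top-block recursion $\E\|\check{\bw}_{c,i}\|^2 \le \gamma_c\,\E\|\check{\bw}_{c,i-1}\|^2 + \mu_{\max}h_c(\mu_{\max})\one^T\E P[\bw_{e,i-1}] + O(\mu_{\max}^2)$ produces a scalar recursion forced by $\Gamma_e^{i}\mW_{e,0}$. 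Solving it with the geometric-convolution identity $\sum_{j=0}^{i-1}\gamma_c^{\,i-1-j}\Gamma_e^{\,j} = (\gamma_c I - \Gamma_e)^{-1}(\gamma_c^i I - \Gamma_e^i)$ — valid because $\gamma_c \to 1 > |\lambda_2(A)| = \rho(\Gamma_e)$ makes $\gamma_c I - \Gamma_e$ invertible for small $\mu_{\max}$ — gives the first term of \eqref{Equ:DistProc:EPwcicheck_bound}, while the constant forcing produces $\check{\mW}_{c,\infty}^{\mathrm{ub}'}$. For the steady-state constants I would pass to the fixed point $\check{\mc{W}}_\infty' = \mu_{\max}^2(I-\Gamma)^{-1}b_v$ and expand it in powers of $\mu_{\max}$ via the block inverse of $I-\Gamma_0$, whose top-left entry is $1/(1-\gamma_c) = O(1/\mu_{\max})$ (since $1-\gamma_c = \mu_{\max}\lambda_L - O(\mu_{\max}^2)$ by \eqref{Equ:thm:ConvergenceRefRec:gamma_c_def}), whereas $(I-\Gamma_e)^{-1} = O(1)$. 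This single factor of $1/\mu_{\max}$ is precisely why the $\check{\bw}_c$ steady state is $O(\mu_{\max})$ while the $\bw_e$ steady state stays $O(\mu_{\max}^2)$, yielding \eqref{Equ:DistProc:EPwci_Omu}--\eqref{Equ:DistProc:EPwei_Omu}; combining the uniform $O(\mu_{\max})$ transient with the $O(\mu_{\max})$ $\limsup$ term then gives \eqref{Equ:Thm:NonasymptoticBound:EPwci_check_Omu_bound}.

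The remaining, and hardest, task is the sufficient condition \eqref{Equ:Thm_NonAsympBound:StepSize} guaranteeing $\rho(\Gamma)<1$. Since $\Gamma$ is nonnegative, I would invoke the Collatz--Wielandt characterization: it suffices to exhibit an entrywise-positive vector $u=\col\{u_c,u_e\}$ and a constant $\rho<1$ with $\Gamma u \preceq \rho u$. For the $\Gamma_e$ block I would take a \emph{geometrically decaying} test vector $u_e = \col\{1,q,q^2,\ldots,q^{N-2}\}$ with $q = \tfrac12(1-|\lambda_2(A)|)\big(\rho'-|\lambda_2(A)|\big)$ and $\rho' = \tfrac12(1+|\lambda_2(A)|)$, which forces $\Gamma_e u_e \preceq \rho' u_e$; the top-block inequality then imposes a constraint tying $\gamma_c$, the coupling $\mu_{\max}h_c$, and the perturbation $\mu_{\max}^2\psi_0$ together. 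The main obstacle is that $\Gamma_e$ is strongly \emph{non-normal} — its superdiagonal entries $2/(1-|\lambda_2(A)|)$ can be large even though $\rho(\Gamma_e)=|\lambda_2(A)|<1$ — so the smallest component of the test vector, $u_{N-1}=q^{N-2}$, is exponentially small in $N$. It is exactly this that injects the $(\tfrac{1-|\lambda_2(A)|}{2})^{-2N}$ and $(1-|\lambda_2(A)|)^{2N+1}$ factors into \eqref{Equ:Thm_NonAsympBound:StepSize}: the rank-one perturbation $\mu_{\max}^2\psi_0\one\one^T$ must be small enough that it cannot overwhelm this exponentially small margin, and balancing the three competing requirements (self-contraction of the $\check{\bw}_c$ mode, contraction of the $\bw_e$ modes, and smallness of the all-to-all coupling) produces the three terms in the minimum. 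Ordinary eigenvalue-perturbation bounds are useless here because of the non-normality, so the constructive positive-vector argument is essential.
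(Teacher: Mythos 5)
Your proposal follows essentially the same route as the paper's proof: the same unrolling of \eqref{Equ:FirstOrderAnal:W_i_prime_ineq_Rec1} with nonnegativity and the Neumann series, the same exploitation of the near-triangular splitting $\Gamma = \Gamma_0 + \mu_{\max}^2\psi_0\one\one^T$, and the same source of the $O(\mu_{\max})$ versus $O(\mu_{\max}^2)$ separation, namely $1-\gamma_c = \mu_{\max}\lambda_L - O(\mu_{\max}^2)$ against $(I-\Gamma_e)^{-1}=O(1)$. Your two technical substitutions are equivalent to the paper's rather than genuinely different: the convolution identity $\sum_{j=0}^{i-1}\gamma_c^{\,i-1-j}\Gamma_e^{\,j} = (\gamma_c I-\Gamma_e)^{-1}(\gamma_c^i I - \Gamma_e^i)$ is exactly what the paper obtains by partial fractions of $z(zI-\Gamma_0)^{-1}$ in the $z$-domain, and your Collatz--Wielandt test vector with ratio $q=\tfrac{1}{4}(1-|\lambda_2(A)|)^2$ is the same device as the paper's diagonal similarity $D_\epsilon$ with $\epsilon=\tfrac{1}{4}(1-|\lambda_2(A)|)^2$ followed by an induced-norm bound (a weighted norm with geometric weights \emph{is} a positive-test-vector bound); both pick up the $\epsilon^{-N}$ factors in \eqref{Equ:Thm_NonAsympBound:StepSize} the same way.

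Two points need repair. First, your decoupling step is circular as written: you discard the feedback of $\check{\bw}_{c}$ into the $\bw_e$-recursion (and the $(e,e)$ part of the rank-one term) on the grounds that the final bound $\E P[\check{\bw}_{c,i}]\le O(\mu_{\max})$ makes it $O(\mu_{\max}^3)$ --- but that final bound is the thing being proved. The fix is already in your first paragraph: the crude bound $\check{\mW}_i' \preceq (I-\Gamma)^{-1}(\check{\mW}_0' + \mu_{\max}^2 b_v)$ holds uniformly in $i$ and gives an $O(1)$ bound on both components, so the entire coupling term $\mu_{\max}^2\psi_0\one\one^T\check{\mW}_{i-1}'$ is $O(\mu_{\max}^2)$ uniformly and can be folded into the forcing; this two-pass substitution (crude bound first, then a $\Gamma_0$-driven recursion with enlarged constant forcing) is precisely how the paper avoids the circularity. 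Second, your steady-state constants should come from that same substituted forcing, not from the pure fixed point $\mu_{\max}^2(I-\Gamma)^{-1}b_v$: the theorem's expressions \eqref{Equ:DistProc:EPwci_Omu}--\eqref{Equ:DistProc:EPwei_Omu} contain the term $\psi_0(\lambda_L+h_c(0))\one^T(I-\Gamma_e)^{-1}\mW_{e,0}$ exactly because the crude bound (which carries $\mW_{e,0}$) is injected into the rank-one coupling, whereas $\mu_{\max}^2(I-\Gamma)^{-1}b_v$ has no $\mW_{e,0}$ dependence and would not reproduce the stated constants, nor by itself handle the transient $\Gamma^i\check{\mW}_0'$ in terms of $\gamma_c^i$ and $\Gamma_e^i$. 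Finally, note that writing the top-block transient purely in terms of $\mW_{e,0}$ requires the initialization \eqref{Equ:LearnBehav:wc0bar_initRefRecur}, which forces $\check{\bw}_{c,0}=0$; without it an extra $\gamma_c^i\|\check{\bw}_{c,0}\|^2$ term survives and \eqref{Equ:DistProc:EPwcicheck_bound} as stated would not hold.
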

	\begin{proof}
		See Appendix \ref{Appendix:Proof_Thm_NonAsymptotiBound}.
	\end{proof}
	\begin{corollary}[Asymptotic bounds]
		\label{Cor:AsymptoticBounds_wcicheck_we}
		It holds that
		\begin{align}
			\limsup_{i\rightarrow \infty} \E \|\check{\bw}_{c,i}\|^2
								&\le		O(\mu_{\max})
			\label{Equ:Cor:EPwcicheck_asymptotic_bound}
											\\
			\limsup_{i\rightarrow \infty} \E \|\bw_{e,i}\|^2
								&\le 
											O(\mu_{\max}^2)
			\label{Equ:Cor:EPwei_asymptotic_bound}
		\end{align}
	\end{corollary}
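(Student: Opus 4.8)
The plan is to obtain the corollary as the asymptotic ($i\to\infty$) specialization of the non-asymptotic bounds already established in Theorem \ref{Thm:NonAsymptoticBound}. First I would record the two elementary identities connecting the quantities in the corollary to the blocks of $\check{\mW}_i' = \E P[\check{\bw}_i']$: since $\check{\bw}_{c,i}$ is a single $M\times 1$ vector, its first-block entry satisfies $\E P[\check{\bw}_{c,i}] = \E\|\check{\bw}_{c,i}\|^2$; and since $\bw_{e,i}$ is an $(N-1)$-fold stack of $M\times 1$ sub-vectors, the total energy obeys $\E\|\bw_{e,i}\|^2 = \one^T \E P[\bw_{e,i}]$. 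Thus \eqref{Equ:Cor:EPwcicheck_asymptotic_bound}--\eqref{Equ:Cor:EPwei_asymptotic_bound} are exactly the limits of \eqref{Equ:DistProc:EPwcicheck_bound} and, after pre-multiplication by $\one^T$, of \eqref{Equ:DistProc:EPwei_bound}.

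Next I would take $\limsup_{i\to\infty}$ in both bounds and argue that the transient contributions vanish. The key observation is that the matrix $\Gamma_e$ in \eqref{Equ:FirstOrderAnal:Gamma_e_def} is upper bidiagonal with every diagonal entry equal to $|\lambda_2(A)|$, so $\rho(\Gamma_e) = |\lambda_2(A)| < 1$ by \eqref{Equ:Modeling:lambda_A_order}; consequently $\Gamma_e^i \to 0$. Likewise, for step-sizes satisfying \eqref{Equ:Thm_NonAsympBound:StepSize} one has $0 < \gamma_c < 1$ from \eqref{Equ:thm:ConvergenceRefRec:gamma_c_def}, so $\gamma_c^i \to 0$. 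Therefore the first term on the right-hand side of \eqref{Equ:DistProc:EPwcicheck_bound}, being proportional to $\gamma_c^i I - \Gamma_e^i$, tends to zero, and so does $\Gamma_e^i \mW_{e,0}$ in \eqref{Equ:DistProc:EPwei_bound}. This yields
	\begin{align}
		\limsup_{i\to\infty} \E\|\check{\bw}_{c,i}\|^2 \le \check{\mW}_{c,\infty}^{\mathrm{ub}'}, \qquad
		\limsup_{i\to\infty} \E P[\bw_{e,i}] \preceq \check{\mW}_{e,\infty}^{\mathrm{ub}'} \nn
	\end{align}

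Finally I would read off the orders from the closed-form $\limsup$ expressions. Expression \eqref{Equ:DistProc:EPwci_Omu} shows $\check{\mW}_{c,\infty}^{\mathrm{ub}'} = O(\mu_{\max})$, giving \eqref{Equ:Cor:EPwcicheck_asymptotic_bound} directly. For the energy of $\bw_{e,i}$, I would pre-multiply the second $\limsup$ inequality by $\one^T$ and invoke \eqref{Equ:DistProc:EPwei_Omu}, whose leading term carries an explicit $\mu_{\max}^2$ factor; since $\one^T (I - \Gamma_e)^{-1}\one$ is a finite constant (again because $\rho(\Gamma_e)<1$ makes $I-\Gamma_e$ invertible), we obtain $\one^T \check{\mW}_{e,\infty}^{\mathrm{ub}'} = O(\mu_{\max}^2)$ and hence \eqref{Equ:Cor:EPwei_asymptotic_bound}. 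There is essentially no serious obstacle here: the difficult energy accounting was already carried out in Theorem \ref{Thm:NonAsymptoticBound}, and the only points requiring care are confirming $\rho(\Gamma_e)<1$ and $\gamma_c<1$ so that the transient terms decay and the inverses appearing in the $\limsup$ bounds are well-defined.
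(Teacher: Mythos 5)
Your proposal is correct and follows essentially the same route as the paper: both treat the corollary as an immediate consequence of Theorem \ref{Thm:NonAsymptoticBound}, using $\E P[\check{\bw}_{c,i}]=\E\|\check{\bw}_{c,i}\|^2$ and property \eqref{Equ:Properties:PX_EuclNorm} together with the bounds \eqref{Equ:DistProc:EPwcicheck_bound}--\eqref{Equ:DistProc:EPwei_Omu}. The only cosmetic difference is that for the first inequality the paper simply cites the uniform-in-$i$ bound \eqref{Equ:Thm:NonasymptoticBound:EPwci_check_Omu_bound}, whereas you re-derive the $\limsup$ by letting the transient term of \eqref{Equ:DistProc:EPwcicheck_bound} decay (using $\gamma_c<1$ and $\rho(\Gamma_e)=|\lambda_2(A)|<1$), which is equally valid.
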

	\begin{proof}
		The bound \eqref{Equ:Cor:EPwcicheck_asymptotic_bound} holds since 
		$\E \|\check{\w}_{c,i}\|^2 = \E P[\check{\w}_{c,i}] \le O(\mu_{\max})$ for all $i \ge 0$
		according to \eqref{Equ:Thm:NonasymptoticBound:EPwci_check_Omu_bound}.
		Furthermore, inequality \eqref{Equ:Cor:EPwei_asymptotic_bound} holds because
			\begin{align}
				\limsup_{i\rightarrow \infty} \E \|\bw_{e,i}\|^2
									&\overset{(a)}{=}												
												\limsup_{i\rightarrow \infty} 
												\one^T\E P[\bw_{e,i}]
												\nn\\
									&\overset{(b)}{\preceq}
												\one^T \check{\mW}_{e,\infty}^{\mathrm{ub}'}
												\nn\\
									&\overset{(c)}{=}
												O(\mu_{\max}^2)
				\label{Equ:DistProc:EPwei_asymptotic_bound}
			\end{align}
		where step (a) uses property \eqref{Equ:Properties:PX_EuclNorm} of the energy operator 
		$P[\cdot]$, step (b)
		uses \eqref{Equ:DistProc:EPwei_bound}, and step (c) uses \eqref{Equ:DistProc:EPwei_Omu}.
	\end{proof}
	
Finally, we present following main theorem that characterizes the difference between the learning curve of $\tilde{\w}_{k,i}$ at each agent $k$ and that of $\tilde{w}_{c,i}$ generated by the reference recursion \eqref{Equ:DistProc:RefRec_Tc_form}.
	\begin{theorem}[Learning behavior of $\E\|\tilde{\w}_{k,i}\|^2$]
		\label{Thm:LearnBehav_wki}
		Suppose the stability condition \eqref{Equ:Thm_NonAsympBound:StepSize} holds. Then, 
		the difference between the learning curve of the mean-square-error $\E\|\tilde{\w}_{k,i}\|^2$ 
		at each agent $k$ and the learning curve of $\|\tilde{w}_{c,i}\|^2$ is bounded
		{
		\emph{non-asymptotically}
		}
		as
			\begin{align}
				\big|
					\Expt & \| \tilde{\w}_{k,i}\|^2
					-
					\| \tilde{w}_{c,i} \|^2
				\big|
							\nn\\
						&\le 
							2 \| u_{L,k} \otimes I_M \|^2
							\cdot	
							\one^T\Gamma_e^i \mW_{e,0}
							\nn\\
							&\quad
							+
							2
							\|\tilde{w}_{c,0}\|
							\cdot
							\| u_{L,k} \otimes I_M \|
							\cdot
							\sqrt{
									\one^T\Gamma_e^i \mW_{e,0}
							}
							\nn\\
							&\quad
							+
							\gamma_c^i
							\cdot
							O(\mu_{\max}^{\frac{1}{2}})
							+
							O(\mu_{\max})
							\qquad\quad
							\mathrm{for~all~} i \ge 0
				\label{Equ:Thm:LearnBehav_wki:Bound}
			\end{align}
		where $\gamma_c$ was defined earlier in \eqref{Equ:thm:ConvergenceRefRec:gamma_c_def}.
	\end{theorem}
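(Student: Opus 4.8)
The plan is to work directly from the exact decomposition \eqref{Equ:DistProc:wki_tilde_decomposition_final}, namely $\tilde{\w}_{k,i} = \tilde{w}_{c,i} - \delta_{k,i}$ with $\delta_{k,i} \defeq \check{\w}_{c,i} + (u_{L,k}\otimes I_M)\w_{e,i}$, and to treat the deviation $\delta_{k,i}$ as a small perturbation whose first and second moments are already controlled by Theorem \ref{Thm:NonAsymptoticBound}. Since $\tilde{w}_{c,i}$ is deterministic, expanding the squared norm and taking expectations gives $\E\|\tilde{\w}_{k,i}\|^2 - \|\tilde{w}_{c,i}\|^2 = -2\,\tilde{w}_{c,i}^T \E[\delta_{k,i}] + \E\|\delta_{k,i}\|^2$. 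Taking absolute values and applying Cauchy--Schwarz together with Jensen's inequality ($\|\E[\delta_{k,i}]\| \le \E\|\delta_{k,i}\|$) reduces the claim to bounding the two scalars $\E\|\delta_{k,i}\|^2$ and $\E\|\delta_{k,i}\|$, the latter weighted by $\|\tilde{w}_{c,i}\|$.

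For the second moment, I would use $\|a+b\|^2 \le 2\|a\|^2 + 2\|b\|^2$ to split $\E\|\delta_{k,i}\|^2 \le 2\,\E\|\check{\w}_{c,i}\|^2 + 2\|u_{L,k}\otimes I_M\|^2\,\E\|\w_{e,i}\|^2$, and then substitute the non-asymptotic estimates of Theorem \ref{Thm:NonAsymptoticBound}: $\E\|\check{\w}_{c,i}\|^2 \le O(\mu_{\max})$ from \eqref{Equ:Thm:NonasymptoticBound:EPwci_check_Omu_bound}, and $\E\|\w_{e,i}\|^2 = \one^T \E P[\w_{e,i}] \le \one^T \Gamma_e^i \mW_{e,0} + O(\mu_{\max}^2)$ from \eqref{Equ:DistProc:EPwei_bound}--\eqref{Equ:DistProc:EPwei_Omu} (using property \eqref{Equ:Properties:PX_EuclNorm} exactly as in Corollary \ref{Cor:AsymptoticBounds_wcicheck_we}). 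This produces precisely the leading term $2\|u_{L,k}\otimes I_M\|^2\,\one^T \Gamma_e^i \mW_{e,0}$ plus an $O(\mu_{\max})$ remainder that is uniform in $i$.

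For the first moment, the triangle inequality and Jensen give $\E\|\delta_{k,i}\| \le \sqrt{\E\|\check{\w}_{c,i}\|^2} + \|u_{L,k}\otimes I_M\|\sqrt{\E\|\w_{e,i}\|^2}$; feeding in the same two estimates and using $\sqrt{a+b}\le\sqrt{a}+\sqrt{b}$ yields $\E\|\delta_{k,i}\| \le \|u_{L,k}\otimes I_M\|\sqrt{\one^T\Gamma_e^i\mW_{e,0}} + O(\mu_{\max}^{1/2})$. The cross term is then handled by invoking the deterministic decay $\|\tilde{w}_{c,i}\| \le \gamma_c^i\|\tilde{w}_{c,0}\|$ from \eqref{Equ:Thm:ConvergenceRefRec:NonAsympBound} in Theorem \ref{Thm:ConvergenceRefRec:DeterministcCent}: multiplying through, the $\|u_{L,k}\otimes I_M\|\sqrt{\one^T\Gamma_e^i\mW_{e,0}}$ piece combines with $2\|\tilde{w}_{c,0}\|$ (bounding $\gamma_c^i\le 1$ on this dominant piece) to give the second target term, while the $O(\mu_{\max}^{1/2})$ piece retains the factor $\gamma_c^i$ to give the $\gamma_c^i\,O(\mu_{\max}^{1/2})$ term. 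Summing the two moment bounds reproduces all four terms of \eqref{Equ:Thm:LearnBehav_wki:Bound}.

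The main obstacle is bookkeeping rather than any deep estimate: a single Cauchy--Schwarz cross term must be split so that the dominant $\sqrt{\one^T\Gamma_e^i\mW_{e,0}}$ contribution appears \emph{without} the decaying factor $\gamma_c^i$ (so it genuinely controls the transient mismatch for all $i$), while the lower-order $O(\mu_{\max}^{1/2})$ contribution keeps $\gamma_c^i$ (so that the non-asymptotic statement stays sharp in the step-size). The crucial point that makes this work uniformly in $i$ is that every estimate borrowed from Theorem \ref{Thm:NonAsymptoticBound} holds for all $i\ge 0$, not merely asymptotically; I would therefore verify carefully that each $O(\mu_{\max})$ and $O(\mu_{\max}^{1/2})$ remainder is genuinely $i$-independent, which is exactly what the non-asymptotic bounds \eqref{Equ:Thm:NonasymptoticBound:EPwci_check_Omu_bound} and \eqref{Equ:DistProc:EPwei_bound} guarantee.
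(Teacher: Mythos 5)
Your proposal is correct and follows essentially the same route as the paper's own proof in Appendix \ref{Appendix:Proof_LearnBehav_wki}: expand the squared norm of the decomposition \eqref{Equ:DistProc:wki_tilde_decomposition_final}, bound the cross and quadratic terms via Cauchy--Schwarz, Jensen's inequality, and $\|x+y\|^2\le 2\|x\|^2+2\|y\|^2$, then substitute the non-asymptotic bounds \eqref{Equ:Thm:NonasymptoticBound:EPwci_check_Omu_bound} and \eqref{Equ:DistProc:EPwei_bound}--\eqref{Equ:DistProc:EPwei_Omu} together with $\|\tilde{w}_{c,i}\|\le\gamma_c^i\|\tilde{w}_{c,0}\|$, keeping $\gamma_c^i$ only on the $O(\mu_{\max}^{1/2})$ piece. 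The bookkeeping you flag (dropping $\gamma_c^i\le 1$ on the dominant $\sqrt{\one^T\Gamma_e^i\mW_{e,0}}$ term while retaining it on the small one) is exactly step (g) of the paper's derivation, so no gap remains.
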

	\begin{proof}
		See Appendix \ref{Appendix:Proof_LearnBehav_wki}.
	\end{proof}
	
\subsection{Interpretation of Results}
\label{Sec:Transient:Interpret}

\begin{table*}[t]
	\scriptsize
	\centering
	\caption{Behavior of error quantities in different phases.}
	\label{Tab:BehavErrorPhases}
	\begin{threeparttable}
		\begin{tabular}{|c||c|c||c|c||c|}
			\hline
			\multirow{2}{*}{\bfseries{Error quantity}}
			& 	\multicolumn{2}{c||}{ \bf Transient Phase I} 
			& \multicolumn{2}{c||}{\bf Transient Phase II} & \bfseries Steady-State \tnote{c}	\\
			\cline{2-6}
			 & \bfseries{Convergence rate $r$} \hspace{0.5em}\tnote{a} &   \bfseries{Value}	
			 &  \bfseries{Convergence rate $r$} \hspace{0.5em} \tnote{b} &   \bfseries{Value} &     \bfseries{Value} 
			 \\
			 \hline
			$\| \tilde{w}_{c,i} \|^2$ & $1-2\mu_{\max} \|p\|_1 \lambda_U \le r \le \gamma_c^2$  
			& $\gg O(\mu_{\max})$  & $r = [ \rho(I_M - \mu_{\max} H_c )]^2$ 
			& $\gg O(\mu_{\max})$ & 0 \\
			\hline
			$\E\|\check{\w}_{c,i}\|^2$ & converged & $O(\mu_{\max})$ & converged 
			& $O(\mu_{\max})$ & $O(\mu_{\max})$ \\
			\hline
			$\E P[\check{\w}_{e,i}]$ & $r \le |\lambda_2(A)|$  & $ \gg O(\mu_{\max})$ 
			& converged & $O(\mu_{\max}^2)$ & $O(\mu_{\max}^2)$ \\
			\hline
			$\Expt \|\tilde{\w}_{k,i} \|^2$ & Multiple modes & $\gg O(\mu_{\max})$ 
			&  $r = [ \rho(I_M - \mu_{\max} H_c )]^2$ & $ \gg O(\mu_{\max})$ & $O(\mu_{\max})$ \\
			\hline
		\end{tabular}
		\smallskip
		\begin{tablenotes}
			\item[a] 
				$\gamma_c$ is defined in \eqref{Equ:thm:ConvergenceRefRec:gamma_c_def}, 
				and $\gamma_c^2 = 1 - 2\mu_{\max} \lambda_L + o(\mu_{\max})$.
		
			\item[b]
				We only show the leading term of the convergence rate for $r$. More precise 
				expression can be found in \eqref{Equ:DistProc:r_RefRec}.
				
			\item[c]
				Closer studies of the steady-state performance can be found in 
				Part II \cite{chen2013learningPart2}.
		\end{tablenotes}
	\end{threeparttable}
\vspace{-1\baselineskip}
\end{table*}

\begin{figure*}[t]
	\centerline{
		\subfigure[]{
			\includegraphics[width=0.48\textwidth]{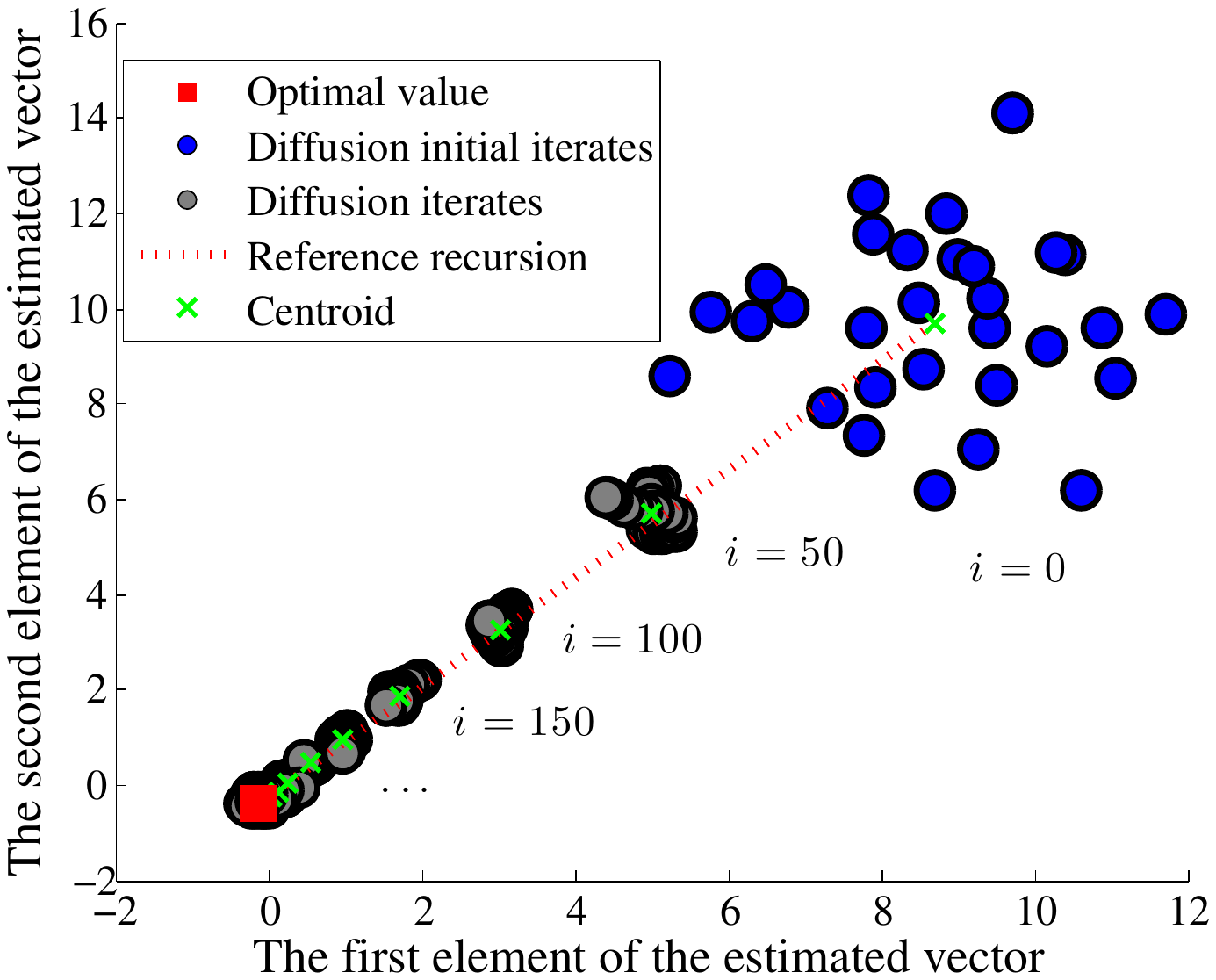}
		}
		\hfill
		\subfigure[]{
			\includegraphics[width=0.48\textwidth]{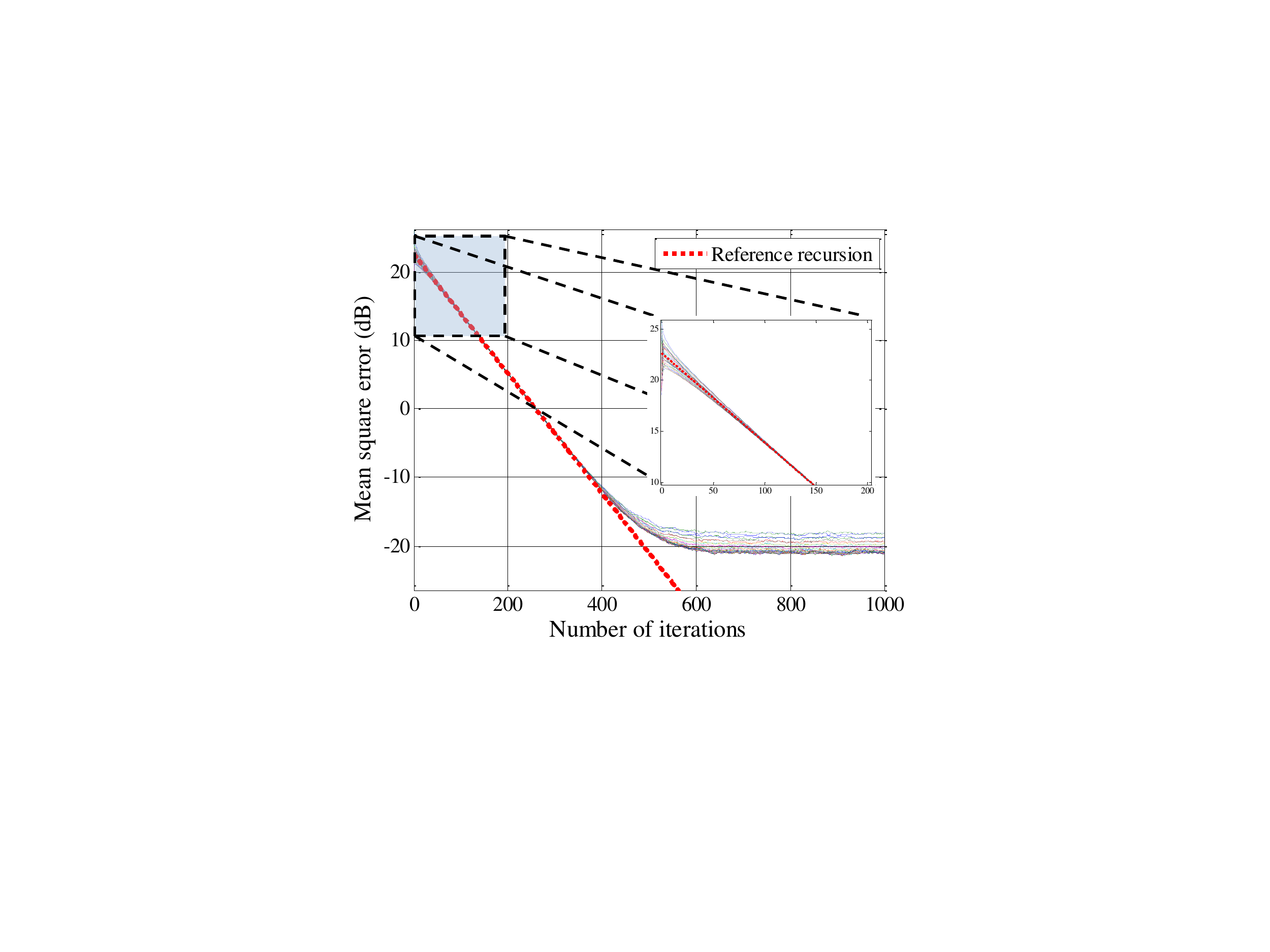}
		}
	}
	\caption{The evolution and learning curves of various quantities in a diffusion LMS adaptive network (best viewed in color), where $M=2$, and the regressors are spatially and temporally white, and isotropic across agents. (a) The evolution of the iterates $\{\w_{k,i}\}$ at all agents, the centroid $\w_{c,i}$, and the reference recursion $\bar{w}_{c,i}$ on the two-dimensional solution space; the horizontal axis and vertical axis are the first and second elements of $\w_{k,i}$, respectively. The clusters of $\{\w_{k,i}\}$ are plotted every $50$ iterations. (b) The MSE learning curves, averaged over $1000$ trials, for the iterates $\{\w_{k,i} \}$ at all agents, and the reference recursion $\bar{w}_{c,i}$. The zoom-in region shows the learning curves for different agents, which quick shrink together in Phase I.}
	\label{Fig:LearningCurveTrajectory_Simu}
\vspace{-1\baselineskip}
\end{figure*}

The result established in Theorem \ref{Thm:LearnBehav_wki} is significant because it allows us to examine the learning behavior of $\Expt \|\tilde{\w}_{k,i} \|^2$. 
{
First, note that the bound \eqref{Equ:Thm:LearnBehav_wki:Bound} established in Theorem \ref{Thm:LearnBehav_wki} is non-asymptotic; that is, it holds for any $i \ge 0$. Let $e_1(i)$, $e_2(i)$, $e_3(i)$ and $e_4$ denote the four terms in \eqref{Equ:Thm:LearnBehav_wki:Bound}:
	\begin{align}
		e_1(i)			&\defeq
							2 \| u_{L,k} \otimes I_M \|^2
							\cdot	
							\one^T\Gamma_e^i \mW_{e,0}
		\label{Equ:e1_def}
							\\
		e_2(i)			&\defeq
							2
							\|\tilde{w}_{c,0}\|
							\cdot
							\| u_{L,k} \otimes I_M \|
							\cdot
							\sqrt{
									\one^T\Gamma_e^i \mW_{e,0}
							}
		\label{Equ:e2_def}
							\\
		e_3(i)			&\defeq
							\gamma_c^i
							\cdot
							O(\mu_{\max}^{\frac{1}{2}})
		\label{Equ:e3_def}
							\\
		e_4				&\defeq
							O(\mu_{\max})
		\label{Equ:e4_def}
	\end{align}
Then, inequality \eqref{Equ:Thm:LearnBehav_wki:Bound} can be rewritten as
	\begin{align}
		\|& \tilde{w}_{c,i} \|^2 \!-\! [e_1(i) + e_2(i) + e_3(i) + e_4]	
		\nn\\
		&\le	
		\Expt \| \tilde{\w}_{k,i}\|^2 	
		\le 	
		\| \tilde{w}_{c,i} \|^2 \!+\! [e_1(i) \!+\! e_2(i) \!+\! e_3(i) \!+\! e_4]
		\label{Equ:MSE_Sandwich}
	\end{align}
for all $i \ge 0$. That is, the learning curve of $\Expt \| \tilde{\w}_{k,i}\|^2 $ is a perturbed version of the learning curve of reference recursion $\|\tilde{w}_{c,i}\|^2$, where the perturbation consists of four parts: $e_1(i)$, $e_2(i)$, $e_3(i)$ and $e_4$. We now examine the non-asymptotic convergence rates of these four perturbation terms relative to that of the reference term $\|\tilde{w}_{c,i}\|^2$ to show how the learning behavior of $\Expt \| \tilde{\w}_{k,i}\|^2 $ can be approximated by $\|\tilde{w}_{c,i}\|^2$. From their definitions in \eqref{Equ:e1_def}--\eqref{Equ:e2_def}, we note that $e_1(i)$ and $e_2(i)$ converge to zero at the \emph{non-asymptotic} rates of $\rho(\Gamma_e)$ and $[\rho(\Gamma_e)]^{\frac{1}{2}}$ over $i \ge 0$. According to \eqref{Equ:FirstOrderAnal:Gamma_e_def},  we have $\rho(\Gamma_e) = |\lambda_2(A)|$, the magnitude of the second largest eigenvalue of $A$. Let $r_{e_1}$ and $r_{e_2}$ denote the convergence rates of $e_1(i)$ and $e_2(i)$. We have
	\begin{align}
		r_{e_1}	=	|\lambda_2(A)|, \quad r_{e_2} 	=	|\lambda_2(A)|^{\frac{1}{2}}
		\label{Equ:Interpretation:r_e1_r_e2}
	\end{align}
By Assumption \ref{Assumption:Network} and \eqref{Equ:Modeling:lambda_A_order}, the value of $|\lambda_2(A)|$ is strictly less than one, and is determined by the network connectivity and the design of the combination matrix; it is also independent of the step-size parameter $\mu_{\max}$. For this reason, the terms $e_1(i)$ and $e_2(i)$ converge to zero at rates that are determined by the network and are independent of $\mu_{\max}$. Furthermore, the term $e_3(i)$ is always small, i.e., $O(\mu_{\max}^{\frac{1}{2}})$, for all $i \ge 0$, and converges to zero as $i \rightarrow \infty$. The last term $e_4$ is also small, namely, $O(\mu_{\max})$. On the other hand, as revealed by Theorem \ref{Thm:ConvergenceRefRec:DeterministcCent} and \eqref{Equ:DistProc:r_RefRec_NonAsympUBLB}, the non-asymptotic convergence rate of the term $\|\tilde{w}_{c,i}\|^2$ in \eqref{Equ:MSE_Sandwich} is bounded by
	\begin{align}	
		1-2\mu_{\max} \|p\|_1 \lambda_U	
		\le
		r_{\mathrm{Ref}}
		\le
		\gamma_c^2
		=
		1-2\mu_{\max} \lambda_L + o(\mu_{\max})
		\label{Equ:Interpretation:r_Ref_bound}
	\end{align}
Therefore, as long as $\mu_{\max}$ is small enough so that\footnote{$|\lambda_2(A)| < |\lambda_2(A)|^{\frac{1}{2}}$ always holds since $|\lambda_2(A)|$ is strictly less than one.}
	\begin{align}
		&|\lambda_2(A)| < |\lambda_2(A)|^{\frac{1}{2}} < 1 - 2\mu_{\max} \|p\|_1 \lambda_U			
	\end{align}
which is equivalent to requiring
	\begin{align}
		\mu_{\max}	<	\frac{1-|\lambda_2(A)|^{\frac{1}{2}}}{2 \|p\|_1 \lambda_U}
		\label{Equ:Interpretation:FasterTransient1_stepsizeCondition}
	\end{align}
we have the following relation regarding the non-asymptotic rates of $e_1(i)$, $e_2(i)$ and $\|\tilde{w}_{c,i}\|^2$:
	\begin{align}
		r_{e_1}	<	r_{e_2} 	< 	r_{\mathrm{Ref}}
	\end{align}
This means that, for sufficiently small $\mu_{\max}$ satisfying \eqref{Equ:Interpretation:FasterTransient1_stepsizeCondition}, $e_1(i)$ and $e_2(i)$ converge faster than $\|\tilde{w}_{c,i}\|^2$. For this reason, the perturbation terms $e_1(i)$ and $e_2(i)$ in \eqref{Equ:MSE_Sandwich} die out earlier than $\|\tilde{w}_{c,i}\|^2$. When they are negligible compared to $\|\tilde{w}_{c,i}\|^2$, we reach the end of Transient Phase I. Then, in Transient Phase II, we have
	\begin{align}
		\| \tilde{w}_{c,i} \|^2 \!-\! [e_3(i) + e_4]	
		\le	
		\Expt \| \tilde{\w}_{k,i}\|^2 	
		\le 	
		\| \tilde{w}_{c,i} \|^2 + [e_3(i) + e_4]
		\label{Equ:MSE_perturbation_TransientPhaseII}
	\end{align}
By \eqref{Equ:e3_def} and \eqref{Equ:e4_def}, the above inequality 
\eqref{Equ:MSE_perturbation_TransientPhaseII} is equivalent to the following relation:
	\begin{align}
		\Expt \| \tilde{\w}_{k,i}\|^2 	
				=		
						\|\tilde{w}_{c,i}\|^2 + O(\mu_{\max}^{\frac{1}{2}}) \cdot \gamma_c^i
						+
						O(\mu_{\max})
		\label{Equ:MSStability:MSE_wki_PhaseII}
	\end{align}
This means that $\E\|\tilde{\w}_{k,i}\|^2$ is close to $\|\tilde{w}_{c,i}\|^2$ in Transient Phase II, and the convergence rate of $\Expt \| \tilde{\w}_{k,i}\|^2$ is the same as that of $\|\tilde{w}_{c,i}\|^2$. Furthermore, in the later stage of Transient Phase II, the convergence rate of $\E\|\tilde{\w}_{k,i}\|^2$ would be close to the asymptotic rate of $\|\tilde{w}_{c,i}\|^2$ given by \eqref{Equ:DistProc:r_RefRec}. Afterwards, as $i \rightarrow \infty$, both $\|\tilde{w}_{c,i}\|^2$ and $O(\mu_{\max}^{\frac{1}{2}}) \cdot \gamma_c^i$ converge to zero and the only term remaining will be $e_4= O(\mu_{\max})$, which contributes to the steady-state MSE. More specifically, taking the $\limsup$ of both sides of \eqref{Equ:Thm:LearnBehav_wki:Bound} leads to
	\begin{align}
		\limsup_{i \rightarrow \infty}
		\Expt\|\tilde{\w}_{k,i}\|^2
					&=		\limsup_{i \rightarrow \infty}
							\big|
								\Expt \| \tilde{\w}_{k,i}\|^2
								-
								\| \tilde{w}_{c,i} \|^2
							\big|
							\nn\\
					&\le 	O(\mu_{\max})
		\label{Equ:MSStability:limsup_MSE_UB}
	\end{align}
We will go a step further and evaluate this steady-state MSE in closed-form for small step-sizes in Part II \cite{chen2013learningPart2}. Therefore, $\bw_{k,i}$ converges to $w^o$ with a small steady-state MSE that is on the order of $O(\mu_{\max})$. And the steady-state MSE can be made arbitrarily small for small step-sizes.

}

In view of this discussion, we now recognize that the results established in Theorems \ref{Thm:LimitPoint}--\ref{Thm:NonAsymptoticBound} reveal the evolution of the three components, $\tilde{w}_{c,i}$, $\check{\w}_{c,i}$ and $\w_{e,i}$ in \eqref{Equ:DistProc:wki_tilde_decomposition_final} during three distinct phases of the learning process. From \eqref{Equ:DistProc:EPwcicheck_bound}, the centroid $\bw_{c,i}$ of the distributed algorithm \eqref{Equ:ProblemFormulation:DisAlg_Comb1}--\eqref{Equ:ProblemFormulation:DisAlg_Comb2} stays close to $\bar{w}_{c,i}$ over the entire time for sufficiently small step-sizes since the mean-square error $\Expt\| \w_{c,i} - \bar{w}_{c,i} \|^2 = \Expt P[\check{\w}_{c,i}]$ is always of the order of $O(\mu_{\max})$. However, $\mW_{e,0} = \Expt P[\bw_{e,i}]$ in \eqref{Equ:DistProc:EPwei_bound} is not necessarily small at the beginning. This is because, as we pointed out in \eqref{Equ:DistProc:w_ki_decomp} and Fig. \ref{Fig:Fig_NetworkTransformation}, $\bw_{e,i}$ characterizes the deviation of the agents from their centroid. If the agents are initialized at different values, then $\E P[\bw_{e,0}] \neq 0$, and it takes some time for $\E P[\bw_{e,i}]$ to decay to a small value of $O(\mu_{\max}^2)$. By \eqref{Equ:DistProc:EPwei_bound}, the rate at which $\E P[\bw_{e,i}]$ decays is $\rho(\Gamma_e) = |\lambda_2(A)|$. On the other hand, recall from Theorems  \ref{Thm:ConvergenceRefRec:DeterministcCent}--\ref{Thm:ConvergenceRateRefRec} that the error of the reference recursion, $\tilde{w}_{c,i}$ converges at a rate between $1-2\mu_{\max} \|p\|_1 \lambda_U$ and $r_c^2 = 1-2\mu_{\max}\lambda_L + o(\lambda_{\max})$ at beginning and then $[ \rho( I_M - \mu_{\max} H_c )]^2$ later on, which is slower than the convergence rate of $\Expt P[\w_{e,i}]$ for small step-size $\mu_{\max}$. Now, returning to relation \eqref{Equ:DistProc:wki_tilde_decomposition_final}:
	\begin{align}
		\tilde{\w}_{k,i}		&=
							\tilde{w}_{c,i} - \check{\w}_{c,i} - (u_{L,k} \otimes I_M) \w_{e,i}
		\label{Equ:DistProc:wki_tilde_decomposition_final2}
	\end{align}
this means that during the initial stage of adaptation, the third term in \eqref{Equ:DistProc:wki_tilde_decomposition_final2} decays to $O(\mu_{\max}^2)$ at a faster rate than the first term, although $\tilde{w}_{c,i}$ will eventually converge to zero. Recalling from \eqref{Equ:DistProc:w_ki_decomp} and Fig. \ref{Fig:Fig_NetworkTransformation} that $\w_{e,i}$ characterizes the deviation of the agents from their centroid, the decay of $\w_{e,i}$ implies that the agents are coordinating with each other
so that their estimates $\bw_{k,i}$ are close to the same $\w_{c,i}$ --- we call this stage Transient Phase I.  Moreover, as we just pointed out, the term $\E P[\check{\bw}_{c,i}]$ is $O(\mu_{\max})$ over the entire time domain so that the second term in \eqref{Equ:DistProc:wki_tilde_decomposition_final2} is always small. This also means that the centroid of the cluster in Fig. \ref{Fig:Fig_NetworkTransformation}, i.e., $\w_{c,i}$, is always close to the reference recursion $\bar{w}_{c,i}$ since $\check{\w}_{c,i} = \w_{c,i} - \bar{w}_{c,i}$ is always small. Now that $\Expt \|\check{\w}_{c,i}\|^2$ is $O(\mu_{\max})$ and $\Expt P[\w_{e,i}]$ is $O(\mu_{\max}^2)$, the error $\tilde{\w}_{k,i}$ at each agent $k$ is mainly dominated by the first term, $\tilde{w}_{c,i}$, in \eqref{Equ:DistProc:wki_tilde_decomposition_final2}, and the estimates $\{\bw_{k,i}\}$ at different agents converge together at the same rate as the reference recursion, given by \eqref{Equ:DistProc:r_RefRec}, to steady-state --- we call this stage Transient Phase II. Furthermore, if $\mW_{e,0}=0$, i.e., the iterates $\w_{k,i}$ are initialized at the same value (e.g., zero vector), then \eqref{Equ:DistProc:EPwei_bound} shows that $\Expt P[ \w_{e,i} ]$ is $O(\mu_{\max}^2)$ over the entire time domain so that the learning dynamics start at Transient Phase II directly. Finally, all agents reach the third phase, steady-state, where $\tilde{w}_{c,i} \rightarrow 0$ and $\tilde{\w}_{k,i}$ is dominated by the second and third terms in \eqref{Equ:DistProc:wki_tilde_decomposition_final2} so that $\Expt \|\tilde{\w}_{k,i}\|^2$ becomes $O(\mu_{\max})$. We summarize the above results in Table \ref{Tab:BehavErrorPhases} and illustrate the evolution of the quantities in the simulated example in Fig. \ref{Fig:LearningCurveTrajectory_Simu}. We observe from Fig. \ref{Fig:LearningCurveTrajectory_Simu} that the radius of the cluster shrinks quickly at the early stage of the transient phase, and then converges towards the optimal solution.

\section{Conclusion}
In this work, we studied the learning behavior of adaptive networks under fairly general conditions. We showed that, in the constant and small step-size regime, a typical learning curve of each agent exhibits three phases: Transient Phase I, Transient Phase II, and Steady-state Phase. A key observation is that, the second and third phases  approach the performance of a centralized strategy. Furthermore,  we showed that the right eigenvector of the combination matrix corresponding to the eigenvalue at one influences the limit point, the convergence rate, and the steady-state mean-square-error (MSE) performance of the distributed optimization and learning strategies in a critical way. Analytical expressions that illustrate these effects were derived. Various implications were discussed and illustrative examples were also considered.

\appendices

\section{Properties of the Energy Operators}
\label{Appendix:PropertyEnergyOperatorStatement}

In this appendix, we state lemmas on properties of the operators $P_{M_0}[\cdot]$ and $\bP_{M_0}[\cdot]$. We begin with some basic properties.

\begin{lemma}[Basic properties]
        \label{Lemma:BasicPropertiesOperator}
        Consider $N\times 1$ block vectors
        $x=\col\{x_1,\ldots,x_N \}$
        and $y=\col\{y_1,\ldots,y_N\}$
        with $M\times 1$ entries $\{x_k,y_k\}$. Consider 
        also the $K\times N$ block matrix $X$  with blocks of size $M\times M$.
        Then, the operators $P[\cdot]$ and $\bP[\cdot]$ 
        satisfy the following properties:
            \begin{enumerate}
                \item
                    {\bf (\emph{Nonnegativity}):} $P[x] \succeq 0$, 
                    $\bP[X] \succeq 0$.
                \item
                    {\bf (\emph{Scaling}):} 
                    For any scalar $a \in \mb{C}$, we have
                    $P[ax]=|a|^2 P[x]$ and $\bP[aX]=|a| \cdot \bP[X]$.
                \item
                    {\bf (\emph{Convexity}):} 
                    suppose $x^{(1)},\ldots,x^{(K)}$ 
                    are $N\times 1$ block vectors formed in the same manner as 
                    $x$, $X^{(1)},\ldots,X^{(K)}$ are $K \times N$ block
                    matrices formed in the same manner as $X$, 
                    and let $a_1,\ldots,a_K$ be non-negative real scalars 
                    that add up to one. Then,
                        \begin{align}
                            P[a_1 x^{(1)}&+\cdots+ a_K x^{(K)}] 
                            					\nn\\
                            		&\preceq 		a_1 P[x^{(1)}] 
                            					+ \cdots + a_K P[x^{(K)}]
                            \label{Equ:Properties:PX_CvxComb}
                            					\\
                            \bP[a_1 X^{(1)}&+\cdots+ a_K X^{(K)}] 
                            					\nn\\
                            		&\preceq 		a_1 \bP[X^{(1)}] 
                            					+ \cdots + a_K \bP[X^{(K)}]	
                            	\label{Equ:Properties:PX_bar_CvxComb}		
                        \end{align}
                \item
                    {\bf (\emph{Additivity}):} Suppose 
                    $\bm{x}=\col\{\bm{x}_1,\ldots,\bm{x}_N\}$ and
                    $\bm{y}=\col\{\bm{y}_1,\ldots,\bm{y}_N\}$ are 
                    $N\times 1$ block random vectors that satisfy
                    $\E \bm{x}_k^* \bm{y}_k=0$ for $k=1,\ldots,N$,
                    where $*$ denotes complex conjugate transposition. 
                    Then,
                        \begin{align}
                            \label{Equ:Properties:PX_Expt}
                            \E P[\bm{x}+\bm{y}] =   \E P[\bm{x}] + \E P[\bm{y}]
                        \end{align}
                \item
                		{\bf (\emph{Triangular inequality}):}
                		Suppose $X$ and $Y$ are two $K \times N$ block matrices
                		of same block size $M$. Then,
                			\begin{align}
                				\label{Equ:Properties:PX_bar_TriIneq}
                				\bP[X + Y]	\preceq	\bP[X] + \bP[Y]
                			\end{align}
                \item
                		{\bf (\emph{Submultiplicity}):}
                		Suppose $X$ and $Z$ are $K \times N$ and $N\times L$ 
                		block matrices of the same block size $M$, respectively. Then,
                			\begin{align}
                				\label{Equ:Properties:PX_bar_SubMult}
                				\bP[XZ]		\preceq	\bP[X]  \bP[Z]
                			\end{align}    
                \item
                		{\bf (\emph{Kronecker structure}):}
                		Suppose $X \in \mathbb{C}^{K \times N}$, 
                		$a \in \mathbb{C}^N$ and $b \in \mathbb{C}^M$. Then,
                			\begin{align}
                				\label{Equ:Properties:PX_bar_Kron}
                				\bP[X \otimes I_M]	&=	\bP_1[X]		\\
                				\label{Equ:Properties:PX_Kron}
                				P[a \otimes b]		&=	\|b\|^2 \cdot P_1[a]
                			\end{align}
                   	where by definition, $\bP_1[\cdot]$ and $P_1[\cdot]$
                   	denote the operators that work on the scalar entries of
                   	their arguments. When
                   	$X$ consists of nonnegative entries, relation
                   	\eqref{Equ:Properties:PX_bar_Kron} becomes
                   		\begin{align}
                   			\label{Equ:Properties:PX_bar_Kron1}
                				\bP[X \otimes I_M]	&=	X
                   		\end{align}
                \item
                    {\bf (\emph{Relation to norms}):}
                    The $\infty-$norm of $P[x]$ is the squared block 
                    maximum norm of $x$:
                        \begin{align}
                        	\label{Equ:Properties:PX_BMNorm}
                            \|P[x]\|_{\infty}   
                            		=		\|x\|_{b,\infty}^2
                                	\defeq  \big(\max_{1 \le k \le N}\|x_k\|\big)^2
                        \end{align}
                    Moreover, the sum of the entries in $P[x]$ is the squared
                    Euclidean norm of $x$:
                   		\begin{align}
                   			\one^T P[x]	=	\|x\|^2
                   						=	\sum_{k=1}^N \|x_{k}\|^2
                   			\label{Equ:Properties:PX_EuclNorm}
                   		\end{align}
                \item
                    {\bf (\emph{Inequality preservation}):}
                    Suppose
                    vectors $x$, $y$ and matrices $F$, $G$  
                    have nonnegative entries, then
                    $x \preceq y$ implies $Fx \preceq  Fy$, and 
                    $F \preceq G$ implies $Fx \preceq Gx$.
               	\item
               		{\bf (\emph{Upper bounds}):}
               		It holds that
               			\begin{align}
               				\label{Equ:Properties:UB_P_bar}
               				\bP[X]	&\preceq		\|\bP[X]\|_{1} 
               									\cdot \mathds{1}\mathds{1}^T
               									\\
               				\bP[X]	&\preceq		\|\bP[X]\|_{\infty}
               									\cdot \one\one^T
               				\label{Equ:Properties:UB_P_bar_inf}
               			\end{align}
               		where $\|\cdot\|_{\infty}$ denotes the 
                    	$\infty-$induced norm of a matrix (maximum
                    	absolute row sum). 
            \end{enumerate}
    \end{lemma}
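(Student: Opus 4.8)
The plan is to verify each of the eleven properties directly from the definitions \eqref{Equ:Def:PowerVectorOperator} and \eqref{Equ:Def:NormMatrixOperator}, exploiting the fact that both $P[\cdot]$ and $\bP[\cdot]$ act block-by-block. Consequently, every claimed vector or matrix inequality reduces to a scalar statement about Euclidean norms $\|\cdot\|$ or $2$-induced norms, and the whole lemma follows from elementary norm properties. I would organize the verification into three groups according to the tool required.

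First I would dispatch the properties that are immediate from the definitions. \emph{Nonnegativity} (1) holds because norms are nonnegative. \emph{Scaling} (2) follows entrywise from $\|a x_k\|^2 = |a|^2 \|x_k\|^2$ and $\|a X_{kl}\| = |a|\cdot\|X_{kl}\|$. The \emph{relation to norms} (8) is just the definitions of the block-maximum norm and the Euclidean norm rewritten in terms of the entries $\|x_k\|^2$. \emph{Inequality preservation} (9) is the standard monotonicity of multiplication by a matrix with nonnegative entries. The \emph{upper bounds} (10) follow because every entry of the nonnegative matrix $\bP[X]$ is dominated by its maximum absolute column sum ($=\|\bP[X]\|_1$) and by its maximum absolute row sum ($=\|\bP[X]\|_\infty$), respectively.

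Next I would treat the properties that rely on convexity or orthogonality. \emph{Convexity} (3) for $P[\cdot]$ is Jensen's inequality applied to the convex map $x \mapsto \|x\|^2$ with the convex weights $\{a_j\}$, carried out on each block $x_k$; for $\bP[\cdot]$ it is simply the triangle inequality $\|\sum_j a_j X_{kl}^{(j)}\| \le \sum_j a_j \|X_{kl}^{(j)}\|$, valid since the $a_j$ are nonnegative. \emph{Additivity} (4) comes from expanding $\E\|\bm{x}_k + \bm{y}_k\|^2 = \E\|\bm{x}_k\|^2 + 2\,\mathrm{Re}\,\E\{\bm{x}_k^* \bm{y}_k\} + \E\|\bm{y}_k\|^2$ and invoking the hypothesis $\E\bm{x}_k^* \bm{y}_k = 0$ to annihilate the cross term, blockwise over $k$. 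The \emph{triangle inequality} (5) is the entrywise bound $\|X_{kl} + Y_{kl}\| \le \|X_{kl}\| + \|Y_{kl}\|$.

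Finally, the properties requiring the most care are \emph{submultiplicity} (6) and the \emph{Kronecker structure} (7), and I expect (6) to be the main (though still routine) obstacle. For (6), I would write the $(k,l)$ block of the product as $(XZ)_{kl} = \sum_{n=1}^N X_{kn} Z_{nl}$, then apply the triangle inequality followed by submultiplicativity of the $2$-induced norm to obtain $\|(XZ)_{kl}\| \le \sum_n \|X_{kn}\|\,\|Z_{nl}\| = (\bP[X]\bP[Z])_{kl}$, which is exactly the claimed entrywise bound. For (7), I would use the norm identities $\|M \otimes I_M\| = \|M\|$, giving \eqref{Equ:Properties:PX_bar_Kron} (and hence \eqref{Equ:Properties:PX_bar_Kron1} when the entries of $X$ are nonnegative, since the induced norm of the scalar block $X_{kl} I_M$ is $|X_{kl}| = X_{kl}$), together with $\|a \otimes b\| = \|a\|\,\|b\|$, giving \eqref{Equ:Properties:PX_Kron} after squaring each block $a_k b$. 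Collecting the three groups completes the proof; no single step is difficult, but the block-level bookkeeping in (6) and (7) is where errors are most likely to creep in.
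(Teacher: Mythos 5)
Your proposal is correct and follows essentially the same route as the paper's proof: a blockwise/entrywise reduction of every claim to elementary properties of Euclidean and $2$-induced norms (Jensen/triangle inequality for convexity, vanishing cross terms for additivity, $\|(XZ)_{kl}\|\le\sum_n\|X_{kn}\|\,\|Z_{nl}\|$ for submultiplicity, and $\|x_{kl}I_M\|=|x_{kl}|$, $\|a_k b\|^2=|a_k|^2\|b\|^2$ for the Kronecker identities). The only difference is cosmetic: the paper delegates the $P[\cdot]$-convexity, additivity, and the first half of inequality preservation to a citation of prior work, whereas you supply the (equally standard) one-line arguments directly.
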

    \begin{IEEEproof}
    		See Appendix \ref{Appendix:Proof_BasicProperties}.
    \end{IEEEproof}
    
More importantly, the following variance relations hold for the energy and norm operators. 
These relations show how error variances propagate after 
a certain operator is applied to a random vector.
    
    \begin{lemma}[Variance relations]
    		\label{Lemma:VarianceRelations}
		Consider $N\times 1$ block vectors
        $x=\col\{x_1,\ldots,x_N \}$
        and $y=\col\{y_1,\ldots,y_N\}$
        with $M\times 1$ entries $\{x_k,y_k\}$. 
        The following variance relations are satisfied by the energy vector operator
        $P[\cdot]$:
    		
    		\begin{enumerate}    			
                \item
                    {\bf (\emph{Linear transformation}):}
                    Given a $K \times N$ 
                    block matrix $Q$ with the size of 
                    each block being $M \times M$,
                    $Qx$ defines a linear operator on $x$
                    and its energy satisfies
                    		\begin{align}
                    			P[Q x]	&\preceq	\| \bP[Q]\|_{\infty} \cdot
                    							\bP[Q] \; P[x]
                    			\label{Equ:VarPropt:Linear}
                    							\\
                    					&\preceq	\| \bP[Q]\|_{\infty}^2 \cdot
                    							\one\one^T \cdot P[x]
                    			\label{Equ:VarPropt:Linear_ub}
                    		\end{align}
                    As a special case,
                    	for a left-stochastic $N \times N$ matrix
                    	$A$, we have
                        \begin{align}
                            \label{Equ:VarPropt:A}
                            &P[(A^T \otimes I_M) x]   \preceq     A^T P[x]
                        \end{align}
           		\item
           			{\bf (\emph{Update operation}):}
           			The global update vector defined by  \eqref{Equ:Def:s} satisfies the following variance relation:
           				\begin{align}
           					\label{Equ:VarPropt:s}
           					P[s(x)-s(y)]		\preceq	\lambda_U^2 P[x-y]
           				\end{align}
           		\item
           			{\bf (\emph{Centralized operation}):}
           			The centralized operator $T_c(x)$ defined by \eqref{Equ:Def:Tc2} satisfies the following
           			variance relations:
           				\begin{align}
           					\label{Equ:VarPropt:Tc}
           					P[T_c(x)-T_c(y)]	&\preceq	\gamma_c^2 \cdot P[x-y]	\\
           					\label{Equ:VarPropt:Tc_LB}
           					P[T_c(x)-T_c(y)]	&\succeq	(1-2\mu_{\max} \|p\|_1 \lambda_U) \cdot P[x-y]
           				\end{align}
           			where	
           				\begin{align}
           					\label{Equ:VarPropt:gamma_c}
           					\gamma_c		\defeq	1 - \mu_{\max}\lambda_L 
           								+ \frac{1}{2}\mu_{\max}^2\|p\|_1^2 \lambda_U^2
           				\end{align}
           			Moreover, it follows from \eqref{Equ:Remark:lambdaU_lambda_L_relation} that
           				\begin{align}
           					\gamma_c		&\ge		1 - \mu_{\max}\lambda_L 
           										+ \frac{1}{2}\mu_{\max}^2\lambda_L^2
           										\nonumber\\
           								&= 	(1-\frac{1}{2}\mu_{\max}\lambda_L)^2 
           									+ \frac{1}{4}\mu_{\max}^2\lambda_L^2 > 0
           					\label{Equ:VarPropt:gamma_c_positive}
           				\end{align}
           		\item
           			{\bf (\emph{Stable Jordan operation}):}
           			Suppose $D_L$ is an $L \times L$
           			Jordan matrix of the following block form:
           				\begin{align}
           					\label{Equ:Propert:DL_def}
           					D_L			&\defeq
           								\diag\{D_{L,2},\ldots,D_{L,n_0}\}
           				\end{align}
           			where the $n$th $L_n \times L_n$ 
           			Jordan block is defined as (note that $L=L_2+\cdots+L_{n_0}$)
           				\begin{align}
           					\label{Equ:Propert:DLn_def}
           					D_{L,n}		&\defeq	
           								\begin{bmatrix}
           									d_n & 1 & &  	\\
           									    & \ddots & \ddots &	\\
           									    & & \ddots & 1	\\
           									    & & & d_n
           								\end{bmatrix}
           				\end{align}
           			We further assume $D_L$ to be stable with
           			$0 \le |d_{n_0}| \le \cdots \le |d_{2}| < 1$.
           			Then, for any $L \times 1$ vectors $x'$  and $y'$,
           			we have 
           				\begin{align}
           					\label{Equ:Propert:StableJordanOperator}
           					P_1[ D_L x' + y']	\preceq 
           										\Gamma_e \cdot P_1[x'] 
           										+ 
           										\frac{2}{1-|d_2|}
           										\cdot
           										P_1[y']
           				\end{align}
           			where $\Gamma_e$ is the $L \times L$ matrix defined as
           				\begin{align}
           					\label{Equ:Propert:Gamma_e}
           					\Gamma_e		&\defeq	\begin{bmatrix}
		           									|d_2| & \frac{2}{1-|d_2|} & &
		           														  	\\
		           									    & \ddots & \ddots &	\\
		           									    & & \ddots & 
		           									    	\frac{2}{1-|d_2|}	\\
		           									    & & & |d_2|
		           								\end{bmatrix}
           				\end{align}
           		\item
           			{\bf (\emph{Stable Kronecker Jordan operator}):}
           			Suppose $\mc{D}_L = D_L \otimes I_M$, where $D_L$ is 
           			the $L \times L$ Jordan matrix defined
           			in \eqref{Equ:Propert:DL_def}--%
           			\eqref{Equ:Propert:DLn_def}. Then, for any
           			$LM \times 1$ vectors $x_e$ and $y_e$, we have
           				\begin{align}
           					\label{Equ:Propert:StableKronJordanOperator}
           					P[\mc{D}_L x_e + y_e]
           						\preceq	\Gamma_e \cdot P[x_e] + 
           						\frac{2}{1-|d_2|} \cdot P[y_e]
           				\end{align}
    		\end{enumerate}
    \end{lemma}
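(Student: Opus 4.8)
The plan is to establish the vector inequality \eqref{Equ:Propert:StableJordanOperator} \emph{entrywise}, bounding each coordinate $|(D_L x' + y')_k|^2$ separately and then assembling these scalar bounds into the claimed $\preceq$ relation via the inequality-preservation property of $\preceq$. First I would write out the action of $D_L$ on $x'$ coordinate by coordinate using the block-Jordan structure \eqref{Equ:Propert:DL_def}--\eqref{Equ:Propert:DLn_def}. This splits the indices into two types: \emph{internal} indices $k$ (those that are not the last row of their Jordan block), for which the superdiagonal $1$ produces a coupling, so that $(D_L x' + y')_k = d_n x'_k + x'_{k+1} + y'_k$; and \emph{terminal} indices $k$ (the last row of each Jordan block), for which $(D_L x' + y')_k = d_n x'_k + y'_k$. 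Here $d_n$ denotes the eigenvalue of the block containing index $k$.

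For an internal index, I would set $z_k = d_n x'_k + x'_{k+1} + y'_k$ and, assuming $0 < |d_n| < 1$, express $z_k$ as the convex combination $|d_n|\cdot(d_n x'_k/|d_n|) + (1-|d_n|)\cdot((x'_{k+1}+y'_k)/(1-|d_n|))$. Applying the convexity and scaling properties of the energy operator (Jensen's inequality for the convex map $t \mapsto |t|^2$) gives $|z_k|^2 \le |d_n|\,|x'_k|^2 + \frac{1}{1-|d_n|}|x'_{k+1}+y'_k|^2$. I would then invoke $|a+b|^2 \le 2|a|^2 + 2|b|^2$ to obtain $|z_k|^2 \le |d_n|\,|x'_k|^2 + \frac{2}{1-|d_n|}|x'_{k+1}|^2 + \frac{2}{1-|d_n|}|y'_k|^2$. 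Because $|d_n| \le |d_2|$ by the ordering $0 \le |d_{n_0}| \le \cdots \le |d_2| < 1$, the map $t\mapsto 1/(1-t)$ is increasing, so I can replace $|d_n|$ by $|d_2|$ in the first coefficient and $\frac{1}{1-|d_n|}$ by $\frac{1}{1-|d_2|}$ in the others, yielding $|z_k|^2 \le |d_2|\,|x'_k|^2 + \frac{2}{1-|d_2|}|x'_{k+1}|^2 + \frac{2}{1-|d_2|}|y'_k|^2$. This is precisely the $k$th entry of $\Gamma_e P_1[x'] + \frac{2}{1-|d_2|}P_1[y']$ for $\Gamma_e$ as in \eqref{Equ:Propert:Gamma_e}. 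The terminal-index case is handled identically, dropping the $x'_{k+1}$ term. The degenerate case $d_n = 0$ is treated directly from $|a+b|^2 \le 2|a|^2 + 2|b|^2$ together with $\frac{2}{1-|d_2|} \ge 2$.

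The main obstacle, and the step requiring the most care, is the bookkeeping between the genuine coupling pattern of $D_L$ and the coupling pattern encoded in $\Gamma_e$. Whereas $D_L$ carries superdiagonal $1$'s only \emph{within} each Jordan block (so the last row of every block has no coupling to the next block), the matrix $\Gamma_e$ in \eqref{Equ:Propert:Gamma_e} is a single bidiagonal matrix carrying the entry $\frac{2}{1-|d_2|}$ on \emph{every} superdiagonal position, including the block boundaries. I would resolve this by observing that both $\Gamma_e$ and $P_1[x']$ have only nonnegative entries, so for a terminal index the extra superdiagonal term of $\Gamma_e$ merely adds a nonnegative quantity $\frac{2}{1-|d_2|}|x'_{k+1}|^2$ to the right-hand side; hence the entrywise bound $|z_k|^2 \le |d_2|\,|x'_k|^2 + \frac{2}{1-|d_2|}|y'_k|^2 \le (\Gamma_e P_1[x'])_k + \frac{2}{1-|d_2|}|y'_k|^2$ continues to hold. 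Collecting all entrywise inequalities then gives the single vector relation \eqref{Equ:Propert:StableJordanOperator}.
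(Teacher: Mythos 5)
Your argument for the stable Jordan operation is correct, but it addresses only item 4 of the lemma: the statement also asserts the linear-transformation bounds \eqref{Equ:VarPropt:Linear}--\eqref{Equ:VarPropt:A}, the update-operation bound \eqref{Equ:VarPropt:s}, the centralized-operator bounds \eqref{Equ:VarPropt:Tc}--\eqref{Equ:VarPropt:Tc_LB} (both the upper \emph{and} lower bound, which are what drive Theorems \ref{Thm:LimitPoint}--\ref{Thm:ConvergenceRefRec:DeterministcCent}), and the Kronecker version \eqref{Equ:Propert:StableKronJordanOperator}, which is the form actually invoked in the error recursion for $\w_{e,i}$. None of these are touched by your proposal, so as a proof of the lemma it is incomplete. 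The omission of item 5 is the least serious: your entrywise argument extends verbatim by replacing scalar magnitudes $|x'_k|$ with block Euclidean norms $\|x_{e,k}\|$ (the inequalities $\|d_n a + b + c\|^2 \le |d_n|\,\|a\|^2 + \tfrac{2}{1-|d_n|}\|b\|^2 + \tfrac{2}{1-|d_n|}\|c\|^2$ go through unchanged), but this should be said. Items 1--3 require separate arguments — in particular \eqref{Equ:VarPropt:Tc} needs the strong monotonicity condition \eqref{Equ:Assumption:StrongMonotone} to extract the $-2\mu_{\max}\lambda_L\|x-y\|^2$ cross term, and \eqref{Equ:VarPropt:Tc_LB} needs Cauchy--Schwartz plus the Lipschitz condition after discarding the nonnegative quadratic term; neither follows from anything in your proposal.

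For the part you do prove, your route is genuinely different from the paper's and is sound. The paper works at the operator level: it decomposes $D_L = \Lambda_L + \Theta_L'$, applies the convexity property \eqref{Equ:Properties:PX_CvxComb} with the global weights $|d_2|$, $\tfrac{1-|d_2|}{2}$, $\tfrac{1-|d_2|}{2}$, then invokes the linear-transformation variance relation \eqref{Equ:VarPropt:Linear} with $\|\bP_1[\Lambda_L]\|_{\infty}=|d_2|$ and $\|\bP_1[\Theta_L']\|_{\infty}=1$, and finishes with the dominations $\bP_1[\Lambda_L] \preceq |d_2| I_L$ and $\Theta_L' \preceq \Theta_L$, treating $|d_2|=0$ as a separate case. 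Your entrywise argument replaces all of this machinery with scalar Jensen (weights $|d_n|$, $1-|d_n|$), the bound $|a+b|^2 \le 2|a|^2+2|b|^2$, and monotonicity of $t \mapsto \tfrac{1}{1-t}$ to pass from $|d_n|$ to $|d_2|$; your handling of the block boundaries via nonnegativity of the extra superdiagonal entries of $\Gamma_e$ is exactly the paper's $\Theta_L' \preceq \Theta_L$ step in disguise. What your approach buys: it avoids any appeal to Lemma \ref{Lemma:VarianceRelations} item 1 (so item 4 does not depend on item 1), it absorbs the degenerate $d_n=0$ blocks locally rather than as a global special case, and the intermediate per-block constants $\tfrac{2}{1-|d_n|}$ are sharper before the final uniform relaxation. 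What it costs is only that the bookkeeping of internal versus terminal indices must be done explicitly, which you do correctly.
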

    \begin{IEEEproof}
    		See Appendix \ref{Appendix:Proof_VarianceRelations}.
    \end{IEEEproof}

\section{Proof of Lemma \ref{Lemma:EquivCond_updateVec}}

\label{Appendix:Proof_Lemma_EquivCondUpdateVec}

First, we establish that conditions \eqref{Equ:Lemma:EquivCondUpdateVec:LipschitzUpdate_HessianUB} and \eqref{Equ:Lemma:EquivCondUpdateVec:StrongMono_HessianLB} imply \eqref{Equ:Assumption:Lipschitz} and \eqref{Equ:Assumption:StrongMonotone}, respectively. Using the mean-value theorem\cite[p.6]{poliak1987introduction}, we have for any $x,y \in \mS$:
			\begin{align}
				\left\|
					s_k(x) - s_k(y)
				\right\|
						&=		\left\|
									\int_{0}^{1}
									\nabla_{w^T} 
									s_k(y + t(x-y)) dt
									\cdot
									(x-y)
								\right\|
								\nn\\
						&\le 	\int_{0}^{1}
								\left\|
								\nabla_{w^T} 
								s_k(y + t(x-y))
								\right\| 
								dt
								\cdot
								\|x-y\|
								\nn\\
						&\le 	\lambda_U \cdot \|x-y\|
			\end{align}
where we used the fact that $y+t(x-y) = tx+(1-t)y \in \mS$ given $x,y \in \mS$ and $0 \le t \le 1$. Likewise, we have
			\begin{align}
				&(x-y)^T \cdot
				\sum_{k=1}^N
				p_k 
				[ s_k(x)-s_k(y) ]
								\nn\\
						&=		(x-y)^T \cdot
								\sum_{k=1}^N
								p_k 
								\int_{0}^{1} \nabla_{w^T} s_k(y + t(x-y)) dt
								\cdot
								(x-y)
								\nn\\
						&=		(x-y)^T \cdot								
								\int_{0}^{1} 
								\sum_{k=1}^N
								p_k 
								\nabla_{w^T} s_k(y + t(x-y)) dt
								\cdot
								(x-y)
								\nn\\
						&\overset{\eqref{Equ:Lemma:EquivCondUpdateVec:Hc_def}}{=}
								(x-y)^T \cdot	
								H_c(y+t(x-y))
								\cdot
								(x-y)
								\nn\\
						&=		(x \! - \! y)^T 
								\!\cdot\!
								\frac{ H_c(y+t(x-y)) + H_c^T(y+t(x-y)) }{2}
								\!\cdot\!
								(x \!- \! y)
								\nn\\
						&\ge 	\lambda_L \cdot \|x-y\|^2
			\end{align}
			
		Next, we establish the reverse direction that conditions \eqref{Equ:Assumption:Lipschitz} and \eqref{Equ:Assumption:StrongMonotone} imply  \eqref{Equ:Lemma:EquivCondUpdateVec:LipschitzUpdate_HessianUB} and \eqref{Equ:Lemma:EquivCondUpdateVec:StrongMono_HessianLB}. Choosing $x=w+ t \cdot \delta w$ and $y = w$ in \eqref{Equ:Assumption:Lipschitz} for any $\delta w \neq 0$ and any small positive $t$, we get
			\begin{align}
				\| &s_k( w + t \cdot \delta w ) - s_k(w) \|
						\le 	t \cdot \lambda_U \cdot \| \delta w \|
								\nn\\
				&\Rightarrow \quad
				\lim_{t \rightarrow 0^{+}}
				\left\|
					\frac{s_k( w + t \cdot \delta w ) - s_k(w)}{t}
				\right\|
						\le 	\lambda_U \cdot \| \delta w \|
								\nn\\
				&\Rightarrow \quad
				\left\|
					\lim_{t \rightarrow 0^{+}}
					\frac{s_k( w + t \cdot \delta w ) - s_k(w)}{t}
				\right\|
						\le 	\lambda_U \cdot \| \delta w \|
								\nn\\
				&\Rightarrow \quad
				\left\|
					\nabla_{w^T} s_k(w) \delta w
				\right\|
						\le 	\lambda_U \cdot \| \delta w \|
								\nn\\
				&\Rightarrow \quad
				\| \nabla_{w^T} s_k(w) \|
						\defeq	\sup_{\delta w \neq 0}
								\frac{\left\|\nabla_{w^T} s_k(w) \delta w\right\|}{\| \delta w \|}
						\le 	\lambda_U
			\end{align}
		Likewise, choosing $x=w+ t \cdot \delta w$ and $y = w$ in \eqref{Equ:Assumption:StrongMonotone} for any $\delta w \neq 0$ and any small positive $t$, we obtain
			\begin{align}
				t \cdot \delta w^T & \cdot
				\sum_{k=1}^N
				p_k 
				[s_k( w + t\cdot \delta w) - s_k(w)]
						\ge 	t^2 \cdot \lambda_L \cdot \|\delta w\|^2
								\nn\\
				&\Rightarrow \quad
				\delta w^T \cdot
				\sum_{k=1}^N
				p_k 
				\left(
					\lim_{ t \rightarrow 0^{+}}
					\frac{s_k( w + t\cdot \delta w) - s_k(w)}{t}
				\right)
								\nn\\
				&\qquad\qquad
							\ge 	\lambda_L \cdot \|\delta w\|^2
								\nn\\
				&\Rightarrow \quad
				\delta w^T \cdot
				\sum_{k=1}^N
				p_k 
				\nabla_{w^T} s_k(w)
				\cdot \delta w
						\ge 	\lambda_L \cdot \|\delta w\|^2
								\nn\\
				&\Rightarrow \quad
				\delta w^T
				H_c(w)
				\delta w
						\ge 	\lambda_L \cdot \|\delta w\|^2
								\nn\\
				&\Rightarrow \quad
				\delta w^T \frac{H_c(w) + H_c^T(w) }{2} \delta w
						\ge 	\lambda_L \cdot \| \delta w \|^2
								\nn\\
				&\Rightarrow \quad
				\frac{H_c(w) + H_c^T(w) }{2}
						\ge 	\lambda_L \cdot I_M
			\end{align}

\section{Proof of Lemma \ref{Lemma:BasicPropertiesOperator}}
\label{Appendix:Proof_BasicProperties}
Properties 1-2 are straightforward from the definitions of $P[\cdot]$
and $\bP[\cdot]$. Property 4 was proved in
\cite{chen2013JSTSPpareto}.
We establish the remaining properties.

{\bf (\emph{Property 3: Convexity})}
The convexity of $P[\cdot]$ has already been proven in
\cite{chen2013JSTSPpareto}. We now establish the convexity of
the operator $\bP[\cdot]$. Let $X_{qn}^{(k)}$ denote the $(q,n)-$th
$M\times M$
block of the matrix $X^{(k)}$, where $q=1,\ldots,K$ and $n=1,\ldots,N$.
Then,
	\begin{align}
		\bP&\left[
			\sum_{k=1}^K a_k X^{(k)}
		\right]
				\nn\\
			&\preceq
				\begin{bmatrix}
					\sum_{k=1}^K a_k\left\| X_{11}^{(k)}\right\| & \cdots & 
						\sum_{k=1}^K a_k\left\| X_{1N}^{(k)}\right\|		\\
					\vdots	&	&	\vdots					\\
					\sum_{k=1}^K a_k\left\| X_{K1}^{(k)}\right\| & \cdots &
						\sum_{k=1}^K a_k\left\| X_{KN}^{(k)}\right\|
				\end{bmatrix}
				\nonumber\\
			&=	\sum_{k=1}^K a_k \bP[X^{(k)}]
	\end{align}

{\bf (\emph{Property 5: Triangular inequality})}
Let $X_{qn}$ and $Y_{qn}$ denote the $(q,n)-$th $M \times M$ 
blocks of the matrices
$X$ and $Y$, respectively, where $q=1,\ldots,K$ and $n=1,\ldots,N$.
Then, by the triangular inequality of the matrix norm $\|\cdot\|$, we  have
	\begin{align}
		\bP\left[
			X+Y
		\right]
			&\preceq
				\begin{bmatrix}
					\left\|X_{11}\right\| + \left\|Y_{11}\right\| & \cdots & 
						\left\|X_{1N}\right\| + 
						\left\|Y_{1N}\right\|		\\
					\vdots	&	&	\vdots					\\
					\left\|X_{K1}\right\| + 
					\left\|Y_{K1}\right\| & \cdots &
						\left\|X_{KN}\right\| + \left\|Y_{KN}\right\|
				\end{bmatrix}
				\nonumber\\
			&=	\bP[X] + \bP[Y]
	\end{align}
	
{\bf (\emph{Property 6: Submultiplicity})}
Let $X_{kn}$ and $Z_{nl}$ be the $(k,n)-$th and $(n,l)-$th 
$M \times M$ blocks of $X$ and $Z$, respectively. Then, the 
$(k,l)-$th $M \times M$ block of the matrix product $XZ$,
denoted by $[XZ]_{k,l}$, is 
	\begin{align}
		[X Z]_{k,l}	=	\sum_{n=1}^N X_{kn} Z_{nl}
	\end{align}
Therefore, the $(k,l)-$th entry of the matrix 
$\bP[XZ]$ can be bounded as
	\begin{align}
		\big[\bP[XZ]\big]_{k,l}
					&=		\left\|
								\sum_{n=1}^N X_{kn} Z_{nl}
							\right\|
					\le 	\sum_{n=1}^N
							\left\|
								X_{kn}
							\right\|
							\cdot
							\left\| 
								Z_{nl}
							\right\|						
	\end{align}
Note that $\|X_{kn}\|$ and $\|Z_{nl}\|$ are the $(k,n)-$th
and $(n,l)-$th entries of the matrices $\bP[X]$ and $\bP[Z]$, respectively.
The right-hand side of the above inequality is therefore the $(k,l)-$th
entry of the matrix product $\bP[X]\bP[Z]$. Therefore,
we obtain
	\begin{align}
		\bP[XZ]		\preceq	\bP[X]\bP[Z]
	\end{align}

{\bf (\emph{Property 7: Kronecker structure})}
For \eqref{Equ:Properties:PX_bar_Kron}, we note that
the $(k,n)-$th $M\times M$ block of $X\otimes I_M$ is $x_{kn} I_M$.
Therefore, by the definition of $\bP[\cdot]$, we have
	\begin{align}
		\bP[X \otimes I_M]	&=	\begin{bmatrix}
									|x_{11}|	& \cdots & |x_{1N}|	\\
									\vdots	&		&			\\
									|x_{K1}|	& \cdots	&	|x_{KN}|
								\end{bmatrix}
							=	\bP_1[X]
	\end{align}
In the special case when $X$ consists of nonnegative entries, 
$\bP_1[X]=X$, and we recover \eqref{Equ:Properties:PX_bar_Kron1}.
To prove \eqref{Equ:Properties:PX_Kron}, we let
$a=\col\{a_1,\ldots,a_N\}$ and $b=\col\{b_1,\ldots,b_M\}$.
Then, by the definitin of $P[\cdot]$, we have
	\begin{align}
		P[a\otimes b]	=	\col\{ |a_1|^2 \cdot \|b\|^2, \ldots, |a_N|^2 \cdot \|b\|^2 \}
						=	\|b\|^2 \cdot P_1[a]
	\end{align}

{\bf (\emph{Property 8: Relation to norms})}
Relations \eqref{Equ:Properties:PX_BMNorm} and \eqref{Equ:Properties:PX_EuclNorm}
are straightforward and follow from the definition.

{\bf (\emph{Property 9: Inequality preservation})}
The proof that $x\preceq y$ implies $Fx \preceq Fy$ can be 
found in \cite{chen2013JSTSPpareto}. We now prove
that $F \preceq G$ implies $Fx \preceq Gx$. This can
be proved by showig that $(G-F)x \succeq 0$, which is true
because all entries of $G-F$ and $x$ are nonnegative due
to $F \preceq G$ and $x \succeq 0$.

{\bf (\emph{Property 10: Upper bounds})}
By the definition of $\bP[X]$ in \eqref{Equ:Def:NormMatrixOperator}, we get
	\begin{align}
		\bP[X] 	&\preceq 	\left(\max_{l,k} \|X_{lk}\|\right)
							\cdot \one\one^T	
							\nonumber\\
				&\preceq		\max_{l} \left(\sum_{k=1}^N \|X_{lk}\|\right)
							\cdot \one \one^T
							\nonumber\\
				&=			\|\bP[X]\|_{\infty} \cdot \one \one^T
	\end{align}
Likewise, we can establish that $\bP[X] \preceq \|\bP[X]\|_{1} \cdot \one\one^T$.

\section{Proof of Lemma \ref{Lemma:VarianceRelations}}
\label{Appendix:Proof_VarianceRelations}

{\bf (\emph{Property 1: Linear transformation})}
Let $Q_{kn}$ be the $(k,n)$-th $M\times M$ block of 
$Q$. Then
	\begin{align}
		\label{Equ:Appendix:P_Qx}
		P[Qx]	
			=	\col\left\{
					\left\|
						\sum_{n=1}^N 
						Q_{1n} x_n
					\right\|^2
					,
					\ldots,
					\left\|
						\sum_{n=1}^N 
						Q_{Kn} x_n
					\right\|^2
				\right\}
	\end{align}
Using the convexity of $\|\cdot\|^2$, we have the following
bound on each $n$-th entry:
	\begin{align}
		\bigg\|&
			\sum_{n=1}^N 
			Q_{kn} x_n
		\bigg\|^2
					\nn\\
			&\overset{(a)}{=}		
					\left[\sum_{n=1}^N \|Q_{kn}\|\right]^2
					\cdot
					\left\|
						\sum_{n=1}^N
						\frac{\|Q_{kn}\|}{\sum_{l=1}^N\|Q_{kl}\|}
						\cdot
						\frac{Q_{kn}}{\|Q_{kn}\|} x_n
					\right\|^2
					\nonumber\\
			&\overset{(b)}{\le} 
					\left[\sum_{n=1}^N \|Q_{kn}\|\right]^2
					\cdot
						\sum_{n=1}^N
						\frac{\|Q_{kn}\|}{\sum_{l=1}^N\|Q_{kl}\|}
						\cdot
						\frac{\|Q_{kn}\|^2}{\|Q_{kn}\|^2} \|x_n\|^2
					\nonumber\\
			&=		\left[\sum_{n=1}^N \|Q_{kn}\|\right]
					\cdot
						\sum_{n=1}^N
						\|Q_{kn}\| \cdot \|x_n\|^2
					\nonumber\\
			&\le 	\max_{k}\left[\sum_{n=1}^N \|Q_{kn}\|\right]
					\cdot
						\sum_{n=1}^N
						\|Q_{kn}\| \cdot \|x_n\|^2
					\nonumber\\
			&=		\|\bP[Q]\|_{\infty}
					\cdot
					\sum_{n=1}^N
					\|Q_{kn}\| \cdot \|x_n\|^2
		\label{Equ:Appendix:VarPropt_LinearT}
	\end{align}
where in step (b) we applied Jensen's inequlity to $\|\cdot\|^2$. 
{ 
Note
that if some $\|Q_{kn}\|$ in step (a) is zero, we eliminate the corresponding term
from the sum and it can be verified that the final result
still holds.
}
Substituting into \eqref{Equ:Appendix:P_Qx}, we establish
\eqref{Equ:VarPropt:Linear}. The special case \eqref{Equ:VarPropt:A}
can be obtained by using $\bP[A^T \otimes I_M]=A^T$
and that $\|A^T\|_{\infty}=1$ (left-stochastic) in \eqref{Equ:VarPropt:Linear}.
Finally, the upper bound \eqref{Equ:VarPropt:Linear_ub} can be proved
by applying \eqref{Equ:Properties:UB_P_bar_inf} to $\bP[Q]$.

{\bf (\emph{Property 2: Update operation})}
By the definition of $P[\cdot]$ and the Lipschitz Assumption
\ref{Assumption:UpdateVectorLipschitz},
we have
	\begin{align}
		P&[s(x)-s(y)]			
							\nn\\
						&=	\col\{
								\|s_1(x_1) - s_1(y_1)\|^2,
								\ldots
								\|s_N(x_N) - s_N(y_N)\|^2
							\}
							\nonumber\\
						&\preceq
							\col\{
								\lambda_U^2 \cdot \|x_1 - y_1\|^2,
								\ldots
								\lambda_U^2 \cdot \|x_N - y_N\|^2
							\}
							\nonumber\\
						&=	\lambda_U^2 \cdot P[x-y]
	\end{align}

{\bf (\emph{Property 3: Centralized opertion})}
Since $T_c: \mathbb{R}^M \rightarrow \mathbb{R}^M$, the output
of $P[T_c(x)-T_c(y)]$ becomes a scalar. From the 
definition, we get 
	\begin{align}
		P&\left[
			T_c(x) - T_c(y)
		\right]
					\nn\\
			&=		\left\|
						x - y
						-
						\mu_{\max}
						\cdot
						(p^T \otimes I_M)
						\left[
							s( \mathds{1} \otimes x)
							-
							s( \mathds{1} \otimes y)
						\right]
					\right\|^2
					\nonumber\\
			&=		\big\|
						x - y
						-
						\mu_{\max}
						\cdot
						\sum_{k=1}^N p_k
						\left[
							s_k( x)
							-
							s_k( y)
						\right]
					\big\|^2
					\nonumber\\
			&=		\left\| x - y\right\|^2
					-2\mu_{\max} \cdot
					(x-y)^T
					\sum_{k=1}^N p_k
					\left[
						s_k( x)
						-
						s_k( y)
					\right]
					\nn\\
					&\quad
					+
					\mu_{\max}^2	
					\left\|
						(p^T \otimes I_M)
						\left[
							s( \mathds{1} \otimes x)
							-
							s( \mathds{1} \otimes y)
						\right]
					\right\|^2				
					\nonumber\\
			&=		\left\| x - y\right\|^2
					-2\mu_{\max} \cdot
					(x-y)^T
					\sum_{k=1}^N p_k
					\left[
						s_k( x)
						-
						s_k( y)
					\right]
					\nn\\
					&\quad
					+
					\mu_{\max}^2	
					P\left[
						(p^T \otimes I_M)
						\left[
							s( \mathds{1} \otimes x)
							-
							s( \mathds{1} \otimes y)
						\right]
					\right]	
		\label{Equ:Appendix:P_Tc_interm1}
	\end{align}
We first prove the upper bound \eqref{Equ:VarPropt:Tc} as follows:
	\begin{align}
		P&\left[
			T_c(x) - T_c(y)
		\right]
					\nn\\
			&=
					\big\|
						x - y
						-
						\mu_{\max}
						\cdot
						\sum_{k=1}^N
						p_k [ s_k(x) - s_k(y) ]
					\big\|^2
					\nn\\
			&=
					\| x - y \|^2
					-
					2 \mu_{\max} \cdot (x-y)^T \sum_{k=1}^N p_k [ s_k(x) - s_k(y) ]
					\nn\\
					&\quad
					+
					\mu_{\max}^2
					\cdot
					\big\|
						\sum_{k=1}^N
						p_k
						[ s_k(x) - s_k(y) ]
					\big\|^2
					\nn\\
			&\overset{\eqref{Equ:Assumption:StrongMonotone}}{\le}
					\| x - y \|^2
					-
					2 \mu_{\max} \cdot \lambda_L \cdot \| x-y \|^2
					\nn\\
					&\quad
					+
					\mu_{\max}^2
					\cdot
					\big\|
						\sum_{k=1}^N
						p_k
						[ s_k(x) - s_k(y) ]
					\big\|^2
					\nn\\
			&\le 
					\| x - y \|^2
					-
					2 \mu_{\max} \cdot \lambda_L \cdot \| x-y \|^2
					\nn\\
					&\quad
					+
					\mu_{\max}^2
					\cdot
					\big[
						\sum_{k=1}^N
						p_k
						\| s_k(x) - s_k(y) \|
					\big]^2
					\nn\\
			&\overset{\eqref{Equ:Assumption:Lipschitz}}{\le}
					\| x - y \|^2
					-
					2 \mu_{\max} \cdot \lambda_L \cdot \| x-y \|^2
					\nn\\
					&\quad
					+
					\mu_{\max}^2
					\cdot
					\left[
						\sum_{k=1}^N
						p_k
						\cdot
						\lambda_U \cdot \| x - y \|
					\right]^2
					\nn\\
			&=	
					\| x - y \|^2
					-
					2 \mu_{\max} \cdot \lambda_L \cdot \| x-y \|^2
					\nn\\
					&\quad
					+
					\mu_{\max}^2
					\cdot
					\| p \|_1^2
					\lambda_U^2 
					\cdot 
					\| x - y \|^2
					\nn\\
			&=		
					(1-2\mu_{\max}\lambda_L + \mu_{\max}^2 \lambda_U^2 \|p\|_1^2)
					\cdot \|x-y\|^2
					\nonumber\\
			&\le
					\left(
						1-\mu_{\max}\lambda_L + \frac{1}{2}\mu_{\max}^2 \lambda_U^2 \|p\|_1^2
					\right)^2
					\cdot \|x-y\|^2
					\label{Equ:Appendix:P_Tc_interm1_UB}
	\end{align}
where in the last step we used
the relation $(1-x) \le (1-\frac{1}{2}x)^2$.

Next, we prove the lower bound \eqref{Equ:VarPropt:Tc_LB}.
From \eqref{Equ:Appendix:P_Tc_interm1}, we notice that the
last term in \eqref{Equ:Appendix:P_Tc_interm1} is always nonnegative
so that 
	\begin{align}
		P&\left[
			T_c(x) - T_c(y)
		\right]
					\nn\\
			&\succeq	
					\left\| x - y\right\|^2
					-2\mu_{\max} \cdot
					(x-y)^T
					\sum_{k=1}^N p_k
					\left[
						s_k( x )
						-
						s_k( y )
					\right]
					\nonumber\\
			&\overset{(a)}{\succeq}
					\left\| x - y\right\|^2
					-2\mu_{\max} \cdot
					\|x-y\|
					\cdot
					\big\|
					\sum_{k=1}^N p_k
					\left[
						s_k( x )
						-
						s_k( y)
					\right]
					\big\|
					\nonumber\\
			&\succeq
					\left\| x - y\right\|^2
					-2\mu_{\max} \cdot
					\|x-y\|
					\cdot
					\sum_{k=1}^N p_k
					\left\|
						s_k( x )
						-
						s_k( y )
					\right\|
					\nonumber\\
			&\overset{(b)}{\succeq}
					\left\| x - y\right\|^2
					-2\mu_{\max} \cdot
					\|x-y\|
					\cdot
					\sum_{k=1}^N p_k
					\lambda_U \|x-y\|
					\nonumber\\
			&=		(1-2\mu_{\max} \lambda_U \|p\|_1) 
					\cdot
					\|x-y\|^2
	\end{align}
where in step (a), we used the Cauchy-Schwartz inequality 
$x^Ty \le |x^T y| \le \|x\|\cdot\|y\|$, and
in step (b) we used \eqref{Equ:Assumption:Lipschitz}.

{\bf (\emph{Property 4: Stable Jordan operator})}
First, we notice that matrix $D_{L,n}$ can be written as
	\begin{align}
		\label{Equ:Appendix:D_Ln_Decomp}
		D_{L,n}	=	d_n \cdot I_{L_n} + \Theta_{L_n}
	\end{align}
where $\Theta_{L_n}$ is an $L_n \times L_n$ strictly upper triangular matrix of the following form:
	\begin{align}
		\label{Equ:Appendix_Theta_Ln}
		\Theta_{L_n}	\defeq	\begin{bmatrix}
           						0 & 1 & &  	\\
           					    		& \ddots & \ddots &	\\
           							& & \ddots & 1	\\
           							& & & 0
           					\end{bmatrix}
	\end{align}
Define the following matrices:
	\begin{align}
		\Lambda_L	&\defeq	\diag\{d_2 I_{L_2}, \ldots, d_{n_0} I_{L_{n_0}}\}
		\label{Equ:Appendix:Lambda_L_def}
							\\
		\Theta_L'	&\defeq	\diag\{\Theta_{L_2},\ldots,\Theta_{L_{n_0}}\}
	\end{align}
Then, the original Jordan matrix $D_L$ can be expressed as
	\begin{align}
		D_L		=		\Lambda_L + \Theta_L'
		\label{Equ:Appendx:DL_expression}
	\end{align}
so that
	\begin{align}
		P_1&[D_L x' + y']	
					\nn\\
				&=	P_1\left[
						\Lambda_L x' + \Theta_L' x' + y'
					\right]
					\nonumber\\
				&=	P_1\bigg[
						|d_2| \cdot\frac{1}{|d_2|}\Lambda_L x' 
						+ 
						\frac{1-|d_2|}{2}
						\cdot
						\frac{2}{1-|d_2|}
						\Theta_L' x'
						\nn\\
						&\qquad
						+ 
						\frac{1-|d_2|}{2}
						\cdot
						\frac{2}{1-|d_2|}
						y'
					\bigg]
					\nonumber\\
				&\overset{(a)}{\preceq}
					|d_2| \cdot
					P_1\left[
						\frac{1}{|d_2|}\Lambda_L x' 
					\right]
					+ 
					\frac{1-|d_2|}{2}
					\cdot
					P_1\left[						
						\frac{2}{1-|d_2|}
						\Theta_L' x' 
					\right]
					\nn\\
					&\quad
					+ 
					\frac{1-|d_2|}{2}
					\cdot
					P_1\left[
						\frac{2}{1-|d_2|}
						y'
					\right]
					\nonumber\\
				&\overset{(b)}{=}
					\frac{1}{|d_2|} \!\cdot\!
					P_1\left[
						\Lambda_L x' 
					\right]
					\!+\!						
					\frac{2}{1-|d_2|}
					\!\cdot\!
					P_1\left[
						\Theta_L' x' 
					\right]
					\!+ \!
					\frac{2}{1-|d_2|}
					\!\cdot\!
					P_1\left[						
						y'
					\right]
					\nonumber\\
				&\overset{(c)}{\preceq}
					\frac{\|\bP_1[\Lambda_L]\|_{\infty}}{|d_2|} 					
					\cdot
					\bP_1[\Lambda_L]
					\cdot
					P_1\left[ 
						x' 
					\right]
					\nn\\
					&\quad
					+						
					\frac{2\|\bP_1[\Theta_L']\|_{\infty}}{1-|d_2|}
					\cdot
					\bP_1[\Theta_L']
					\cdot
					P_1\left[
						x' 
					\right]
					+ 
					\frac{2}{1-|d_2|}
					\cdot
					P_1\left[						
						y'
					\right]
					\nonumber\\
				&\overset{(d)}{\preceq}
					\bP_1[\Lambda_L]
					\!\cdot\!
					P_1\left[ 
						x' 
					\right]
					\!+\!						
					\frac{2}{1 \!-\! |d_2|}
					\!\cdot\!
					\Theta_L'
					\cdot
					P_1\left[
						x' 
					\right]
					\!+\!
					\frac{2}{1 \!-\! |d_2|}
					\!\cdot\!
					P_1\left[						
						y'
					\right]
					\nonumber\\
				&\overset{(e)}{\preceq}
					|d_2| 
					\!\cdot\! 
					I_L
					\!\cdot\!
					P_1\left[ 
						x' 
					\right]
					\!+\!						
					\frac{2}{1 \!-\! |d_2|}
					\!\cdot\!
					\Theta_L
					\cdot
					P_1\left[
						x' 
					\right]
					\!+\!
					\frac{2}{1 \!-\! |d_2|}
					\!\cdot\!
					P_1\left[						
						y'
					\right]
					\nonumber\\
				&\overset{(f)}{=}	
					\Gamma_e
					\cdot
					P_1\left[
						x' 
					\right]
					+ 
					\frac{2}{1-|d_2|}
					\cdot
					P_1\left[						
						y'
					\right]
		\label{Equ:Appendix:DLxe_Plus_ye}
	\end{align}
where step (a) uses the convexity property \eqref{Equ:Properties:PX_CvxComb},
step (b) uses the scaling property, step (c) uses variance relation
\eqref{Equ:VarPropt:Linear}, step (d) uses 
$\|\bP_1[\Lambda_L]\|_{\infty} = |d_2|$, $\bP_1[\Theta_L']=\Theta_L'$ and
$\|\bP_1[\Theta_L']\|_{\infty}=\|\Theta_L'\|_{\infty}=1$, step (e) uses
$\bP_1[\Lambda_L] \preceq |d_2| \cdot I_L$ and $\Theta_L' \preceq \Theta_L$, 
where $\Theta_L$ denotes a matrix of the same form as \eqref{Equ:Appendix_Theta_Ln} 
but of size $L\times L$, step (f) uses
the definition of the matrix $\Gamma_e$ in \eqref{Equ:Propert:Gamma_e}.
{ 
The above derivation assumes $|d_2| \neq 0$. When $|d_2|=0$, we can
verify that the above inequality still holds. To see this, we first
notice that when $|d_2|=0$, the relation $0 \le |d_{n_0}| \le \cdots \le |d_2|$
implies that $d_{n_0}=\cdots = d_2 =0$ so that $\Lambda_L = 0$ and 
$D_L = \Theta_L'$ --- see \eqref{Equ:Appendix:Lambda_L_def} and
\eqref{Equ:Appendx:DL_expression}. Therefore, similar to the 
steps (a)--(f) in \eqref{Equ:Appendix:DLxe_Plus_ye}, we get
	\begin{align}
		P_1[D_L x' + y']	&=	P_1[\Theta_L' x' + y']				\nonumber\\
							&=	P_1\big[\frac{1}{2}\cdot 2\Theta_L' x' + \frac{1}{2}\cdot 2y'\big]	
								\nonumber\\
							&\preceq
								\frac{1}{2} \cdot P_1[ 2\Theta_L' x']
								+ 
								\frac{1}{2}\cdot P_1[ 2y']	
								\nonumber\\
							&=	\frac{1}{2} \cdot 2^2 \cdot P_1[ \Theta_L' x']
								+ 
								\frac{1}{2}\cdot 2^2 \cdot P_1[ y']	
								\nonumber\\
							&=	2 P_1[ \Theta_L' x']
								+ 
								2 P_1[ y']	
								\nonumber\\
							&\preceq
								2 \| \bP_1[\Theta_L']\|_{\infty}
								\cdot
								\bP_1[\Theta_L'] P_1[x']
								+
								2
								P_1[y']
								\nonumber\\
							&=	2\Theta_L' P_1[x'] + 2P_1[y']
								\nonumber\\
							&\preceq
								2\Theta_L P_1[x'] + 2 P_1[y']
	\end{align}
By \eqref{Equ:Propert:Gamma_e}, we have $\Gamma_e = 2\Theta_L$ when $|d_2|=0$.
Therefore, the above expression is the same as the one on the right-hand side
of \eqref{Equ:Appendix:DLxe_Plus_ye}.
}

{\bf (\emph{Property 5: Stable Kronecker Jordan operator})}
Using \eqref{Equ:Appendx:DL_expression} we have
	\begin{align}
		P&\left[\mc{D}_L x_e + y_e\right]
				\nonumber\\
			&=	P\left[
					(\Lambda_L \otimes I_M) x_e
					+
					(\Theta_L' \otimes I_M) x_e
					+
					y_e
				\right]
				\nonumber\\
			&=	P\Big[
					|d_2| \!\cdot\!
					\frac{1}{|d_2|} \!\cdot\!
					(\Lambda_L \!\otimes\! I_M) x_e
					\!+\!
					\frac{1 \!-\! |d_2|}{2}
					\!\cdot\!
					\frac{2}{1 \!-\! |d_2|}
					\!\cdot\!
					(\Theta_L' \!\otimes\! I_M) x_e
					\nn\\
					&\qquad
					+
					\frac{1-|d_2|}{2}\cdot
					\frac{2}{1-|d_2|}
					\cdot
					y_e
				\Big]
				\nonumber\\
			&\overset{(a)}{\preceq}
				|d_2| \!\cdot\!
				P\Big[					
					\frac{1}{|d_2|} 
					\!\cdot\!
					(\Lambda_L \!\otimes\! I_M) x_e
				\Big]
					\nn\\
					&\quad
					+
				\frac{1 \!-\! |d_2|}{2}
				\!\cdot\!
				P\Big[
					\frac{2}{1 \!-\! |d_2|}
					\!\cdot\!
					(\Theta_L' \otimes I_M) x_e
				\Big]
					\nn\\
					&\quad
					+
				\frac{1 \!-\! |d_2|}{2}\cdot
				P\Big[
					\frac{2}{1 \!-\! |d_2|}
					\!\cdot\!
					y_e
				\Big]
				\nonumber\\
			&\overset{(b)}{=}
				\frac{1}{|d_2|} \cdot
				P[						
					(\Lambda_L \otimes I_M) x_e
				]
					+
				\frac{2}{1-|d_2|}
				\cdot
				P[
					(\Theta_L' \otimes I_M) x_e
				]
				\nn\\
				&\quad
				+
				\frac{2}{1-|d_2|}
				\cdot
				P[
					y_e
				]
				\nonumber\\
			&\overset{(c)}{\preceq}
				\frac{\| \bP[(\Lambda_L \!\otimes\! I_M)] \|_{\infty}}{|d_2|} 
				\!\cdot\!
				\bP[(\Lambda_L \otimes I_M)]
				\!\cdot\!
				P[
					x_e
				]
				\nn\\
				&\quad
				+
				\frac{2\|\bP[\Theta_L' \!\otimes\! I_M]\|_{\infty}}{1-|d_2|}
				\!\cdot\!
				\bP[\Theta_L' \!\otimes\! I_M]
				\!\cdot\!
				P[
					x_e
				]
				+
				\frac{2}{1 \!-\! |d_2|}
				\!\cdot\!
				P[
					y_e
				]
				\nonumber\\
			&\overset{(d)}{\preceq}
				\bP[(\Lambda_L \otimes I_M)]
				\cdot
				P[
					x_e
				]
				+
				\frac{2}{1-|d_2|}
				\cdot
				\Theta_L'
				\cdot
				P[
					x_e
				]
				\nn\\
				&\quad
				+
				\frac{2}{1-|d_2|}
				\cdot
				P[
					y_e
				]
				\nonumber\\
			&\overset{(e)}{\preceq}
				|d_2| \!\cdot\! I_L
				\!\cdot\!
				P[
					x_e
				]
				\!+\!
				\frac{2}{1\!-\!|d_2|}
				\!\cdot\!
				\Theta_L
				\cdot
				P[
					x_e
				]
				\!+\!
				\frac{2}{1\!-\!|d_2|}
				\!\cdot\!
				P[
					y_e
				]
				\nonumber\\
			&\overset{(f)}{=}
				\Gamma_e \cdot
				P[
					x_e
				]
				+
				\frac{2}{1-|d_2|}
				\cdot
				P[
					y_e
				]
		\label{Equ:Appendix:P1_DL_x_y}
	\end{align}
where step (a) uses the convexity property \eqref{Equ:Properties:PX_CvxComb},
step (b) uses the scaling property, step (c) uses variance relation
\eqref{Equ:VarPropt:Linear}, step (d) uses 
$\|\bP[\Lambda_L\otimes I_M]\|_{\infty} = |d_2|$
and $\bP[\Theta_L' \otimes I_M]=\Theta_L'$, step (e) uses
$\bP[\Lambda_L \otimes I_M] \preceq |d_2| \cdot I_L$ 
and $\Theta_L' \preceq \Theta_L$, 
and step (f) uses
the definition of the matrix $\Gamma_e$ in \eqref{Equ:Propert:Gamma_e}.
{ 
Likewise, we can also verify that the above inequalty holds for the case $|d_2|=0$.
}

\section{Proof of Theorem \ref{Thm:LimitPoint}}
\label{Appendix:Proof_Thm_LimitPoint}
Consider the following operator:
			\begin{align}
				T_0(w)		\defeq		w 
										- 
										\frac{\lambda_L}{\|p\|_1^2 \lambda_U^2}
										\sum_{k=1}^N p_k s_k(w)
				\label{Equ:ThmProof:LimitPoint:T0_def}
			\end{align}
		As long as we are able to show that $T_0(w)$ is a strict contraction
		mapping, i.e., $\forall \; x,y$, $\|T_0(x)-T_0(y)\| \le \gamma_0 \|x-y\|$
		 with $\gamma_o <1$, then we can invoke the Banach fixed point theorem
		\cite[pp.299-300]{kreyszig1989introductory} to conclude
		that there exists a unique $w^o$ such that $w^o=T_0(w^o)$, i.e.,
			\begin{align}
				&w^o		=		w^o
								-
								\frac{\lambda_L}{\|p\|_1^2 \lambda_U^2}
										\sum_{k=1}^N p_k s_k(w^o)
								\;\Leftrightarrow\;
										\sum_{k=1}^N p_k s_k(w^o) = 0
			\end{align}
		as desired. Now, to show that $T_0(\cdot)$ defined in \eqref{Equ:ThmProof:LimitPoint:T0_def}
		is indeed a contraction, we compare $T_0(\cdot)$ with $T_c(\cdot)$ in 
		\eqref{Equ:Def:Tc2} and observe that $T_0(w)$ has the same form
		as $T_c(\cdot)$ if we set $\mu_{\max}= \frac{\lambda_L}{\|p\|_1^2 \lambda_U^2}$
		in \eqref{Equ:Def:Tc2}. Therefore, calling upon property
		\eqref{Equ:VarPropt:Tc} and using $\mu_{\max}= \frac{\lambda_L}{\|p\|_1^2 \lambda_U^2}$
		in the expression for $\gamma_c$ in \eqref{Equ:VarPropt:gamma_c}, we obtain
			\begin{align}
				P&[T_0(x)-T_0(y)]	
												\nn\\
									&\preceq	
												\left(
													1 \!- \!
													\frac{\lambda_L}{\|p\|_1^2 \lambda_U^2}
													\lambda_L
													\!+\!
													\frac{1}{2}
													\left(
														\frac{\lambda_L}{\|p\|_1^2 \lambda_U^2}
													\right)^2
													\|p\|_1^2 \lambda_U^2
												\right)^2
												\!\cdot\!
												P[x-y]
												\nonumber\\
									&=			\left(
													1 - \frac{1}{2}
													\frac{\lambda_L^2}{\|p\|_1^2 \lambda_U^2}
												\right)^2
												\cdot
												P[x-y]
			\end{align}
		By the definition of $P[\cdot]$ in \eqref{Equ:Def:PowerVectorOperator},
		the above inequality is equivalent to
			\begin{align}
				\|T_0(x)-T_0(y)\|^2	\le 	\left(
												1 - \frac{1}{2}
												\frac{\lambda_L^2}{\|p\|_1^2 \lambda_U^2}
											\right)^2
											\cdot	
											\|x-y\|^2
			\end{align}
		It remains to show that $|1-\lambda_L^2/(2\|p\|_1^2 \lambda_U^2)|<1$.
		By \eqref{Equ:Remark:lambdaU_lambda_L_relation} and the fact
		that $\lambda_L$, $\|p\|_1^2$ and $\lambda_U^2$ are positive, we have
			\begin{align}
					\frac{1}{2}		<	1 - \frac{1}{2}
										\frac{\lambda_L^2}{\|p\|_1^2 \lambda_U^2}
									<	1
			\end{align}
		Therefore, $T_0(w)$ is a strict contraction mapping.

\section{Proof of Theorem \ref{Thm:ConvergenceRefRec:DeterministcCent}}
\label{Appendix:Proof_Thm_ConvergenceRefRec}

By Theorem \ref{Thm:LimitPoint}, $w^o$ is the unique solution
to equation \eqref{Equ:Lemma:LimitPoint_def}. Subtracting both sides of
\eqref{Equ:Lemma:LimitPoint_def} from $w^o$, we recognize that $w^o$
is also the unique solution to the following equation:
	\begin{align}
		w^o	=	w^o - \mu_{\max} \sum_{k=1}^N p_k s_k(w^o)
		\label{Equ:Appendix:ConvergenceRate:FixedEqu}
	\end{align}
so that $w^o=T_c(w^o)$. Applying property \eqref{Equ:VarPropt:Tc}, we obtain
			\begin{align}
				\|\tilde{w}_{c,i}\|^2
						&=			P[w^o - \bar{w}_{c,i}]	
									\nonumber\\
						&=			P[T_c(w^o) - T_c(\bar{w}_{c,i-1})]		
									\nn\\
						&\preceq		\gamma_c^2 \cdot P[w^o-\bar{w}_{c,i-1}]
									\nonumber\\
						&\preceq		\gamma_c^{2i} \cdot P[w^o-\bar{w}_{c,0}]
									\nonumber\\
						&=			\gamma_c^{2i} \cdot \|\tilde{w}_{c,0}\|^2
			\end{align}
		Since $\gamma_c>0$, the upper bound on the right-hand side will converge to zero if
		$\gamma_c<1$. From its definition \eqref{Equ:VarPropt:gamma_c}, this condition
		is equivalent to requiring
			\begin{align}
				1 - \mu_{\max}\lambda_L + \frac{1}{2}\mu_{\max}^2 \|p\|_1^2 \lambda_U^2 < 1
			\end{align}
		Solving the above quadratic inequality in $\mu_{\max}$, we obtain
		\eqref{Equ:Thm:ConvergenceRefRec:StepSize}. On the other hand, to
		prove the lower bound in \eqref{Equ:Thm:ConvergenceRefRec:StepSize},
		we apply \eqref{Equ:VarPropt:Tc_LB} and obtain
			\begin{align}
				\|\tilde{w}_{c,i}\|^2
									&=	P[w^o-\bar{w}_{c,i}]	
										\nonumber\\
									&=	P[T_c(w^o) - T_c(\bar{w}_{c,i-1})]
										\nonumber\\
									&\succeq
										(1-2\mu_{\max} \|p\|_1 \lambda_U)
										\cdot
										P[w^o-\bar{w}_{c,i-1}]
										\nonumber\\
									&\succeq
										(1-2\mu_{\max} \|p\|_1 \lambda_U)^{i}
										\cdot
										P[w^o-\bar{w}_{c,0}]
										\nonumber\\
									&=	(1-2\mu_{\max} \|p\|_1 \lambda_U)^{i}
										\cdot
										\|\tilde{w}_{c,0}\|^2
			\end{align}

\section{Proof of Theorem \ref{Thm:ConvergenceRateRefRec}}
\label{Appendix:Proof_Thm_ConvergenceRefRec_approx}

Since \eqref{Equ:Thm:ConvergenceRefRec:NonAsympBound} already establishes that $\bar{w}_{c,i}$ approaches $w^o$ asymptotically (so that $\tilde{w}_{c,i} \rightarrow 0$), and since from Assumption \ref{Assumption:JacobianUpdatVectorLipschitz} we know that $s_k(w)$ is differentiable when $\|\tilde{w}_{c,i}\| \le r_H$ for large enough $i$, we are justified to use the mean-value theorem \cite[p.24]{poliak1987introduction} to obtain the following useful relation:
	\begin{align}
		s_k&(\bar{w}_{c,i-1}) - s_k(w^o)	
					\nn\\
				&=	-\left[\int_0^1 \nabla_{w^T} s_k(w^o - t\tilde{w}_{c,i-1}) dt\right]
					\tilde{w}_{c,i-1}
					\nonumber\\
				&=
					-\nabla_{w^T} s_k(w^o)
					\cdot
					\tilde{w}_{c,i-1} 
					\nn\\
					&\quad
					-
					\int_0^{1} 
					\big[ 
						\nabla_{w^T} s_k(w^o - t\tilde{w}_{c,i-1}) - \nabla_{w^T} s_k(w^o) 
					\big] dt
					\cdot
					\tilde{w}_{c,i-1}
	\end{align}
Therefore, subtracting $w^o$ from both sides of \eqref{Equ:LearnBehav:RefRec} 
and using \eqref{Equ:Lemma:LimitPoint_def} we get,
	\begin{align}
		\tilde{w}_{c,i}		&=	 \tilde{w}_{c,i-1}
									+
									\mu_{\max}\sum_{k=1}^N 
									p_k
									(s_k(\bar{w}_{c,i-1})-s_k(w^o))
								\nonumber\\
							&=	
								[ I - \mu_{\max} H_c ] \tilde{w}_{c,i-1}
								-
								\mu_{\max}
								\cdot
								e_{i-1}
		\label{Equ:Appendix:RefRec:Recursion_perturbed}							
	\end{align}
where
	\begin{align}
		H_c		&\defeq		\sum_{k=1}^N p_k \nabla_{w^T} s_k(w^o)
							\\
		e_{i-1}	&\defeq		\sum_{k=1}^N  p_k \!\!
							\int_0^{1} \!\!
							\big[ 
								\nabla_{w^T} s_k(w^o \!-\! t\tilde{w}_{c,i-1}) \!-\! \nabla_{w^T} s_k(w^o) 
							\big] dt
							\!\cdot\!
							\tilde{w}_{c,i-1}
	\end{align}
Furthermore, the perturbation term $e_{i-1}$ satisfies the following bound:
	\begin{align}
		\| e_{i-1} \|		
				&\le 			
							\sum_{k=1}^N 
							p_k
							\int_0^{1} 
							\big\| 
								\nabla_{w^T} s_k(w^o \!-\! t\tilde{w}_{c,i-1}) \!-\! \nabla_{w^T} s_k(w^o) 
							\big\| dt
							\nn\\
							&\quad
							\cdot
							\| \tilde{w}_{c,i-1} \|
							\nn\\
				&\le 			
							\sum_{k=1}^N 
							p_k
							\int_0^{1} 
							\lambda_H
							\cdot
							t
							\cdot
							\| \tilde{w}_{c,i-1} \|
							dt
							\cdot
							\| \tilde{w}_{c,i-1} \|
							\nn\\
				&=			\frac{1}{2} \|p\|_1 \lambda_H
							\cdot
							\| \tilde{w}_{c,i-1} \|^2
		\label{Equ:Appendix:RefRec:e_bound}
	\end{align}
Evaluating the weighted Euclidean norm of both sides of \eqref{Equ:Appendix:RefRec:Recursion_perturbed}, we get
	\begin{align}
		\| \tilde{w}_{c,i} \|_{\Sigma}^2	
				&=			\| \tilde{w}_{c,i-1} \|_{B_c^T \Sigma B_c}^2
							-
							2\mu_{\max}
							\cdot
							\tilde{w}_{c,i-1}^T B_c^T \Sigma e_{i-1}
							\nn\\
							&\quad
							+
							\mu_{\max}^2
							\cdot
							\| e_{i-1} \|_{\Sigma}^2
		\label{Equ:Appendix:RefRec:wc_varRelation_interm1}
	\end{align}
where 
	\begin{align}
		B_c		=			I - \mu_{\max} H_c
		\label{Equ:Appendix:RefRec:Bc_def}
	\end{align}
Moreover, $\| x \|_{\Sigma}^2 = x^T \Sigma x$, and $\Sigma$ is an arbitrary positive semi-definite weighting matrix. The second and third terms on the right-hand side of \eqref{Equ:Appendix:RefRec:wc_varRelation_interm1} satisfy the following bounds:
	\begin{align}
		\big| &\tilde{w}_{c,i-1}^T B_c^T \Sigma e_{i-1} \big|
							\nn\\
				&\le
							\| \tilde{w}_{c,i-1} \| \cdot
							\| B_c^T \|
							\cdot
							\| \Sigma \|
							\cdot
							\| e_{i-1} \|
							\nn\\
				&\overset{(a)}{\le}
							\| \tilde{w}_{c,i-1} \| \cdot
							\| B_c^T \|
							\cdot
							\Tr( \Sigma )
							\cdot
							\| e_{i-1} \|
							\nn\\
				&\le 			
							\| \tilde{w}_{c,i-1} \| \cdot
							\| B_c^T \|
							\cdot
							\Tr( \Sigma )
							\cdot
							\frac{\lambda_H \|p\|_1 }{2}
							\cdot
							\| \tilde{w}_{c,i-1} \|^2
	\end{align}
and
	\begin{align}
		\| e_{i-1} \|_{\Sigma}^2
				&\le
							\| \Sigma \| 
							\cdot
							\|e_{i-1}\|^2
							\nn\\
				&\overset{(b)}{\le}
				 			\Tr ( \Sigma )
							\cdot
							\| e_{i-1} \|^2
							\nn\\
				&\le 			\Tr ( \Sigma )
							\cdot
							\frac{\lambda_H^2 \|p\|_1^2 }{4}
							\cdot
							\| \tilde{w}_{c,i-1} \|^4
	\end{align}
where for steps (a) and (b) of the above inequalities we used the property $\| \Sigma \| \le \Tr(\Sigma)$. This is because we consider here the spectral norm, $\|\Sigma\| = \sigma_{\max}(\Sigma)$, where $\sigma_{\max}(\cdot)$ denotes the maximum singular value. Since $\Sigma$ is symmetric and positive semidefinite, its singular values are the same as its eigenvalues so that $\|\Sigma\| = \lambda_{\max}(\Sigma) \le \sum_{m} \lambda_m(\Sigma) = \Tr(\Sigma)$. Now, for any given small $\epsilon > 0$, there exists $i_0$ such that, for $i \ge i_0$, we have $\| \tilde{w}_{c,i-1} \| \le \epsilon$ so that
	\begin{align}
		\big| \tilde{w}_{c,i-1}^T B_c^T \Sigma e_{i-1} \big|
				&\le 			
							\epsilon \cdot
							\| B_c^T \|
							\!\cdot\!
							\Tr( \Sigma )
							\!\cdot\!
							\frac{\lambda_H \|p\|_1 }{2}
							\!\cdot\!
							\| \tilde{w}_{c,i-1} \|^2
		\label{Equ:Appendix:RefRec:VarRelation_Bound2nd}
							\\
		\| e_{i-1} \|_{\Sigma}^2
				&\le 			\epsilon^2
							\cdot
							\Tr ( \Sigma )
							\cdot
							\frac{\lambda_H^2 \|p\|_1^2 }{4}
							\cdot
							\| \tilde{w}_{c,i-1} \|^2
		\label{Equ:Appendix:RefRec:VarRelation_Bound3rd}
	\end{align}
Substituting \eqref{Equ:Appendix:RefRec:VarRelation_Bound2nd}--\eqref{Equ:Appendix:RefRec:VarRelation_Bound3rd} into \eqref{Equ:Appendix:RefRec:wc_varRelation_interm1}, we obtain
	\begin{align}
		\| \tilde{w}_{c,i-1} \|_{B_c^T \Sigma B_c - \Delta}^2
				&\le 			\| \tilde{w}_{c,i} \|_{\Sigma}^2
				\le			\| \tilde{w}_{c,i-1} \|_{B_c^T \Sigma B_c + \Delta}^2
		\label{Equ:Appendix:RefRec:VarRelation_UB_LB}
	\end{align}
where
	\begin{align}
		\Delta	&\defeq		
							\mu_{\max} 
							\epsilon
							\cdot
							\lambda_H 
							\|p\|_1
							\cdot
							\big[		
								\|B_c^T\|						
								+
								\mu_{\max}
								\epsilon
								\frac{\lambda_H \|p\|_1}{4}								
							\big]
							\cdot
							\Tr(\Sigma)
							\cdot
							I_M
							\nn\\
				&=			O(\mu_{\max} \epsilon) \cdot \Tr(\Sigma) \cdot I_M
	\end{align}
Let $\sigma = \mathrm{vec}(\Sigma)$ denote the vectorization operation that stacks the columns of a matrix $\Sigma$ on top of each other. We shall use the notation $\| x \|_{\sigma}^2$ and $\|x\|_{\Sigma}^2$ interchangeably to denote the weighted squared Euclidean norm of a vector. Using the Kronecker product property \cite[p.147]{laub2005matrix}: $\mathrm{vec}(U\Sigma V) = (V^T \otimes U) \mathrm{vec}(\Sigma)$, we can vectorize the matrices $B_c^T \Sigma B_c + \Delta$ and $B_c^T \Sigma B_c - \Delta$ in \eqref{Equ:Appendix:RefRec:VarRelation_UB_LB} as $\mc{F}_{+} \sigma$ and $\mc{F}_{-} \sigma$, respectively, where
	\begin{align}
		\mc{F}_{+}	
				&\defeq		
							B_c^T \!\otimes\! B_c^T
							\!+\!
							\mu_{\max} 
							\epsilon
							\!\cdot\!
							\lambda_H 
							\|p\|_1
							\!\cdot\!
							\big[		
								\|B_c^T\|						
								\!+\!
								\mu_{\max}
								\epsilon
								\frac{\lambda_H \|p\|_1}{4}								
							\big]
							q q^T
							\nn\\
				&=			
							B_c^T \otimes B_c^T
							+
							O(
							\mu_{\max} 
							\epsilon)
	\label{Equ:Appendix:RefRec:Fplus_def}
							\\
		\mc{F}_{-}	
				&\defeq		B_c^T \!\otimes\! B_c^T
							\!-\!
							\mu_{\max} 
							\epsilon
							\!\cdot\!
							\lambda_H 
							\|p\|_1
							\!\cdot\!
							\big[		
								\|B_c^T\|						
								\!+\!
								\mu_{\max}
								\epsilon
								\frac{\lambda_H \|p\|_1}{4}								
							\big]
							q q^T
							\nn\\
				&=			
							B_c^T \otimes B_c^T
							-
							O(
							\mu_{\max} 
							\epsilon)
		\label{Equ:Appendix:RefRec:Fminus_def}
	\end{align}
where $q \defeq \mathrm{vec}(I_M)$, and we have used the fact that $\Tr(\Sigma) = \Tr(\Sigma I_M) = \mathrm{vec}(I_M)^T \mathrm{vec}(\Sigma) = q^T \sigma$. In this way, we can write relation \eqref{Equ:Appendix:RefRec:VarRelation_UB_LB} as
	\begin{align}
		\| \tilde{w}_{c,i-1} \|_{\mc{F}_{-} \sigma}^2
				&\le 			\| \tilde{w}_{c,i} \|_{\sigma}^2
				\le			\| \tilde{w}_{c,i-1} \|_{\mc{F}_{+} \sigma}^2
		\label{Equ:Appendix:RefRec:VarRelation_UB_LB_new}
	\end{align}
Using a state-space technique from \cite[pp.344-346]{Sayed08}, we conclude that $\| \tilde{w}_{c,i} \|_{\Sigma}^2$ converges at a rate that is between $\rho(\mc{F}_{-})$ and $\rho(\mc{F}_{+})$. Recalling from \eqref{Equ:Appendix:RefRec:Fplus_def}--\eqref{Equ:Appendix:RefRec:Fminus_def} that $\mc{F}_{+}$ and $\mc{F}_{-}$ are perturbed matrices of $B_c^T \otimes B_c^T$, and since the perturbation term is $O(\epsilon \mu_{\max})$ which is small for small $\epsilon$, we would expect the spectral radii of $\mc{F}_{+}$ and $\mc{F}_{-}$ to be small perturbations of $\rho( B_c^T \otimes B_c^T)$. This claim is justified below. 
	\begin{lemma}[Perturbation of spectral radius]
		\label{Lemma:PerturbationSpectralRadius}
		Let $\epsilon \ll 1$ be a sufficiently small positive number. 
		For any $M \times M$ matrix $X$, the spectral
		radius of the perturbed matrix $X + E$ for $E=O(\epsilon)$ is 
			\begin{align}
				\rho(X+E)	=	\rho(X) + O\big(\epsilon^{\frac{1}{2(M-1)}}\big)
				\label{Equ:Appendix:rho_perturbation}
			\end{align}
	\end{lemma}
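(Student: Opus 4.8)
The plan is to reduce the perturbation of $\rho$ to the perturbation of the roots of the characteristic polynomial and then to invoke the Hölder continuity of those roots. First I would set $\chi(z)\defeq\det(zI-X)$ and $\chi_E(z)\defeq\det(zI-X-E)$, and observe that the coefficients of these two monic polynomials are fixed polynomial functions of the matrix entries. Since $E=O(\epsilon)$, each coefficient of $\chi_E$ differs from the corresponding coefficient of $\chi$ by $O(\epsilon)$, so that $\chi_E(z)=\chi(z)+r(z)$ with $\deg r\le M-1$ and all coefficients of $r$ of order $O(\epsilon)$. Hence $|r(z)|=O(\epsilon)$ uniformly on any fixed bounded disk, and by continuity of the spectrum every eigenvalue of $X+E$ lies in such a disk once $\epsilon$ is small enough.

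Next I would prove the two one-sided bounds. For the upper bound, take an eigenvalue $\nu$ of $X+E$ with $|\nu|=\rho(X+E)$; since $\chi_E(\nu)=0$ we get $|\chi(\nu)|=|r(\nu)|=O(\epsilon)$, and factoring $\chi(z)=\prod_{j=1}^{M}(z-\mu_j)$ over the eigenvalues $\{\mu_j\}$ of $X$ gives $\prod_j|\nu-\mu_j|=O(\epsilon)$. Consequently the nearest eigenvalue $\mu_{j^\star}$ of $X$ obeys $\min_j|\nu-\mu_j|\le O(\epsilon^{1/m})$, where $m$ is the algebraic multiplicity governing the local behaviour $\chi(z)\approx c\,(z-\mu_{j^\star})^m$ near $\mu_{j^\star}$; this is exactly the step that converts an $O(\epsilon)$ coefficient perturbation into a fractional displacement of a root. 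Because the modulus is $1$-Lipschitz, $\rho(X+E)=|\nu|\le|\mu_{j^\star}|+O(\epsilon^{1/m})\le\rho(X)+O(\epsilon^{1/m})$. For the reverse inequality I would exchange the roles of $X$ and $X+E$: starting from an eigenvalue $\mu$ of $X$ with $|\mu|=\rho(X)$, the identity $\chi_E(\mu)=\chi(\mu)+r(\mu)=r(\mu)=O(\epsilon)$ together with the factorization of $\chi_E$ places some eigenvalue of $X+E$ within $O(\epsilon^{1/m'})$ of $\mu$, giving $\rho(X+E)\ge\rho(X)-O(\epsilon^{1/m'})$. Combining the two bounds and retaining the worse exponent yields \eqref{Equ:Appendix:rho_perturbation}.

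The main obstacle is the quantitative control of root displacement in the non-diagonalizable case: when $X$ has a genuine Jordan block, $\rho$ is only Hölder --- not Lipschitz --- continuous, and the worst-case exponent is dictated by the largest block size rather than by the trivial bound that holds for diagonalizable $X$. Making this rigorous and extracting the precise exponent $\frac{1}{2(M-1)}$ requires either a Newton--Puiseux/Rouch\'e expansion of $\chi$ around each multiple root, or an appeal to a standard matching bound of Ostrowski--Elsner type for the spectral variation; I would use the latter to package the worst-case Hölder exponent uniformly in $X$ and to absorb the bounded prefactors (depending on $\|X\|$ and on the separation of the $\{\mu_j\}$) into the $O(\cdot)$ constant. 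Every other ingredient --- confining the perturbed spectrum to a fixed disk, the $O(\epsilon)$ estimate on $r$, and the factorization arguments --- is a routine continuity estimate once that disk is fixed.
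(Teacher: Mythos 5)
Your proposal is correct, but it takes a genuinely different route from the paper's proof. The paper works directly on the matrix: it passes to the Jordan form $X = TJT^{-1}$, applies the diagonal similarity $D_{\epsilon_0}=\diag\{1,\epsilon_0,\ldots,\epsilon_0^{M-1}\}$ with $\epsilon_0=\epsilon^{\frac{1}{2(M-1)}}$ --- which shrinks the off-diagonal $1$'s of $J$ to $\epsilon_0$ while inflating the transformed perturbation to $O(\epsilon^{1/2})$ --- and then invokes Ger\v{s}gorin's theorem, together with the disjoint-cluster refinement, to trap every eigenvalue of $X+E$ within $O(\epsilon^{\frac{1}{2(M-1)}})$ of an eigenvalue of $X$; the exponent $\frac{1}{2(M-1)}$ is purely an artifact of this balancing between the two error sources. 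You instead perturb the characteristic polynomial and appeal to H\"older continuity of polynomial roots (equivalently, an Ostrowski--Elsner spectral-variation bound) with a two-sided matching argument. Two remarks. First, your stated worry about "extracting the precise exponent $\frac{1}{2(M-1)}$" is unnecessary: your own product argument, $(\min_j|\nu-\mu_j|)^M \le \prod_{j=1}^M|\nu-\mu_j| = |\chi(\nu)| = O(\epsilon)$, already yields the exponent $1/M$ without any multiplicity or separation analysis, and since $1/M \ge \frac{1}{2(M-1)}$ for $M\ge 2$ we have $\epsilon^{1/M} \le \epsilon^{\frac{1}{2(M-1)}}$ for small $\epsilon$, so your bound is strictly stronger than the one the lemma asserts; the local-multiplicity refinement and the Newton--Puiseux machinery can simply be dropped. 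Second, regarding what each approach buys: your route gives a sharper exponent and a hidden constant depending only on $\|X\|$ and the confining disk (via Elsner's uniform bound), rather than on the conditioning of the Jordan basis $T$ as in the paper; the paper's route, on the other hand, is elementary and self-contained (only Ger\v{s}gorin, no root-perturbation theory), and its disjoint-cluster statement additionally localizes each perturbed eigenvalue near a \emph{specific} unperturbed eigenvalue, which is how the paper identifies $\rho(X+E)$ with a perturbation of the dominant eigenvalue $\lambda_1$ rather than merely bounding the Hausdorff distance between the two spectra.
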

	\begin{proof}
	Let $X=T J T^{-1}$ be the Jordan canonical form of the matrix $X$. Without loss of generality, we  consider the case where there are two Jordan blocks:
	\begin{align}
	J	=	\diag\{J_1, J_2\}
	\end{align}
	where $J_1 \in \mathbb{R}^{L \times L}$ and $J_2 \in \mathbb{R}^{(M-L)\times(M-L)}$ are Jordan blocks of the form
	\begin{align}
	J_k 	=	\begin{bmatrix}
	           			\lambda_k & 1 & &  	\\
	           		    & \ddots & \ddots &	\\
	           		    & & \ddots & 1	\\
	           		    & & & \lambda_k
	           		\end{bmatrix}			
	\end{align}		
	with $|\lambda_1| > |\lambda_2|$.
	Since $X+E$ is similar to $T^{-1}(X+E)T$, the matrix $X+E$ has the same set of eigenvalues as $J + E_0$ where 
	\begin{align}
	E_0 \defeq T^{-1} E T = O(\epsilon)
	\end{align}
	Let
	\begin{align}
	\epsilon_0 		&\defeq \epsilon^{\frac{1}{2(M-1)}}
	\label{Equ:Appendix:epsilon0_def}
							\\
	D_{\epsilon_0}	&\defeq	\diag\left\{
								1, \epsilon_0, \ldots, \epsilon_0^{M-1}
							\right\}
	\label{Equ:Appendix:D_epsilon0_def}
	\end{align}
	Then, by similarity again, the matrix $J + E_0$ has the same set of eigenvalues as 
	\begin{align}
	D_{\epsilon_0}^{-1} (J + E_0) D_{\epsilon_0}	=	D_{\epsilon_0}^{-1} J D_{\epsilon_0} + E_1
	\end{align}
	where $E_1 \defeq D_{\epsilon_0}^{-1} E_0 D_{\epsilon_0}$. Note that the $\infty$-induced norm (the maximum absolute row sum) of $E_1$ is bounded by
	\begin{align}
	\|E_1\|_{\infty} 		&\le  		\|D_{\epsilon_0}^{-1}\|_{\infty} 
							\cdot
							\|E_0\|_{\infty}
							\cdot 
							\|D_{\epsilon_0}\|_{\infty}
							\nn\\
					&=	 	
							\frac{1}{\epsilon_0^{M-1}}  \cdot O(\epsilon) \cdot 1
					=		\frac{1}{\epsilon^{\frac{1}{2}}} \cdot O(\epsilon)
					=		O(\epsilon^{\frac{1}{2}})
	\label{Equ:Appendix:E1_infnorm_bound}
	\end{align}
	and that
	\begin{align}
	D_{\epsilon_0}^{-1} J D_{\epsilon_0} = \diag\{J_1', J_2'\}
	\end{align}
	where
	\begin{align}
	J_k'
					&=		\begin{bmatrix}
			           			\lambda_k & \epsilon_0 & &  	\\
			           		    & \ddots & \ddots &	\\
			           		    & & \ddots & \epsilon_0	\\
			           		    & & & \lambda_k
			           		\end{bmatrix}
	\end{align}
	Then, by appealing to Ger\v{s}gorin Theorem\cite[p.344]{horn1990matrix}, we conclude that the eigenvalues of the matrix $D_{\epsilon_0}^{-1} J D_{\epsilon_0} + E_1$, which are also the  eigenvalues of the matrices $J+E_0$ and $X+E$, lie inside the union of the Ger\v{s}gorin discs, namely, 
	\begin{align}
	\bigcup_{m=1}^M \mc{G}_m
	\end{align}
	where $\mc{G}_m$ is the $m$th Ger\v{s}gorin disc defined as
	\begin{align}
	\mc{G}_m		&\defeq		\begin{cases}
								\displaystyle
								\left\{
									\lambda:\;
									|\lambda - \lambda_1| \le \epsilon_0 + \sum_{\ell=1}^M |E_{1,m \ell}|
								\right\},
												& 1 \le m \le L			\\
								\displaystyle
								\left\{
									\lambda:\;
									|\lambda - \lambda_2| \le \epsilon_0 + \sum_{\ell=1}^M |E_{1,m \ell}|
								\right\},
												& L < m \le M
							\end{cases}
							\nonumber\\
				&=			\begin{cases}
								\displaystyle
								\left\{
									\lambda:\;
									|\lambda \!-\! \lambda_1| 
										\le 
											O\big(\epsilon^{\frac{1}{2(M-1)}}\big)
								\right\},
												& 1 \!\le\! m \le L			\\
								\displaystyle
								\left\{
									\lambda:\;
									|\lambda \!-\! \lambda_2| 
										\le 
											O\big(\epsilon^{\frac{1}{2(M-1)}}\big)
								\right\},
												& L \!<\! m \!\le\! M
							\end{cases} \!\!
	\label{Equ:Appendix:Gm_def}
	\end{align}
	and where $E_{1, m \ell}$ denotes the $(m,\ell)$-th entry of the matrix $E_1$. In the last step we used \eqref{Equ:Appendix:epsilon0_def} and \eqref{Equ:Appendix:E1_infnorm_bound}. Observe from \eqref{Equ:Appendix:Gm_def} that there are two clusters of Ger\v{s}gorin discs that are centered around $\lambda_1$ and $\lambda_2$, respectively, and have radii on the order of $O(\epsilon^{\frac{1}{2(M-1)}})$. A further statement from Ger\v{s}gorin theorem shows that if the these two clusters of discs happen to be disjoint, which is true in our case since $|\lambda_1|>|\lambda_2|$ and we can select $\epsilon$ to be sufficiently small to ensure this property. Then there are exactly $L$ eigenvalues of $X+E$ in 
	$\cup_{m=1}^L \mc{G}_m$ while the remaining $M-L$ eigenvalues are in $\cup_{m=M-L}^M \mc{G}_m$. From $|\lambda_1|>|\lambda_2|$, we conclude that the largest eigenvalue of $D_{\epsilon_0}^{-1} J D_{\epsilon_0} + E_1$ is $\lambda_1 + O(\epsilon^{\frac{1}{2(M-1)}})$, which establishes \eqref{Equ:Appendix:rho_perturbation}. \\
	\end{proof}

Using \eqref{Equ:Appendix:rho_perturbation} for $\mc{F}_{+}$ and $\mc{F}_{-}$ in \eqref{Equ:Appendix:RefRec:Fplus_def}--\eqref{Equ:Appendix:RefRec:Fminus_def}, we conclude that
	\begin{align}
		\rho(\mc{F}_{+})	&=		\big[\rho(I_M - \mu_{\max} H_c)]^2 
									+  
									O\big( (\mu_{\max} \epsilon )^{\frac{1}{2(M-1)}} \big)
								\\
		\rho(\mc{F}_{-})	&=		\big[\rho(I_M - \mu_{\max} H_c)]^2 
									+  
									O\big( (\mu_{\max} \epsilon )^{\frac{1}{2(M-1)}} \big)
	\end{align}
which holds for arbitrarily small $\epsilon$. Since the convergence rate of $\|\tilde{w}_{c,i}\|^2$ is between $\rho(\mc{F}_{+})$ and $\rho(\mc{F}_{-})$, we arrive at \eqref{Equ:DistProc:r_RefRec}.

\section{Proof of Lemma \ref{Lemma:IneqRecur_W_check_prime}}
\label{Appendix:Proof_Lemma_W_check_prime_recursion}

From the definition in \eqref{Equ:DistProc:mW_check_prime_def}, 
it suffices to establish a joint inequality recursion
for both $\E P[\check{\bm{w}}_{c,i}]$ and $\E P[\bm{w}_{e,i}]$. 
To begin with, we state the following bounds on the perturbation terms in
\eqref{Equ:DistProc:s_hat_phi_decomp_zdef_vdef}.
	\begin{lemma}[Bounds on the perturbation terms]
		\label{Lemma:BoundsPerturbation}
		The following bounds hold for any $i \ge 0$.
			\begin{align}
				\label{Equ:Lemma:BoundsPerturbation:P_z}
				P[\bm{z}_{i\!-\!1}]	&\preceq		\lambda_U^2
											\cdot
											\left\|
												\bP_1[A_1^T U_L]
											\right\|_{\infty}^2
											\!\cdot\!
											\mathds{1}\mathds{1}^T
											\!\cdot\!
											P[\bm{w}_{e,i\!-\!1}]
											\\
				\label{Equ:Lemma:BoundsPerturbaton:P_s}
				P[s(\mathds{1}\!\otimes\! \bm{w}_{c,i\!-\!1})]
								&\preceq		3\lambda_U^2
											\!\cdot\!
											P[\check{\bm{w}}_{c,i - 1}]
											\!\cdot\!
											\mathds{1}
											\!+\!
											3\lambda_U^2 \|\tilde{w}_{c,0}\|^2 \!\cdot\! \mathds{1}
											\!+\!
											3g^o
											\\
				\E \{P[\bm{v}_i] | \mF_{i-1} \}	
								&\preceq		
											4\alpha \cdot \one
											\cdot
											P[ \check{\w}_{c,i-1} ]
											\nn\\
											&\quad
											+
											4 \alpha
											\cdot
											\| \bP[ \mA_1^T \mU_L ] \|_{\infty}^2
											\cdot
											\one \one^T
											P[ \w_{e,i-1} ]
											\nn\\
											&\quad
											\!+\!
											\left[
												4\alpha
												\cdot
												( \|\tilde{w}_{c,0}\|^2 \!+\! \|w^o\|^2 )
												\!+\!
												\sigma_v^2
											\right]
											\cdot
											\one
				\label{Equ:Lemma:BoundsPerturbation:P_v_E_Fiminus1}
											\\
				\E P[\bm{v}_i]
								&\preceq		
											4\alpha \cdot \one
											\cdot
											\Expt P[ \check{\w}_{c,i-1} ]
											\nn\\
											&\quad
											+
											4 \alpha
											\cdot
											\| \bP[ \mA_1^T \mU_L ] \|_{\infty}^2
											\cdot
											\one \one^T
											\Expt P[ \w_{e,i-1} ]
											\nn\\
											&\quad
											+\!
											\left[
												4\alpha
												\cdot
												( \|\tilde{w}_{c,0}\|^2 \!+\! \|w^o\|^2 )
												\!+\!
												\sigma_v^2
											\right]
											\!\cdot\!
											\one
				\label{Equ:Lemma:BoundsPerturbation:P_v}				
			\end{align}
		where $P[\check{\w}_{c,i-1}]=\|\check{\w}_{c,i-1}\|^2$,
		and		$g^o		\defeq	P[s(\mathds{1} \otimes w^o)]$.
	\end{lemma}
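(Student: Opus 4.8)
The plan is to translate each of the three perturbation quantities $\bm{z}_{i-1}$, $s(\one \otimes \bw_{c,i-1})$, and $\bm{v}_i$ into expressions driven by the transformed error coordinates $\w_{e,i-1}$, $\check{\w}_{c,i-1}$, and $\tilde{w}_{c,i-1}$, and then to push these through the variance relations collected in Lemma~\ref{Lemma:VarianceRelations}. The single algebraic identity that powers all three estimates comes from combining $\bm{\phi}_{i-1} = \mA_1^T \bw_{i-1}$ with the inverse transformation $\bw_{i-1} = \one \otimes \bw_{c,i-1} + \mU_L \w_{e,i-1}$ and the left-stochasticity $A_1^T \one = \one$. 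Since $\mA_1^T = A_1^T \otimes I_M$, we have $\mA_1^T(\one \otimes \bw_{c,i-1}) = (A_1^T\one)\otimes \bw_{c,i-1} = \one \otimes \bw_{c,i-1}$, so that $\bm{\phi}_{i-1} = \one \otimes \bw_{c,i-1} + \mA_1^T \mU_L \w_{e,i-1}$. In particular, the deviation of $\bm{\phi}_{i-1}$ from the block-constant vector $\one \otimes \bw_{c,i-1}$ is exactly $\mA_1^T \mU_L \w_{e,i-1}$.

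For \eqref{Equ:Lemma:BoundsPerturbation:P_z}, since $\bm{z}_{i-1} = s(\bm{\phi}_{i-1}) - s(\one \otimes \bw_{c,i-1})$, the update variance relation \eqref{Equ:VarPropt:s} gives $P[\bm{z}_{i-1}] \preceq \lambda_U^2 P[\bm{\phi}_{i-1} - \one \otimes \bw_{c,i-1}] = \lambda_U^2 P[\mA_1^T \mU_L \w_{e,i-1}]$; applying the linear-transformation bound \eqref{Equ:VarPropt:Linear_ub} and the Kronecker identity \eqref{Equ:Properties:PX_bar_Kron} (so that $\bP[\mA_1^T \mU_L] = \bP_1[A_1^T U_L]$) then yields the claim. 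For \eqref{Equ:Lemma:BoundsPerturbaton:P_s}, I would split each block as $s_k(\bw_{c,i-1}) = [s_k(\bw_{c,i-1}) - s_k(\bar{w}_{c,i-1})] + [s_k(\bar{w}_{c,i-1}) - s_k(w^o)] + s_k(w^o)$ and use $\|a+b+c\|^2 \le 3\|a\|^2 + 3\|b\|^2 + 3\|c\|^2$. The Lipschitz Assumption~\ref{Assumption:UpdateVectorLipschitz} turns the first two differences into $3\lambda_U^2\|\check{\w}_{c,i-1}\|^2$ and $3\lambda_U^2\|\tilde{w}_{c,i-1}\|^2$; the established decay $\|\tilde{w}_{c,i-1}\|^2 \le \gamma_c^{2(i-1)}\|\tilde{w}_{c,0}\|^2 \le \|\tilde{w}_{c,0}\|^2$ from Theorem~\ref{Thm:ConvergenceRefRec:DeterministcCent} replaces the second by the step-independent $3\lambda_U^2\|\tilde{w}_{c,0}\|^2$, and the last term stacks into $3g^o$.

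The noise bounds \eqref{Equ:Lemma:BoundsPerturbation:P_v_E_Fiminus1}--\eqref{Equ:Lemma:BoundsPerturbation:P_v} rest on the fact that $\bm{\phi}_{k,i-1}$ is $\mF_{i-1}$-measurable, so the relative-noise condition \eqref{Equ:Assumption:Randomness:RelAbsNoise} of Assumption~\ref{Assumption:UpdateVectorRandomness} applies with $\bm{w} = \bm{\phi}_{k,i-1}$, giving $\E\{P[\bm{v}_i]\,|\,\mF_{i-1}\} \preceq \alpha P[\bm{\phi}_{i-1}] + \sigma_v^2 \one$. It remains to bound $P[\bm{\phi}_{i-1}]$. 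Substituting $\bw_{c,i-1} = \check{\w}_{c,i-1} + w^o - \tilde{w}_{c,i-1}$ into the identity from the first paragraph, I would write each block as $\bm{\phi}_{k,i-1} = \check{\w}_{c,i-1} - \tilde{w}_{c,i-1} + w^o + [\mA_1^T \mU_L \w_{e,i-1}]_k$ and apply $\|\sum_{j=1}^4 a_j\|^2 \le 4\sum_{j=1}^4\|a_j\|^2$; using once more $\|\tilde{w}_{c,i-1}\|^2 \le \|\tilde{w}_{c,0}\|^2$ and \eqref{Equ:VarPropt:Linear_ub} on the $\w_{e,i-1}$ block produces \eqref{Equ:Lemma:BoundsPerturbation:P_v_E_Fiminus1}. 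The unconditional bound \eqref{Equ:Lemma:BoundsPerturbation:P_v} then follows by taking expectations and invoking the tower property, since $P[\check{\w}_{c,i-1}]$ and $P[\w_{e,i-1}]$ are $\mF_{i-1}$-measurable and hence pass through unchanged.

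The energy-operator manipulations themselves are routine; the delicate point is assembling the decomposition of $\bm{\phi}_{k,i-1}$ so that the right-hand sides contain \emph{precisely} $\check{\w}_{c,i-1}$ and $\w_{e,i-1}$ (rather than $\bw_{c,i-1}$ or $\tilde{w}_{c,i-1}$), which is exactly the form required to close the coupled recursion of Lemma~\ref{Lemma:IneqRecur_W_check_prime}. Achieving this requires simultaneously exploiting $A_1^T \one = \one$, the definitions $\check{\w}_{c,i-1} = \bw_{c,i-1} - \bar{w}_{c,i-1}$ and $\tilde{w}_{c,i-1} = w^o - \bar{w}_{c,i-1}$, and the prior decay of $\tilde{w}_{c,i-1}$ to render the absolute-noise contributions independent of $i$; keeping careful track of the constants $3$ and $4$ produced by the convexity splits is what delivers the stated coefficients.
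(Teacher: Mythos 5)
Your proposal is correct and follows essentially the same route as the paper's own proof: the same identity $\bm{\phi}_{i-1} = \one \otimes \bw_{c,i-1} + \mc{A}_1^T\mc{U}_L \bw_{e,i-1}$, the same three- and four-term splittings (with the matching constants $3$ and $4$), the same use of the variance relations \eqref{Equ:VarPropt:s} and \eqref{Equ:VarPropt:Linear_ub} together with the non-asymptotic decay $\|\tilde{w}_{c,i-1}\|^2 \le \|\tilde{w}_{c,0}\|^2$ from Theorem~\ref{Thm:ConvergenceRefRec:DeterministcCent}, and the same conditional-then-unconditional treatment of $\bm{v}_i$ via Assumption~\ref{Assumption:UpdateVectorRandomness}. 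The only cosmetic difference is that you invoke the elementary inequalities $\|a+b+c\|^2 \le 3\sum\|\cdot\|^2$ and $\|\sum_{j=1}^4 a_j\|^2 \le 4\sum\|\cdot\|^2$ directly rather than through the formal convexity and scaling properties of the operator $P[\cdot]$, which amounts to the same argument.
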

	\begin{IEEEproof}
		See Appendix \ref{Appendix:Proof_BoundsPerturbation}.
	\end{IEEEproof}

Now, we derive an inequality recursion for $\E P[\check{\bm{w}}_{c,i}]$
from \eqref{Equ:Lemma:ErrorDynamics:JointRec_wc_check}. Note that
	\begin{align}
		\E &P[\check{\bm{w}}_{c,i}] = \E\|\check{\bm{w}}_{c,i}\|^2
						\nonumber\\
				&=		\E P\big[
							T_c(\bm{w}_{c,i-1})
							-
							T_c(\bar{w}_{c,i-1})
							-
							\mu_{\max}\cdot
							(p^T \otimes I_M)
							\bm{z}_{i-1} 
							\nn\\
							&\qquad
							-
							\mu_{\max}\cdot
							(p^T \otimes I_M)
							\bm{v}_i)
						\big]
						\nonumber\\						
				&\overset{(a)}{=}
						\E P\left[
							T_c(\bm{w}_{c,i-1})
							-
							T_c(\bar{w}_{c,i-1})
							-
							\mu_{\max}\cdot
							(p^T \otimes I_M)
							\bm{z}_{i-1} 
						\right]
							\nn\\
							&\quad
						+
						\mu_{\max}^2 \cdot
						\E P\left[
							(p^T \otimes I_M)
							\bm{v}_i)
						\right]
						\nonumber\\
				&=		
						\E P\bigg[
							\gamma_c \cdot
							\frac{1}{\gamma_c}
							\left(
								T_c(\bm{w}_{c,i-1}) - T_c(\bar{w}_{c,i-1})
							\right)
							\nn\\
							&\qquad
							+
							(1-\gamma_c)
							\cdot
							\frac{-\mu_{\max}}{1-\gamma_c}
							(p^T \otimes I_M) 
							\bm{z}_{i-1}
						\bigg]
						\nonumber\\
						&\quad+
						\mu_{\max}^2 \cdot
						\E P\left[
							(p^T \otimes I_M)
							\bm{v}_i)
						\right]
						\nonumber\\
				&\overset{(b)}{\preceq}
						\gamma_c \cdot
						\frac{1}{\gamma_c^2}
						\E P\left[
							T_c(\bm{w}_{c,i-1})
							-
							T_c(\bar{w}_{c,i-1})
						\right]
						\nn\\
						&\quad
						+
						(1-\gamma_c)
						\cdot
						\frac{\mu_{\max}^2}{(1-\gamma_c)^2}
						\E P\left[
							(p^T \otimes I_M) \bm{z}_{i-1}
						\right]
						\nonumber\\
						&\quad+
						\mu_{\max}^2 \E P\left[
							(p^T \otimes I_M) \bm{v}_{i}
						\right]
						\nonumber\\
				&\overset{(c)}{\preceq}
						\gamma_c \cdot 
						\E P\left[
							\check{\bm{w}}_{c,i-1}
						\right]
						+
						\frac{\mu_{\max}^2}{1-\gamma_c}
						\E P\left[
							(p^T \otimes I_M) \bm{z}_{i-1}
						\right]
						\nn\\
						&\quad
						+
						\mu_{\max}^2 \E P\left[
							(p^T \otimes I_M) \bm{v}_{i}
						\right]
						\nonumber\\
				&=
						\gamma_c \cdot 
						\E P\left[
							\check{\bm{w}}_{c,i-1}
						\right]
						+
						\frac{\mu_{\max}^2}{1-\gamma_c}
						\E \left\|
							(p^T \otimes I_M) \bm{z}_{i-1}
						\right\|^2
						\nn\\
						&\quad
						+
						\mu_{\max}^2 
						\E \left\|
							(p^T \otimes I_M) \bm{v}_{i}
						\right\|^2
						\nn\\
				&\overset{(d)}{=}
						\gamma_c \cdot 
						\E P\left[
							\check{\bm{w}}_{c,i-1}
						\right]
						+
						\frac{\mu_{\max}^2}{1-\gamma_c}
						\E \left\|
							\sum_{k=1}^N p_k \bm{z}_{k,i-1}
						\right\|^2
						\nn\\
						&\quad
						+
						\mu_{\max}^2 
						\E \left\|
							\sum_{k=1}^N p_k \bm{v}_{k,i}
						\right\|^2
						\nn\\
				&=
						\gamma_c \cdot 
						\E P\left[
							\check{\bm{w}}_{c,i-1}
						\right]
						\nn\\
						&\quad
						+
						\frac{\mu_{\max}^2}{1-\gamma_c}
						\!\!\cdot\!\!
						\left(
							\sum_{l=1}^N p_l
						\right)^2
						\!\!\!\cdot\!
						\E \left\|
							\sum_{k=1}^N \frac{p_k}{\sum_{l=1}^N p_l} \bm{z}_{k,i-1}
						\right\|^2
						\nn\\
						&\quad
						+
						\mu_{\max}^2 
						\!\!\cdot\!\!
						\left(
							\sum_{l=1}^N p_l
						\right)^2
						\!\!\!\cdot\!
						\E \left\|
							\sum_{k=1}^N \frac{p_k}{\sum_{l=1}^N p_l} \bm{v}_{k,i}
						\right\|^2
						\nn\\
				&\overset{(e)}{\le}
						\gamma_c \cdot 
						\E P\left[
							\check{\bm{w}}_{c,i-1}
						\right]
						\nn\\
						&\quad
						+
						\frac{\mu_{\max}^2}{1-\gamma_c}
						\!\!\cdot\!\!
						\left(
							\sum_{l=1}^N p_l
						\right)^2
						\!\!\!\cdot\!
						\sum_{k=1}^N \frac{p_k}{\sum_{l=1}^N p_l}
						\E \left\|
							\bm{z}_{k,i-1}
						\right\|^2
						\nn\\
						&\quad
						+
						\mu_{\max}^2 
						\!\!\cdot\!\!
						\left(
							\sum_{l=1}^N p_l
						\right)^2
						\!\!\!\cdot\!
						\sum_{k=1}^N \frac{p_k}{\sum_{l=1}^N p_l} 
						\E \left\|
							\bm{v}_{k,i}
						\right\|^2
						\nn\\
				&=
						\gamma_c \cdot 
						\E P\left[
							\check{\bm{w}}_{c,i-1}
						\right]
						+
						\frac{\mu_{\max}^2}{1-\gamma_c}
						\cdot
						\left(
							\sum_{l=1}^N p_l
						\right)
						\cdot
						\sum_{k=1}^N p_k
						\E \left\|
							\bm{z}_{k,i-1}
						\right\|^2
						\nn\\
						&\quad
						+
						\mu_{\max}^2 
						\cdot
						\left(
							\sum_{l=1}^N p_l
						\right)
						\cdot
						\sum_{k=1}^N p_k
						\E \left\|
							\bm{v}_{k,i}
						\right\|^2
						\nn\\
				&=
						\gamma_c \cdot 
						\E P\left[
							\check{\bm{w}}_{c,i-1}
						\right]
						+
						\frac{\mu_{\max}^2}{1-\gamma_c}
						\cdot
						\|p\|_1
						\cdot
						p^T
						\E P[\bm{z}_{i-1}]
						\nn\\
						&\quad
						+
						\mu_{\max}^2 
						\cdot
						\|p\|_1
						\cdot
						p^T
						\E P[ \bm{v}_{i} ]
						\nn\\
				&\overset{(f)}{=}
						\gamma_c \cdot 
						\E P\left[
							\check{\bm{w}}_{c,i-1}
						\right]
						+
						\frac{\mu_{\max}\cdot
						\|p\|_1}
						{\lambda_L - \frac{1}{2}\mu_{\max}\|p\|_1^2 \lambda_U^2}
						\cdot
						p^T
						\E P[\bm{z}_{i-1}]
						\nn\\
						&\quad
						+
						\mu_{\max}^2 \cdot
						\|p\|_1
						\cdot
						p^T
						\E P[\bm{v}_i]
						\nonumber\\
				&\overset{(g)}{\preceq}
						\gamma_c \cdot 
						\E P\left[
							\check{\bm{w}}_{c,i-1}
						\right]
						\nn\\
						&\quad
						+
						\frac{\mu_{\max}\|p\|_1}
						{\lambda_L - \mu_{\max}\frac{1}{2}\|p\|_1^2 \lambda_U^2}
						\nn\\
						&\quad
						\cdot
						p^T 
						\left\{
							\lambda_U^2
							\cdot
							\left\|
								\bP[\mc{A}_1^T \mc{U}_L]
							\right\|_{\infty}^2
							\cdot
							\mathds{1}\mathds{1}^T
							\cdot
							\E
							P[\bm{w}_{e,i-1}]
						\right\}
						\nonumber\\
						&\quad+
						\mu_{\max}^2 \cdot \|p\|_1
						\cdot p^T
						\Big\{
								4\alpha \cdot \one
								\cdot
								\Expt P[ \check{\w}_{c,i-1} ]
								\nn\\
								&\qquad
								+
								4 \alpha
								\cdot
								\| \bP[ \mA_1^T \mU_L ] \|_{\infty}^2
								\cdot
								\one \one^T
								\Expt P[ \w_{e,i-1} ]
								\nn\\
								&\qquad
								+
								\left[
									4\alpha
									\cdot
									( \|\tilde{w}_{c,0}\|^2 + \|w^o\|^2 )
									+
									\sigma_v^2
								\right]
								\cdot
								\one
						\Big\}
						\nonumber\\
				&\overset{(h)}{=}
						\left[
							\gamma_c + \mu_{\max}^2 \cdot 4\alpha \|p\|_1^2
						\right]
						\cdot
						\E P[\check{\bm{w}}_{c,i-1}]
						\nonumber\\
						&\quad+
						\|p\|_1^2 \cdot
						\left\|
								\bP[\mc{A}_1^T \mc{U}_L]
						\right\|_{\infty}^2
						\cdot
						\lambda_U^2
						\nn\\
						&\qquad
						\cdot
						\Big[							
							\frac{\mu_{\max}}
							{\lambda_L - \frac{1}{2}\mu_{\max}\|p\|_1^2 \lambda_U^2}
							+
							4
							\mu_{\max}^2	 \frac{\alpha}{\lambda_U^2}
						\Big]
						\cdot
						\mathds{1}^T 
						\E P[\bm{w}_{e,i-1}]
						\nonumber\\
						&\quad+
						\mu_{\max}^2 \cdot						
						\|p\|_1^2
						\cdot
						\left[
							4\alpha 
							\left(
								\|\tilde{w}_{c,0}\|^2 
								+
								\| w^o \|^2
							\right)
							+
							\sigma_v^2
						\right]	
		\label{Equ:Appendix:EP_wci_recursion_interm1}				
	\end{align}
where step (a) uses the additivity property in Lemma
\ref{Lemma:BasicPropertiesOperator} 
since the definition of $\bz_{i-1}$ and $\bv_i$ in \eqref{Equ:DistProc:s_hat_phi_decomp_zdef_vdef} and the definition of $\w_{c,i-1}$ in \eqref{Equ:DistProc:w_i_prime_w_ci_w_ei} imply that $\bz_{i-1}$ and $\w_{c,i-1}$  depend on all  $\{\bw_j\}$ for $j \le i-1$, meaning that the cross terms are zero:
		\begin{align*}
			\E [\bv_i \bz_{i-1}^T]	&=	\E \left\{											
											\E \left[
												\bv_i
												| \mc{F}_{i-1}
											\right]
											\bz_{i-1}^T
										\right\}
									=	0
										\\
			\E \big\{
				\bv_i  
				[ T_c( & \w_{c,i-1}) - T_c(\bar{w}_{c,i-1}) ]^T
			\big\}			
								\nn\\
							&=	
								\E	\left\{
										\E\left[
											\bv_i
											|
											\mF_{i-1}
										\right]
										[ T_c(\w_{c,i-1}) \!- \!T_c(\bar{w}_{c,i-1}) ]^T
									\right\}
							=		0
		\end{align*}
Step (b) uses the convexity property in Lemma \ref{Lemma:BasicPropertiesOperator}, step (c) uses the variance property \eqref{Equ:VarPropt:Tc}, step (d) uses the notation $\z_{k,i-1}$ and $\bv_{k,i}$ to denote the $k$th $M \times 1$ block of the $NM \times 1$ vector $\z_{i-1}$ and $\bv_{i}$, respectively, step (e) applies Jensen's inequality to the convex function $\| \cdot\|^2$, step (f) substitutes expression \eqref{Equ:VarPropt:gamma_c} for $\gamma_c$, step (g) substitutes the bounds for the perturbation terms from \eqref{Equ:Lemma:BoundsPerturbation:P_z}, \eqref{Equ:Lemma:BoundsPerturbaton:P_s}, and \eqref{Equ:Lemma:BoundsPerturbation:P_v}, step (h) uses the fact that $p^T \mathds{1} = \|p\|_1$.

Next, we derive the bound for $\E P[\bm{w}_{e,i}]$ from
the recursion for $\bm{w}_{e,i}$ in \eqref{Equ:DistProc:JointRec_TF2_we}:
	\begin{align}
		\E &P[\bm{w}_{e,i}]
						\nn\\
				&=		\E P\big[
							\mc{D}_{N-1} \bm{w}_{e,i-1}
							-
							\mc{U}_R 
							\mc{A}_2^T 
							\mc{M}
							\;
							s(\mathds{1} \otimes \bm{w}_{c,i-1})
							\nn\\
							&\qquad
							-
							\mc{U}_R \mc{A}_2^T \mc{M} \bm{z}_{i-1}
							-
							\mc{U}_R \mc{A}_2^T \mc{M} \bm{v}_i
						\big]
						\nonumber\\
				&\overset{(a)}{=}
						\E P\left[
							\mc{D}_{N-1} \bm{w}_{e,i-1}
							-
							\mc{U}_R 
							\mc{A}_2^T 
							\mc{M}
							\;
							\left(
								s(\mathds{1} \otimes \bm{w}_{c,i-1})
								+
								\bm{z}_{i-1}
							\right)
						\right]
						\nn\\
						&\quad
						+
						\E P\left[
							\mc{U}_R \mc{A}_2^T \mc{M} \bm{v}_i
						\right]
						\nonumber\\
				&\overset{(b)}{\preceq}
						\Gamma_e
						\cdot
						\E P\left[
							\bm{w}_{e,i-1}
						\right]
						\nn\\
						&\quad
						+\!
						\frac{2}{1-|\lambda_2(A)|}
						\!\cdot\!
						\E P\left[
							\mc{U}_R \mc{A}_2^T \mc{M}
							\big(
									s(\mathds{1} \otimes \bm{w}_{c,i-1})
									\!+\!
									\bm{z}_{i-1}
							\big)
						\right]
						\nn\\
						&\quad
						+\!
						\E P\left[
							\mc{U}_R \mc{A}_2^T \mc{M} \bm{v}_i
						\right]
						\nonumber\\
				&\overset{(c)}{\preceq}
						\Gamma_e
						\cdot
						\E P\left[
							\bm{w}_{e,i-1}
						\right]
						\nn\\
						&\quad
						+
						\frac{2}{1-|\lambda_2(A)|}
						\cdot
						\left\|
							\bP[
								\mc{U}_R \mc{A}_2^T \mc{M}
							]
						\right\|_{\infty}^2
						\nn\\
						&\qquad
						\cdot
						\mathds{1}\mathds{1}^T
						\cdot
						\E P\left[
									s(\mathds{1} \otimes \bm{w}_{c,i-1})
									+
									\bm{z}_{i-1}
						\right]
						\nonumber\\
						&\quad+
						\left\|
							\bP[
								\mc{U}_R \mc{A}_2^T \mc{M}
							]
						\right\|_{\infty}^2
						\cdot
						\mathds{1}\mathds{1}^T
						\cdot
						\E P\left[
							\bm{v}_i
						\right]
						\nonumber\\
				&\overset{(d)}{\preceq}
						\Gamma_e
						\cdot
						\E P\left[
							\bm{w}_{e,i-1}
						\right]
						\nn\\
						&\quad
						+\!
						\mu_{\max}^2 \cdot
						\frac{4\left\|
							\bP[
								\mc{U}_R \mc{A}_2^T
							]
						\right\|_{\infty}^2}
						{1\!-\!|\lambda_2(A)|}
						\!\cdot\!
						\mathds{1}\mathds{1}^T
						\nn\\
						&\qquad
						\!\cdot\!
						\big\{
							\E P\left[
									s(\mathds{1} \otimes \bm{w}_{c,i-1})
							\right]
							\!+\!
							\E P\left[
										\bm{z}_{i-1}
							\right]
						\big\}
						\nonumber\\
						&\quad+
						\mu_{\max}^2 \cdot
						\left\|
							\bP[
								\mc{U}_R \mc{A}_2^T
							]
						\right\|_{\infty}^2
						\cdot
						\mathds{1}\mathds{1}^T
						\cdot
						\E P\left[
							\bm{v}_i
						\right]
						\nonumber\\
				&\overset{(e)}{\preceq}
						\bigg[
							\Gamma_e
							\!+\!
							4\mu_{\max}^2
							\!\cdot\!
							\left\|
								\bP[
									\mc{U}_R \mc{A}_2^T
								]
							\right\|_{\infty}^2
							\!\cdot\!
							\left\|
								\bP[
									\mc{A}_1^T \mc{U}_L
								]
							\right\|_{\infty}^2
							\lambda_U^2
							N
							\nn\\
							&\qquad
							\times
							\left(
								\frac{1}{1 \!-\! |\lambda_2(A)|} 
								\!+\!
								\frac{\alpha}{\lambda_U^2}
							\right)
							\mathds{1}\mathds{1}^T							
						\bigg]
						\!\cdot\!
						\E P[\bm{w}_{e,i-1}]
						\nonumber\\
						&\quad+
						4\mu_{\max}^2
						\cdot
						\left\|
							\bP[
								\mc{U}_R \mc{A}_2^T
							]
						\right\|_{\infty}^2
						\lambda_U^2
						N
						\left(
							\frac{3}{1-|\lambda_2(A)|}
							+
							\frac{\alpha}{\lambda_U^2}
						\right)
						\nn\\
						&\qquad
						\cdot
						\mathds{1} \cdot
						\E \|\check{\bm{w}}_{c,i-1}\|^2
						\nonumber\\
						&\quad+
						\mu_{\max}^2
						\!\cdot\!
						\left\|
							\bP[
								\mc{U}_R \mc{A}_2^T
							]
						\right\|_{\infty}^2
						\!\cdot\!
						\bigg[
								12
								\frac{\lambda_U^2 \|\tilde{w}_{c,0}\|^2 N \!+\! \mathds{1}^Tg^o}
								{1-|\lambda_2(A)|}
								\nn\\
								&\qquad\qquad
								+\!
								N[
									4 \alpha
									( \| \tilde{w}_{c,0} \|^2 \!+\! \|w^o\|^2 )
									\!+\!
									\sigma_v^2
								]
						\bigg]
						\!\cdot\!
						\mathds{1}
						\nn\\
			&\overset{(f)}{\preceq}
						\bigg[
							\Gamma_e
							\!+\!
							4\mu_{\max}^2
							\!\cdot\!
							\left\|
								\bP[
									\mc{U}_R \mc{A}_2^T
								]
							\right\|_{\infty}^2
							\!\cdot\!
							\left\|
								\bP[
									\mc{A}_1^T \mc{U}_L
								]
							\right\|_{\infty}^2
							\lambda_U^2
							N
							\nn\\
							&\qquad
							\times
							\left(
								\frac{1}{1 \!-\! |\lambda_2(A)|} 
								\!+\!
								\frac{\alpha}{\lambda_U^2}
							\right)
							\mathds{1}\mathds{1}^T							
						\bigg]
						\!\cdot\!
						\E P[\bm{w}_{e,i-1}]
						\nonumber\\
						&\quad+
						4\mu_{\max}^2
						\cdot
						\left\|
							\bP[
								\mc{U}_R \mc{A}_2^T
							]
						\right\|_{\infty}^2
						\lambda_U^2
						N
						\left(
							\frac{3}{1-|\lambda_2(A)|}
							+
							\frac{\alpha}{\lambda_U^2}
						\right)
						\nn\\
						&\qquad
						\cdot
						\mathds{1} \cdot
						\E \|\check{\bm{w}}_{c,i-1}\|^2
						\nonumber\\
						&\quad+
						\mu_{\max}^2
						\!\cdot\!
						N
						\left\|
							\bP[
								\mc{U}_R \mc{A}_2^T
							]
						\right\|_{\infty}^2
						\!\cdot\!
						\bigg[
								12
								\frac{\lambda_U^2 \|\tilde{w}_{c,0}\|^2 \!+\! \|g^o\|_{\infty}}
								{1-|\lambda_2(A)|}
								\nn\\
								&\qquad
								+\!
								4 \alpha
								( \| \tilde{w}_{c,0} \|^2 \!+\! \|w^o\|^2 )
								\!+\!
								\sigma_v^2
						\bigg]
						\!\cdot\!
						\mathds{1}
		\label{Equ:Appendix:EP_wei_recursion_interm1}
	\end{align}
where step (a) uses the additivity property in Lemma \ref{Lemma:BasicPropertiesOperator} 
since the definition of $\bz_{i-1}$ and $\bv_i$ in \eqref{Equ:DistProc:s_hat_phi_decomp_zdef_vdef} and the definitions of $\w_{c,i-1}$ and $\w_{e,i-1}$ in \eqref{Equ:DistProc:w_i_prime_w_ci_w_ei} imply that $\bz_{i-1}$, $\w_{c,i-1}$ and $\w_{e,i-1}$ depend on all  $\{\bw_j\}$ for $j \le i-1$, meaning that the cross terms between $\bv_i$ and all other terms are zero, just as in step (a) of \eqref{Equ:Appendix:EP_wci_recursion_interm1}, step (b) uses the variance relation of stable Kronecker Jordan operators from \eqref{Equ:Propert:StableKronJordanOperator} with $d_2=\lambda_2(A)$, step (c) uses the variance relation of linear operator \eqref{Equ:VarPropt:Linear_ub}, step (d) uses the submultiplictive property \eqref{Equ:Properties:PX_bar_SubMult} and $P[x + y] \preceq 2P[x]+2P[y]$ derived from the convexity property \eqref{Equ:Properties:PX_CvxComb} and the scaling property in  \eqref{Equ:Lemma:BoundsPerturbation:P_z}, \eqref{Equ:Lemma:BoundsPerturbaton:P_s}, and \eqref{Equ:Lemma:BoundsPerturbation:P_v}, step (e) substitutes the bounds on the perturbation terms from \eqref{Equ:Lemma:BoundsPerturbation:P_z}--\eqref{Equ:Lemma:BoundsPerturbation:P_v}, and step (f) uses the inequality $| \one^T g^o | \le N \|g^o\|_{\infty}$.

Finally, using the quantities defined in 
\eqref{Equ:Lemma:IneqRecur:psi0_def}--\eqref{Equ:Lemma:IneqRecur:bve_def},
we can rewrite recursions \eqref{Equ:Appendix:EP_wci_recursion_interm1}
and \eqref{Equ:Appendix:EP_wei_recursion_interm1} as 
	\begin{align}
		\E P[\check{\bm{w}}_{c,i}]
					&\preceq		(\gamma_c \!+\! \mu_{\max}^2 \psi_0) \!\cdot\!
								\E P[\check{\bm{w}}_{c,i-1}]
								\nn\\
								&\quad
								+\!
								(\mu_{\max} h_c(\mu_{\max}) \!+\! \mu_{\max}^2 \psi_0)
								\cdot
								\one^T
								\E P[\bm{w}_{e,i-1}]
								\nn\\
								&\quad
								+\!
								\mu_{\max}^2 b_{v,c}
		\label{Equ:Appendix:EPwc_bound_final}
								\\
		\E P[\bm{w}_{e,i}]
					&\preceq		\mu_{\max}^2 \psi_0
								\one \cdot
								\E P[\check{\bm{w}}_{c,i-1}]
								\nn\\
								&\quad
								+
								(\Gamma_e + \mu_{\max}^2 \psi_0 \mathds{1}\mathds{1}^T)
								\cdot
								\E P[\bm{w}_{e,i-1}]
								\nn\\
								&\quad
								+
								\mu_{\max}^2 b_{v,e} \cdot \one
		\label{Equ:Appendix:EPwe_bound_final}
	\end{align}
where $\E P[\check{\bm{w}}_{c,i}] = \E \|\check{\bm{w}}_{c,i}\|^2$.
Using the matrices and vectors defined in \eqref{Equ:FirstOrderAnal:Gamma_def}--%
\eqref{Equ:FirstOrderAnal:bv_def}, we can write the above 
two recursions in a joint form as in \eqref{Equ:FirstOrderAnal:W_i_prime_ineq_Rec1}.


\section{Proof of Lemma \ref{Lemma:BoundsPerturbation}}
\label{Appendix:Proof_BoundsPerturbation}

First, we establish the bound for $P[\bm{z}_{i-1}]$ in 
\eqref{Equ:Lemma:BoundsPerturbation:P_z}. Substituting \eqref{Equ:DistProc:invTF_wi_wi_prime}
and \eqref{Equ:DistProc:relation_phi_w_wprime} into the definition of $\bm{z}_{i-1}$ in 
\eqref{Equ:DistProc:s_hat_phi_decomp_zdef_vdef} we get:
	\begin{align}
		P&[\bm{z}_{i-1}]				\nn\\
						&\preceq		
									P\big[
										s\left(
											\mathds{1} \!\otimes \! \bm{w}_{c,i-1}
											\!+\!
											(A_1^T U_L \!\otimes\! I_M)
											\bm{w}_{e,i-1}
										\right)
										\!-\!
										s(\mathds{1} \!\otimes\! \bm{w}_{c,i-1})
									\big]
									\nonumber\\
						&\overset{(a)}{\preceq}
									\lambda_U^2
									\cdot
									P\left[
										(A_1^T U_L \otimes I_M) 
										\bm{w}_{e,i-1}
									\right]
									\nonumber\\
						&\overset{(b)}{\preceq}
									\lambda_U^2
									\cdot
									\left\|
										\bP[\mc{A}_1^T \mc{U}_L]
									\right\|_{\infty}^2
									\cdot
									\mathds{1}\mathds{1}^T
									\cdot
									P[\bm{w}_{e,i-1}]
	\end{align}
where step (a) uses the variance relation \eqref{Equ:VarPropt:s},
and step (b) uses property \eqref{Equ:VarPropt:Linear_ub}.

Next, we prove the bound on $P[s(\mathds{1}\otimes \bm{w}_{c,i-1})]$.
It holds that
	\begin{align}
		P&[s(\mathds{1}\otimes \bm{w}_{c,i-1})]	\nonumber\\
			&=		P\Big[
						\frac{1}{3}\cdot 3
						\big(
							s(\mathds{1} \otimes \bm{w}_{c,i-1})
							-
							s(\mathds{1}\otimes \bar{w}_{c,i-1})
						\big)
						\nn\\
						&\qquad
						+
						\frac{1}{3}\cdot 3
						\big(
							s(\mathds{1}\otimes \bar{w}_{c,i-1})
							-
							s(\mathds{1}\otimes w^o)
						\big)
						\nn\\
						&\qquad
						+
						\frac{1}{3}\cdot 3 \cdot
						s(\mathds{1}\otimes w^o)
					\Big]
					\nonumber\\
			&\overset{(a)}{\preceq}
					\frac{1}{3} \cdot P\big[
						 3
						\big(
							s(\mathds{1} \otimes \bm{w}_{c,i-1})
							-
							s(\mathds{1}\otimes \bar{w}_{c,i-1})
						\big)
					\big]
					\nonumber\\
					&\quad
					+
					\frac{1}{3}\cdot P
					\big[
						 3
						\big(
							s(\mathds{1}\otimes \bar{w}_{c,i-1})
							-
							s(\mathds{1}\otimes w^o)
						\big)
					\big]
					\nn\\
					&\quad
					+
					\frac{1}{3}\cdot P
					\big[
						3 \cdot
						s(\mathds{1}\otimes w^o)
					\big]
					\nonumber\\
			&\overset{(b)}{=}
					3P\big[
						s(\mathds{1} \otimes \bm{w}_{c,i-1})
						\!-\!
						s(\mathds{1}\otimes \bar{w}_{c,i-1})
					\big]
					\nn\\
					&\quad
					+
					3P
					\big[
						s(\mathds{1}\otimes \bar{w}_{c,i-1})
						\!-\!
						s(\mathds{1}\otimes w^o)
					\big]
					+
					3P
					\big[
						s(\mathds{1}\otimes w^o)
					\big]
					\nonumber\\
			&\overset{(c)}{\preceq}
					3\lambda_U^2 \cdot
					P\big[
						\mathds{1} \otimes (\bm{w}_{c,i-1} - \bar{w}_{c,i-1})
					\big]
					\nn\\
					&\quad
					+
					3\lambda_U^2 \cdot
					P[\mathds{1}\otimes(\bar{w}_{c,i-1}-w^o)]
					+
					3P[s(\mathds{1}\otimes w^o)]
					\nonumber\\
			&\overset{(d)}{=}
					3\lambda_U^2 \cdot
					\|\check{\bm{w}}_{c,i-1}\|^2
					\cdot \mathds{1}
					+
					3\lambda_U^2 \cdot
					\|\bar{w}_{c,i-1} - w^o\|^2
					\cdot
					\mathds{1}
					\nn\\
					&\quad
					+
					3P[s(\mathds{1}\otimes w^o)]
					\nonumber\\
			&\overset{(e)}{\preceq}
					3\lambda_U^2 \cdot
					\|\check{\bm{w}}_{c,i-1}\|^2
					\!\cdot\! \mathds{1}
					\!+\!
					3\lambda_U^2 \|\tilde{w}_{c,0}\|^2
					\!\cdot\!
					\mathds{1}
					\!+\!
					3P[s(\mathds{1}\otimes w^o)]
	\end{align}
where step (a) uses the convexity property \eqref{Equ:Properties:PX_CvxComb},
step (b) uses the scaling property in Lemma \ref{Lemma:BasicPropertiesOperator},
step (c) uses the variance relation \eqref{Equ:VarPropt:s},
step (d) uses property \eqref{Equ:Properties:PX_Kron},
and step (e) uses the bound \eqref{Equ:Thm:ConvergenceRefRec:NonAsympBound}
and the fact that $\gamma_c<1$.

Finally, we establish the bounds on $P[\bm{v}_i]$ in
\eqref{Equ:Lemma:BoundsPerturbation:P_v_E_Fiminus1}--\eqref{Equ:Lemma:BoundsPerturbation:P_v}. Introduce the $MN \times 1$
vector $\bm{x}$:
	\begin{align}
		\bm{x}	\defeq	\mathds{1} \otimes \bm{w}_{c,i-1}
						+
						\mc{A}_1^T \mc{U}_L \bm{w}_{e,i-1}
				\equiv	\bm{\phi}_{i-1}
	\end{align}
We partition $\bm{x}$ in block form as
$\bm{x} = \col\{\bm{x}_1,\ldots,\bm{x}_N\}$, where each $\bm{x}_k$ is
$M\times 1$. Then, by the definition of $\bm{v}_i$ from 
\eqref{Equ:DistProc:s_hat_phi_decomp_zdef_vdef}, we have
	\begin{align}
		\E \{ P[\bm{v}_i] | \mc{F}_{i-1} \}	
						&=		\E \{ P[\hat{\bm{s}}_i(\bm{x}) - s(\bm{x}) ] | \mc{F}_{i-1} \}
								\nonumber\\
						&=		\col\big\{
									\E \left[ 
										\|
											\hat{\bm{s}}_{1,i}(\bm{x}_1)
											-
											s_1(\bm{x}_1)
										\|^2
										\big|
										\mc{F}_{i-1}
									\right],
									\nn\\
									&\qquad
									\ldots,
									\E
									\left[
										\|
											\hat{\bm{s}}_{N,i}(\bm{x}_N)
											-
											s_N(\bm{x}_N)
										\|^2
										\big|
										\mc{F}_{i-1}
									\right]
								\big\}
								\nonumber\\
						&\overset{(a)}{\preceq}
								\col\big\{
									\alpha 
									\cdot 
									\| \bm{x}_1 \|^2 
									\!+\! 
									\sigma_{v}^2,\;
									\ldots,\;
									\alpha
									\cdot 
									\| \bm{x}_N \|^2 
									\!+\! 
									\sigma_{v}^2
								\big\}
								\nonumber\\
		\label{Equ:Appendix:P_v_bound1}
						&=		\alpha \cdot P[\bm{x}] 
								+ \sigma_v^2 \mathds{1}
	\end{align}
where step (a) uses Assumption \eqref{Equ:Assumption:Randomness:RelAbsNoise}. 
Now we bound $P[\bm{x}]$:
	\begin{align}
		P[ \bm{x} ]
						&=		P\left[
										\mathds{1} \otimes \bm{w}_{c,i-1}
										+
										\mc{A}_1^T \mc{U}_L \bm{w}_{e,i-1}
								\right]
								\nonumber\\
						&=		P\Big[
									\frac{1}{4} \!\cdot\! 4 \!\cdot\! \one \otimes ( \w_{c,i-1} \!-\! \bar{w}_{c,i-1} )
									\!+\!
									\frac{1}{4} \!\cdot\! 4 \!\cdot\! \one \otimes (\bar{w}_{c,i-1} \!-\! w^o)
									\nn\\
									&\qquad
									+\!
									\frac{1}{4} \!\cdot\! 4 \!\cdot\! \mA_1^T \mU_L \w_{e,i-1} 
									\!+\!
									\frac{1}{4} \!\cdot\! 4 \!\cdot\! \one \otimes w^o
								\Big]
								\nonumber\\
						&=
								P\Big[
									\frac{1}{4} \!\cdot\! 4 \!\cdot\! \one \otimes \check{\w}_{c,i-1}
									\!+\!
									\frac{1}{4} \!\cdot\! 4 \!\cdot\! \one \otimes \tilde{w}_{c,i-1}
									\nn\\
									&\qquad
									+\!
									\frac{1}{4} \!\cdot\! 4 \!\cdot\! \mA_1^T \mU_L \w_{e,i-1} 
									\!+\!
									\frac{1}{4} \!\cdot\! 4 \!\cdot\! \one \otimes w^o
								\Big]
								\nn\\
						&\overset{(a)}{\preceq}
								\frac{1}{4} \cdot 4^2
								\!\cdot\! 
								P[ \one \otimes \check{\w}_{c,i-1} ]
								+
								\frac{1}{4} \cdot 4^2
								\!\cdot\! 
								P[ \one \otimes \tilde{w}_{c,i-1} ]
								\nn\\
								&\quad
								+
								\frac{1}{4} \cdot 4^2
								\!\cdot\!  
								P[ \mA_1^T \mU_L \w_{e,i-1} ]
								+
								\frac{1}{4} \cdot 4^2
								\!\cdot\! 
								P[ \one \otimes w^o ]
								\nonumber\\
						&\overset{(b)}{=}
								4 \cdot
								\|
										\check{\bm{w}}_{c,i-1}
								\|^2
								\cdot
								\mathds{1}
								+
								4 \cdot
								\|
									\tilde{w}_{c,i-1}
								\|^2
								\cdot
								\mathds{1}
								\nn\\
								&\quad
									+ 
								4 \cdot
								P[
										\mc{A}_1^T \mc{U}_L \bm{w}_{e,i-1}
								]
								+								
								4 \cdot
								\| w^o \|^2 \cdot \one
								\nonumber\\
						&\overset{(c)}{\preceq}
								4 \cdot
								\|
										\check{\bm{w}}_{c,i-1}
								\|^2
								\!\cdot\!
								\mathds{1}
								\!+\!
								4 \!\cdot\!
								\|
									\bP[ \mc{A}_1^T \mc{U}_L ]
								\|_{\infty}^2
								\!\cdot\!
								\one\one^T
								\!\cdot\!
								P[
										\bm{w}_{e,i-1}
								]
								\nn\\
								&\quad
								+
								4 \cdot
								\|\tilde{w}_{c,0}\|^2
								\cdot
								\mathds{1}
								+ 
								4 \cdot
								\| w^o \|^2 \cdot \one
		\label{Equ:Appendix:P_v_bond2}
	\end{align}
where step (a) uses the convexity property \eqref{Equ:Properties:PX_CvxComb} and the scaling property in Lemma \ref{Lemma:BasicPropertiesOperator}, step (b) uses the Kronecker property \eqref{Equ:Properties:PX_Kron}, step (c) uses the variance relation \eqref{Equ:VarPropt:Linear} and  the bound \eqref{Equ:Thm:ConvergenceRefRec:NonAsympBound}. Substituting \eqref{Equ:Appendix:P_v_bond2} into \eqref{Equ:Appendix:P_v_bound1}, we obtain \eqref{Equ:Lemma:BoundsPerturbation:P_v_E_Fiminus1}, and taking expectation of \eqref{Equ:Lemma:BoundsPerturbation:P_v_E_Fiminus1} with respect to $\mF_{i-1}$ leads to \eqref{Equ:Lemma:BoundsPerturbation:P_v}.

\section{Proof of Theorem \ref{Thm:NonAsymptoticBound}}
\label{Appendix:Proof_Thm_NonAsymptotiBound}

Assume initially that the matrix $\Gamma$ is stable (we show further ahead how the step-size parameter $\mu_{\max}$ can be selected to ensure this property). Then,  we can iterate the inequality recursion \eqref{Equ:FirstOrderAnal:W_i_prime_ineq_Rec1} and obtain
	\begin{align}
		\check{\mW}_i'	&\preceq	\Gamma^i \check{\mW}_0' 
									+ 
									\mu_{\max}^2 \sum_{j=0}^{i-1}
									\Gamma^j b_v
									\nonumber\\
						&\preceq	\sum_{j=0}^{\infty}
									\Gamma^j \check{\mW}_0' 
									+ 
									\mu_{\max}^2 \sum_{j=0}^{\infty}
									\Gamma^j b_v
									\nonumber\\
						&\preceq	(I-\Gamma)^{-1} 
									(\check{\mW}_0' 
									+ 
									\mu_{\max}^2  b_v)
		\label{Equ:Appendix:mWcheckprime_bound_interm1}
	\end{align}
where the first two inequalities use the fact that all entries of $\Gamma$ are nonnegative. Moreover, substituting \eqref{Equ:FirstOrderAnal:Gamma_def} into  \eqref{Equ:FirstOrderAnal:W_i_prime_ineq_Rec1}, we get
	\begin{align}
		\check{\mW}_i'	&\preceq	\Gamma_0 \check{\mW}_{i-1}' 
									+
									\mu_{\max}^2 \psi_0 \one\one^T \check{\mW}_{i-1}' 
									+
									\mu_{\max}^2 b_v
		\label{Equ:Appendix:mWcheckprime_ineqrecur_interm1}
	\end{align}
Substituting \eqref{Equ:Appendix:mWcheckprime_bound_interm1} 
into the second term on the right-hand side of
\eqref{Equ:Appendix:mWcheckprime_ineqrecur_interm1} leads to
	\begin{align}
		\check{\mW}_i'	&\preceq	\Gamma_0 \check{\mW}_{i-1}' + \mu_{\max}^2 \cdot c_v(\mu_{\max})
		\label{Equ:FirstOrderAnal:W_i_prime_IneqRecursion_Gamma0}
	\end{align}
where
	\begin{align}
		c_v(\mu_{\max})	\defeq	\psi_0 \cdot \one^T (I-\Gamma)^{-1}
							(\check{\mW}_0' + \mu_{\max}^2 b_v)
							\cdot \one
							+
							b_v
		\label{Equ:Appendix:bvmu_expr}
	\end{align}
Now iterating \eqref{Equ:FirstOrderAnal:W_i_prime_IneqRecursion_Gamma0} 
leads to the 
following non-asymptotic bound:
	\begin{align}
		\check{\mW_i}'	&\preceq	
									\Gamma_0^i \check{\mW}_0' 
									+
									\sum_{j=0}^{i-1}
									\mu_{\max}^2\Gamma_0^j \cdot c_v(\mu_{\max})
				\preceq
						\Gamma_0^i 
						\check{\mW}_0' 
						+
						\check{\mW}_{\infty}^{\mathrm{ub}'}
		\label{Equ:FirstOrderAnl:W_i_prime_bound_interdiate1}
	\end{align}
where 
	\begin{align}
		\check{\mW}_{\infty}^{\mathrm{ub}'}	
						&\defeq	\mu_{\max}^2 (I-\Gamma_0)^{-1} \cdot c_v(\mu_{\max})
		\label{Equ:FirstOrdarAnal:W_inf_prime_def}
	\end{align}
We now derive the non-asymptotic bounds \eqref{Equ:DistProc:EPwcicheck_bound}--\eqref{Equ:DistProc:EPwei_bound}
from \eqref{Equ:FirstOrderAnl:W_i_prime_bound_interdiate1}.
To this end, we need to study the structure of the term $\Gamma_0^i \check{\mW}_0'$.
Our approach relies on applying the unilateral $z-$transform to 
the causal matrix sequence $\{\Gamma_0^i, i\geq 0\}$ to get
	\begin{align}
		\Gamma_0(z)		\defeq		\mZ\left\{ \Gamma_0^i \right\}	
						=			z(zI - \Gamma_0)^{-1} 
	\end{align}
since $\Gamma_0$ is a stable matrix. Note from \eqref{Equ:FirstOrderAnal:Gamma0_def} that $\Gamma_0$ is a 
$2\times 2$ block upper triangular matrix. Substituting \eqref{Equ:FirstOrderAnal:Gamma0_def} 
into the above expression
and using the formula for inverting $2\times 2$ block upper triangular
matrices (see formula (4) in \cite[p.48]{laub2005matrix}), we obtain
	\begin{align}
		\Gamma_0(z)		&=			\begin{bmatrix}
										\frac{z}{z-\gamma_c}
											&		
													\mu_{\max} h_c(\mu_{\max})\cdot
													\frac{z}{z-\gamma_c}
													\cdot
													\one^T (zI - \Gamma_e)^{-1}
													\\
										0	&		z(zI - \Gamma_e)^{-1}
									\end{bmatrix}
		\label{Equ:Appendix:Gamma0z_interm1}
	\end{align}
Next we compute the inverse $z-$transform to obtain $\Gamma_0^i$. Thus, observe that the inverse $z-$transform of the $(1,1)$ entry, the $(2,1)$ block, and the $(2,2)$ block are the causal sequences  $\gamma_c^i$, $0$, and $\Gamma_e^i$, respectively. For the $(1,2)$ block, it can be expressed in partial fractions as
	\begin{align}
		&\mu_{\max} h_c(\mu_{\max})\cdot
		\frac{z}{z-\gamma_c}
		\cdot
		\one^T (zI - \Gamma_e)^{-1}
						\nn\\
			&=			\mu_{\max} h_c(\mu_{\max})
						\!\cdot\!
						\one^T 
						(\gamma_c I \!-\! \Gamma_e)^{-1}
						\left(
							\frac{z}{z\!-\!\gamma_c} I
							\!-\!
							z(zI \!-\! \Gamma_e)^{-1}
						\right)
						\nonumber
	\end{align}
from which we  conclude that the inverse $z-$transform of the $(1,2)$ block is 
	\begin{align}
		\mu_{\max} h_c(\mu_{\max})
		\!\cdot\!
		\one^T 
		(\gamma_c I \!-\! \Gamma_e)^{-1}
		\left(
			\gamma_c^i I
			\!-\!
			\Gamma_e^i
		\right),
		\quad
		i \ge 0
	\end{align}
It follows that
	\begin{align}
		\Gamma_0^i		&=			\begin{bmatrix}
										\gamma_c^i
											&		\mu_{\max} h_c(\mu_{\max})
													\cdot
													\one^T 
													(\gamma_c I - \Gamma_e)^{-1}
													\left(
														\gamma_c^i I
														-
														\Gamma_e^i
													\right)
													\\
										0	&		\Gamma_e^i
									\end{bmatrix}
		\label{Equ:Appendix:Gamma0_power_i}
	\end{align}
Furthermore, as indicated by \eqref{Equ:LearnBehav:wc0bar_initRefRecur} 
in Sec. \ref{Sec:LearningBehavior:Overview},
the reference recursion \eqref{Equ:LearnBehav:RefRec} is initialized
at the centroid of the network,
i.e., $\bar{w}_{c,0} = \sum_{k=1}^N \theta_k w_{k,0}$. This fact, together
with \eqref{Equ:DistProc:wci_centroid} leads to $\bar{w}_{c,0}=w_{c,0}$,
which means that $\check{\bw}_{c,0}=0$.
As a result, we get the following form for $\mW_0'$:
	\begin{align}
		\label{Equ:FirstOrderAnal:W_0_prime}
		\check{\mW}_{0}'	=	\col\left\{ 0, \; \E P[\bm{w}_{e,0}] \right\}
	\end{align}
Multiplying  \eqref{Equ:Appendix:Gamma0_power_i} to the left of
 \eqref{Equ:FirstOrderAnal:W_0_prime}
gives
	\begin{align}
		\label{Equ:FirstOrderAnl:TransientTerm}
		\Gamma_0^i \check{\mW}_0'	=	\begin{bmatrix}
									\mu_{\max} h_c(\mu_{\max})
									\cdot
									\one^T 
									(\gamma_c I \!-\! \Gamma_e)^{-1}
									\left(
										\gamma_c^i I
										\!-\!
										\Gamma_e^i
									\right)
									\mW_{e,0}\\
									\Gamma_e^{i}
									\mW_{e,0}
								\end{bmatrix}
	\end{align}
where $\mW_{e,0}=\E P[\bw_{e,0}]$. Substituting  \eqref{Equ:FirstOrderAnl:TransientTerm} into
\eqref{Equ:FirstOrderAnl:W_i_prime_bound_interdiate1}, we obtain
	\begin{align}
			\check{\mW}_i'	&\preceq	\begin{bmatrix}
								\mu_{\max} h_c(\mu_{\max})
								\!\cdot\!
								\one^T 
								(\gamma_c I \!-\! \Gamma_e)^{-1}
								\left(
									\gamma_c^i I
									\!-\!
									\Gamma_e^i
								\right)
								\E P[\bw_{e,0}]\\
								\Gamma_e^{i}
								\;
								\E P[\bw_{e,0}]
							\end{bmatrix}
							\nn\\
							&\quad
							+
							\check{\mW}_{\infty}^{\mathrm{ub}'}
		\label{Equ:Thm_NonAsympBound:W_i_prime_bound_final}
	\end{align}
	
Finally, we study the behavior of the asymptotic bound 
$\check{\mW}_{\infty}^{\mathrm{ub}'}$ by calling upon the following lemma.
	\begin{lemma}[Useful matrix expressions]
		\label{Lemma:UsefulMatrixExpr}
		It holds that
			\begin{align}
				&\one^T (I - \Gamma)^{-1}
						\nn\\
					&=	\zeta(\mu_{\max})						
						\nn\\
						&\cdot
						\begin{bmatrix}
							\frac{\mu_{\max}^{-1}}{\lambda_L - \frac{\mu_{\max}}{2}\|p\|_1^2 \lambda_U^2}
							&
							\left(
								1 +
								\frac{h_c(\mu_{\max})}{\lambda_L - \frac{\mu_{\max}}{2}\|p\|_1^2\lambda_U^2}
							\right)
								\one^T (I - \Gamma_e)^{-1}
						\end{bmatrix}
				\label{Equ:Lemma:UsefulMatrixExpr:one_I_Gamma}
						\\
				&(I - \Gamma_0)^{-1}
						\nn\\
					&=	\begin{bmatrix}
							\frac{\mu_{\max}^{-1}}{\lambda_L - \frac{\mu_{\max}}{2}\|p\|_1^2 \lambda_U^2}
							&
							\frac{h_c(\mu_{\max})}
							{\lambda_L \!-\! \frac{\mu_{\max}}{2}\|p\|_1^2\lambda_U^2}
							\one^T (I \!-\! \Gamma_e)^{-1}
							\\
							0
							&
							(I-\Gamma_e)^{-1}
						\end{bmatrix}\!\!
				\label{Equ:Lemma:UsefulMatrixExpr:I_Gamma0}
			\end{align}
		where
			\begin{align}
				\label{Equ:Lemma:UsefulMatrixExpr:zeta}
				\zeta&(\mu_{\max})	
					=		\bigg\{
								1
								-
								\psi_0
								\cdot
								\Big[
									\frac{\mu_{\max}}
									{\lambda_L \!-\! \frac{\mu_{\max}}{2}\|p\|_1^2 \lambda_U^2}
									\nn\\
									&
									+\!
									\mu_{\max}^2\!
									\left(\!
										1 \!+\!
										\frac{h_c(\mu_{\max})}
										{\lambda_L \!-\! \frac{\mu_{\max}}{2}\|p\|_1^2\lambda_U^2}
									\right)\!
										\one^T (I \!-\! \Gamma_e)^{-1} \one
								\Big]\!
							\bigg\}^{\!-1} \!\!
			\end{align}		
	\end{lemma}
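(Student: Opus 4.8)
The plan is to prove the two identities separately: first obtain $(I-\Gamma_0)^{-1}$ by exploiting the block upper-triangular structure of $\Gamma_0$ in \eqref{Equ:FirstOrderAnal:Gamma0_def}, and then obtain $\one^T(I-\Gamma)^{-1}$ by viewing $\Gamma$ in \eqref{Equ:FirstOrderAnal:Gamma_def} as a rank-one perturbation of $\Gamma_0$. Throughout I invoke the standing stability hypothesis $\rho(\Gamma)<1$; since $\Gamma=\Gamma_0+\mu_{\max}^2\psi_0\,\one\one^T$ has nonnegative entries and dominates $\Gamma_0$ entrywise, the Perron--Frobenius monotonicity of the spectral radius gives $\rho(\Gamma_0)\le\rho(\Gamma)<1$, so $I-\Gamma_0$ is invertible, and $\rho(\Gamma_e)=|\lambda_2(A)|<1$ ensures $I-\Gamma_e$ is invertible as well.

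For \eqref{Equ:Lemma:UsefulMatrixExpr:I_Gamma0}, I would write
\[
	I-\Gamma_0
	=
	\begin{bmatrix}
		1-\gamma_c & -\mu_{\max} h_c(\mu_{\max})\,\one^T \\
		0 & I-\Gamma_e
	\end{bmatrix}
\]
and apply the standard formula for inverting a $2\times 2$ block upper-triangular matrix \cite[p.48]{laub2005matrix}. The only nontrivial ingredient is the scalar simplification
\[
	1-\gamma_c
	=
	\mu_{\max}\Big(\lambda_L-\tfrac{\mu_{\max}}{2}\|p\|_1^2\lambda_U^2\Big),
\]
obtained by substituting the definition of $\gamma_c$ from \eqref{Equ:thm:ConvergenceRefRec:gamma_c_def}; inverting it produces the $(1,1)$ entry of \eqref{Equ:Lemma:UsefulMatrixExpr:I_Gamma0}. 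The off-diagonal block equals $(1-\gamma_c)^{-1}\mu_{\max} h_c(\mu_{\max})\,\one^T(I-\Gamma_e)^{-1}$, and the same simplification collapses the prefactor $(1-\gamma_c)^{-1}\mu_{\max}$ to $\big(\lambda_L-\tfrac{\mu_{\max}}{2}\|p\|_1^2\lambda_U^2\big)^{-1}$, yielding the claimed $(1,2)$ block. The remaining two blocks are immediate.

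For \eqref{Equ:Lemma:UsefulMatrixExpr:one_I_Gamma}, I would write $I-\Gamma=(I-\Gamma_0)-\mu_{\max}^2\psi_0\,\one\one^T$ and invoke the Sherman--Morrison identity
\[
	(M-uv^T)^{-1}
	=
	M^{-1}+\frac{M^{-1}uv^T M^{-1}}{1-v^T M^{-1}u}
\]
with $M=I-\Gamma_0$, $u=\mu_{\max}^2\psi_0\,\one$, and $v=\one$. Left-multiplying by $v^T=\one^T$ and factoring out the common row vector $\one^T M^{-1}$ telescopes the two terms into $\one^T(I-\Gamma)^{-1}=\zeta(\mu_{\max})\,\one^T(I-\Gamma_0)^{-1}$, where the scalar factor is $\zeta(\mu_{\max})=\big(1-\mu_{\max}^2\psi_0\,\one^T(I-\Gamma_0)^{-1}\one\big)^{-1}$. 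Substituting the block expression \eqref{Equ:Lemma:UsefulMatrixExpr:I_Gamma0} (left-multiplied by the partitioned vector $\one=\col\{1,\one\}$) gives the explicit row vector $\one^T(I-\Gamma_0)^{-1}$ appearing inside the brackets of \eqref{Equ:Lemma:UsefulMatrixExpr:one_I_Gamma}, and post-multiplying that same row vector by $\one$ yields precisely the quantity inside the braces of the expression for $\zeta(\mu_{\max})$ in \eqref{Equ:Lemma:UsefulMatrixExpr:zeta}.

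The argument is essentially algebraic bookkeeping rather than a conceptual difficulty; the only steps requiring genuine care are recognizing the rank-one structure that makes Sherman--Morrison applicable, tracking the mixed block dimensions (the scalar top-left block versus the $(N-1)$-dimensional blocks), and confirming that the Sherman--Morrison denominator $1-\mu_{\max}^2\psi_0\,\one^T(I-\Gamma_0)^{-1}\one$ does not vanish. The last point is the main place to be careful: nonvanishing follows because $I-\Gamma$ is invertible under $\rho(\Gamma)<1$, so the telescoped identity is legitimate and $\zeta(\mu_{\max})$ is well defined.
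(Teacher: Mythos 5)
Your proposal is correct and follows essentially the same route as the paper: the Sherman--Morrison identity you invoke is exactly the matrix inversion lemma the paper applies to the rank-one perturbation $\mu_{\max}^2\psi_0\,\one\one^T$, and both arguments then combine the block upper-triangular inversion of $I-\Gamma_0$ with the simplification $1-\gamma_c=\mu_{\max}\big(\lambda_L-\tfrac{\mu_{\max}}{2}\|p\|_1^2\lambda_U^2\big)$. The only difference is ordering (you compute $(I-\Gamma_0)^{-1}$ first) and your explicit verification that $\rho(\Gamma_0)\le\rho(\Gamma)<1$ and that the Sherman--Morrison denominator is nonzero, which the paper leaves implicit.
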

	\begin{IEEEproof}
		See Appendix \ref{Appendix:Proof_UsefulMatrixExpr}.
	\end{IEEEproof}
Substituting \eqref{Equ:Appendix:bvmu_expr}, \eqref{Equ:FirstOrderAnal:W_0_prime},
\eqref{Equ:Lemma:UsefulMatrixExpr:one_I_Gamma} and
\eqref{Equ:Lemma:UsefulMatrixExpr:I_Gamma0} into
\eqref{Equ:FirstOrdarAnal:W_inf_prime_def} and after some
algebra, we obtain
	\begin{align}
			\check{\mW}_{\infty}^{\mathrm{ub}'}
						&=	\psi_0 \! \cdot\! \zeta(\mu_{\max})
							f(\mu_{\max})
							\nn\\
							&\quad
							\cdot
							\begin{bmatrix}
								\mu_{\max}
								\frac{1+\mu_{\max} h_c(\mu_{\max}) \one^T (I-\Gamma_e)^{-1}\one}
								{\lambda_L - \frac{\mu_{\max}}{2}\|p\|_1^2\lambda_U^2}
								\\
								\mu_{\max}^2 \cdot
								(I-\Gamma_e)^{-1}
								\one
							\end{bmatrix}	
							\nn\\
							&\quad
							+\!
							\begin{bmatrix}
								\mu_{\max}
								\frac{
										b_{v,c}
										+
										\mu_{\max} h_c(\mu_{\max}) \one^T (I - \Gamma_e)^{-1}\one
										b_{v,e}
								}
								{\lambda_L - \frac{\mu_{\max}}{2}\|p\|_1^2\lambda_U^2}
								\\
								\mu_{\max}^2 b_{v,e}\cdot
								(I-\Gamma_e)^{-1} \one 
							\end{bmatrix}
		\label{Equ:Appendix:W_ub_inf_prime_expr}
	\end{align}
where
	\begin{align}
		f(\mu_{\max})
						&\defeq
									\frac{\mu_{\max} b_{v,c}}
									{\lambda_L \!-\! \frac{\mu_{\max}}{2}\|p\|_1^2\lambda_U^2}
									\nn\\
									&\quad
									+\!
									\left( \!
										1\!+\!
										\frac{h_c(\mu_{\max})}
										{\lambda_L \!-\! \frac{\mu_{\max}}{2}\|p\|_1^2\lambda_U^2}
										\!
									\right)
									\!\cdot\!
									\one^T (I \!-\! \Gamma_e)^{-1} 
									\nn\\
									&\qquad
									\times
									\left(
										\E P[\bw_{e,0}]
										\!+\!
										\mu_{\max}^2\one b_{v,e}
									\right)
		\label{Equ:Appendx:f_mu_def}
	\end{align}
Introduce 
	\begin{align}
		\check{\mW}_{c,\infty}^{\mathrm{ub}'}
						&\defeq
									\psi_0 \! \cdot\! \zeta(\mu_{\max})
									f(\mu_{\max})
									\nn\\
									&\qquad
									\cdot
									\mu_{\max}
									\frac{
										1+\mu_{\max} h_c(\mu_{\max}) 
										\one^T (I-\Gamma_e)^{-1}\one
									}
									{\lambda_L - \frac{\mu_{\max}}{2}\|p\|_1^2\lambda_U^2}
									\nn\\
									&\quad
									+
									\mu_{\max}
									\frac{
											b_{v,c}
											+
											\mu_{\max} h_c(\mu_{\max}) \one^T (I - \Gamma_e)^{-1}\one
											b_{v,e}
									}
									{\lambda_L - \frac{\mu_{\max}}{2}\|p\|_1^2\lambda_U^2}
		\label{Equ:Appendix:mW_c_ub_prime_interm1}
									\\
		\check{\mW}_{e,\infty}^{\mathrm{ub}'}
						&\defeq
									\psi_0 \! \cdot\! \zeta(\mu_{\max})
									f(\mu_{\max})
									\cdot
									\mu_{\max}^2 \cdot
									(I-\Gamma_e)^{-1}
									\one
									\nn\\
									&\quad
									+
									\mu_{\max}^2 b_{v,e}\cdot
									(I-\Gamma_e)^{-1} \one 
		\label{Equ:Appendix:mW_e_ub_prime_interm1}
	\end{align}
Then, we have 
	\begin{align}
		\check{\mW}_{\infty}^{\mathrm{ub}'} 
			= 	\col\{
						\check{\mW}_{c,\infty}^{\mathrm{ub}'}, \; 
						\check{\mW}_{e,\infty}^{\mathrm{ub}'} 
				\}
		\label{Equ:Appendix:mW_ub_prime_blockform}
	\end{align}
Substituting \eqref{Equ:Appendix:mW_ub_prime_blockform} into \eqref{Equ:Thm_NonAsympBound:W_i_prime_bound_final}, we conclude  \eqref{Equ:DistProc:EPwcicheck_bound}--\eqref{Equ:DistProc:EPwei_bound}. Now, to prove \eqref{Equ:DistProc:EPwci_Omu}--\eqref{Equ:DistProc:EPwei_Omu}, it suffices to prove
	\begin{align}
		\lim_{\mu_{\max} \rightarrow 0}
		\frac{\check{\mW}_{c,\infty}^{\mathrm{ub}'}}{\mu_{\max}}
					&=
								\frac{
									\psi_0 \big(\lambda_L \!+\! h_c(0)\big) 
									\one^T (I \!-\! \Gamma_e)^{-1} \mW_{e,0} 
									\!+\! b_{v,c} \lambda_L
								}
							    {\lambda_L^2}
		\label{Equ:Appendix:limRatio_mWc_ub_interm1}
								\\
		\lim_{\mu_{\max} \rightarrow 0}
		\frac{\check{\mW}_{e,\infty}^{\mathrm{ub}'}}{\mu_{\max}^2}
					&=
								\frac{	
										\psi_0 \big( \lambda_L \!+\! h_c(0) \big) 
										\one^T (I-\Gamma_e)^{-1} \mW_{e,0} \!+\! b_{v,e}\lambda_L
									}
									{\lambda_L}
								\nn\\
								&\quad
									\cdot
								(I\!-\!\Gamma_e)^{-\!1} \one
		\label{Equ:Appendix:limRatio_mWe_ub_interm1}
	\end{align}
Substituting \eqref{Equ:Appendix:mW_c_ub_prime_interm1} and \eqref{Equ:Appendix:mW_e_ub_prime_interm1} into the left-hand side of \eqref{Equ:Appendix:limRatio_mWc_ub_interm1} and \eqref{Equ:Appendix:limRatio_mWe_ub_interm1}, respectively, we get
	\begin{align}
		&\lim_{\mu_{\max} \rightarrow 0}
		\frac{\check{\mW}_{c,\infty}^{\mathrm{ub}'}}{\mu_{\max}}
							\nn\\
					&=
							\lim_{\mu_{\max} \rightarrow 0}
							\bigg\{
								\nn\\
								&\qquad\quad
								\psi_0 \! \cdot\! \zeta(\mu_{\max})
								f(\mu_{\max})
								\!\cdot\!
								\frac{1 \!+\! \mu_{\max} h_c(\mu_{\max}) 
								\one^T (I \!-\!\Gamma_e)^{-1}\one}
								{\lambda_L - \frac{\mu_{\max}}{2}\|p\|_1^2\lambda_U^2}
								\nn\\
								&\qquad\quad
								+
								\frac{
										b_{v,c}
										+
										\mu_{\max} h_c(\mu_{\max}) \one^T (I - \Gamma_e)^{-1}\one
										b_{v,e}
								}
								{\lambda_L - \frac{\mu_{\max}}{2}\|p\|_1^2\lambda_U^2}	
							\bigg\}
							\nn\\
					&=
							\psi_0 \! \cdot\! \zeta(0)
							f(0)
							\cdot
							\frac{1}
							{\lambda_L}
							+
							\frac{
									b_{v,c}
							}
							{\lambda_L}	
							\nn\\
					&\overset{(a)}{=}
							\psi_0 \! \cdot\! 
							1
							\cdot
							\left[
								\left( \!
									1\!+\!
									\frac{h_c(0)}
									{\lambda_L}
									\!
								\right)
								\!\cdot\!
								\one^T (I \!-\! \Gamma_e)^{-1}
								\E P[\bw_{e,0}]
							\right]
							\cdot
							\frac{1}
							{\lambda_L}
							+
							\frac{
									b_{v,c}
							}
							{\lambda_L}	
							\nn\\
					&=
							\frac{
								\psi_0 \big(\lambda_L+h_c(0)\big) 
								\one^T (I-\Gamma_e)^{-1} \mW_{e,0} 
								+ b_{v,c} \lambda_L
							}
						    {\lambda_L^2}
						    \\
		&\lim_{\mu_{\max} \rightarrow 0}
		\frac{\check{\mW}_{e,\infty}^{\mathrm{ub}'}}{\mu_{\max}^2}
							\nn\\
					&=
							\lim_{\mu_{\max} \rightarrow 0}
							\big\{
								\psi_0 \! \cdot\! \zeta(\mu_{\max})
								f(\mu_{\max})
								\cdot
								(I-\Gamma_e)^{-1}
								\one
								\nn\\
								&\qquad\quad
								+
								b_{v,e}
								\cdot
								(I-\Gamma_e)^{-1} \one 
							\big\}
							\nn\\
					&=
							\psi_0 \! \cdot\! \zeta(0)
							f(0)
							\cdot
							(I-\Gamma_e)^{-1}
							\one
							+
							b_{v,e}
							\cdot
							(I-\Gamma_e)^{-1} \one 
							\nn\\
					&\overset{(b)}{=}
							\psi_0 
							\! \cdot\! 
							\left[
								\left( \!
									1\!+\!
									\frac{h_c(0)}
									{\lambda_L}
									\!
								\right)
								\!\cdot\!
								\one^T (I \!-\! \Gamma_e)^{-1}
								\E P[\bw_{e,0}]
							\right]
							\cdot
							(I-\Gamma_e)^{-1}
							\one
							\nn\\
							&\quad
							+
							b_{v,e}
							\cdot
							(I-\Gamma_e)^{-1} \one 
							\nn\\
					&=
							\frac{	
									\psi_0 \big( \lambda_L + h_c(0) \big) 
									\one^T (I-\Gamma_e)^{-1} \mW_{e,0} + b_{v,e}\lambda_L
								}
								{\lambda_L}
								\!\cdot\!
							(I\!-\!\Gamma_e)^{-\!1} \one
	\end{align}
where steps (a) and (b) use the expressions for $\zeta(\mu_{\max})$ and $f(\mu_{\max})$ from \eqref{Equ:Lemma:UsefulMatrixExpr:zeta} and \eqref{Equ:Appendx:f_mu_def}.

Now we proceed to prove \eqref{Equ:Thm:NonasymptoticBound:EPwci_check_Omu_bound}. We already know that the second term on the right-hand side of \eqref{Equ:DistProc:EPwcicheck_bound}, $\check{\mW}_{c,\infty}^{\mathrm{ub}'}$, is $O(\mu_{\max})$ because of \eqref{Equ:DistProc:EPwci_Omu}. Therefore, we only need to show that the first term on the right-hand side of \eqref{Equ:DistProc:EPwcicheck_bound} is $O(\mu_{\max})$ for all $i \ge 0$. To this end, it suffices to prove that
	\begin{align}
		\lim_{ \mu_{\max} \rightarrow 0}&
		\frac{
				\left\|
					\mu_{\max} h_c(\mu_{\max})
					\!\cdot\!
					\one^T 
					(\gamma_c I \!-\! \Gamma_e)^{-1}
					\left(
						\gamma_c^i I
						\!-\!
						\Gamma_e^i
					\right)
					\mW_{e,0}
				\right\|
		}{\mu_{\max}}
					\nn\\					
				&\qquad\qquad\le 	\mathrm{constant}
	\end{align}
where the constant on the right-hand side should be independent of $i$. This can be proved as below:
	\begin{align}		
		&\lim_{ \mu_{\max} \rightarrow 0}
		\frac{
				\left\|
					\mu_{\max} h_c(\mu_{\max})
					\!\cdot\!
					\one^T 
					(\gamma_c I \!-\! \Gamma_e)^{-1}
					\left(
						\gamma_c^i I
						\!-\!
						\Gamma_e^i
					\right)
					\mW_{e,0}
				\right\|
		}{\mu_{\max}}
						\nn\\
						&=
							\lim_{ \mu_{\max} \rightarrow 0}
							\left\|							
								h_c(\mu_{\max})
								\!\cdot\!
								\one^T 
								(\gamma_c I \!-\! \Gamma_e)^{-1}
								\left(
									\gamma_c^i I
									\!-\!
									\Gamma_e^i
								\right)
								\mW_{e,0}
							\right\|
							\nn\\
						&\le 
							\lim_{ \mu_{\max} \rightarrow 0}
							|h_c(\mu_{\max})|
								\cdot
								\| \one \|
								\cdot
								\|(\gamma_c I \!-\! \Gamma_e)^{-1}\|
								\nn\\
								&\qquad
								\cdot
								\left(
									\left\|
										\gamma_c^i I
									\right\|
									+
									\left\|
										\Gamma_e^i
									\right\|
								\right)
								\cdot
								\|
								\mW_{e,0}
								\|
								\nn\\
						&\overset{(a)}{\le }
							\lim_{ \mu_{\max} \rightarrow 0}
							|h_c(\mu_{\max})|
								\cdot
								N
								\cdot
								\|(\gamma_c I \!-\! \Gamma_e)^{-1}\|
								\nn\\
								&\qquad
								\cdot
								\left(
									1
									+
									C_e \cdot \big(\rho (\Gamma_e)+\epsilon\big)^i
								\right)
								\cdot
								\|
								\mW_{e,0}
								\|
								\nn\\
						&\overset{(b)}{\le }
							\lim_{ \mu_{\max} \rightarrow 0}
							|h_c(\mu_{\max})|
								\!\cdot\!
								N
								\!\cdot\!
								\|(\gamma_c I \!-\! \Gamma_e)^{-1}\|
								\!\cdot\!
								\left(
									1
									\!+\!
									C_e
								\right)
								\!\cdot\!
								\|
								\mW_{e,0}
								\|
								\nn\\
						&\overset{(c)}{=}
							|h_c(0)|
								\cdot
								N
								\cdot
								\|(I \!-\! \Gamma_e)^{-1}\|
								\cdot
								\left[
									1
									+
									C_e
								\right]
								\cdot
								\|
								\mW_{e,0}
								\|	
								\nn\\
						&=
								\mathrm{constant}
	\end{align}
where step (a) uses $\gamma_c = 1 - \mu_{\max} \lambda_L + \frac{1}{2}\mu_{\max}^2 \lambda_L^2 < 1$ for sufficiently small step-sizes, and uses the property that for any small $\epsilon>0$ there exists a constant $C$ such that $\| X^i \| \le C \cdot [\rho(X)+\epsilon]^i$ for all $i \ge 0$ \cite[p.38]{poliak1987introduction}, step (b) uses the fact that $\rho(\Gamma_e) = |\lambda_2(A)| < 1$ so that $\rho(\Gamma_e) + \epsilon < 1$ for small $\epsilon$ (e.g., $\epsilon = (1-\rho(\Gamma_e))/2$), and step (c) uses $\gamma_c = 1 - \mu_{\max} \lambda_L + \frac{1}{2}\mu_{\max}^2 \lambda_L^2 \rightarrow 1$ when $\mu_{\max} \rightarrow 0$.

It remains to prove that condition \eqref{Equ:Thm_NonAsympBound:StepSize} 
guarantees the stability of the matrix $\Gamma$, i.e., $\rho(\Gamma)<1$.
First, we introduce the diagonal matrices $D_{\epsilon,0} \defeq
\diag\{\epsilon,\cdots,\epsilon^{N-1}\}$ and $D_{\epsilon} = \diag\{1, D_{\epsilon,0}\}$,
where $\epsilon$ is chosen to be 
	\begin{align}
		\epsilon \defeq \frac{1}{4}(1-|\lambda_2(A)|)^2 \le \frac{1}{4}
		\label{Equ:Appendix:epsilon_def}
	\end{align}
It holds that $\rho(\Gamma)=\rho(D_{\epsilon}^{-1}\Gamma D_{\epsilon})$
since similarity transformations do not alter eigenvalues. 
By the definition of $\Gamma$ in \eqref{Equ:FirstOrderAnal:Gamma_def},
we have
	\begin{align}
		D_{\epsilon}^{-1}\Gamma D_{\epsilon}	
				=		D_{\epsilon}^{-1} \Gamma_0 D_{\epsilon}
						+
						\mu_{\max}^2 \psi_0 \cdot D_{\epsilon}^{-1} \one \one^T D_{\epsilon}
	\end{align}
We now recall that the spectral radius of a matrix is upper bounded by any of its
matrix norms. Thus, taking the $1-$norm (the maximum absolute column sum
of the matrix) of both sides of the above 
expression and using the triangle inequality and
the fact that $0 < \epsilon \le 1/4$, we get
	\begin{align}
		\rho(\Gamma)		&=		\rho(D_{\epsilon}^{-1}\Gamma D_{\epsilon})
								\nonumber\\
						&\le 	\|D_{\epsilon}^{-1} \Gamma_0 D_{\epsilon}\|_{1}
								+
								\|\mu_{\max}^2 \psi_0 \cdot D_{\epsilon}^{-1} \one \one^T D_{\epsilon}\|_1
								\nonumber\\
						&\le 	\|D_{\epsilon}^{-1} \Gamma_0 D_{\epsilon}\|_{1}
								+
								\mu_{\max}^2 \psi_0 \cdot
								\|D_{\epsilon}^{-1} \one \one^T D_{\epsilon}\|_1
								\nonumber\\
						&=		\|D_{\epsilon}^{-1} \Gamma_0 D_{\epsilon}\|_{1}
								+
								\mu_{\max}^2 \psi_0 \cdot
								\left(
									1+\epsilon^{-1}+\cdots + \epsilon^{-(N-1)}
								\right)
								\nonumber\\
						&=		\|D_{\epsilon}^{-1} \Gamma_0 D_{\epsilon}\|_{1}
								+
								\mu_{\max}^2 \psi_0 \cdot
								\frac{1 - \epsilon^{-N}}{ 1 - \epsilon^{-1} }
								\nonumber\\
						&=		\|D_{\epsilon}^{-1} \Gamma_0 D_{\epsilon}\|_{1}
								+
								\mu_{\max}^2 \psi_0 \cdot
								\frac{\epsilon(\epsilon^{-N}-1)}{1-\epsilon}
								\nonumber\\
						&\le 	\|D_{\epsilon}^{-1} \Gamma_0 D_{\epsilon}\|_{1}
								+
								\mu_{\max}^2 \psi_0 \cdot
								\frac{\frac{1}{4}(\epsilon^{-N}-1)}{1-\frac{1}{4}}
								\nonumber\\
						&\le 	\|D_{\epsilon}^{-1} \Gamma_0 D_{\epsilon}\|_{1}
								+
								\frac{1}{3}\mu_{\max}^2 \psi_0  
								\epsilon^{-N}
		\label{Equ:Appendix:rhoGamma_UB_interm1}
	\end{align}
Moreover, we can use \eqref{Equ:FirstOrderAnal:Gamma0_def} to write:
	\begin{align}
		D_{\epsilon}^{-1} \Gamma_0 D_{\epsilon}
						=		\begin{bmatrix}
									\gamma_c		&		\mu_{\max} h_c(\mu_{\max}) \one^T D_{\epsilon,0}	\\
									0			&		D_{\epsilon,0}^{-1} \Gamma_e D_{\epsilon,0}			
								\end{bmatrix}
	\end{align}
where (recall the expression for $\Gamma_e$
from \eqref{Equ:Propert:Gamma_e}  where we replace $d_2$ by $\lambda_2(A)$):
	\begin{align}
		&D_{\epsilon,0}^{-1} \Gamma_e D_{\epsilon,0}	
						=				\begin{bmatrix}
           									|\lambda_2(A)| & \frac{1-|\lambda_2(A)|}{2} & &
           														  	\\
           									    & \ddots & \ddots &	\\
           									    & & \ddots & 
           									    	\frac{1-|\lambda_2(A)|}{2}	\\
           									    & & & |\lambda_2(A)|
           								\end{bmatrix}	\!\!
           								\\
         &\mu_{\max} h_c(\mu_{\max}) \one^T D_{\epsilon,0}
         				=				\mu_{\max} h_c(\mu_{\max})
         								\begin{bmatrix}
         									\epsilon
										&\!\!\!
         									\cdots\!\!\!
										&
         									\epsilon^{N-1}
										\end{bmatrix}   \!\!  
	\end{align}
Therefore, the $1$-norm of $D_{\epsilon}^{-1} \Gamma_0 D_{\epsilon}$
can be evaluated as
	\begin{align}
		\|D_{\epsilon,0}^{-1} \Gamma_0 D_{\epsilon,0}\|_1
						&=				\max\Big\{
											\gamma_c,\;
											|\lambda_2(A)| + \mu_{\max} h_c(\mu_{\max})\epsilon,
											\nonumber\\
											&\quad\quad
											\frac{1+|\lambda_2(A)|}{2}
											+					
											\mu_{\max} h_c(\mu_{\max}) \epsilon^2,
											\cdots,
											\nonumber\\
											&\quad\quad
											\frac{1+|\lambda_2(A)|}{2}
											+					
											\mu_{\max} h_c(\mu_{\max}) \epsilon^{N-1}					
										\Big\}		
	\end{align}
Since $0<\epsilon \le 1/4$, we have $\epsilon> \epsilon^2 > \cdots>\epsilon^{N-1}>0$.
Therefore, 
	\begin{align}
		\|D_{\epsilon,0}^{-1} \Gamma_0 D_{\epsilon,0}\|_1
						&=	 			\max\Big\{
											\gamma_c,\;
											|\lambda_2(A)| + \mu_{\max} h_c(\mu)\epsilon,
											\nonumber\\
											&\qquad
											\frac{1+|\lambda_2(A)|}{2}
											+					
											\mu_{\max} h_c(\mu_{\max}) \epsilon^2		
										\Big\}		\!\!
	\end{align}
Substituting the above expression for
$\|D_{\epsilon,0}^{-1} \Gamma_0 D_{\epsilon,0}\|_1$
into \eqref{Equ:Appendix:rhoGamma_UB_interm1} leads to
	\begin{align}
		\rho(\Gamma)		&\le 			\max\Big\{
											\gamma_c,\;
											|\lambda_2(A)| 
											+ \mu_{\max} h_c(\mu_{\max})\epsilon,
											\nonumber\\
											&\quad
											\frac{1 \!+\! |\lambda_2(A)|}{2}
											\!+\!					
											\mu_{\max} h_c(\mu_{\max}) \epsilon^2		
										\Big\}
										\!+ \!
										\frac{1}{3}\mu_{\max}^2 \psi_0  
										\epsilon^{-N} 
	\end{align}
We recall from \eqref{Equ:VarPropt:gamma_c_positive} that $\gamma_c>0$.
To ensure $\rho(\Gamma)<1$, it suffices to require that $\mu_{\max}$ is such that the following conditions are satisfied: 
	\begin{align}
		&\gamma_c + \frac{1}{3}\mu_{\max}^2 \psi_0 \epsilon^{-N}	<	1		
		\label{Equ:Appendix:rhoGamma_ineq1}
		\\
		&|\lambda_2(A)| + \mu_{\max} h_c(\mu_{\max})\epsilon
		+ \frac{1}{3}\mu_{\max}^2 \psi_0  \epsilon^{-N}	<	1
		\label{Equ:Appen:rhoGamma_ineq3}
		\\
		&\frac{1+|\lambda_2(A)|}{2}
		+\mu_{\max} h_c(\mu_{\max}) \epsilon^2 + \frac{1}{3}\mu_{\max}^2 \psi_0 \epsilon^{-N} < 1
		\label{Equ:Appendix:rhoGamma_ineq2}
	\end{align}
We now solve these three inequalities to get a condition on $\mu_{\max}$.
Substituting the expression for $\gamma_c$ from
\eqref{Equ:VarPropt:gamma_c} into \eqref{Equ:Appendix:rhoGamma_ineq1},
we get
	\begin{align}
		1 - \mu_{\max}\lambda_L + 
		\mu_{\max}^2 \Big(\frac{1}{3}\psi_0 \epsilon^{-N} + \frac{1}{2}\|p\|_1^2 \lambda_U^2\Big)
		<1
		\label{Equ:Appendix:rhoGama_stepsizecond1}
	\end{align}
the solution of which is given by
	\begin{align}
		0		<	\mu_{\max}		<	\frac{\lambda_L}
								{\frac{1}{3}\psi_0 \epsilon^{-N} + \frac{1}{2}\|p\|_1^2 \lambda_U^2}
		\label{Equ:Appendix:StepSizeCondition_Sufficient_interm1}
	\end{align}
For \eqref{Equ:Appen:rhoGamma_ineq3}--\eqref{Equ:Appendix:rhoGamma_ineq2}, if we substitute the expression
for $h_c(\mu_{\max})$ from \eqref{Equ:Lemma:IneqRecur:hc_def} into 
\eqref{Equ:Appen:rhoGamma_ineq3}--\eqref{Equ:Appendix:rhoGamma_ineq2}, we get a third-order
inequality in $\mu_{\max}$, which is difficult to solve in closed-form.
However, inequalities
\eqref{Equ:Appen:rhoGamma_ineq3}--\eqref{Equ:Appendix:rhoGamma_ineq2} can be guaranteed
by the following conditions:
	\begin{align}
		\mu_{\max} h_c(\mu_{\max}) \epsilon 	&< \frac{(1\!-\!|\lambda_2(A)|)^2}{4},
		\;
		\frac{\mu_{\max}^2 \psi_0 \epsilon^{-N}}{3} < \frac{1\!-\!|\lambda_2(A)|}{4}
		\label{Equ:Appendix:StepSizeCondition_Sufficient_interm2}
	\end{align}
This is because we would then have:
	\begin{align}
		|\lambda_2(A)| &+ \mu_{\max} h_c(\mu_{\max})\epsilon
		+ \frac{1}{3}\mu_{\max}^2 \psi_0  \epsilon^{-N}
							\nn\\
				&< 
							|\lambda_2(A)| 
							+ 
							\frac{(1\!-\!|\lambda_2(A)|)^2}{4}
							+
							\frac{1\!-\!|\lambda_2(A)|}{4}
							\nonumber\\
				&\le  
							|\lambda_2(A)| 
							+ 
							\frac{1\!-\!|\lambda_2(A)|}{4}
							+
							\frac{1\!-\!|\lambda_2(A)|}{4}
							\nonumber\\
				&=			\frac{1\!+\!|\lambda_2(A)|}{2}
							< 1
	\end{align}
Likewise, by the fact that $0 < \epsilon \le 1/4<1$,
	\begin{align}
		&\frac{1+|\lambda_2(A)|}{2}
		+\mu_{\max} h_c(\mu_{\max}) \epsilon^2 + \frac{1}{3}\mu_{\max}^2 \psi_0 \epsilon^{-N}
							\nn\\
				&\quad< 
							\frac{1+|\lambda_2(A)|}{2}
							+
							\mu_{\max} h_c(\mu_{\max}) \epsilon 
							+ 
							\frac{1}{3}\mu_{\max}^2 \psi_0 \epsilon^{-N}
							\nonumber\\
				&\quad<	 		
							\frac{1+|\lambda_2(A)|}{2}
							+		
							\frac{(1\!-\!|\lambda_2(A)|)^2}{4}
							+
							\frac{1\!-\!|\lambda_2(A)|}{4}
							\nonumber\\
				&\quad\le 			
							\frac{1+|\lambda_2(A)|}{2}
							+		
							\frac{1\!-\!|\lambda_2(A)|}{4}
							+
							\frac{1\!-\!|\lambda_2(A)|}{4}			
							\nonumber\\
				&\quad=			
							1			
	\end{align}
Substituting \eqref{Equ:Lemma:IneqRecur:hc_def} and \eqref{Equ:Appendix:epsilon_def} into \eqref{Equ:Appendix:StepSizeCondition_Sufficient_interm2}, we find that the latter conditions are satisfied for
	\begin{align}
		\begin{cases}
			\displaystyle
			0	<	\mu_{\max}		<	\frac{\lambda_L}
								{\|p\|_1^2 \lambda_U^2 
								\big(\|\bP[\mA^T \mU_2]\|_{\infty}^2\!+\!\frac{1}{2}
								\big)}
								\\
								\\
			\displaystyle
			0	<	\mu_{\max}		<	\sqrt{\frac{3(1-|\lambda_2(A)|)^{2N+1}}{2^{2N+2}\psi_0}}
		\end{cases}
		\label{Equ:Appendix:rhoGama_stepsizecond3}
	\end{align}
Combining \eqref{Equ:Appendix:rhoGamma_ineq1}, 
\eqref{Equ:Appendix:rhoGamma_ineq2} and \eqref{Equ:Appendix:rhoGama_stepsizecond3}
, we arrive at condition \eqref{Equ:Thm_NonAsympBound:StepSize}.


\section{Proof of Lemma \ref{Lemma:UsefulMatrixExpr}}
\label{Appendix:Proof_UsefulMatrixExpr}
Applying the matrix inversion lemma \cite{horn1990matrix} to \eqref{Equ:FirstOrderAnal:Gamma_def}, we 
get
	\begin{align}
		(I-\Gamma)^{-1}	
				&=		(I - \Gamma_0 - \mu_{\max}^2 \psi_0 \cdot \one\one^T)^{-1}
						\nonumber\\
				&=		(I \!-\! \Gamma_0)^{-1}
						\!+\!						
						\frac{\mu_{\max}^2\psi_0
						\cdot
						(I \!-\! \Gamma_0)^{-1} \one
						\one^T (I \!-\! \Gamma_0)^{-1}
						}
						{1 \!-\! \mu_{\max}^2\psi_0\cdot\one^T(I-\Gamma_0)^{-1}\one}
	\end{align}
so that
	\begin{align}
		\label{Equ:Appendix:I_Gamma_temp1}
		\one^T (I-\Gamma)^{-1}
				&=		\frac{1}
						{1-\mu_{\max}^2\psi_0\cdot\one^T(I-\Gamma_0)^{-1}\one}
						\cdot
						\one^T (I-\Gamma_0)^{-1}
	\end{align}
By \eqref{Equ:FirstOrderAnal:Gamma0_def}, the matrix
$\Gamma_0$ is a $2\times 2$ block upper triangular matrix whose inverse is given by
	\begin{align}
		(I-\Gamma_0)^{-1}
				&=		\begin{bmatrix}
							(1-\gamma_c)^{-1}	
								&	\frac{\mu_{\max} h_c(\mu_{\max})}{1-\gamma_c}		
									\one^T (I-\Gamma_e)^{-1})
									\\
							0	&	(I-\Gamma_e)^{-1}		
						\end{bmatrix}
		\label{Equ:Appendix:I_minus_Gamma0_inv}
	\end{align}
Substituting \eqref{Equ:VarPropt:gamma_c} into the above expression
leads to \eqref{Equ:Lemma:UsefulMatrixExpr:I_Gamma0}.
Furthermore, from \eqref{Equ:Lemma:UsefulMatrixExpr:I_Gamma0},
we have
	\begin{align}
		\one^T (I-\Gamma_0)^{-1}\one
				&=			\frac{\mu_{\max}^{-1}}
							{
								\lambda_L
								\!-\! 
								\frac{\mu_{\max}}{2}\|p\|_1^2 \lambda_U^2
							}
							\nn\\
							&\quad
							+\!
							\left(
								1 \!+\!
								\frac{h_c(\mu_{\max})}
								{
									\lambda_L 
									\!-\! 
									\frac{\mu}{2}\|p\|_1^2\lambda_U^2
								}
							\right)
								\one^T (I \!-\! \Gamma_e)^{-1} \one		
		\label{Equ:Appendix:I_minus_Gamma0_expr_final}
	\end{align}
Substituting \eqref{Equ:Appendix:I_minus_Gamma0_expr_final} into
\eqref{Equ:Appendix:I_Gamma_temp1}, we obtain
\eqref{Equ:Lemma:UsefulMatrixExpr:one_I_Gamma}.

\section{Proof of Theorem \ref{Thm:LearnBehav_wki}}
\label{Appendix:Proof_LearnBehav_wki}

Taking the squared Euclidean norm of both sides of \eqref{Equ:DistProc:wki_tilde_decomposition_final} and applying the expectation operator, we obtain
	\begin{align}
		\Expt \| \tilde{\w}_{k,i} \|^2
				&=	
						\| \tilde{w}_{c,i} \|^2
						+
						\Expt \left\|
							\check{\w}_{c,i} + (u_{L,k} \otimes I_M) \w_{e,i}
						\right\|^2
						\nn\\
						&\quad
						-
						2 \tilde{w}_{c,i}^T 
						\left[
							\Expt \check{\w}_{c,i}
							+
							(u_{L,k} \otimes I_M) \w_{e,i}
						\right]
	\end{align}
which means that, for all $i \ge 0$,
	\begin{align}
		\big|
			&\Expt \| \tilde{\w}_{k,i} \|^2
			-
			\| \tilde{w}_{c,i} \|^2
		\big|				
							\nn\\
				&= 
							\Big|
								\Expt \left\|
									\check{\w}_{c,i} + (u_{L,k} \otimes I_M) \w_{e,i}
								\right\|^2
								\nn\\
								&\quad
								-
								2 \tilde{w}_{c,i}^T 
								\big[
									\Expt \check{\w}_{c,i}
									+
									(u_{L,k} \otimes I_M) \Expt\w_{e,i}
								\big]
							\Big|
							\nn\\
				&\overset{(a)}{\le }
							\Expt \left\|
									\check{\w}_{c,i} + (u_{L,k} \otimes I_M) \w_{e,i}
							\right\|^2
							\nn\\
							&\quad
							+
							2 
							\|\tilde{w}_{c,i}\|
							\cdot
							\big[
								\left\|
									\Expt \check{\w}_{c,i}
								\right\|
								+
								\|u_{L,k} \otimes I_M\|
								\cdot
								\| \Expt\w_{e,i} \|
							\big]
							\nn\\
				&\overset{(b)}{\le}
							2\Expt \|
								\check{\w}_{c,i}
							\|^2 
							+ 
							2\|u_{L,k} \otimes I_M\|^2
							\cdot 
							\Expt
							\| \w_{e,i} \|^2
							\nn\\
							&\quad
							+
							2 
							\|\tilde{w}_{c,i}\|
							\cdot
							\big[
								\Expt \|\check{\w}_{c,i}\|
								+
								\|u_{L,k} \otimes I_M\|
								\cdot
								\Expt \| \w_{e,i} \|
							\big]
							\nn\\
				&\overset{(c)}{\le}
							2\Expt \|
								\check{\w}_{c,i}
							\|^2 
							+ 
							2\|u_{L,k} \otimes I_M\|^2
							\cdot 
							\Expt
							\| \w_{e,i} \|^2
							\nn\\
							&\quad
							+
							2 
							\|\tilde{w}_{c,i}\|
							\cdot
							\big[
								\sqrt{
									\Expt \|\check{\w}_{c,i}\|^2
								}
								+
								\|u_{L,k} \otimes I_M\|
								\cdot
								\sqrt{
									\Expt \| \w_{e,i} \|^2
								}
							\big]
							\nn\\
				&\overset{(d)}{=}
							2\Expt P[
								\check{\w}_{c,i}
							]
							+ 
							2\|u_{L,k} \otimes I_M\|^2
							\cdot 
							\one^T
							\Expt
							P[ \w_{e,i} ]
							\nn\\
							&\quad
							+
							2 
							\|\tilde{w}_{c,i}\|
							\cdot
							\Big[
								\sqrt{
									\Expt P[\check{\w}_{c,i}]
								}
								+
								\|u_{L,k} \otimes I_M\|
								\cdot
								\sqrt{
									\one^T \Expt P[ \w_{e,i} ]
								}
							\Big]
							\nn\\
				&\overset{(e)}{\le }
							O(\mu_{\max}) 
							+ 
							2 \| u_{L,k} \otimes I_M \|^2
							\cdot							 
							\big(
								\one^T\Gamma_e^i \mW_{e,0}
								+
								\one^T\check{\mW}_{e,\infty}^{\mathrm{ub}'}
							\big)
							\nn\\
							&\quad
							+
							2 \gamma_c^i \|\tilde{w}_{c,0} \|
							\cdot
							\Big[
								O(\mu_{\max}^{\frac{1}{2}})
								\nn\\
								&\qquad\qquad
								+
								\| u_{L,k} \otimes I_M \|
								\cdot
								\sqrt{
									\one^T\Gamma_e^i \mW_{e,0}
									+
									\one^T\check{\mW}_{e,\infty}^{\mathrm{ub}'}
								}
							\Big]
							\nn\\
				&\overset{(f)}{\le }
							O(\mu_{\max}) 
							+ 
							2 \| u_{L,k} \otimes I_M \|^2
							\cdot							 
							\big(
								\one^T\Gamma_e^i \mW_{e,0}
								+
								O(\mu_{\max}^2)
							\big)
							\nn\\
							&\quad
							+
							2 \gamma_c^i \|\tilde{w}_{c,0} \|
							\cdot
							\bigg[
								O(\mu_{\max}^{\frac{1}{2}})
								\nn\\
								&\qquad\qquad
								+
								\| u_{L,k} \otimes I_M \|
								\cdot
								\left(
									\sqrt{
										\one^T\Gamma_e^i \mW_{e,0}
									}
									+
									\sqrt{
										O(\mu_{\max}^2)
									}
								\right)
							\bigg]
							\nn\\
				&= 
							2 \| u_{L,k} \otimes I_M \|^2
							\cdot	
							\one^T\Gamma_e^i \mW_{e,0}
							\nn\\
							&\qquad\qquad
							+
							2\gamma_c^i
							\cdot
							\|\tilde{w}_{c,0}\|
							\cdot
							\| u_{L,k} \otimes I_M \|
							\cdot
							\sqrt{
									\one^T\Gamma_e^i \mW_{e,0}
							}
							\nn\\
							&\quad
							+
							2 \gamma_c^i \|\tilde{w}_{c,0} \|
							\cdot
							\left[
								O(\mu_{\max}^{\frac{1}{2}})
								+
								\| u_{L,k} \otimes I_M \|
								\cdot
								O(\mu_{\max})
							\right]
							\nn\\
							&\quad
							+
							O(\mu_{\max})
							+
							O(\mu_{\max}^2)
							\nn\\
				&\overset{(g)}{\le}
							2 \| u_{L,k} \otimes I_M \|^2
							\! \cdot \!
							\one^T\Gamma_e^i \mW_{e,0}
							\nn\\
							&\quad
							+\!
							2
							\|\tilde{w}_{c,0}\|
							\!\cdot\!
							\| u_{L,k} \otimes I_M \|
							\!\cdot\!
							\sqrt{
									\one^T\Gamma_e^i \mW_{e,0}
							}
							\nn\\
							&\quad
							+\!
							\gamma_c^i							
							\!\cdot\!
							O(\mu_{\max}^{\frac{1}{2}})
							\!+\!
							O(\mu_{\max})
		\label{Equ:Appendix:DifferenceLearningCurve_interm1}
	\end{align}
where step (a) used Cauchy-Schwartz inequality, step (b) used $\|x+y\|^2\leq 2\|x\|^2+2\|y\|^2$, step (c) applied Jensen's inequality to the concave function $\sqrt{\cdot}$, step (d) used property \eqref{Equ:Properties:PX_EuclNorm}, step (e) substituted the non-asymptotic bounds \eqref{Equ:DistProc:EPwcicheck_bound} and \eqref{Equ:DistProc:EPwei_bound} and the fact that $\E P[\check{\w}_{c,i}] \le O(\mu_{\max})$ for \emph{all} $i \ge 0$ from \eqref{Equ:Thm:NonasymptoticBound:EPwci_check_Omu_bound}, step (f) used \eqref{Equ:DistProc:EPwei_Omu} and the fact that $\sqrt{x+y} \le \sqrt{x}+\sqrt{y}$ for $x,y \ge 0$,
and step (g) used $\gamma_c<1$ for sufficiently small step-sizes (guaranteed by \eqref{Equ:Thm:ConvergenceRefRec:StepSize}).


\bibliographystyle{IEEEbib}
\bibliography{DistOpt,IT,DictLearn}

\end{document}